\newtheorem{definition}{Definition}[section]
\newtheorem{corollary}[definition]{Corollary}
\newtheorem{lemma}[definition]{Lemma}
\newtheorem{thm}[definition]{Theorem}
\definecolor{lilac}{RGB}{128,49,167} 
\newcommand\assert[1]{{\color{lilac}#1}}
\newcommand\assertc[1]{\assert{\left\{#1\right\}}}
\newcommand\mathextralarger[1]{\mathlarger{\mathlarger{\mathlarger{#1}}}}
\newcommand\referto[1]{\citet{li2023lilac}}
\newcommand\camerareadyonly[1]{#1}
\newcommand\referto[1]{Appendix~\ref{#1}}
\newcommand\camerareadyonly[1]{}
\newcommand\applcolor[1]{{\color{teal}#1}}
\newcommand\applkw[1]{\texttt{\applcolor{#1}}}
\newcommand\vdashrv{\vdash_{{\textrm{rv}}}}
\newcommand\vdashdet{\vdash_{{\textrm{det}}}}
\newcommand\vdashprog{\vdash_{{\textrm{prog}}}}
\newcommand\vdashappl{\vdash_{{\textrm{APPL}}}}
\newcommand{\calM}{\mathcal M}
\newcommand\bfone{\mathbf 1}
\newcommand\pmbone{{\pmb{\pmb1}}}
\newcommand\Mfin{\calM_\mathrm{finite}}
\newcommand\Mdis{\calM_\mathrm{disintegrable}}
\newcommand\plte{\sqsubseteq}
\newcommand{\N}{\mathbb N}
\newcommand{\R}{\mathbb R}
\newcommand{\hilbertcube}{[0,1]^{\N}}
\newcommand{\sembr}[1]{\llbracket #1\rrbracket}
\newcommand{\angled}[1]{\langle #1\rangle}
\newcommand{\calA}{\mathcal A}
\newcommand{\calB}{\mathcal B}
\newcommand{\calE}{\mathcal E}
\newcommand{\calF}{\mathcal F}
\newcommand{\calG}{\mathcal G}
\newcommand{\giry}{\mathcal G}
\newcommand{\calP}{\mathcal P}
\newcommand{\calQ}{\mathcal Q}
\newcommand{\calR}{\mathcal R}
\newcommand{\indep}{\perp \!\!\! \perp}
\newcommand{\ber}{\operatorname{Ber}}
\newcommand{\ind}{\mathbf 1}
\DeclareMathOperator*{\Ex}{{\mathbb{E}}}
\newcommand{\inv}{^{-1}}
\newcommand{\pow}[2]{\operatorname{pow}\left(#1, #2\right)}
\newcommand{\Bigpow}[2]{\operatorname{pow}\Big(#1, #2\Big)}
\newcommand\monadic[1]{\left(~\begin{aligned}#1\end{aligned}~\right)}
\newcommand\rv{\operatorname{RV}}
\newcommand\goodparagraph[1]{\hspace{0em}\\[-0.5em]\indent\textbf{\textit{#1.~}}}
\newcommand*\pdot{\mathpalette\pdot@{.8}}
\newcommand*\pdot@[2]{\mathbin{\vcenter{\hbox{\scalebox{#2}{$\m@th#1\bullet$}}}}}
\newcommand\wand{-\!\!*~}
\newcommand\weakpre{\mathsf{wp}}
\newcommand\nec{\operatorname{\square}}
\newcommand\own{\operatorname{\mathsf{own}}}
\newcommand\pdotp{\pdot\!'}
\newcommand\pto{\rightharpoonup}
\newcommand\mto{\stackrel{\mathsmaller{\mathrm{m}\,}}\to}
\newcommand\vdashiff{\dashv\vdash}
\newcommand{\indexty}{\applkw{index}}
\newcommand{\boolty}{\applkw{bool}}
\newcommand{\realty}{\applkw{real}}
\newcommand\ofty{\mathrm{:}}
\newcommand\asequal{\mathbin{\stackrel{\mathsf{as}}{\mathop{=}}}}
\newcommand\forallrv{\forall_\mathsf{rv}}
\newcommand\existsrv{\exists_\mathsf{rv}}
\newcommand\arbitrary{\mathrm{arbitrary}}
\newcommand\applunif{\applkw{unif }\texttt{[0,1]}}
\newcommand\unif{\operatorname{Unif}}
\newcommand\appleq{\gets}
\newcommand\appllet{}
\newcommand\applin{\hspace{0.1em};}
\newcommand\letin[3]{\appllet #1\appleq #2 \applin~#3}
\newcommand\vtrue{\applkw{T}}
\newcommand\vfalse{\applkw{F}}
\newcommand\ite[3]{\mathrm{if}~#1~\mathrm{then}~#2~\mathrm{else}~#3}
\newcommand\applite[3]{\applkw{if}~#1~\applkw{then}~#2~\applkw{else}~#3}
\newcommand\applif{\applkw{if}}
\newcommand\applthen{\applkw{then}}
\newcommand\applelse{\applkw{else}}
\newcommand\new{\applkw{new}}
\newcommand\flip{\applkw{flip}}
\newcommand\forloop[3]{\mathrm{for}(#1,#2,#3)}
\newcommand\applfor[5]{\applkw{for}(#1,#2,#3~#4.~#5)}
\newcommand\applforopen[5]{\applkw{for}(#1,#2,#3~#4.~#5}
\newcommand\appltuple[1]{(#1)}
\newcommand\applfst{\applkw{fst}~}
\newcommand\applsnd{\applkw{snd}~}
\newcommand\applexp{\,\string^\,}
\newcommand\applpar[1]{(#1)}
\newcommand\applindex[2]{#1[#2]}
\newcommand\applarray[1]{[#1]}
\newcommand\applforclose{)}
\newcommand\applgiry{\operatorname{\applkw{G}}}
\newcommand\applret{\applkw{ret}~}
\newcommand\ret{\mathrm{ret}~}
\newcommand\triple[4]{\{#1\}~#2~\{#3.\,#4\}}
\newcommand\lrtriple[4]{\left\{#1\right\}~#2~\left\{#3.\,#4\right\}}
\DeclareMathOperator*{\hugestar}{{\mathop{\scalebox{2.5}{\raisebox{-0.2ex}{$\ast$}}}}}
\DeclareMathOperator*{\argmax}{arg\,max}
\DeclareMathOperator*{\D}{{\pmb{\mathlarger{\mathsf{C}}}}}
\newcommand\Down{\D^{\own}}
\newcommand\DIndepName{C-Indep}
\newcommand\DIndep{\textsc{\DIndepName}}
\newcommand\rectbaseline{0.15\baselineskip}
\newcommand\rectwidth{0.35\baselineskip}
\newcommand\rectheight{0.7\baselineskip}
\newcommand\leftrect{
  \tikz[baseline=\rectbaseline]{
    \draw[fill=blue,opacity=0.25](0,0)rectangle(\rectwidth,\rectheight){};
    \draw[opacity=0.25](\rectwidth,0)rectangle(\rectheight,\rectheight){};
  }
}
\newcommand\rightrect{
  \tikz[baseline=\rectbaseline]{
    \draw[opacity=0.25](0,0)rectangle(\rectwidth,\rectheight){};
    \draw[fill=orange,opacity=0.25](\rectwidth,0)rectangle(\rectheight,\rectheight){};
  }
}
\newcommand\fullsqlr{
  \tikz[baseline=\rectbaseline]{
    \draw[fill=blue,opacity=0.25](0,0)rectangle(\rectwidth,\rectheight){};
    \draw[fill=orange,opacity=0.25](\rectwidth,0)rectangle(\rectheight,\rectheight){};
  }
}
\newcommand\fullsqlrbig{
  \tikz[baseline=1.5ex]{
    \draw[fill=blue,opacity=0.25](0,0)rectangle(0.25,0.5){};
    \draw[fill=orange,opacity=0.25](0.25,0)rectangle(0.5,0.5){};
  }
}
\newcommand\toprect{
  \tikz[baseline=\rectbaseline]{
    \draw[pattern=north west lines,opacity=0.2] (0,0) rectangle (\rectheight,\rectwidth) {};
    \draw[pattern=dots] (0,\rectheight) rectangle (\rectheight,\rectwidth) {};
  }
}
\newcommand\botrect{
  \tikz[baseline=\rectbaseline]{
    \draw[pattern=north west lines] (0,0) rectangle (\rectheight,\rectwidth) {};
    \draw[pattern=dots,opacity=0.2] (0,\rectheight) rectangle (\rectheight,\rectwidth) {};
  }
}
\newcommand\fullsqtb{
  \tikz[baseline=\rectbaseline]{
    \draw[pattern=north west lines] (0,0) rectangle (\rectheight,\rectwidth) {};
    \draw[pattern=dots] (0,\rectheight) rectangle (\rectheight,\rectwidth) {};
  }
}
\newcommand\fullsqlrtb{
  \tikz[baseline=\rectbaseline]{
    \draw[pattern=north west lines] (0,0) rectangle (\rectheight,\rectwidth) {};
    \draw[pattern=dots] (0,\rectheight) rectangle (\rectheight,\rectwidth) {};
    \draw[fill=blue,opacity=0.25](0,0)rectangle(\rectwidth,\rectheight){};
    \draw[fill=orange,opacity=0.25](\rectwidth,0)rectangle(\rectheight,\rectheight){};
  }
}
\newcommand\topleftsq{
  \tikz[baseline=\rectbaseline]{
    \draw[pattern=dots] (0,\rectwidth) rectangle (\rectwidth,\rectheight) {};
    \draw[fill=blue,opacity=0.25] (0,\rectwidth) rectangle (\rectwidth,\rectheight) {};
    \draw[opacity=0.25] (0,0) rectangle (\rectwidth,\rectwidth) {};
    \draw[opacity=0.25] (\rectwidth,0) rectangle (\rectheight,\rectwidth) {};
    \draw[opacity=0.25] (\rectwidth,\rectwidth) rectangle (\rectheight,\rectheight) {};
  }
}
\title{Lilac: A Modal Separation Logic for Conditional Probability}
\author{John M. Li}
\affiliation{%
  \institution{Northeastern University}
  \city{Boston, MA}
  \country{USA}}
\email{li.john@northeastern.edu}
\author{Amal Ahmed}
\affiliation{%
  \institution{Northeastern University}
  \city{Boston, MA}
  \country{USA}}
\email{a.ahmed@northeastern.edu}
\author{Steven Holtzen}
\affiliation{%
  \institution{Northeastern University}
  \city{Boston, MA}
  \country{USA}}
\email{s.holtzen@northeastern.edu}
\begin{abstract}
  We present Lilac, a separation logic for reasoning about probabilistic
  programs where separating conjunction captures probabilistic independence.
  Inspired by an analogy with mutable state where sampling corresponds to
  dynamic allocation, we show how probability spaces over a fixed, ambient
  sample space appear to be the natural analogue of heap fragments, and present
  a new combining operation on them such that probability spaces behave like 
  heaps and measurability of random variables behaves like ownership.
  This combining operation forms the basis for our model of separation,
  and produces a logic with many pleasant properties. In particular, Lilac has a
  frame rule identical to the ordinary one, and naturally
  accommodates advanced features like continuous random variables and reasoning
  about quantitative properties of programs. Then we propose a new modality
  based on disintegration theory for reasoning about conditional probability.
  We show how the resulting modal logic validates examples from prior work,
  and give a formal verification of an intricate weighted sampling algorithm
  whose correctness depends crucially on conditional independence structure.%
  \camerareadyonly{\footnote{For the extended version of this paper, with proofs, see \citet{li2023lilac}.}}
\end{abstract}
\begin{document}

\begin{CCSXML}
  <ccs2012>
     <concept>
         <concept_id>10003752.10003790.10011742</concept_id>
         <concept_desc>Theory of computation~Separation logic</concept_desc>
         <concept_significance>500</concept_significance>
         </concept>
     <concept>
         <concept_id>10003752.10003753.10003757</concept_id>
         <concept_desc>Theory of computation~Probabilistic computation</concept_desc>
         <concept_significance>500</concept_significance>
         </concept>
   </ccs2012>
\end{CCSXML}
  \ccsdesc[500]{Theory of computation~Separation logic}
  \ccsdesc[500]{Theory of computation~Probabilistic computation}
\keywords{probabilistic programming, separation logic}

\maketitle

\section{Introduction}
\label{sec:intro}
Software systems involving probability are pervasive. Such systems naturally
appear in diverse domains such as network reliability
analysis~\citep{smolka2019scalable,gehr2018bayonet}, reliability for
cyberphysical systems~\citep{lee2016introduction,holtzen2021model}, distributed software
systems~\citep{tassarotti2019separation}, and many others.  As these systems are
increasingly deployed in high-consequence domains, there is a growing need for
formal frameworks capable of reasoning about and verifying probabilistic
correctness properties.  We are especially interested in formal frameworks that
support \emph{compositional} reasoning: putting probabilistic systems together
correctly is a tricky business, as a probabilistic component often makes subtle
assumptions about the distribution of its inputs.  A formal framework for
reasoning about probabilistic systems should facilitate the sound composition of
formal verifications of individual components.

In the traditional non-probabilistic setting, \emph{program logics} have become
standard kit for compositionally reasoning about heap-manipulating programs at
scale~\citep{distefano2019scaling}.  In particular, \emph{separation logic}
enables modular reasoning about heap-manipulating
programs~\citep{ishtiaq2001bi,o2009separation,reynolds2009introduction,reynolds2002separation}.
The key to this
modularity is the \emph{frame rule},
\begin{align}
  \frac{\triple Pex{Q(x)}}{\triple{F*P}ex{F*Q(x)}},
\end{align} 
which states that a program $e$ satisfying precondition $P$ and
postcondition $Q$ doesn't interfere with any parts of the heap $F$ (``frames'')
disjoint from parts of the heap described by $P$~\citep{o2012primer}.  This
facilitates local reasoning, and is the distinctive advantage of using the
substructural separation logic over ordinary predicate logic in reasoning about
pointers. An equivalent specification without separation logic leads to an
unwieldy proliferation of assertions about inequality of locations and
pointer-graph reachability~\citep{reynolds2002separation}. 

What is an effective separation logic for probabilistic programs?
In the probabilistic setting, the fundamental source of modularity is
\emph{probabilistic independence}. Intuitively, two sources of randomness
are independent if knowledge of one does not give any knowledge of the other.
In direct analogy to disjointness of heaps,
independence structure permeates probabilistic programs:
sampling produces a random variable independent of all previously-sampled ones,
and commonly-used subroutines (e.g. randomly initializing an array) generate
multiple mutually-independent outputs.  Just as in the traditional setting,
attempting to write specifications for these procedures in ordinary predicate
logic leads to a proliferation of independence assertions.  A logic for
compositional probabilistic reasoning should compositionally support
independence, in the same way that ordinary separation logic compositionally
supports reasoning about disjoint heaps.

In this paper we present Lilac, a separation logic whose separating conjunction
means probabilistic independence.
Lilac enjoys a frame rule that is identical
to the frame rule of ordinary separation logic; as a consequence, the same
modular reasoning principles used on heap-manipulating programs apply directly
to the probabilistic setting.
Moreover, we prove that Lilac's separating conjunction completely captures
probabilistic independence: all probabilistic independence relationships are
validated by its semantic model (Lemma \ref{lem:star-is-independence}).
Both of these points are improvements over prior
work~\citep{barthe2019probabilistic,bao2022separation},
and are consequences
of our first core contribution: a new \emph{combining operation on probability
spaces}, analogous to \emph{disjoint union of heap fragments} in ordinary separation
logic, that serves as the interpretation of separating conjunction.

Our second core contribution is
a \emph{modal} treatment of conditional probability.
It is common in many probabilistic systems for a property
to only hold \emph{conditional} on some random variable: for instance, two random
variables might only be independent conditional on a third, a property called
\emph{conditional independence}.
Conditional reasoning is a second powerful and
prevalent source of modularity:
correctness arguments for probabilistic programs often hinge upon a clever
choice of what to condition on, exploiting key conditional independence
relationships to complete the proof.
Historically, conditional independence has been very difficult to capture in a substructural
program logic: it has either gone unsupported~\citep{barthe2019probabilistic,barthe2018assertion}
or required a host of new logical connectives and significant changes to the underlying semantic
model~\citep{bao2021bunched}.
Lilac captures conditioning via the addition of a single modal operator.
Adding support for this conditioning modality doesn't require any changes to
our underlying semantic model beyond restricting ourselves to a class of
suitably-well-behaved probability spaces.
By importing standard theorems of probability theory, we validate a set of derived rules
about the conditioning modality that we argue captures
the informal flavor of conditional reasoning.

In sum, our contributions are as follows:
\begin{itemize}[leftmargin=*]
\item We present Core Lilac, a separation logic 
  whose separating conjunction captures
  independence (Lemma~\ref{lem:star-is-independence}), alongside 
  proof rules for reasoning
  about a simple probabilistic programming language capable 
  of expressing all of the examples we will consider
  (Section~\ref{sec:core-lilac}).
  Core Lilac's semantic model is based on a novel combining operation
  on probability spaces that resembles disjoint union of heap fragments
  (Lemma~\ref{thm:spaces-form-a-krm}).
\item We extend Core Lilac with a modal operator $\D$
  to express conditional reasoning (Section~\ref{sec:modality}).
  This makes Lilac the first logic that supports
  conditioning and continuous random variables in combination with a substructural
  treatment of independence.
\item
  We validate the effectiveness of Lilac as a useful tool for program
  verification by proving correctness properties of existing examples from the
  literature as well as a new challenging example.  Establishing independence
  structure is key for proving certain cryptographic protocols correct.
  We give Lilac proofs for the one-time pad, private information retrieval, and oblivious
  transfer protocols studied by \citet{barthe2019probabilistic}
   (see \referto{sec:barthe-examples}).
  Lilac can also establish conditional independence properties:
  we show this by validating the conditional independence properties of 
  all programs considered by \citet{bao2021bunched} (Sections~\ref{sec:tour-of-lilac} and \ref{sec:condsamples}). 
Finally, we consider a challenging new example that goes beyond the scope of existing 
separation logics.
We validate an intricate reservoir
  sampling algorithm that uses continuous random variables and whose correctness
  argument depends crucially on conditional independence structure
  (Section~\ref{sec:weighted-sampling}).
\end{itemize}

\subsection{A Tour of Lilac} \label{sec:tour-of-lilac}

To concretize the discussion, we now present a few simple examples in order to
illustrate how Lilac's separating conjunction encodes independence and how the
conditioning modality can be used to establish simple conditional independence
relationships.

First, consider a program \ref{prog:unif2} that samples two reals $X,Y$
uniformly from the interval $[0,1]$:
\begin{equation}
  {\small
  \tag{\textsc{unif2}}
\begin{aligned}
  \appllet X\appleq\applunif \applin~~
  \appllet Y\appleq\applunif \applin~~
  \applret (X,Y)
\end{aligned}
\label{prog:unif2}
  }%
\end{equation}
We will write all probabilistic programs in monadic style, in a manner similar to Haskell's
do-notation; the keyword $\applret$lifts the value $(X,Y)$ into a
pure monadic computation.
This program satisfies two main properties of interest. First, the outputs $X$ and $Y$ are independent
and distributed as $\unif[0,1]$. Second, as freshly generated
random variables, both $X$ and $Y$ are independent of all other variables.
Both properties are asserted by the following Hoare triple:
\[
  \triple{ \top } {\ref{prog:unif2}} {(X,Y)}{X\sim\unif[0,1] ~~*~~ Y\sim\unif[0,1]}
\]
We write $\triple PMX{Q(X)}$ for a program $M$ that satisfies precondition $P$
and produces a random variable $X$ satisfying postcondition $Q(X)$.
In this case, the precondition is the trivial $\top$.
The proposition $X\sim\unif[0,1]$ is a Lilac assertion: it asserts that random variable $X$ is distributed as
$\unif[0,1]$.
This is in direct analogy to the proposition $\ell\mapsto v$ from ordinary separation logic.
Ordinarily,
\begin{wrapfigure}{r}{0.4\linewidth}
  \centering
  \quad
  \begin{minipage}{0.4\linewidth}
   {\footnotesize
  \begin{align*}
   &\assert{\{\top\}} \\
   &\hspace{1em}\appllet X\appleq\applunif \applin\\
   &\assert{\{X\sim \unif[0,1]\}} \\
   &\hspace{1em}\appllet Y\appleq\applunif \applin\\
   &\assert{\{X\sim \unif[0,1]~~*~~Y\sim\unif[0,1]\}} \\
   &\hspace{1em}\applret (X,Y) \\
   &\assert{\{(X,Y).~X\sim \unif[0,1]~~*~~Y\sim\unif[0,1]\}}
 \end{align*}%
   }
\end{minipage}
\caption{Lilac-annotated \ref{prog:unif2}.}
\label{fig:annot-1}
\end{wrapfigure}
the separating conjunction $(\ell_1\mapsto
v_1)*(\ell_2\mapsto v_2)$ asserts that $\ell_1$ and $\ell_2$ refer to disjoint
heap chunks containing values $v_1$ and $v_2$ respectively.
Analogously, the postcondition $(X\sim\unif[0,1])*(Y\sim\unif[0,1])$ asserts
that $X$ is independent of $Y$ (henceforth written $X \indep Y$) and that both
are distributed as $\unif[0,1]$.
Finally, the frame rule implies $X$ and $Y$ are independent of all other random variables,
expressing the fact that $X$ and $Y$ are freshly generated.

To establish this postcondition, Lilac provides proof rules that enable
the usual forward-symbolic-execution-style reasoning~\citep{reynolds2009introduction};
we will present these in Section~\ref{sec:derived}.
The rule for generating a random variable using $\applunif$ is:
\begin{align*}
  \triple\top{\applunif}X{X\sim\unif~[0,1]},
\end{align*}
in direct analogy to the rule for allocating a new reference in ordinary
separation logic. The rules for monadic operators are standard; see rules \textsc{H-Let}
and \textsc{H-Ret} in Figure~\ref{fig:proof-rules}. This allows derivations in the Lilac program
logic to be abbreviated
as assertion-annotated programs in the usual way. For
example, the postcondition for \ref{prog:unif2} can be established by the
annotation in Figure~\ref{fig:annot-1}.

Our next example illustrates how Lilac's notion of separation can be used
together with probability-specific reasoning principles to compute an expectation.
Consider a program \ref{prog:halve} that takes a random variable $X$ as input,
generates a uniform random variable $Y$, and computes their product:
\begin{equation}
  \tag{\textsc{halve}}
\begin{aligned}
  \ref{prog:halve}(X)
  \hspace{1em}:=\hspace{1em}
  \big(\appllet Y\appleq \applunif \applin~ \applret XY\big)
\end{aligned}
\label{prog:halve}
\end{equation}
Because $Y$ is freshly generated, $X$ and $Y$ must be independent;
combined with the fact that $Y$ is uniform, we have $\Ex[XY] = \Ex[X]\Ex[Y] =
\Ex[X]/2$.\footnote{%
$\Ex[X]$ denotes the expectation of the random variable $X$.}
(Hence the name: $\ref{prog:halve}(X)$'s output is half of $X$ in expectation.)
This argument makes use of two key facts:
expectation distributes over the product of independent random variables,
and for $U \sim \unif[0,1]$ we have $\Ex[U] = 1/2$.

In Lilac, the claim that $\ref{prog:halve}(X)$ has expectation
$\Ex[X]/2$ can be expressed by:
\begin{align}
  \{\own X\}~ \ref{prog:halve}(X) ~\{Z.~\Ex[Z]= \Ex[X]/2 \}
  \label{eq:spec-halve}
\end{align}
The postcondition states that the random variable $Z$ produced by \ref{prog:halve}
has expectation $\Ex[X]/2$.
The precondition $\own X$ is a new kind of Lilac assertion: it asserts ``probabilistic ownership'' of a random variable $X$
but nothing about its distribution. 
Probabilistic ownership is in direct analogy to the assertion
$\ell \mapsto -$ of ordinary separation logic, which asserts ownership of
a location $\ell$ in the heap but nothing about the value stored at $\ell$.
Probabilistic ownership allows us to describe independence relationships involving $X$
without knowledge of its distribution.
In particular, the proposition $\own X ~*~ \own Y$ asserts that $X\indep Y$,
just as the proposition $(\ell_1\mapsto -) ~~*~~ (\ell_2\mapsto -)$ asserts
that $\ell_1$ and $\ell_2$ are disjoint locations in ordinary separation logic.
In Section~\ref{sec:tour-of-model} we will describe precisely what ownership means in this probabilistic context.

\newcommand\halveSymbexecStart1
\newcommand\halveFramePost3
\newcommand\halveSymbexecEnd5
\newcommand\halveAsequal5
\newcommand\halveEx6
\newcommand\halveSubst7
\newcommand\halveCalc8
\begin{wrapfigure}{r}{0.5\linewidth}
  \centering
  \quad\\[-1em]
  \begin{minipage}{0.5\linewidth}
   {\footnotesize
\begin{align*}
  &1~\,\assert{\{\own X\}} \\
  &2~\,\hspace{1em}\appllet Y\appleq\applunif \applin\\
  &3~\,\assert{\{\own X~~*~~Y\sim\unif[0,1]\}} \\
  &4~\,\hspace{1em}\applret XY \\
  &5~\,\assert{\{Z.~\own X~~*~~Y\sim\unif[0,1]~~*~~Z\asequal XY \}} \\
  &6~\,\assert{\Big\{Z.~\Big(\Ex[Y] = 1/2 ~~\land~~ \Ex[XY] = \Ex[X]\Ex[Y] \Big)~~*~~Z\asequal XY \Big\}} \\
  &7~\,\assert{\{Z.~\Ex[Y] = 1/2 ~~\land~~ \Ex[Z] = \Ex[X]\Ex[Y]\}} \\
  &8~\,\assert{\{Z.~\Ex[Z]= \Ex[X]/2 \}}
\end{align*}%
   }
\end{minipage}
\caption{Lilac-annotated \ref{prog:halve}.}
\label{fig:annot-2}
\end{wrapfigure}
The annotation in Figure~\ref{fig:annot-2} proves that \ref{prog:halve} meets
the specification in Equation~\ref{eq:spec-halve}. The proof proceeds in two
phases. In the first phase, we apply standard proof rules for $\applunif$ and
$\applret$to obtain the assertion on Line~\halveSymbexecEnd.\footnote{%
Section~\ref{sec:derived} describes these rules in detail.} As is standard for
proofs in separation logic, we implicitly apply the frame rule when needed. In
this case, the frame rule guarantees that $\own X$ is preserved across the
invocation of $\applunif$, giving the separating conjunction on
Line~\halveFramePost.  This formalizes the intuition that $Y$ is freshly
generated, and therefore independent of $X$.  Line~\halveAsequal\ introduces the
proposition $Z\asequal XY$, which asserts that the output $Z$ is almost-surely
equal to the product $XY$. Two random variables $X$ and $Y$ are
almost-surely equal if they are equal with probability $1$ (i.e., $\Pr(X=Y) =
1$); this is the natural notion of equality for random variables.
See Figure~\ref{fig:lilac-prob-semantics} for the semantics of $(\asequal)$.
Note that using separating conjunction to combine this proposition with the
others does not introduce any spurious independence relationships; unlike $\own
X$ or $Y\sim\unif[0,1]$ the proposition $Z\asequal XY$ does not assert ownership
of any random variable.\footnote{Section~\ref{sec:core-lilac-semantics} will
make this precise.} This is in contrast to prior work such as
\citet{barthe2019probabilistic}, where almost-sure equality requires ownership
of the variables being compared and, as a result, proofs involving equalities
often require a cumbersome mixing of $(\land)$ and $(*)$.

Lines~\halveEx-\halveCalc{} finish the proof using proof
rules expressing nontrivial probability-specific reasoning.
  Line~\halveEx\ applies the proof rules:
  \begin{align}
    \own X ~~*~~ \own Y &\hspace{1em}\vdash\hspace{1em} \Ex[XY]=\Ex[X]\Ex[Y] \tag{\textsc{Indep-Prod}}\label{eqn:indep-prod}\\
    Y \sim \unif[0,1] &\hspace{1em}\vdash\hspace{1em} \Ex[Y] = 1/2  \tag{\textsc{Ex-Unif}}\label{eqn:ex-unif}
  \end{align}
  to obtain a conjunction of equalities about the expectations of $Y$ and $XY$.
  Unlike the proof rules we have considered so far, which are structural in nature and express
  standard logical reasoning principles, these rules express probability-specific facts:
  \ref{eqn:indep-prod} expresses the fact that
  expectation distributes over products of independent
  random variables, and \ref{eqn:ex-unif} expresses the expectation of
  the standard uniform distribution.
  As probability-specific proof rules, these do not follow from the ordinary laws of separation logic;
  instead, they are validated by the probability-specific model that we will
  describe in Section~\ref{sec:tour-of-model}.
  Line~\halveSubst\ uses the almost-sure equality $Z\asequal XY$ to replace all occurrences of
  $XY$ with $Z$.
  The rest of the proof follows by calculation.

\goodparagraph{Lilac's conditioning modality}
Conditioning is at the heart of probabilistic reasoning, and is central to most
probabilistic arguments. A core goal of Lilac is to facilitate conditional
probability arguments that would be familiar to probability theorists.
Informally, conditioning ``[turns a] random event or variable into a
deterministic one, while preserving the random nature of other events and
variables''~\citep{tao_2015}. We capture this intuition
with the \emph{conditioning modality} ``$\D_{x\gets X}P(x)$'', which asserts that
$P(x)$ holds \emph{conditional} on all possible values $x$
the random variable $X$ can take on. A key feature of this modality is that
standard separation logic assertions have intuitive conditional readings under it;
thus $\D$ lifts statements about unconditional probability to their natural conditional counterparts.
For instance, we have seen that $\own X* \own Y$
asserts that $X \indep Y$, and $\Ex[X] = v$ asserts that $X$ has
expectation $v$.  Accordingly, $\D_{z\gets Z}~~(\own X* \own Y)$ expresses
\emph{conditional independence} of $X$ and $Y$ given the random variable $Z$,
denoted $X \indep Y \mid Z$, and $\D_{y\gets Y} ~~(\Ex[X]
= f(y))$ asserts that $X$ has \emph{conditional expectation} $f(Y)$ given $Y$, i.e. $\Ex[X \mid Y] = f(Y)$.

Conditional arguments are driven by a careful distinction between random and
deterministic quantities.  We make these distinctions notationally and
semantically explicit in Lilac. We write random expressions as capital letters
$X$; these variables stand for random variables manipulated by a probabilistic
program, and are compared for equality using $(\asequal)$.  \emph{Deterministic}
(i.e., non-probabilistic) variables are written in lower-case letters~$x$, and
are compared for equality using ordinary $(=)$. 

Lilac supports familiar ways of transitioning between deterministic and 
probabilistic quantities, and typing rules (given in
Section~\ref{sec:core-lilac-syntax}) govern precisely where random and
deterministic expressions can appear. For instance, $\Ex[X] = v$ asserts 
that a random
expression $X$ relates to a deterministic quantity $v$. The conditioning
modality permits arguments
that mediate between the random and the deterministic.
When a probability-theorist says ``and now we proceed by conditioning $X$'' in a 
pen-and-paper proof, we translate this to entering the conditioning modality.
The proposition
$\D_{x\gets X}P(x)$ binds a new deterministic variable $x$ for use in $P(x)$.
Intuitively, the deterministic $x$ represents an arbitrary but non-probabilistic
value that $X$ has been fixed to inside $P$ via conditioning.  This allows
replacing $X$ with $x$ inside $P$. For example, the proposition
$
  \D_{x\gets X} (\Ex[X] = x)
$
is valid: under $\D_{x\gets X}$, we can replace $X$ with $x$ to get $\Ex[X] = \Ex[x]$,
and $\Ex[x] = x$ because $x$ is deterministic.

This ability to turn random quantities $X$ into deterministic variables $x$
is especially useful because deterministic variables are, generally speaking,  better behaved than
their random counterparts. 
In particular, deterministic variables support
case analysis: if $b$ is a deterministic variable
of type $\applkw{bool}$ then
  $P[\vtrue/b] \land P[\vfalse/b] \vdash P$.
This allows Lilac to express
a kind of conditional argument by case analysis that is pervasive in probabilistic reasoning.
For example, consider the following program:
\begin{equation}
  {\small
  \tag{\textsc{CommonCause}}
\begin{aligned}
  &\appllet Z \appleq \flip~1/2 \applin \quad
  \appllet X \appleq \flip~1/2 \applin  \quad
  \appllet Y \appleq \flip~1/2 \applin \\
  &\appllet A \appleq \applret(X~\applkw{||}~Z) \applin \quad
  \appllet B \appleq \applret(Y~\applkw{||}~Z) \applin \\
  &\applret (Z, X, Y, A, B)
\end{aligned}
\label{prog:commoncause}
  }
\end{equation}
This program is taken from
Figure~6(a) of \citet{bao2021bunched} (and translated into a monadic style),
where it is used as an example of conditional independence structure.
The program computes $A$ and $B$ from mutually-independent boolean random variables $Z,X,Y$
generated by $\flip~1/2$; at exit, we have that $A \indep B \mid Z$.
In \citet{bao2021bunched}, this is established via a logic of
``doubly-bunched'' implications -- an extension of separation logic with an additional
family of substructural connectives for expressing conditional independence.
We will show how Lilac can state and prove this conditional independence without having
to introduce new connectives beyond $\D$.

The conditional independence property $A \indep B \mid Z$ is captured by the following triple:
\[
  {\small
  \{\top\}~~\ref{prog:commoncause}~~\Big\{ (Z,X,Y,A,B).~\D_{z\gets Z} (\own A * \own B) \Big\}
  }
\]
This postcondition can be established as follows.\footnote{
Here we prefer to describe the proof as a mixture of prose and Lilac assertions.
\referto{sec:annotated-commoncause} gives a fully annotated program.}
First, applying rules for $\flip$ and $\applkw{||}$ gives:
\[
  {\small
    Z \sim \ber 1/2 \hspace{0.5em}*\hspace{0.5em}  X\sim\ber 1/2 \hspace{0.5em}*\hspace{0.5em} Y\sim\ber 1/2
              \hspace{0.5em}*\hspace{0.5em} A\asequal (X\lor Z) \hspace{0.5em}*\hspace{0.5em} B\asequal (Y\lor Z)
  }
\]
At this point, an informal proof would continue by case analysis on $Z$ as follows.
Because $Z$ is a Boolean value, it can only take on one of two values: $\vtrue$ or $\vfalse$.
First establish $A \indep B \mid (Z=\vtrue)$.
  If $Z=\vtrue$ then $A=B=\vtrue$, and therefore $A \indep B$ because
  by definition $\vtrue \indep \vtrue$.
Now establish $A \indep B \mid (Z=\vfalse)$.
  If $Z=\vfalse$ then $A = X$ and $B = Y$. By construction we have $X \indep Y \indep Z$, 
  since they are all independent $\flip$s.
  This mutual independence implies that $X$ and $Y$ remain independent after
conditioning on $Z=\vfalse$. Hence $A \indep B \mid (Z=\vfalse)$.
This completes the case analysis, so $A\indep B\mid Z$ as desired.

In Lilac, this conditional argument is expressed by introducing the operator $\D_{z\gets Z}$
and performing case analysis on the deterministic $z$.
First, $\D_{z\gets Z}$ is introduced as follows:
\[
  {\small
    \D_{z\gets Z} \hspace{0.7em} \Big(X\sim\ber 1/2 \hspace{0.5em}*\hspace{0.5em} Y\sim\ber 1/2
              \hspace{0.5em}*\hspace{0.5em} A\asequal (X\lor Z) \hspace{0.5em}*\hspace{0.5em} B\asequal (Y\lor Z)\Big)
  }
\]
This step formalizes the idea that anything independent of $Z$ continues to be
independent after conditioning. It is justified by an application of the
\DIndep~ rule, which we will describe in Section~\ref{sec:modality}.

Conditioning on $Z$ allows us to replace 
occurrences of the random variable $Z$ with the newly-introduced deterministic $z$:
\begin{align}
  {\small
    \D_{z\gets Z} \hspace{0.7em} \Big(X\sim\ber 1/2 \hspace{0.5em}*\hspace{0.5em} Y\sim\ber 1/2
              \hspace{0.5em}*\hspace{0.5em} A\asequal (X\lor z) \hspace{0.5em}*\hspace{0.5em} B\asequal (Y\lor z)\Big)
  }
  \label{eq:intermed-entail}
\end{align}
We are done if we can show that (\ref{eq:intermed-entail}) entails $\D_{z\gets
Z}(\own A * \own B)$.  At this point we use a key property of $\D$: as a modal
operator, it respects entailment, so we must establish the
$\D$-free entailment:\footnote{ That is, if $P\vdash Q$ then
$\D_{x\gets X}P\vdash \D_{x\gets X}Q$.}
\[ {\small{X\sim\ber 1/2 \hspace{0.3em}*\hspace{0.3em} Y\sim\ber 1/2
              \hspace{0.3em}*\hspace{0.3em} A\asequal (X\lor z) \hspace{0.3em}*\hspace{0.3em} B\asequal (Y\lor z)
              \hspace{0.5em}\vdash\hspace{0.5em} \own A * \own B}.}
\]
Now the rest of the proof follows directly by cases on $z$:
if $z=\vtrue$ then after simplifying we have:
\[ {\small{A\asequal \vtrue \hspace{0.3em}*\hspace{0.3em} B\asequal \vtrue
              \hspace{0.5em}\vdash\hspace{0.5em} \own A * \own B},}
\]
which follows from the rule $X\asequal \vtrue \vdash \own X$ specialized to $X=A$ and $X=B$.
On the other hand, if $z=\vfalse$ then we are left (again, after simplifications) with
\[ {\small{A \sim \ber 1/2 \hspace{0.3em}*\hspace{0.3em} B\sim\ber 1/2
              \hspace{0.5em}\vdash\hspace{0.5em} \own A * \own B}},
\]
which follows from the rule $X\sim\ber 1/2 \vdash \own X$.

\subsection{A Tour of Lilac's Semantic Model}
\label{sec:tour-of-model}

So far we have sketched a system of proof rules for reasoning about probability that would
appear quite intuitive to a probability theorist. However, reasonable-looking
proof rules are only half of the story. To ensure that the rules are sound and
that Lilac propositions have sensible interpretations in terms of existing
probability-theoretic objects, we construct a model that validates the reasoning
principles described in the previous section and grounds them in probability-theory.

Lilac's semantic model is
designed by analogy with mutable state.
The key idea is that \emph{probability spaces} ---
mathematical objects that model random phenomena --- behave like heaps.
To make this concrete, consider the following program:
\begin{equation}
  {\small
  \tag{\textsc{flip2}}
\begin{aligned}
  \appllet X\appleq \flip~1/2 \applin ~~
  \appllet Y\appleq \flip~1/2 \applin ~~
  \applret (X,Y)
\end{aligned}
\label{prog:flip2}
  }
\end{equation}
According to the informal semantics for probabilistic programs used in Section~\ref{sec:tour-of-lilac},
this program ``generates'' two uniformly distributed boolean random variables
$X$ and $Y$.
To make this precise, we need some standard definitions.
A \emph{probability space} is a tuple $(\Omega,\calF,\mu)$.
The set $\Omega$ is called the \emph{sample space}.
The set $\calF$, called the \emph{$\sigma$-algebra}, is a collection of subsets of $\Omega$
that (1) contains the empty set and $\Omega$, and (2)
satisfies closure under countable union and complements.
Elements $E\in\mathcal F$ are \emph{events};
events models an observable property of the phenomenon.
The map $\mu : \calF \rightarrow [0,1]$ is a \emph{probability measure} assigning each 
event its probability.
Given a finite set $A$,
an \emph{$A$-valued random variable} is a map $X:\Omega\to A$ that is \emph{$\mathcal F$-measurable},
which means that the set $\{\omega \mid X(\omega) = a\}$ is an event of $\mathcal F$
for all $a\in A$.
Intuitively, a random variable $X$ represents an object of type $A$ that depends
on the random phenomenon modelled by $\Omega$; the measurability condition ensures that
$X$ only depends on observable properties.

To visualize these objects, we will temporarily fix $\Omega$ to be the unit square $[0,1]\times[0,1]$
and $\mu$ to be the function that assigns to each subset of $\Omega$ its area, if possible.
This allows us to draw a probability space $(\Omega,\mathcal F,\mu)$ as a partitioning of the unit square
given by the $\sigma$-algebra $\mathcal F$. For example,
the square $\fullsqlrbig$
depicts a probability space with
$\sigma$-algebra $\mathcal F_{\fullsqlr} :=\{\emptyset,\leftrect,\rightrect,\Omega\}$
and probability measure:
\[
  \mu(\emptyset) = 0 \qquad \mu(\leftrect) = 1/2 \qquad \mu(\rightrect) = 1/2 \qquad \mu(\Omega) = 1.
\]
This probability space models a random phenomenon with two events (the blue rectangle
and the orange rectangle), each with equal probability.

For example, one can define a random variable on the space $\fullsqlr$
as follows.
Let $A$ be the set of boolean values $\{\vtrue,\vfalse\}$
and $X : [0,1]\times[0,1]\to\{\vtrue,\vfalse\}$
be a function that sends blue points to $\vtrue$
and orange points to $\vfalse$.
Concretely, $X(\omega_1, \omega_2) = \vtrue$ if $\omega_1 < 1/2$ and $\vfalse$ otherwise.
This is a random variable: the measurability condition is satisfied because
$X^{-1}(\vtrue)=\leftrect\in \mathcal F_{\fullsqlr}$ and $X^{-1}(\vfalse)=\rightrect\in\mathcal F_{\fullsqlr}$.

We now have the machinery necessary to interpret the \ref{prog:flip2} example.
The idea is to think of \ref{prog:flip2} as carrying along a probability space
as it executes, analogous to how programs with mutable state carry along a heap.
We can visualize execution of \ref{prog:flip2} as follows:
\begin{align}
  {\footnotesize
\begin{tabular}{lll}
  1&$\appllet X \appleq \flip~1/2 \applin$
  & \scalebox{0.5}{${\tikz[baseline=4.5ex]{
    \draw (0.8, 2) ellipse (0.9cm and 0.4cm) node[xshift=9] (xT) {$\mathextralarger\vfalse$} node[xshift=-9] (xF) {$\mathextralarger\vtrue$} node[xshift=35] {$\mathextralarger X$};
    \draw[fill=blue,opacity=0.25] (0,0) rectangle (0.75,1.5) node[pos=0.5,yshift=18] (l) {};
    \draw[fill=orange,opacity=0.25] (0.75,0) rectangle (1.5,1.5) node[pos=0.5,yshift=18] (r) {};
    \draw[->, thick] (l) --  (xF);
    \draw[->, thick] (r) --  (xT);
    }}$ }
    \\
  \\
  2&$\appllet Y \appleq \flip~1/2 \applin$
  &
  \scalebox{0.5}{${\tikz[baseline=4.5ex]{
    \draw (0.8, 2) ellipse (0.9cm and 0.4cm) node[xshift=9] (xT) {$\mathextralarger\vfalse$} node[xshift=-9] (xF) {$\mathextralarger\vtrue$} node[xshift=35] {${\mathextralarger X}$};
    \draw (2.0, 0.8) ellipse (0.4cm and 0.9cm) node[yshift=9] (yT) {$\mathextralarger\vtrue$} node[yshift=-9] (yF) {$\mathextralarger\vfalse$} node[xshift=18] {$\mathextralarger Y$};
    \draw[fill=blue,opacity=0.25] (0,0) rectangle (0.75,1.5) node[pos=0.5,yshift=18] (l) {};
    \draw[fill=orange,opacity=0.25] (0.75,0) rectangle (1.5,1.5) node[pos=0.5,yshift=18] (r) {};
    \node[] (0,0) {} rectangle (1.5,0.75) node[xshift=-4, yshift=-8] (u) {} {};
    \node[] (0,1.5) {} rectangle (1.5,0.75) node[xshift=-4, yshift=8] (d) {} {};
    \draw[->, thick] (l) --  (xF);
    \draw[->, thick] (r) --  (xT);
    \draw[->, thick, lightgray] (u) --  (yF);
    \draw[->, thick, lightgray] (d) --  (yT);
    }}$}
    $\pdot$
  \scalebox{0.5}{${\tikz[baseline=4.5ex]{
    \draw (0.8, 2) ellipse (0.9cm and 0.4cm) node[xshift=9] (xT) {$\mathextralarger \vfalse$} node[xshift=-9] (xF) {$\mathextralarger \vtrue$} node[xshift=35] {${\mathextralarger X}$};
    \draw (2.0, 0.8) ellipse (0.4cm and 0.9cm) node[yshift=9] (yT) {$\mathextralarger \vtrue$} node[yshift=-9] (yF) {$\mathextralarger \vfalse$} node[xshift=18] {${\mathextralarger Y}$};
    \node[] (0,0) {} rectangle (0.75,1.5) node[pos=0.5,yshift=18] (l) {};
    \node[] (0.75,0) {} rectangle (1.5,1.5) node[pos=0.5,yshift=18, xshift=10] (r) {};
    \draw[pattern=north west lines] (0,0) rectangle (1.5,0.75) node[pos=0.5,xshift=18] (u) {};
    \draw[pattern=dots] (0,1.5) rectangle (1.5,0.75) node[pos=0.5,xshift=18] (d) {};
    \draw[->, thick, lightgray] (l) --  (xF);
    \draw[->, thick, lightgray] (r) --  (xT);
    \draw[->, thick] (u) --  (yF);
    \draw[->, thick] (d) --  (yT);
    }}$}
  ~$=$~
  \scalebox{0.5}{${
  \tikz[baseline=4.5ex]{
    \draw (0.8, 2) ellipse (0.9cm and 0.4cm) node[xshift=9] (xT) {$\mathextralarger\vfalse$} node[xshift=-9] (xF) {$\mathextralarger\vtrue$} node[xshift=35] {${\mathextralarger X}$};
    \draw (2.0, 0.8) ellipse (0.4cm and 0.9cm) node[yshift=9] (yT) {$\mathextralarger\vtrue$} node[yshift=-9] (yF) {$\mathextralarger\vfalse$} node[xshift=18] {${\mathextralarger Y}$};
    \draw[fill=blue,opacity=0.25] (0,0) rectangle (0.75,1.5) node[pos=0.5,yshift=18] (l) {};
    \draw[fill=orange,opacity=0.25] (0.75,0) rectangle (1.5,1.5) node[pos=0.5,yshift=18] (r) {};
    \draw[pattern=north west lines] (0,0) rectangle (1.5,0.75) node[pos=0.5,xshift=18] (u) {};
    \draw[pattern=dots] (0,1.5) rectangle (1.5,0.75) node[pos=0.5,xshift=18] (d) {};
    \draw[->, thick] (l) --  (xF);
    \draw[->, thick] (r) --  (xT);
    \draw[->, thick] (u) --  (yF);
    \draw[->, thick] (d) --  (yT);
    }}$}
   \\
  3&$\applret (X,Y)$ & ~
\end{tabular}}
\tag{\textsc{Flip2Annot}}
\label{eq:lilac-semantics}
\end{align}
Before Line 1, the probability space has exactly two events,
$\emptyset$ and $\Omega$, with respective probabilities $0$ and $1$.
This is the trivial probability space, analogous to an empty heap.
After the first line is executed two things change:
\begin{itemize}[leftmargin=*]
  \item Two new events are allocated, forming the blue-orange probability space $\fullsqlr$
     we considered earlier.
     This is analogous to how $\new$ allocates a fresh memory cell on the heap: probability spaces correspond to heap fragments. 
  \item The $\flip$ operation yields the random variable $X$ we considered earlier that maps blue points to $\vtrue$;
    it is depicted by the arrows.
    This is analogous to how $\new$ returns the
    location of the newly allocated heap cell:
    random variables correspond to locations.
\end{itemize}
Line 2 allocates a second probability space $\fullsqtb$, whose events are the dotted
region $\toprect$ and dashed region $\botrect$, and a new random variable $Y$ that associates dotted
points to $\vtrue$ and dashed points to $\vfalse$; concretely, $Y(\omega_1, \omega_2) = \vtrue$ if 
$\omega_2 > 1/2$ and $\vfalse$ otherwise.
There are many other possible alternatives to $\fullsqtb$
and $Y$; we have simply chosen the ones that are easiest to visualize.
This is analogous to how, in heap-manipulating languages, there are many
possible locations in the heap that $\new$ can choose to allocate in.
Ordinarily, the location chosen by $\new$ is also \emph{fresh}, so that the entire heap
after executing $\new$ is a \emph{disjoint union} of the old heap and the newly
allocated cell. Analogously, $\flip$ allocates a probability space
\emph{probabilistically independent} from the old one, so that the entire space
$\fullsqlrtb$
after executing the second $\flip$ is an \emph{independent combination} of the
old space $\fullsqlr$ and the newly allocated one $\fullsqtb$.
The operator $(\pdot)$ is 
our new combining operation on probability
spaces: it has the same algebraic properties as disjoint union of heap fragments
does in ordinary separation logic (Theorem \ref{thm:spaces-form-a-krm}), and is the heart of
Lilac's notion of separation.
The events of the combined space $\fullsqlrtb$ are generated by the events
of $\fullsqlr$ and $\fullsqtb$;
we give a formal definition in Section \ref{sec:core-lilac}.
The insight that disjoint union of heaps corresponds to independent combinations of 
probability spaces underlies Lilac's interpretation of standard separation logic connectives:
in particular, a probability space $\calP$ satisfies assertion $P_1*P_2$
if there exists a ``splitting'' of $\calP$ into 
an independent combination $\calP_1\pdot\calP_2$ such that $\calP_1$ satisfies $P_1$
and $\calP_2$ satisfies $P_2$.

\ref{eq:lilac-semantics} also illustrates the second key insight that forms the
basis for our model of separation logic: ownership is measurability.  In
Section~\ref{sec:tour-of-lilac} we described the proposition $\own X$ as
asserting ``probabilistic ownership'' of $X$ but no knowledge of its
distribution.  Now we make this
precise: the proposition  $\own X$ holds in probability space $(\Omega, \calF,
\mu)$ when $X$ is $\mathcal F$-measurable.  In
\ref{eq:lilac-semantics}, $\own X$ holds in the space $\fullsqlr$ because $X$ is
$\calF_{\fullsqlr}$-measurable.  On the other hand, $\own Y$ does \emph{not}
hold in $\fullsqlr$, because the event $Y^{-1}(\vfalse) = \botrect$ is not
contained in $\calF_{\fullsqlr}$.  Similarly, $\own X$ does not hold in
$\fullsqtb$ because $X^{-1}(\vtrue) = \leftrect\notin\calF_{\fullsqtb}$.  The
grayed-out arrows depict non-measurability in
\ref{eq:lilac-semantics}.

\section{Core Lilac} \label{sec:core-lilac}
Now that we have given informal descriptions of Lilac's proof rules and semantic model,
we now begin the formal development of Core Lilac, a subset of Lilac without
the conditioning modality.
First we will introduce the syntax and
semantics of Core Lilac propositions. Then we present our
combining operation $(\pdot)$ on probability spaces
described in Section~\ref{sec:tour-of-model}, and show that it behaves like
disjoint union of heaps. Next, we give the semantics of Lilac propositions,
and fix a small PPL capable of expressing the examples presented in the
previous section, called ``APPL''.
Finally, we connect  Lilac to APPL by giving proof 
rules for reasoning about APPL programs.
\subsection{Syntax and Typing of Core Lilac Propositions} 
\label{sec:core-lilac-syntax}
  \begin{figure}
    \centering
 {\footnotesize{
\begin{align*}
  P,Q ::=~ &\top \mid \bot \mid P \land Q \mid P \lor Q \mid P\to Q \mid 
           P * Q \mid P\wand Q \mid \nec P \mid \\
           &\forall x\ofty S.P \mid \exists x\ofty S. P \mid
            \forallrv X\ofty A.P \mid \existsrv X\ofty A.P \mid 
           E \sim \mu \mid \own E \mid \Ex[E] = e \mid \weakpre(M, X\ofty A. Q)
\end{align*}
}}
\caption{Core Lilac syntax. Metavariables $S,T$ range over sets and $A,B$ over
  measurable spaces. Metavariables $E, e, M$, and $\mu$ range over 
  arbitrary measurable maps of a certain type described in Figure~\ref{fig:lilac-typing}.} 
  \label{fig:lilac-syntax}
  \end{figure}

  \begin{figure}
    \centering
    {\footnotesize
 \begin{align*}
  S&\in\mathbf{Set} & \Gamma &::= \cdot \mid \Gamma,x:S & \sembr{x_1:S_1,\dots,x_n:S_n} &= S_1\times\cdots\times S_n\\
  A&\in\mathbf{Meas} & \Delta &::= \cdot \mid \Delta,X:A &\sembr{X_1:A_1,\dots,X_n:A_n} &= A_1\times\cdots\times A_n 
 \end{align*}
\begin{mathpar}
  \inferrule*[right=T-RandE]{E \in \sembr\Gamma\to\sembr\Delta\mto A}
            {\Gamma;\Delta\vdashrv E:A}
  \and
  \inferrule*[right=T-DetE]{e \in \sembr\Gamma\to A}
            {\Gamma\vdashdet e:A}
  \and
  \inferrule*[right=T-Distr]{\mu \in \sembr\Gamma\to \giry A}
            {\Gamma\vdashdet \mu:A}
  \and
  \inferrule*[right=T-Prog]{M\in \sembr\Gamma\to \sembr\Delta\mto \giry A}{\Gamma;\Delta\vdashprog M:A}

  \inferrule*[right=T-Det$\forall$]{\Gamma,x\ofty S;\Delta\vdash P}{\Gamma;\Delta\vdash \forall x\ofty S. P}
  \and
  \inferrule*[right=T-Rand$\forall$]{\Gamma;\Delta,X\ofty A\vdash P}{\Gamma;\Delta\vdash \forallrv X\ofty A. P}
  \\
  \inferrule*[right=T-Sim]{\Gamma;\Delta\vdashrv E:A \\ \Gamma\vdashdet \mu: A}
            {\Gamma;\Delta\vdash E\sim\mu}
  \and
  \inferrule*[right=T-$\Ex$]{\Gamma;\Delta\vdashrv E:\R \quad \Gamma\vdashdet e:\R}
            {\Gamma;\Delta\vdash \Ex[E] = e}
  \and
  \inferrule*[right=T-$\asequal$]{\Gamma;\Delta\vdashrv E_1:A \quad \Gamma;\Delta\vdashrv E_2:A}
            {\Gamma;\Delta\vdash E_1 \asequal E_2}
  \and
  \inferrule*[right=T-$\weakpre$]{\Gamma;\Delta\vdashprog M:A \\
             \Gamma;\Delta,X\ofty A\vdash Q}
            {\Gamma;\Delta\vdash \weakpre(M,X\ofty A.Q)}
 
\end{mathpar}
    }
\caption{Core Lilac typing rules. Contexts $\Gamma$ contain the types of deterministic
variables and contexts $\Delta$ contain the types of random variables.
Metavariables $E$ range over random expressions and $e$ over deterministic expressions.
We write $\mathcal GA$ for the set of distributions on $A$.
  } \label{fig:lilac-typing}
\end{figure}

The syntax of Core Lilac propositions is given in Figure~\ref{fig:lilac-syntax}.
It includes the standard intuitionistic and substructural connectives,
plus the probability-specific ones $\own E$, $E\sim\mu$, $\Ex[E]=e$, and $E_1\asequal E_2$
introduced in Section~\ref{sec:tour-of-lilac}.
Figure~\ref{fig:lilac-typing} gives selected typing rules for Core
Lilac.\footnote{The full typing rules are in \referto{app:lilac-typing}.} 
The typing judgment has shape $\Gamma; \Delta \vdash P$, where $\Gamma$ is 
a context containing the types of deterministic variables and $\Delta$ is a context 
containing the types of random variables.
It is defined in terms of auxiliary judgments
$\Gamma;\Delta\vdashrv E : A$ for typing random expressions $E$, which may mention both
deterministic and random variables,
and $\Gamma\vdashdet e : A$ for typing deterministic expressions $e$,
which can only mention deterministic variables,
and $\Gamma;\Delta\vdashprog M : A$ for typing programs.
Since there are two kinds of variables,
there are also two kinds of quantifiers, with typing rules
\textsc{T-Det$\forall$} and \textsc{T-Rand$\forall$}.

For clarity of presentation the Core Lilac syntax in
Figure~\ref{fig:lilac-syntax} permits reference to arbitrary measurable 
spaces in its types and arbitrary measurable functions in its terms,
in a manner similar to \citet{shan2017exact} and
\citet{staton2020probabilistic}. 
The rule \textsc{T-RandE} characterizes 
the kinds of functions allowed as random expressions:
a random expression $E$ has type $A$ in context $(\Gamma; \Delta)$
if it is a $\sembr{\Gamma}$-indexed family of measurable maps
$\sembr{\Delta}\mto A$. 
For instance, the following random expressions are all well-typed 
via \textsc{T-RandE} because $(+)$ and $\pow--$ are both
measurable functions $\R\times\R\to\R$:

{\small
\vspace{-0.5em}
\begin{mathpar}
\inferrule*[]{\Gamma;\Delta \vdashrv E_1 : \R \\
\Gamma;\Delta \vdashrv E_2 : \R
}{\Gamma; \Delta \vdashrv E_1 + E_2 : \R}
\and
\inferrule*[]{\Gamma;\Delta \vdashrv E_1 : \R \\
\Gamma;\Delta \vdashrv E_2 : \R
}{\Gamma; \Delta \vdashrv \pow{E_1}{E_2} : \R}
\end{mathpar}}%
Similarly, \textsc{T-DetE} says a deterministic expression $e$ is well-typed
at $A$ in $\Gamma$ if it is a function $\sembr{\Gamma}\to A$, and
\textsc{T-Distr} says $\mu$ is well-typed at $A$ in context $\Gamma$ if it 
is a $\sembr{\Gamma}$-indexed family of distributions;
we write $\calG A$ for the set of distributions on $A$.
Finally, \textsc{T-Prog} says a program $M$ is well-typed in
$\Gamma;\Delta$ if it is a $\sembr{\Gamma}$-indexed family of \emph{Markov
kernels} --- maps $A \mto \giry B$ often used to give semantics to
probabilistic programs~\citep{staton2020probabilistic}.

To reason about the behavior of programs we add a weakest precondition modality
in the style of dynamic logic~\cite{harel2001dynamic,jung2018iris}.  Intuitively,
$\weakpre(M, X \ofty A.Q)$ asserts that $M$ produces a random variable $X$
satisfying postcondition $Q$. For example, the fact that 
\ref{prog:flip2} from Section~\ref{sec:tour-of-model} produces two independent $\ber~1/2$ is stated
  $\weakpre(\sembr{\ref{prog:flip2}}, (X,Y).~ X \sim
\ber 1/2 ~~*~~ Y \sim \ber 1/2)$.

\subsection{Combining Independent Probability Spaces}
Before we present the semantic interpretation of Core Lilac, we first formally
describe our novel combining operation on probability spaces.
As described in
Section~\ref{sec:tour-of-model},
this combining operation behaves like disjoint union of heap fragments,
and underlies Lilac's model of separation. Formally, this is captured by the
notion of a Kripke resource monoid~\cite{galmiche2005semantics}:
\begin{definition}
A \emph{Kripke resource monoid}
is a tuple $(\calM,\plte,\pdot,\bfone)$ where \begin{enumerate}
\item $(\calM,\plte)$ is a poset,
\item $(\pdot)$ is a partial function $\calM \times \calM \pto \calM$,
\item $(\calM,\pdot,\bfone)$ is a partial commutative monoid,
\item $(\pdot)$ respects $(\plte)$: if $x\plte x'$ and $y\plte y'$
and $x'\pdot y'$ defined, then $x\pdot y$ defined and $x\pdot y\plte x' \pdot y'$.\footnote{This is a specialization of Definition 5.5 from
\citet{galmiche2005semantics}.}
\end{enumerate}
\label{def:krm}
\end{definition}
Intuitively, a KRM models the notion of a resource. The set $\mathcal M$
models the space of possible resources; the ordering $(\sqsubseteq)$ models how a given resource
can evolve over time. The operation $(\pdot)$ models how resources can be combined;
it is partial because not all resources are compatible with each other (e.g., 
overlapping heap fragments).
The choice of $(\pdot)$, which captures the desired notion of separation, determines the
interpretations of the standard separation logic connectives.
Our choice, as foreshadowed in Section~\ref{sec:tour-of-model}, combines 
two independent probability spaces:
\begin{definition}[Independent Combination] \label{def:independent-combination}
  Let $(\Omega, \calE,\mu)$ and $(\Omega, \calF,\nu)$ be probability spaces over the 
  same ambient sample space $\Omega$. A probability space $(\Omega,\calG,\rho)$ is an
  \emph{independent combination} of $(\Omega,\calE,\mu)$ and $(\Omega, \calF,\nu)$
  if (1) $\calG$ is the smallest $\sigma$-algebra containing $\calE$ and $\calF$, and (2)
   $\rho$ witnesses the independence of $\mu$ and $\nu$ in the sense that
  for all $E\in\calE$ and $F\in\calF$ it holds that
  $\rho(E\cap F)=\mu(E)\nu(F)$ .
\end{definition}

Recall the program \ref{prog:flip2} from Section~\ref{sec:tour-of-model}.
In this example the probability space $\fullsqlrtb$ is obtained as an independent
combination of $\fullsqlr$
and $\fullsqtb$.
To show this, the areas of the regions in $\fullsqlrtb$
must be products of intersections of regions in $\fullsqlr$ and $\fullsqtb$.
Consider the events $X^{-1}(\vtrue) = \leftrect$ and $Y^{-1}(\vtrue) = \toprect$.
Both of these events have area $1/2$, and
their intersection $\topleftsq$ -- the upper-left quandrant of the
unit square -- has area $1/4=(1/2)(1/2)$
as desired; clearly this holds for all quadrants.

To form a resource monoid, the combining operation $(\pdot)$ must be a partial function.
Definition~\ref{def:independent-combination}
relates probability spaces to their independent combinations.
However, it requires a witness $\rho$ of independence; if there
are multiple possible choices for $\rho$, then this relation does not define a partial function.
Thankfully -- and somewhat surprisingly%
-- it is possible to establish the uniqueness of $\rho$ and therefore of independent combinations:
 \begin{lemma}[independent combinations are unique] \label{lem:independent-combinations-are-unique}
  Suppose $(\Omega, \calG,\rho)$ and $(\Omega, \calG',\rho')$ are independent combinations of
  $(\Omega, \calE,\mu)$ and $(\Omega, \calF,\nu)$.  Then $\calG = \calG'$ and $\rho=\rho'$.
\end{lemma}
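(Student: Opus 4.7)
The plan is to prove the two equalities separately, starting with the easy one.

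For $\calG = \calG'$, note that both are characterized by Definition~\ref{def:independent-combination}(1) as the smallest $\sigma$-algebra containing $\calE \cup \calF$. Since "smallest $\sigma$-algebra containing $S$" is uniquely determined by $S$ (it is the intersection of all $\sigma$-algebras containing $S$), we immediately get $\calG = \calG'$.

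For $\rho = \rho'$, the strategy is the standard measure-theoretic one: exhibit a $\pi$-system on which the two measures agree and which generates $\calG$, and then invoke the Dynkin $\pi$-$\lambda$ theorem (equivalently, the uniqueness half of Carath\'eodory extension) to conclude equality on all of $\calG$. The natural candidate is the collection of "rectangles"
\[
  \calR := \{E \cap F : E \in \calE,\, F \in \calF\}.
\]
First I would check that $\calR$ is a $\pi$-system, which is immediate from
\[
  (E_1 \cap F_1) \cap (E_2 \cap F_2) = (E_1 \cap E_2) \cap (F_1 \cap F_2)
\]
together with the fact that $\calE$ and $\calF$ are each closed under finite intersections (they are $\sigma$-algebras). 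Next I would show that $\calR$ generates $\calG$: since $\Omega \in \calE \cap \calF$, taking $F = \Omega$ shows $\calE \subseteq \calR$, and symmetrically $\calF \subseteq \calR$, so the $\sigma$-algebra generated by $\calR$ contains both $\calE$ and $\calF$ and hence contains $\calG$; conversely $\calR \subseteq \calG$ since $\calG$ is a $\sigma$-algebra containing both $\calE$ and $\calF$.

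Finally, on $\calR$ the two measures agree: by Definition~\ref{def:independent-combination}(2), for every $E \in \calE$ and $F \in \calF$,
\[
  \rho(E \cap F) = \mu(E)\nu(F) = \rho'(E \cap F).
\]
Applying the $\pi$-$\lambda$ uniqueness theorem to the two probability measures $\rho, \rho'$ on $\calG$, which agree on the generating $\pi$-system $\calR$ (and agree on $\Omega \in \calR$, both being probability measures), gives $\rho = \rho'$ on all of $\calG$. I do not anticipate a real obstacle here: the only mildly delicate point is verifying that $\calR$ really is a $\pi$-system that generates $\calG$, but both facts are essentially by construction. The proof is then just an invocation of a standard measure-theoretic uniqueness theorem.
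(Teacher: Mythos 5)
Your proposal is correct and follows essentially the same route as the paper's proof: identify the rectangles $\{E\cap F : E\in\calE,\ F\in\calF\}$ as a $\pi$-system generating $\calG$, note that $\rho$ and $\rho'$ agree on it because both factor as $\mu(E)\nu(F)$, and conclude by the Dynkin $\pi$-$\lambda$ theorem (the paper's Lemma~\ref{pairwise-intersections-pi-system} records exactly the $\pi$-system verification you sketch). No gaps.
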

The proof is concise and relies on an application of the well-known \emph{Dynkin
$\pi$-$\lambda$ theorem}~\citep{kallenberg1997foundations}; see
\referto{app:indep-combination} for details.
Given Lemma~\ref{lem:independent-combinations-are-unique}, we can safely write
$(\calE,\mu)\pdot(\calF,\nu)=(\calG,\rho)$ whenever $(\calG,\rho)$
is an independent combination of $(\calE,\mu)$ and $({\calF},\nu)$, 
making $(\pdot)$ a partial function on probability spaces.
In addition to being a partial function, Definition~\ref{def:krm} also
requires that $(\pdot)$ forms a partial commutative monoid and respects a certain ordering 
relation. This indeed holds of our model: we take the ordering relation to be
inclusion of probability spaces, analogous to inclusion of heaps used in ordinary (affine) separation logic:
\begin{thm} \label{thm:spaces-form-a-krm}
  Let $\calM$ be the set of probability spaces over a fixed sample space $\Omega$.
  Let $(\pdot)$ be the partial function mapping two probability spaces to their independent
  combination if it exists.
  Let $(\plte)$ be the ordering such that $(\calF,\mu)\plte(\calG,\nu)$ iff
  $\calF\subseteq\calG$ and $\mu=\nu|_\calF$.\footnotemark
  The tuple $(\calM,\plte,\pdot,\mathbf{1})$ is a Kripke resource monoid,
  where $\mathbf{1}$ is the trivial probability space $(\calF_\pmbone,\mu_\pmbone)$
  with $\calF_\pmbone=\{\emptyset,\Omega\}$ and $\mu_\pmbone(\Omega) = 1$.
\footnotetext{
  For a distribution $\mu : \calG \rightarrow [0,1]$ and sub-$\sigma$-algebra
  $\calF \subseteq \calG$, we write $\mu|_\calF$ for the restriction of
  $\mu$ to $\calF$.
}
\end{thm}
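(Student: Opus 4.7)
The plan is to verify each clause of Definition~\ref{def:krm} in turn. The partial-function clause is immediate from Lemma~\ref{lem:independent-combinations-are-unique}. The poset laws for $(\plte)$ are routine: reflexivity, antisymmetry, and transitivity descend from the corresponding facts about $\subseteq$ on $\sigma$-algebras, paired with the observation that $\mu|_\calF|_\calE = \mu|_\calE$ when $\calE \subseteq \calF$.

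For the partial commutative monoid structure, identity and commutativity are essentially unpacking definitions. The smallest $\sigma$-algebra containing $\calF_\pmbone = \{\emptyset,\Omega\}$ together with any $\calF$ is $\calF$ itself, and the independence identity $\rho(E\cap F) = \mu_\pmbone(E)\,\nu(F)$ instantiated at $E = \Omega$ pins $\rho$ to $\nu$, so $\mathbf{1}\pdot(\calF,\nu) = (\calF,\nu)$. Commutativity is immediate from the symmetry of Definition~\ref{def:independent-combination}.

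Associativity is the main technical obstacle. Assume $(x\pdot y)\pdot z$ is defined, where $x,y,z$ have $\sigma$-algebras $\calE,\calF,\calG$ and measures $\mu,\nu,\xi$; write $\rho$ for the resulting measure on $\calH := \sigma(\calE\cup\calF\cup\calG)$. First I would show that $y\pdot z$ is defined: applying the outer independence identity with $H = F\in\calF\subseteq\sigma(\calE\cup\calF)$ and $G\in\calG$ yields $\rho(F\cap G) = (\mu\pdot\nu)(F)\,\xi(G) = \nu(F)\,\xi(G)$, so $\rho|_{\sigma(\calF\cup\calG)}$ witnesses independence of $\nu$ and $\xi$. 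I would then show $x\pdot(y\pdot z)$ is defined with the same $\rho$ as witness by checking $\rho(E\cap J) = \mu(E)\,(\nu\pdot\xi)(J)$ for $E\in\calE$ and $J\in\sigma(\calF\cup\calG)$. On rectangles $J = F\cap G$ both sides reduce to $\mu(E)\nu(F)\xi(G)$ by direct unfolding of the independence identities already in hand. To extend to arbitrary $J$, I would fix $E$ and apply the Dynkin $\pi$-$\lambda$ theorem exactly as in the proof of Lemma~\ref{lem:independent-combinations-are-unique}: the rectangles $\{F\cap G : F\in\calF,\,G\in\calG\}$ form a $\pi$-system generating $\sigma(\calF\cup\calG)$, and the set of $J$ satisfying the identity is a $\lambda$-system. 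Uniqueness of independent combinations then identifies $x\pdot(y\pdot z)$ with $(x\pdot y)\pdot z$; the reverse direction of the ``defined iff defined'' convention is symmetric.

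Finally, for the compatibility clause, suppose $(\calE,\mu)\plte(\calE',\mu')$, $(\calF,\nu)\plte(\calF',\nu')$, and $(\calE',\mu')\pdot(\calF',\nu') = (\calH',\rho')$ is defined. Setting $\calH := \sigma(\calE\cup\calF) \subseteq \calH'$ and $\rho := \rho'|_\calH$, I would compute for rectangles with $E\in\calE\subseteq\calE'$ and $F\in\calF\subseteq\calF'$ that $\rho(E\cap F) = \rho'(E\cap F) = \mu'(E)\,\nu'(F) = \mu(E)\,\nu(F)$, the last equality using $\mu = \mu'|_\calE$ and $\nu = \nu'|_\calF$. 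This makes $\rho$ a witness of independence, so $(\calE,\mu)\pdot(\calF,\nu) = (\calH,\rho)$ is defined and sits $\plte$-below $(\calH',\rho')$ by construction.
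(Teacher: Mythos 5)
Your proposal is correct and follows essentially the same route as the paper's proof: unit and commutativity by unfolding, associativity by restricting the outer measure to obtain $y\pdot z$ and then extending the factorization identity from rectangles $F\cap G$ to all of $\sigma(\calF\cup\calG)$ via the Dynkin $\pi$-$\lambda$ theorem, and compatibility with $(\plte)$ by restricting the witness measure. The only cosmetic difference is that you fix $E$ and run $\pi$-$\lambda$ per event, whereas the paper works with the $\lambda$-system of events independent of all of $\calE$ at once.
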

\begin{proof} The main proof obligation is to establish associativity of $(\pdot)$.
  This follows from an application of the $\pi$-$\lambda$ theorem.
  The proof is intricate;
  for details, see \referto{app:spaces-form-a-krm}.
\end{proof}

There is a curious contrast between $(\bullet)$ and the standard definition of independence of $\sigma$-algebras in 
probability theory. The
standard notion of independence of two sub-$\sigma$-algebras $\calE, \calF
\subseteq \calG$ with respect to an ambient probability space $(\Omega, \calG,
\rho)$ states that $\rho$ factorizes along $\calE$ and $\calF$: i.e., it says that for all $E \in
\calE, F \in \calF$, it holds that $\rho(E \cap F) = \rho(E)\rho(F)$.  This
definition presupposes the existence of an ambient measure $\rho$ by which the
independence of $\calE$ and $\calF$ can be judged. In contrast, our independent
combination does not require $\rho$.
Instead, Lemma~\ref{lem:independent-combinations-are-unique} guarantees
that if any such $\rho$ exists, it is unique.
This observation turns the standard definition into
a partial function on probability spaces with the structure of a partial commutative monoid;
once one has combined $\mu$ and $\nu$ to obtain $\rho$ in this way, our definition 
coincides with the standard one.

\subsection{Semantics of Lilac Propositions} \label{sec:core-lilac-semantics}
Having established that independent combination of probability spaces
forms a Kripke resource monoid, we are now ready to present Lilac's semantic model
with it as the foundation.
Figure~\ref{fig:lilac-basic-semantics} gives 
Lilac's interpretations of standard separation logic connectives,
along with interpretations of the two kinds of quantifiers.
We describe these familiar rules first.
Figure~\ref{fig:lilac-prob-semantics} gives the probability-specific 
rules, which we will discuss after.
Both definitions are parameterized by
an ambient sample space $\Omega$ equipped with a $\sigma$-algebra $\Sigma_\Omega$:
all probability spaces $\calP$ are assumed to contain sub-$\sigma$-algebras of $\Sigma_\Omega$,
and $\rv A$ denotes the set of random variables over $\Omega$.
Because $\Omega$ is fixed throughout, we write probability spaces simply as $(\calF,\mu)$.

Figure~\ref{fig:lilac-basic-semantics} defines
the meaning of propositions.
Each proposition
$\Gamma;\Delta\vdash P$
is interpreted as a set of \emph{configurations} of the form $(\gamma,D,\calP)$ by the relation $\gamma, D, \calP \models P$.
In ordinary separation logic, configurations are of the form $(s,h)$
where $h$ is a heap and $s$ a substitution associating values to variables.
Here, the probability space $\calP$ plays the role of the heap.
The pair $(\gamma,D)$ plays the role of the substitution;
because Lilac has two kinds of variables --- random and deterministic --- it
also has two kinds of substitutions: $\gamma\in\sembr\Gamma$ maps each deterministic variable to a value,
and $D\in \rv\sembr\Delta$ maps each random variable to a mathematical random
variable.
The last four lines of Figure~\ref{fig:lilac-basic-semantics}
give familiar-looking interpretations of quantifiers.
All other lines are standard for separation logics,
and follow from the fact that $(\pdot)$ forms a Kripke resource monoid.

\begin{figure}[]
\footnotesize{
\begin{tabular}{lcl}
  $\gamma,D,\calP\vDash \top$       &   always  \\
  $\gamma,D,\calP\vDash \bot$       &   never  \\
  $\gamma,D,\calP\vDash P\land Q$   &   iff  &   $\gamma,D,\calP\vDash P$ and $\gamma,D,\calP\vDash Q$\\
  $\gamma,D,\calP\vDash P\lor Q$    &   iff  &   $\gamma,D,\calP\vDash P$ or $\gamma,D,\calP\vDash Q$\\
  $\gamma,D,\calP\vDash P\to Q$     &   iff  &   $\gamma,D,\calP'\vDash P$ implies $\gamma,D,\calP'\vDash Q$ for all $\calP'\sqsupseteq \calP$\\
  $\gamma,D,\calP\vDash P*Q$        &   iff  &   $\gamma,D,\calP_P\vDash P$ and $\gamma,D,\calP_Q\vDash Q$ for some
  $\calP_P\pdot\calP_Q\plte \calP$\\
  $\gamma,D,\calP\vDash P\wand Q$   &   iff  &   $\gamma,D,\calP_P\vDash P$ implies
  $\gamma,D,\calP_P\pdot\calP\vDash Q$ for all $\calP_P$ with $\calP_P\pdot\calP$ defined\\
  $\gamma,D,\calP\vDash \nec P$     &   iff  &   $\gamma,D, \bfone\vDash P$ \\
  $\gamma,D,\calP\vDash \forall x\ofty S.P$  &   iff  &   $(\gamma,x),D,\calP\vDash P$ for all $x\in S$ \\
  $\gamma,D,\calP\vDash \exists x\ofty S.P$  &   iff  &   $(\gamma,x),D,\calP\vDash P$ for some $x\in S$ \\
  $\gamma,D,\calP\vDash \forallrv X\ofty A.P$  &   iff  &   $\gamma,(D,X),\calP\vDash P$ for all
    $X:\rv A$ \\
  $\gamma,D,\calP\vDash \existsrv X\ofty A.P$  &   iff  &   $\gamma,(D,X),\calP\vDash P$ for some
    $X:\rv A$
\end{tabular}
}
\caption{Semantics of basic Lilac connectives.} 
\label{fig:lilac-basic-semantics}
\end{figure}
\begin{figure}[ht!]
\footnotesize{
\begin{tabular}{lll}
  $\gamma,D,(\calF,\mu)\vDash \own E$             & iff &  $E(\gamma)\circ D$ is $\calF$-measurable\\
  $\gamma,D,(\calF,\mu)\vDash E\sim \mu'$        & iff &  $E(\gamma)\circ D$ is $\calF$-measurable and
                                                              $\mu'(\gamma)=\monadic{
                                                                 &\omega\gets \mu;\\
                                                                 &\ret (E(\gamma)(D(\omega)))
                                                              }$\\
 $\gamma,D,(\calF,\mu)\vDash \Ex[E] = e$    & iff &  $E(\gamma)\circ D$ is $\calF$-measurable and
                                                      $\Ex_{\omega\sim\mu}[E(\gamma)(D(\omega))] = e(\gamma)$ \\
  $\gamma,D,(\calF,\mu)\vDash E_1\asequal E_2$    & iff &  $F\in\calF$ and $\mu(F)=1$ and $F\cup(X_1,X_2)^{-1}(A)\in\calF$ for all $A\in\mathrm{cod}(X_1)\otimes\mathrm{cod}(X_2)$ \\
                                                  &     &  where $F = \{\omega \mid X_1(\omega) = X_2(\omega)\}$
                                                           and $X_i = E_i(\gamma)\circ D$ for $i\in\{1,2\}$ \\

  $\gamma,D,\calP\vDash \weakpre(M,X\ofty A.Q)$ 
                                     & iff &  for all $\calP_\mathrm{frame}$ and $\mu$
                                              with $\calP_\mathrm{frame}\pdot\calP
                                              \plte(\Sigma_\Omega,\mu)$\\
                                     &     &   and all $D_\textrm{ext}:\rv{\sembr{\Delta_\textrm{ext}}}$\\
                                     &     &   there exists $X:\rv A$ and $\calP'$ and $\mu'$ with 
                                                 $\calP_\mathrm{frame}\pdot \calP'
                                                  \plte(\Sigma_\Omega,\mu')$\\
                                     &     &   such that
                                                  $\monadic{
                                                     &\omega\gets \mu;\\
                                                     &v\gets M(\gamma)(D(\omega));\\
                                                     &\ret (D_\textrm{ext}(\omega), D(\omega), v)
                                                   } = \monadic{
                                                     &\omega\gets \mu';\\
                                                     &\ret (D_\textrm{ext}(\omega), D(\omega), X(\omega))
                                                   }$\\
                                     &     &   and $\gamma,(D,X),\calP'\vDash Q$
  ~
\end{tabular}
}
\caption{Semantics of probability-specific Lilac connectives.} 
\label{fig:lilac-prob-semantics}
\end{figure}

Figure~\ref{fig:lilac-prob-semantics} describes the probability-specific 
Lilac connectives. 
We start with the first line in the figure, which defines the meaning of ownership. 
Following the intuition from Section~\ref{sec:tour-of-model}, ownership 
corresponds to measurability of a random variable with respect to a 
particular $\sigma$-algebra. This intuition is made formal here:
the proposition
$\own E$ holds with respect to a configuration $(\gamma,D,(\calF,\mu))$ if 
the random variable denoted by the random expression $E$ is $\calF$-measurable.\footnote{Formally, 
for a probability space $(\Omega, \calF, \mu)$,
a random variable $X : (\Omega, \calF) \rightarrow (A, \mathcal{A})$ is 
\emph{$\calF$-measurable} if for every $E \in \mathcal{A}$ 
it holds that $X^{-1}(E) \in \calF$.
}
The expression $E$ can have free variables that are either 
random or deterministic. The random variable $E(\gamma) \circ D$ is constructed 
by performing
the relevant substitutions:
first all deterministic values are substituted into $E$,
and then the resulting measurable map $E(\gamma)$ is composed
with $D$ to produce a random variable. With this connective in hand, we can 
formally relate separating conjunction in Lilac to the familiar 
probabilistic notion of independence of random variables:
\begin{lemma}[separating conjunction is mutual independence] \label{lem:star-is-independence}
  Fix a configuration $(\gamma,D,\calP)$.
  Abbreviate $X_i(\gamma)\circ D$ as $X_i'$. Then,
   $\indep_i X'_i$ holds with respect to $\calP$
  if and only if $(\gamma,D,\calP) \vDash \hugestar_i \own X_i$.
\end{lemma}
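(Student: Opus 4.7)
The plan is to reduce both directions to the standard probability-theoretic fact that random variables are mutually independent iff the sub-$\sigma$-algebras they generate are mutually independent, using the characterization of $\own$ from Figure~\ref{fig:lilac-prob-semantics} as measurability with respect to the ambient $\sigma$-algebra.

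For the forward direction, I would unfold the semantics of $\hugestar_i \own X_i$: there exist probability spaces $\calP_i = (\calF_i, \mu_i)$ with $\calP_1 \pdot \cdots \pdot \calP_n \plte \calP$ such that each $X_i'$ is $\calF_i$-measurable. By iteratively applying the defining property of $\pdot$ (Definition~\ref{def:independent-combination}) and the associativity provided by Theorem~\ref{thm:spaces-form-a-krm}, the measure $\rho$ of the combined space satisfies $\rho(\bigcap_i E_i) = \prod_i \mu_i(E_i)$ for all choices $E_i \in \calF_i$. Because $\sigma(X_i') \subseteq \calF_i$ and the $\plte$ ordering preserves measures on included $\sigma$-algebras, this factorization restricts to witness mutual independence of $X_1', \dots, X_n'$ under $\calP$.

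For the backward direction I would run the argument in reverse. Given mutual independence of $X_1', \dots, X_n'$ in $\calP = (\calF, \mu)$, set $\calP_i := (\sigma(X_i'), \mu|_{\sigma(X_i')})$. Each $X_i'$ is tautologically $\sigma(X_i')$-measurable, so $\own X_i$ holds in $\calP_i$. It remains to verify that $\calP_1 \pdot \cdots \pdot \calP_n$ exists and is $\plte \calP$. Existence follows because the restriction of $\mu$ to the smallest $\sigma$-algebra containing all $\sigma(X_i')$ satisfies both clauses of Definition~\ref{def:independent-combination} by hypothesis, and Lemma~\ref{lem:independent-combinations-are-unique} provides uniqueness; inclusion in $\calP$ is then immediate from the definition of $\plte$.

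The main obstacle is passing from the binary combining operation $\pdot$ to the $n$-ary combination implicit in $\hugestar_i$. One must show that left-to-right iteration of $\pdot$ exists precisely when the sub-$\sigma$-algebras are \emph{mutually} (not merely pairwise) independent, and that the resulting combined measure is the expected $n$-fold product. I would handle this by induction on $n$: at each step it must be checked that the $\sigma$-algebra assembled from the first $k$ pieces is independent of $\calF_{k+1}$, which follows by applying the Dynkin $\pi$-$\lambda$ theorem (following the same pattern as in the proof of Lemma~\ref{lem:independent-combinations-are-unique}) to the $\pi$-system of finite intersections $E_1 \cap \cdots \cap E_k$. Once this inductive step is in hand, both directions collapse into the chain of equivalences sketched above.
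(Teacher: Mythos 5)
Your proposal is correct and follows essentially the same route as the paper's proof: the forward direction unfolds the semantics of $\hugestar_i \own X_i$ and reads off the $n$-fold factorization of the combined measure, and the backward direction takes each $\calP_i$ to be the pullback $\sigma$-algebra of $X_i'$ with the restricted measure, then establishes definedness of the iterated combination by induction on $n$ using the Dynkin $\pi$-$\lambda$ theorem on the $\pi$-system of finite intersections, with Lemma~\ref{lem:independent-combinations-are-unique} giving the final inclusion in $\calP$. The only cosmetic difference is that the paper verifies the factorization over arbitrary index subsets $J$ directly, whereas you recover that by allowing $E_i = \Omega$; both are fine.
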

For a proof, see \referto{app:star-is-independence}.
Next, proposition $E\sim \mu'$ holds with respect to $(\gamma,D,(\calF,\mu))$ if
$E$ owns $\calF$ and additionally follows distribution $\mu'(\gamma)$, which is the distribution 
obtained by substituting the values in $\gamma$ for deterministic variables in the distribution
expression $\mu'$. Throughout this figure, we use Haskell-style notation to
construct distributions using the Giry monad~\citep{giry1982categorical}; here we use this notation
on the right hand side of the equation for $\mu'(\gamma)$ to construct the
distribution produced by first sampling a value $\omega$ from the ambient probability
measure $\mu$ and then running $E$ on it. Intuitively, this captures the notion
that $\mu'$ is the push-forward of $\mu$ through $E$. The interpretation
of $\Ex[E]=e$ has a similar structure.

The proposition $E_1\asequal E_2$ holds with respect to $(\gamma,D,(\mathcal F,\mu))$
if $E_1$ and $E_2$ are almost-surely equal: formally, we require the event $F$ that
the random variables $E_1(\gamma)\circ D$ and $E_2(\gamma)\circ D$ agree to have probability $1$.
We additionally require $\calF$
to contain all supersets of $F$ that may be expressed as events involving $E_1$
and $E_2$; this is necessary to support rewriting along equalities $E_1\asequal
E_2$ as illustrated by the examples in
Section~\ref{sec:tour-of-lilac}.\footnote{We would like to thank Jialu Bao for
pointing this out.}
Note that we do not require $E_1$ or $E_2$ to be
$\calF$-measurable: this makes $E_1\asequal E_2$
a \emph{duplicable proposition},\footnote{%
As in \citet{jung2018iris}, we say a proposition $P$ is duplicable if $P\vdash P * P$.}
and allows it to be combined with other propositions using separating conjunction without asserting
spurious independence relationships.
For details on the properties of almost-sure equality, see \referto{app:asequal-properties}.

The most intricate part of Figure~\ref{fig:lilac-prob-semantics} is the interpretation
of our weakest-precondition modality $\weakpre$. 
Intuitively,
configurations of the form $(\gamma,D,\calP)$ represent fragments of
a machine state, much like how a configuration $(s,h)$ in ordinary separation logic
represents a fragment of the full heap.
The idea is that $\weakpre(M, X\ofty A.Q)$ should hold in configuration $(\gamma, D, \calP)$
if
(1) running $M$ with any state
containing fragment $\calP$ produces a new state containing a new
fragment $\calP'$ and a new random variable $X$; (2) the new fragment $\calP'$ satisfies
postcondition $Q$; (3) any fragments $\calP_\textrm{frame}$ independent of $\calP$ are
preserved by $M$, which is necessary to establish a frame rule.
To enforce (1), we quantify over all probability spaces $(\Sigma_\Omega, \mu)$ 
containing $\calP$ and require that running $M$ in $\mu$
produce a new probability space $(\Sigma_\Omega,\mu')$ containing a new
fragment $\calP'$ and new random variable $X$ whose distribution is equal
to the distribution produced by $M$. To enforce (2),
we require that the new configuration $(\gamma,(D,X),\calP')$ satisfy $Q$.
To enforce (3), we quantify over all possible
``frames'' $\calP_\textrm{frame}$, and require that the new space $\mu'$ contain
the exact same frame unchanged.
Finally, in order to prove a fundamental substitution lemma,
we quantify over arbitrary extensions $D_\textrm{ext}$ to the random
substitution $D$; for details on this technical point see \referto{app:substitution-lemma}.

\subsection{Syntax and Semantics of APPL}
\begin{figure}[]
  \centering
  {\footnotesize
\begin{tabular}{rcl}
  $A,B$ & $::=$ & $A\times B\mid \boolty \mid \realty \mid A^n \mid \indexty \mid \applgiry A$ \\
  $M,N,O$ & $::=$ & $X\mid \applret M\mid \letin XMN \mid \appltuple{M,N} \mid \applfst~M \mid \applsnd~M \mid$\\
      &       & $\vtrue \mid \vfalse \mid \applite MNO \mid \flip~p \mid r \mid M \oplus N \mid M \prec N \mid \applunif \mid$ \\
      &       & $\applarray{M,\dots,M} \mid \applindex MN \mid \applfor n{M_\textrm{init}}iX{M_\textrm{step}}$ \\
\end{tabular}
  }
\caption{APPL syntax. Metavariables $p$ range over probabilities, $r$ over real numbers, and $n$
  over natural numbers; $\oplus$ and $\prec$ range over standard arithmetic and comparison
  operators.} \label{fig:appl-syntax}
\end{figure}
Now we establish a program logic that leverages Core Lilac. We fix a small probabilistic programing language 
called APPL capable of expressing the examples in Section~\ref{sec:tour-of-model}.
The syntax of APPL is given in Figure~\ref{fig:appl-syntax}.
It is a simply-typed first-order calculus with a sampling operation, immutable arrays, and bounded loops.
It has a simple monadic type-system as in \citet{staton2020probabilistic}. 
The important monadic typing rules are:

{\footnotesize
\vspace{-1.5em}
\begin{mathpar}
  \inferrule*[right=T-Unif]{~}{\Delta\vdashappl \applunif:\applgiry\realty}
    \and
  \inferrule*[right=T-Ret]{{\Delta\vdashappl M : A}}{\Delta\vdashappl \applret M:\applgiry A}
   \and
  \inferrule*[right=T-Bind]{\Delta\vdashappl M:\applgiry A \\\\ \Delta,X:A\vdashappl N:\applgiry B}{\Delta\vdashappl \letin XMN:\applgiry B}
\end{mathpar}
}%
Monadic computations have type $\applgiry A$; 
the $\applgiry$ stands for the standard Giry monad~\citep{giry1982categorical}.
The \textsc{T-Unif} rule states that $\applunif$ is a probabilistic computation
producing a real number.

The semantics for APPL are standard and follow \citet{staton2020probabilistic}.
Types $A$ are interpreted as measurable spaces $\sembr{A}$
and typing contexts $\Delta = \{x_1 : A_1, \dots, n_n : A_N\}$
as products $\sembr{\Delta} = \sembr{A_1} \times \cdots \times \sembr{A_n}$. 
Programs $\Delta \vdashappl M : \applgiry{} A$ are interpreted as measurable maps 
$\sembr{M} : \sembr{\Delta} \mto \calG \sembr{A}$.
The full semantics can be found in \referto{app:appl-semantics}. 

\subsection{Reasoning About APPL Programs} \label{sec:derived}

\begin{figure}
  {\footnotesize
  \begin{mathpar}
    \inferrule*[right=H-Consequence]{P\vdash P' \\ Q'\vdash Q \\ \triple{P'}MX{Q'}}{\triple PMXQ}
    \and
    \inferrule*[right=H-Frame $(X\notin F)$]{\triple PMXQ}{\triple{F*P}MX{F*Q}}
    \and
    \inferrule*[right=H-Ret]{~}{\triple{Q\big[\sembr M/X\big]}{\applret M}XQ}
    \and
    \inferrule*[right=H-Let]
      {\triple PMXQ \\ \forallrv X.~\triple QNYR}
      {\triple P{\letin XMN}YR}
    \and
    \inferrule*[right=H-Uniform]{~}{\triple\top{\applunif}X{X\sim\unif~[0,1]})}
    \and
    \inferrule*[right=H-Flip]{~}{\triple\top{\flip~p}X{X\sim\ber p}}
    \and
    \inferrule*[right=H-For]
      {\forall i\ofty\N.~\forallrv X\ofty A.~\triple{I(i,X)}M{X'}{I(i+1,X')}}
      {\triple{I(1,e)}{\applfor neiXM}{X\ofty A}{I(n+1,X)}}
    \and
    {\inferrule*[right=H-If]
      {\triple P M X {Q(X)}
       \\\\
       \forallrv X.~\triple {Q(X)} N Y {R(\ite EXY)}
       }
      {\triple{P}{\applite EMN}{Z}{R(Z)}}}
  \end{mathpar}
  }
  \caption{Selected proof rules for reasoning about APPL programs.}
  \label{fig:proof-rules}
\end{figure}

We now show how the semantic model described in the previous section validates
standard proof rules for reasoning about APPL programs.
Using the connectives described in Section~\ref{sec:core-lilac-syntax},
we define the meaning of Hoare triples $\triple PMXQ$ in terms of $\weakpre$, following \citet{jung2018iris}.\footnote{%
  Concretely, $\triple PMXQ  ~:=~ \nec (P\wand \weakpre\left(\sembr M,X.\,Q\right))$;
  see \citet{jung2018iris} for a detailed explanation.
  }
Then, we use the model described in Section~\ref{sec:core-lilac-semantics}
to validate the proof rules in Figure~\ref{fig:proof-rules};
these rules justify the annotated programs given in Section~\ref{sec:tour-of-lilac}.

The structural rules \textsc{H-Consequence} and \textsc{H-Frame} are completely
standard, as are \textsc{H-Ret} and \textsc{H-Let}.
The rules \textsc{H-Uniform} and \textsc{H-Flip}
specify APPL's sampling operations; they formalize the
intuition that sampling is like allocation. The rule \textsc{H-For}
is a standard proof principle for reasoning about APPL's for-loops:
it states that one can conclude postcondition $I(n+1,X)$
after running a for-loop if an invariant
$I(i,X)$ -- a proposition indexed by the loop iteration $i$
and value of the accumulator variable $X$ -- holds on entry of the initial accumulator $e$
and is maintained by every loop iteration.
The rule \textsc{H-If} is used to reason about if-then-else.
Unlike in the traditional setting, a probabilistic program can be thought of as taking
both branches of an if-then-else, since it is possible that a Boolean random variable
is both true and false with nonzero probability.
The \textsc{H-If} rule reflects this: it states that, to establish $R(Z)$,
one can first run the $\applthen$-branch to obtain $X$,
and then run the $\applelse$-branch to obtain $Y$,
and then show that $R$ holds of the \emph{random variable} $(\ite EXY)$
that combines the outcomes of the two branches.

Now we turn our attention to validating these rules with respect to a 
suitable model.
Thus far we have been rather abstract about the ambient sample space $\Omega$
underlying Lilac's semantic model.  At this point we make a concrete choice
in order to validate the proof rules in Figure~\ref{fig:proof-rules}.  The
soundness of \textsc{H-Uniform} and \textsc{H-Flip} require constructing a new
probability space independent of an existing one.
To ensure that it is always possible to construct such a fresh probability space,
we fix a particular choice of $\Omega$ and restrict our Kripke resource monoid to a class of
probability spaces on $\Omega$ with so-called ``finite footprint''; this
guarantees that there is always enough ``room'' in $\Omega$ for new probability spaces to be allocated.

Specifically, we fix $\Omega$ to be the Hilbert cube $[0,1]^\N$, the collection of infinite streams
of real numbers in the interval $[0,1]$; these infinite streams can be thought of as
infinitely-replenishable randomness sources for use throughout a probabilistic program's execution
\cite{culpepper2017contextual,zhang2022reasoning}.
A probability space has finite footprint if it only uses finitely-many
dimensions of the Hilbert cube:

\begin{definition}
  A $\sigma$-algebra $\calF$ on $[0,1]^\N$
  has \emph{finite footprint} if there is some finite $n$ such that every $F\in \calF$ is of the form
  $F'\times\hilbertcube$ for some $F'\subseteq[0,1]^n$.
\end{definition}

Then we restrict our Kripke resource monoid on probability spaces to
only those probability spaces with finite footprint.
With this choice of $\Omega$ and a restriction to suitably-well-behaved probability
spaces in hand, we can validate the above proof rules:

\begin{thm}
  The proof rules in Figure~\ref{fig:proof-rules} are sound.
\end{thm}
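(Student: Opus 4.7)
The plan is to unfold the definition of the Hoare triple $\triple PMXQ := \nec(P \wand \weakpre(\sembr M, X.Q))$ and reduce each rule to an entailment about $\weakpre$. I would prove a small library of lemmas about $\weakpre$ directly from its semantic clause in Figure~\ref{fig:lilac-prob-semantics}, and then obtain the Hoare rules as corollaries in exactly the style of Iris~\cite{jung2018iris}. Concretely, the needed lemmas are (i) monotonicity: $Q \vdash Q' \Rightarrow \weakpre(M,X.Q) \vdash \weakpre(M,X.Q')$; (ii) framing: $F * \weakpre(M,X.Q) \vdash \weakpre(M,X.F*Q)$ whenever $X$ is not free in $F$; (iii) a ret-rule $Q[\sembr M/X] \vdash \weakpre(\applret M, X.Q)$; and (iv) a bind-rule $\weakpre(M,X.\weakpre(N,Y.R)) \vdash \weakpre(\letin X M N, Y.R)$. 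From these, \textsc{H-Consequence}, \textsc{H-Frame}, \textsc{H-Ret}, and \textsc{H-Let} fall out by the same boilerplate manipulation with $\nec$ and $\wand$ used in Iris.

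The monotonicity and framing lemmas follow almost immediately from the shape of the $\weakpre$ clause: monotonicity is transparent, while framing uses that the clause already universally quantifies over $\calP_\mathrm{frame}$, so a frame $F$ witnessed by some $\calP_F \plte \calP_\mathrm{frame}$ can be absorbed into either side of the split when extracting the postcondition. The ret-rule is direct from the definition, taking $\calP' = \calP$ and $X = \sembr M(\gamma)\circ D$ (which is measurable and produces the point mass distribution at $\sembr M$). The bind-rule is the most intricate of the structural lemmas: one composes the two Giry-kernel equalities coming from the inner and outer $\weakpre$, threading $D_\mathrm{ext}$ through so that the extended substitution needed for the inner call can use the newly sampled $X$ as part of the data preserved by the outer call.

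The probabilistic rules \textsc{H-Uniform} and \textsc{H-Flip} are where the concrete choice of $\Omega = \hilbertcube$ and the finite-footprint restriction enter. I would prove a general ``allocation'' lemma: for any finite-footprint probability space $\calP$ and any distribution $\mu'$ on a Borel space $A$, there exist a finite-footprint probability space $\calP_\mathrm{new}$ and a random variable $X$ on $\Omega$ such that $\calP \pdot \calP_\mathrm{new}$ is defined and $X \sim \mu'$ holds in $\calP_\mathrm{new}$. The construction picks an index $k \in \N$ strictly greater than every coordinate used by $\calP$ or by any finite-footprint frame currently in play, and defines $X(\omega) = F_{\mu'}\inv(\omega_k)$ where $F_{\mu'}\inv$ is the quantile function of $\mu'$ (so that pushing the uniform measure on the $k$th coordinate forward through $X$ yields $\mu'$); the $\sigma$-algebra of $\calP_\mathrm{new}$ is generated by the preimages of $X$. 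Independence of $\calP_\mathrm{new}$ from $\calP$ and from the frame then follows because they depend on disjoint coordinates of the product measure on $\hilbertcube$. This lemma immediately yields \textsc{H-Uniform} and \textsc{H-Flip} by instantiating $\mu' = \unif[0,1]$ and $\mu' = \ber p$ respectively.

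The remaining two rules are derived. \textsc{H-For} follows by induction on $n$ from \textsc{H-Ret} and \textsc{H-Let}, peeling off one iteration at a time and using the hypothesis with $i := n$. \textsc{H-If} follows by unfolding the semantics of APPL's $\applkw{if}$ (which branches on the random Boolean $E$ by taking a Giry-monadic case analysis) and observing that, because both branches are evaluated independently in the semantics, one can run $M$ to get $X$, frame it past the execution of $N$, and then combine the resulting random variables as $(\ite E X Y)$. The main obstacle I expect is the allocation lemma for \textsc{H-Uniform}: one must verify carefully that the newly generated $\sigma$-algebra has finite footprint, that the combined space lies in our restricted KRM, and that the independence required by Definition~\ref{def:independent-combination} holds uniformly against any frame simultaneously present, which is where the ``strictly greater coordinate'' choice, together with Fubini on the product measure, does the real work.
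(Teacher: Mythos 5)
Your proposal matches the paper's proof in all essentials: reduce each Hoare rule to a $\weakpre$ entailment proved from the semantic clause, obtain \textsc{H-For} by induction on iterations and \textsc{H-If} by the monadic decomposition of the conditional, and prove \textsc{H-Uniform}/\textsc{H-Flip} by allocating the new random variable in a Hilbert-cube coordinate beyond the finite footprint of the current space, frame, and $D_\mathrm{ext}$. Your quantile-function allocation lemma is a mild generalization of the paper's construction (which projects out coordinate $n+1$ directly for $\unif[0,1]$ and treats $\flip$ as analogous), but the argument is the same.
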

\begin{proof}
  The structural rules, \textsc{H-Ret}, and \textsc{H-Let} follow straightforwardly from unwinding the
  definitions of Hoare triples and the interpretations of the logical connectives.
  The rule \textsc{H-For} follows by induction on the number of loop iterations.
  As foreshadowed, the rules \textsc{H-Uniform} and \textsc{H-Flip} require constructing a
  new probability space independent of an existing one; because the existing space
  only exhausts some finite $n$ dimensions of the Hilbert cube, we are free to allocate
  the new probability space in dimensions $n+1$ and above.
  For details see \referto{app:proof-rules}.
\end{proof}

\section{The conditioning modality} \label{sec:modality}

\begin{figure}
{\footnotesize
  \begin{mathpar}
  \inferrule*[lab=C-Entail]{P~\vdash~ Q}{\D_{x\gets E} P~\vdash~ \D_{x\gets E} Q}
  \and
  \inferrule*[lab=\DIndepName]{}{(\own E) * P ~\vdash~ \D_{x\gets E} P}
  \and
  \inferrule*[lab=C-Subst]
    {}
    {\own X \vdash \D_{x\gets X} \big(X \asequal x \big)}
  \and
  \inferrule*[lab=C-Total-Expectation] 
    {}
    {\D_{x\gets X} \Big(\Ex[E] = e\Big) \hspace{.5em}\land\hspace{.5em} \Ex[e[X/x]] = v
      \hspace{.5em}\mathlarger{\vdash}\hspace{.5em} \Ex[E] = v}
    \label{rule:total-expectation}
  \end{mathpar}
}%
  \caption{Selected properties of Lilac's conditioning modality $\D$.}
  \label{fig:C-props}
\end{figure}

So far we have presented Core Lilac, which defines probabilistic interpretations
of the standard separation logic connectives, along with atomic
propositions for making probability-specific assertions. 
Now we describe our
second main contribution: Lilac's modal operator for reasoning
about conditioning. We extend Core Lilac with the proposition
$\D_{x\ofty A\gets E} P$
which states that $P$ holds conditional on the event $E=x$ for all deterministic $x$.
Its typing rule is:

{\small
\vspace{-0.5em}
\[
  \inferrule*[right=T-$\D$]{\Gamma;\Delta\vdashrv E:A \\ \Gamma,x\ofty A;\Delta\vdash P}
    {\Gamma;\Delta\vdash \D_{x\ofty A\gets E} P}
\]}%

Figure~\ref{fig:C-props} lists useful laws about $\D$ (proofs are given in
\referto{app:C-laws}).  The rule \textsc{C-Entail} says $\D$ respects
entailment; this allows ordinary logical reasoning to be carried out under $\D$,
automatically lifting statements and proofs about unconditional probability to
the conditional setting.  The rule \textsc{C-Subst} captures the intuition that
$X$ can be safely replaced by $x$ under $\D_{x\gets X}$.  The remaining rules
express standard facts about conditioning.  The rule \DIndep~ states that if $P$
holds independent of some random expression $E$, then $P$ also holds conditional
on $E=x$ for any $x$; this acts as a form of introduction rule for $\D$.  The
rule \textsc{C-Total-Expectation} states the Law of Total Expectation, a theorem
of probability theory that relates an unconditional expectation to an
expectation over conditional expectations.  As a rule, it says that, to compute
the expectation of a random expression $E$, one can proceed in two stages:
first, compute the conditional expectation of $E$ given $X=x$, yielding some
deterministic expression $e$ in terms of the conditioned $x$; then, compute the
desired unconditional expectation by putting the random $X$ back into $e$ and
taking the expectation of the resulting expression $e[X/x]$. 
Section~\ref{sec:weighted-sampling} will give an example illustrating this
rule's use.

\subsection{Semantics of the Conditioning Modality}
The rules stated in Figure~\ref{fig:C-props} give a powerful and intuitive
framework for reasoning about conditioning that would be familiar to an
experienced probability theorist. 
Our goal in this section is to identify a model that validates
these rules.
Intuitively, a model for entering the conditioning modality involves
reasoning under a new conditioned space:
$(\gamma, D, \calP) \vDash \D_{x\ofty A\gets X}P$ 
holds if for all $x\in A$ there exists some \emph{conditioned space} $\calP_{X=x}$
such that $\gamma, D, \calP|_{X=x} \vDash P$.
We would like to define $\calP|_{X=x}$ using the standard definition of conditional probability:
let $\calP = (\Omega, \calF, \mu)$ and 
define $\calP|_{X=x} = (\Omega, \calF, \mu|_{X=x})$ where 
$\mu|_{X=x}(E) = \mu(E \cap \{X=x\})/\mu(\{X=x\})$.
This definition for the conditioned space $\calP|_{X=x}$ is 
useful for discrete random variables $X$, where it is 
practical to disregard conditioned spaces over null events where $\mu(\{X=x\}) = 0$.
However, if $X$ is a continuous random variable, then by definition $\mu(\{X=x\}) = 0$ 
for all $x$, so these null events cannot be ignored.

In probability theory, \emph{disintegrations} were developed in order to resolve 
this issue and give a natural notion of conditioned spaces for 
continuous random variables~\citep{chang1997conditioning}.
A \emph{disintegration} for a probability space $\calP$ 
with respect to a random variable $X$ is defined as a collection 
of all conditioned spaces $\{\calP|_{X=x}\}_{x\in A}$ satisfying certain measurability
and concentration properties~\citep{chang1997conditioning}. The existence of a 
disintegration for a probability space and random variable is a very strong condition,
and not all probability spaces $\calP$ will have a well-defined 
disintegration for all random variables $X$. The study of disintegrations 
has formally characterized some of the conditions under which 
there exist well-defined notions of disintegration~\citep{chang1997conditioning}.
We leverage 
this knowledge here to design a model for $\D$.

Our strategy will be to identify a suitable class of probability spaces that is
both large enough to accommodate all of our design criteria and examples, and
well-behaved enough to admit all reasonable disintegrations. 
Our starting point in
this search is the \emph{Hilbert cube}, the countable product of
unit intervals $[0,1]^\N$. The Hilbert cube is disintegrable with respect to a large
class of random variables (those whose codomain has a well-behaved $\sigma$-algebra):
\begin{lemma} \label{lem:hilbert-cube-disintegrable}
  Let $X : [0,1]^\N \rightarrow (A,\mathcal{A})$ be a random variable and $\calP$ be a
  probability space on the Hilbert cube. If $\mathcal{A}$ is countably-generated and contains all
  singletons, then there exists a disintegration of $\calP$ with respect to $X$.
\end{lemma}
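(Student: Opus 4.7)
The plan is to appeal to an existing disintegration theorem from probability theory. The Hilbert cube $[0,1]^\N$ is a Polish space (a countable product of compact metric spaces is itself a compact metric space under the product metric), and in particular its Borel $\sigma$-algebra makes it a standard Borel space. Any Borel probability measure $\mu$ on it is therefore a Radon measure, which is precisely the kind of regularity needed for classical disintegration results.

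Given this, I would invoke the disintegration theorem of \citet{chang1997conditioning} (their main existence theorem), which states: if $(\Omega,\calF,\mu)$ is a Radon probability space and $T : \Omega \to (A,\mathcal{A})$ is measurable into a space whose $\sigma$-algebra is countably generated and contains all singletons, then $\mu$ admits a disintegration with respect to $T$. The hypotheses on $\mathcal A$ in the lemma statement match those of Chang--Pollard exactly, and the Radon hypothesis on $\mu$ is supplied by the Hilbert cube being Polish. Instantiating this theorem with $\calP = ([0,1]^\N, \calF, \mu)$ and $T = X$ directly yields the conclusion.

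The main thing to check carefully is that the paper's notion of ``disintegration'' agrees with the one guaranteed by Chang--Pollard. The latter produces a family $\{\mu_x\}_{x\in A}$ of probability measures on $\Omega$, measurable in $x$ and concentrated on $\{X = x\}$ in the appropriate almost-sure sense, from which the conditioned spaces $\calP|_{X=x} = (\Omega, \calF, \mu_x)$ of the subsequent semantic development can be recovered. The only subtle point is the restriction to the sub-$\sigma$-algebra $\calF \subseteq \Sigma_\Omega$ carried by $\calP$: one first applies the theorem to the ambient Radon measure on $(\Omega, \Sigma_\Omega)$ (or to an extension of $\mu$), then restricts the resulting disintegrating kernels to $\calF$, using that restrictions of Radon measures remain Radon on $\calF$. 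Aside from this bookkeeping, the proof is a direct citation.
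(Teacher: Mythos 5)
Your proposal is correct and matches the paper's own proof, which likewise observes that the Hilbert cube is a complete separable metric space (so any probability measure on it is a finite Borel, hence Radon, measure) and then cites Theorem~1.4 of \citet{chang1997conditioning} directly. The extension-and-restriction bookkeeping for sub-$\sigma$-algebras that you flag is handled by the paper not inside this lemma but in the subsequent restriction to $\calM_\textrm{Borel}$ (Theorem~\ref{lem:extends-to-borel-measure-disintegrable}), exactly as you anticipate.
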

\begin{proof}
  The Hilbert cube is a complete separable metric space~\cite{srivastava2008course}
  so any probability measure on it is finite Borel; the result follows from Theorem~1.4 of
  \citet{chang1997conditioning}.
\end{proof}

The class of spaces $(A,\mathcal{A})$ required by Lemma~\ref{lem:hilbert-cube-disintegrable} includes
many familiar examples, such as $\R^n$, $\N$, and all finite spaces with the usual powerset
$\sigma$-algebra. 
Since the Hilbert cube is the sample space $\Omega$ underlying Lilac's semantic model,
this result allows us to disintegrate configurations $(\gamma,D,\calP)$
whenever $\calP$ is a probability space whose $\sigma$-algebra is
exactly the Borel $\sigma$-algebra on the Hilbert cube, and whose measure $\mu$ is correspondingly a
Borel measure. However, our configurations are not quite of this form: $\mu$
may be a probability measure on a sub-$\sigma$-algebra on the Hilbert cube,
and such measures unfortunately cannot in general be extended to a Borel measure \cite{ershov1975extension}.
This motivates the next step in our search for suitably-well-behaved probability spaces:

\begin{thm} 
  \label{lem:extends-to-borel-measure-disintegrable}
  Let $\mathcal{M}_\textrm{Borel}$ be the set of probability spaces on the
  Hilbert cube of the form $(\Omega,\calF,\mu)$, where $\mu$ can be extended to
  a Borel measure.  The restriction of the KRM given by
  Theorem~\ref{thm:spaces-form-a-krm} to $\calM_\textrm{Borel}$ is still a KRM.
\end{thm}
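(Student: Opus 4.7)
The plan is to verify the four conditions of Definition~\ref{def:krm} for the restriction. Of these, the poset structure, the associativity/commutativity/unit laws for $(\pdot)$, and the compatibility of $(\pdot)$ with $(\plte)$ all descend for free from the ambient KRM of Theorem~\ref{thm:spaces-form-a-krm} as soon as one knows that $\calM_\textrm{Borel}$ is closed under the monoid operations. So the only obligations specific to the Borel restriction are (i) the unit $\mathbf{1}$ lies in $\calM_\textrm{Borel}$, and (ii) $\calM_\textrm{Borel}$ is closed under $(\pdot)$.

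Obligation (i) is immediate: the trivial $\sigma$-algebra $\{\emptyset,\Omega\}$ with $\mu_\pmbone(\Omega)=1$ is extended by any Borel probability measure on $[0,1]^\N$, for instance the coordinatewise product of Lebesgue measures.

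For (ii), suppose $(\calE,\mu),(\calF,\nu)\in\calM_\textrm{Borel}$ and their combination $(\calG,\rho)=(\calE,\mu)\pdot(\calF,\nu)$ is defined in the ambient KRM, so $\calG=\sigma(\calE\cup\calF)$ and $\rho(E\cap F)=\mu(E)\nu(F)$ for all $E\in\calE,F\in\calF$. Fix Borel extensions $\tilde\mu,\tilde\nu$ of $\mu,\nu$. The plan is to build a Borel probability measure $\tilde\rho$ on $[0,1]^\N$ that makes $\calE$ and $\calF$ independent with $\calE$-marginal $\mu$ and $\calF$-marginal $\nu$; by Lemma~\ref{lem:independent-combinations-are-unique} any such $\tilde\rho$ automatically restricts to $\rho$ on $\calG$, and so is a Borel extension of $\rho$ as required.

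The main technical obstacle is the construction of $\tilde\rho$ itself. One cannot simply combine $\tilde\mu$ and $\tilde\nu$ directly as Borel measures on $[0,1]^\N$: a measure independently combining the full Borel $\sigma$-algebra with itself would, by taking $B=\Omega$ in the factorization condition applied to each marginal, have to coincide with both marginals, forcing $\tilde\mu=\tilde\nu$. Instead, I plan to exploit the Polish structure of $[0,1]^\N$: because it is a standard Borel space, every Borel probability measure on it admits regular conditional probabilities with respect to any sub-$\sigma$-algebra (cf.\ the disintegration results behind Lemma~\ref{lem:hilbert-cube-disintegrable}). Using these disintegrations, one can glue $\tilde\mu$ and $\tilde\nu$ into a single Borel measure $\tilde\rho$ that factorizes along $\calE$ and $\calF$ with the prescribed marginals. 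Once $\tilde\rho$ is in hand, the inheritance checks above then complete the verification of the KRM axioms.
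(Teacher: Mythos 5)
Your argument funnels everything into obligation (ii) --- closure of $\calM_\textrm{Borel}$ under $(\pdot)$ --- and that is precisely where the proposal stops being a proof. Constructing a Borel probability measure $\tilde\rho$ on $[0,1]^{\N}$ under which two given sub-$\sigma$-algebras $\calE,\calF$ are independent with prescribed laws $\mu,\nu$ is the entire difficulty, and ``glue $\tilde\mu$ and $\tilde\nu$ using regular conditional probabilities'' is a gesture, not a construction. The natural candidate $\tilde\rho(A) := \int \tilde\nu(A\mid\calF)(\omega)\,\tilde\mu(d\omega)$ does not even have the right $\calF$-marginal: for $F\in\calF$ it returns (up to null-set issues) $\tilde\mu(F)$ rather than $\nu(F)$, since you are integrating an $\calF$-measurable function against $\tilde\mu$, whose restriction to $\calF$ has nothing to do with $\nu$. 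No other candidate is specified. For \emph{countably generated} $\calE,\calF$ one can imagine realizing them as pullbacks $f^{-1}(\calB),g^{-1}(\calB)$ and pushing $(f_*\tilde\mu)\otimes(g_*\tilde\nu)$ back through a measurable section of $(f,g)$, but the theorem as stated imposes no countable-generation hypothesis on the members of $\calM_\textrm{Borel}$ (that restriction only appears later, in Theorem~\ref{thm:final-krm}), and for arbitrary sub-$\sigma$-algebras the closure claim is far from obvious and may simply fail.

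The paper avoids this problem entirely by reading ``restriction'' differently: the restricted operation is \emph{defined} by $\calP\pdotp\calQ := \calP\pdot\calQ$ when that combination lies in $\calM_\textrm{Borel}$, and is undefined otherwise. With that reading no closure property is needed. One only checks that $\mathbf 1\plte\calP$ for every $\calP$ and that $\calM_\textrm{Borel}$ is downward-closed under $\plte$ --- immediate, since a Borel extension of $\nu$ also extends any restriction $\nu|_\calF$ --- and then a general lemma shows that restricting a KRM to a downward-closed subset containing the unit is again a KRM. Downward-closure is exactly what rescues associativity for the redefined operation: if $(x\pdotp y)\pdotp z$ is defined then $y\pdot z\plte (x\pdot y)\pdot z\in\calM_\textrm{Borel}$, so $y\pdotp z$ is defined as well. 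To repair your proof you should either adopt this reading and prove the downward-closure lemma, or actually carry out the construction of $\tilde\rho$; as written, the central step is missing.
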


A proof of this theorem is in \referto{app:disintegration-proofs}.
The upshot of Theorem~\ref{lem:extends-to-borel-measure-disintegrable} is that
configurations of the form $(\gamma,D,\calP)$ where
$\calP\in\calM_\textrm{Borel}$ can be extended to the Hilbert cube, where they
are disintegrable with respect to suitably-well-behaved random variables
following Lemma~\ref{lem:hilbert-cube-disintegrable}.  The final step in our
search is motivated by the desire to validate rule \DIndep~ in
Figure~\ref{fig:C-props}. The soundness of \DIndep~ requires the ability
to show that a union of negligible sets (a set with measure 0) remains
negligible.  In general this is not the case, so we need to further specialize
our model.  We force these unions to be countable -- from which the result
follows straightforwardly from the axioms of probability -- by restricting
ourselves to probability spaces with \emph{countably-generated
$\sigma$-algebras}. Putting this all together yields the final Kripke resource
monoid underlying Lilac's semantic model:

\begin{thm}
  \label{thm:final-krm}
  Let $\Mdis$ be the set of countably-generated probability spaces $\calP$ that have finite footprint
  and can be extended to a Borel measure on the entire Hilbert cube.
  The restriction of the KRM given by Theorem~\ref{thm:spaces-form-a-krm} to $\Mdis$ is still a KRM.
\end{thm}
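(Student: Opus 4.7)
The plan is to verify that $\Mdis$ is closed under the combining operation $(\pdot)$, contains the unit $\mathbf{1}$, and is closed downward under $(\plte)$; then the Kripke resource monoid structure restricts to $\Mdis$ because associativity, commutativity, and the compatibility axiom between $(\pdot)$ and $(\plte)$ are inherited from the ambient KRM of Theorem~\ref{thm:spaces-form-a-krm}.

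First, I would check that $\mathbf{1}=(\calF_{\pmbone},\mu_{\pmbone})$ lies in $\Mdis$: its $\sigma$-algebra $\{\emptyset,\Omega\}$ is generated by a single element, has footprint $0$, and the measure extends trivially to a Borel measure on $\hilbertcube$ (for instance, by taking the product Lebesgue measure). Next, I would verify downward closure under $(\plte)$: if $(\calG,\nu)\in\Mdis$ and $(\calF,\mu)\plte(\calG,\nu)$, then $\calF\subseteq\calG$ inherits finite footprint and a Borel extension (namely the same one used for $\nu$), and countable generation of $\calF$ follows because any sub-$\sigma$-algebra of a countably-generated one on a standard Borel space is countably generated.

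The main work is closure under $(\pdot)$. Suppose $\calP=(\calE,\mu)$ and $\calQ=(\calF,\nu)$ are in $\Mdis$ and $\calP\pdot\calQ=(\calG,\rho)$ is defined. I would argue each defining property of $\Mdis$ separately. \textbf{Countably generated:} if $\{E_i\}_{i\in\N}$ and $\{F_j\}_{j\in\N}$ generate $\calE$ and $\calF$ respectively, then $\{E_i\}\cup\{F_j\}$ is a countable family generating $\calG$, because by Definition~\ref{def:independent-combination} $\calG$ is the smallest $\sigma$-algebra containing $\calE\cup\calF$. \textbf{Finite footprint:} if $\calE$ has footprint $n_1$ and $\calF$ has footprint $n_2$, then every generator of $\calG$ is of the form $E'\times\hilbertcube$ with $E'\subseteq[0,1]^{\max(n_1,n_2)}$; since the collection of sets of this form is closed under complements and countable unions, $\calG$ itself has footprint at most $\max(n_1,n_2)$. \textbf{Borel-extendable:} this is exactly the content of Theorem~\ref{lem:extends-to-borel-measure-disintegrable}, so $\rho$ extends to a Borel measure on $\hilbertcube$.

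The hardest step is finite footprint closure under $(\pdot)$, because $\calG$ is characterized abstractly as the smallest $\sigma$-algebra containing $\calE\cup\calF$ and one must check that the finite-footprint property survives this closure operation; the rest of the argument is a direct bookkeeping exercise once this and the countably-generated closure are in hand. Finally, associativity, commutativity, and compatibility of $(\pdot)$ with $(\plte)$ on $\Mdis$ are immediate consequences of the corresponding properties in the ambient KRM, since $\Mdis$ is a sub-poset closed under the partial operation.
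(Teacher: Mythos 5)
Your decomposition (closure of $\Mdis$ under $(\pdot)$ for each defining property, plus unit membership) is the right shape for the finite-footprint and countably-generated parts, and those two arguments match the paper's: the combined footprint is bounded by the maximum of the two footprints, and the union of two countable generating families generates the join $\sigma$-algebra. The genuine gap is in the Borel-extendability step. You assert that if $\calP\pdot\calQ=(\calG,\rho)$ is defined with $\calP,\calQ\in\Mdis$ then $\rho$ extends to a Borel measure, and you cite Theorem~\ref{lem:extends-to-borel-measure-disintegrable} as being ``exactly'' this statement. It is not: that theorem only asserts that the \emph{restriction} of the KRM to $\calM_\textrm{Borel}$ is again a KRM, and the paper proves both it and the corresponding step of Theorem~\ref{thm:final-krm} via Lemma~\ref{app:restricted-krm}, which handles a downward-closed subclass $A$ by \emph{further partializing} the operation --- declaring $x\pdotp y$ undefined whenever $x\pdot y\notin A$ --- and then recovering associativity from downward closure together with $\bfone\plte x$. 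In other words, the paper deliberately does not claim that the independent combination of two Borel-extendable spaces is Borel-extendable. Proving that closure would require exhibiting a single Borel measure $\bar\rho$ on the whole Hilbert cube with $\bar\rho(E\cap F)=\mu(E)\nu(F)$ for all $E\in\calE$, $F\in\calF$; this does not follow from the separate extendability of $\mu$ and $\nu$ (their two Borel extensions need not make $\calE$ and $\calF$ independent, and measures on sub-$\sigma$-algebras do not extend to Borel measures in general, as the paper notes citing Ershov). So you must either supply that nontrivial construction, or adopt the restriction-of-the-operation reading of the theorem statement, in which case the downward-closure argument of Lemma~\ref{app:restricted-krm} replaces your closure claim.

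A secondary error: your justification for downward closure of countable generation --- that any sub-$\sigma$-algebra of a countably generated $\sigma$-algebra on a standard Borel space is countably generated --- is false; the countable--cocountable $\sigma$-algebra on $[0,1]$ is a sub-$\sigma$-algebra of the Borel sets but is not countably generated. This particular slip does not sink the argument, because once you have closure of the countably-generated class under $(\pdot)$ you do not need its downward closure, but the claim should be removed.
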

For a proof see \referto{app:final-krm}. 
Using Theorem~\ref{thm:final-krm}, we can finally give an interpretation to $\D$:

\begin{lemma}\label{lem:disintegration-def}
  The following interpretation of $\D_{x\ofty A\gets E}P$ is validates
  the rules in Figure~\ref{fig:C-props}:
  \\[0.5em]
  \begin{tabular}{lcl}
    \hspace{2.5em}$\displaystyle\gamma,D,(\calF,\mu)\vDash \D_{x\ofty A\gets E} P$
    & iff
    &{
      \begin{tabular}{l}
        for all $(\Sigma_\Omega,\mu')\sqsupseteq (\calF,\mu)$ \\
        and all disintegrations of $\mu'$ along $E(\gamma)\circ D$ into $\{\nu_x\}_{x\in A}$, \\
        and almost all $x\in A$, it holds that $(\gamma,x),D,(\calF,\nu_x|_\calP)\vDash P$. \\
      \end{tabular}}
  \end{tabular}
\end{lemma}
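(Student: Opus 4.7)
The plan is to verify each of the four rules in Figure~\ref{fig:C-props} against the semantic interpretation, unfolding the definitions and leaning on three standard facts about disintegrations: (i) \emph{concentration}, i.e.\ for $P_E$-almost every $x$ the measure $\nu_x$ is supported on $\{E(\gamma)\circ D = x\}$; (ii) the \emph{factorization identity} $\int f\, d\mu' = \int\!\!\bigl(\int f\, d\nu_x\bigr)\, dP_E(x)$ for any $\mu'$-integrable $f$, where $P_E$ is the pushforward of $\mu'$ along $E(\gamma)\circ D$; and (iii) \emph{essential uniqueness} of the disintegrating family, which is guaranteed in $\Mdis$ by Theorem~\ref{thm:final-krm} via Lemma~\ref{lem:hilbert-cube-disintegrable}. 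I also rely on the Kripke monotonicity of $\vDash$ that comes for free from building the model over a KRM: if $\calP\plte\calP'$ and $(\gamma,D,\calP)\vDash P$, then $(\gamma,D,\calP')\vDash P$.

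\textsc{C-Entail} and \textsc{C-Subst} are short. For \textsc{C-Entail}, any extension $\mu'$ and disintegration $\{\nu_x\}$ witnessing $\D_{x\gets E}P$ at $(\gamma,D,\calP)$ also witnesses $\D_{x\gets E}Q$, since pointwise $(\gamma,x),D,(\calF,\nu_x|_\calF)\vDash P$ entails the same for $Q$ by hypothesis. For \textsc{C-Subst}, concentration gives a $\nu_x$-full-measure event on which $E(\gamma)\circ D$ and $E[x/X](\gamma,x)\circ D$ coincide; the semantics of $(\asequal)$ is then satisfied by a full-measure witness $F\in\calF$ obtained by intersecting this concentration event with $\Omega\in\calF$.

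The main obstacle is \textsc{C-Indep}. Suppose $(\gamma,D,\calP)\vDash (\own E) * P$, witnessed by a split $\calP_E \pdot \calP_P \plte \calP$ where $\calP_E=(\calF_E,\mu_E)$ makes $E(\gamma)\circ D$ measurable and $(\gamma,D,\calP_P)\vDash P$ for $\calP_P=(\calF_P,\mu_P)$. Fix any extension $(\Sigma_\Omega,\mu')\sqsupseteq\calP$ and any disintegration $\{\nu_x\}$ of $\mu'$ along $E(\gamma)\circ D$. The crux is to prove the lemma $\nu_x|_{\calF_P} = \mu_P$ for $P_E$-almost every $x$, i.e.\ that conditioning on $E$ does not disturb the independent factor $\calF_P$. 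My plan is to show that the \emph{constant} family $x\mapsto\mu_P$ is itself a valid disintegrand on $\calF_P$: for any $F\in\calF_P$ and measurable $B$, independence of $\calF_E$ and $\calF_P$ under $\mu'$ — a consequence of $\calP_E\pdot\calP_P\plte(\Sigma_\Omega,\mu')$ together with Lemma~\ref{lem:independent-combinations-are-unique} — gives $\mu'(F\cap\{E\in B\}) = \mu_P(F)\,P_E(B)$, so the constant $\mu_P(F)$ is a version of the conditional probability of $F$ given $E=x$. By essential uniqueness this must agree with $\nu_x(F)$ for $P_E$-almost every $x$. Countable generation of $\calF_P$ (inherited from $\Mdis$) lets me intersect over a generating $\pi$-system and a $\pi$-$\lambda$ argument to take the exceptional null set uniformly in $F$, yielding $\nu_x|_{\calF_P}=\mu_P$ almost surely. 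For such $x$, $\calP_P\plte(\calF,\nu_x|_\calF)$, so Kripke monotonicity transports $(\gamma,D,\calP_P)\vDash P$ to $(\gamma,D,(\calF,\nu_x|_\calF))\vDash P$, as required.

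Finally, \textsc{C-Total-Expectation} falls directly out of the factorization identity. The first conjunct gives $\int E(\gamma)\circ D\, d\nu_x = e(\gamma,x)$ for $P_X$-almost every $x$; integrating both sides against $P_X$ and applying factorization yields $\Ex_{\omega\sim\mu}[E(\gamma)(D(\omega))] = \int e(\gamma,x)\, dP_X(x) = \Ex_{\omega\sim\mu}[e(\gamma,X(\gamma)(D(\omega)))]$, which equals $v$ by the second conjunct, establishing $\Ex[E] = v$.
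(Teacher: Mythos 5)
Your proposal is correct and follows essentially the same route as the paper: \textsc{C-Entail} by pointwise transport, \textsc{C-Subst} from the concentration property, \textsc{C-Total-Expectation} from the disintegration/tower identity, and \DIndep{} reduced to showing $\nu_x|_{\calF_P}=\mu_P$ for almost all $x$ using independence of the two factors together with countable generation and a $\pi$-$\lambda$ argument to obtain a uniform exceptional null set. The only cosmetic difference is that where you invoke almost-everywhere uniqueness of versions of the conditional probability to conclude $\nu_x(F)=\mu_P(F)$ a.e., the paper re-derives that fact by hand (splitting the putative exceptional set into $N_<$ and $N_>$ and contradicting the factorization identity), so the two arguments coincide in substance.
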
%
For a detailed proof, see \referto{app:C-laws}.

\section{Further examples of applying Lilac} \label{sec:more-examples}
An essential component of evaluating any new program logic is applying it to
validate interesting correctness properties of programs.  Our goal in
this section is to further establish (1) that Lilac can validate examples that
existing probabilistic separation logic approaches can
handle~\citep{barthe2019probabilistic,bao2021bunched}; and (2) give an example
that goes beyond these existing approaches. 

\subsection{Proving a Weighted Sampling Algorithm Correct} \label{sec:weighted-sampling}

To exercise Lilac's support for conditional reasoning, continuous random variables,
and substructural handling of independence, we now prove a sophisticated 
constant-space \emph{weighted sampling algorithm} correct using Lilac.  Suppose
you are given a collection of items $\{x_1, \dots, x_n\}$ each with associated
weight $w_i \in \mathbb{R}^+$.  The task is to draw a sample from the collection
$\{x_i\}$ in a manner where each item is drawn with probability proportional to
its weight. This problem is an instance of \emph{reservoir
sampling}~\citep{efraimidis2006weighted}.

\begin{wrapfigure}{r}{0.5\linewidth}
  ~~\begin{minipage}{0.5\linewidth}
\begin{equation*}
\text{\footnotesize $\begin{aligned}
  &1~\,\appllet W \appleq \applret\applarray{w_1,\dots,w_n} \applin \\
  &2~\,\appllet M \appleq \applret (-\infty) \applin \appllet K \appleq \applret(0)\applin \\
  &{\color{gray}\texttt{ // for i from 1 to n with accumulator (M,K),}}\\
  &3~\,\applforopen n{\appltuple{M,K}}i{\appltuple{M,K}}{\\
    &4~\,\hspace{1em}\appllet S \appleq\applunif \applin \\
    &5~\,\hspace{1em}\appllet U \appleq \applret(S\applexp\applpar{1/\applindex Wi})\applin \\
    &6~\,\hspace{1em}\applif~U>M\\
    &7~\,\hspace{1em}\applthen~\applret \appltuple{U,i}\\
    &8~\,\hspace{1em}\applelse~\applret \appltuple{M,K}}
  \applforclose
\end{aligned}$}
\end{equation*}%
\end{minipage}
\caption{Constant-space reservoir sampling.}
\label{fig:reservoir}
\end{wrapfigure}
A naive solution might first normalize the weights so that they sum to $1$
and then sample from the resulting probability distribution.
Such an approach is inappropriate for application in large-scale systems: it requires storing
all previously encountered weights and scanning over them before a single sample can be
drawn, and so does not scale to a streaming setting where new weights are acquired
one at a time (for instance, as each user visits a website). To fix this,
\citet{efraimidis2006weighted} proposed the 
\emph{constant-space solution} in Figure~\ref{fig:reservoir}.

The core idea is to generate a value $S$ uniformly at random from $[0,1]$ on \emph{every}
iteration (Line~3), perturb $S$ according to the next weight $w_i$ in the stream
(Line~4), and store only the \emph{greatest} perturbed sample (Lines~5--8).
It is a surprising fact that this program is equivalent to the naive one.
To prove it, we will establish the postcondition
   $\forall k. \Pr(K=k) = w_k/\sum_j w_j$.
First, mechanically applying the rules given in Section~\ref{sec:derived}
allows us to conclude the following at exit (for details, 
which involve a loop invariant, see \referto{sec:full-verification}):

\vspace{-1em}
{\small 
\begin{align}
  \existsrv S_1 \dots S_n.~\hugestar_i S_i \sim \mathrm{Unif}~[0,1] ~~*~~ 
  K \asequal \argmax_i S_i^{1/w_i}
  \label{eq:post1}
\end{align}}%
Here $\{S_i\}_i$ are i.i.d.\ random variables with $S_i$ denoting the value sampled by Line~4 on the $i$th
iteration, and $K$ denotes the final result.
The rest of the proof is devoted to showing that (\ref{eq:post1}) entails the desired postcondition.
Given arbitrary $k$, note that
  $\Pr(K=k) = \Pr\big(S_k^{1/w_k} > S_j^{1/w_j} \text{ for all } j \ne k\big)$,
since $K$ is defined to be the $\argmax$ of $j$ over all $S_j^{1/w_j}$.
To make computing this probability tractable, we condition on $S_k$:
fixing $S_k$ to a deterministic $s_k$,

\vspace{-1em}
  {\small 
\begin{align}
  \Pr(K=k \mid S_k = s_k) &= \Pr\big(s_k^{1/w_k} > S_j^{1/w_j} \text{ for all } j \ne k\big) \label{eq:eqset1:start} \\
    &= \Pr\big(s_k^{w_j/w_k} > S_j \text{ for all } j \ne k\big) & \text{Exponentiating} \\
    &= \prod_{j \ne k} \Pr\big(s_k^{w_j/w_k} > S_j\big) & \text{By conditional independence} \label{eq:indep-ex}
\end{align}
  }%
From Equation~\ref{eq:indep-ex} we proceed by calculation. 
If $U \sim \unif[0,1]$, then $\Pr(u > U) = u$; this lets 
us conclude that 
$(\ref{eq:indep-ex}) = \prod_{j \ne k} s_k^{w_j/w_k} = \pow{s_k}{\frac{\sum_{j\ne k} w_j}{w_k}}$.

Formally, this calculation occurs under $\D_{s_k\gets S_k}$, which is introduced via \DIndep.
The expression $\Pr(E)$ abbreviates $\Ex[\ind[E]]$, the expectation of the indicator random variable $\ind[E]$.\footnote{%
If $E$ is an event then the random variable $\ind[E]$ is $1$ if $E$ holds and $0$ otherwise.}
The critical step occurs in Equation~\ref{eq:indep-ex}:
since $\indep_{j\ne k} S_j \mid S_k$, we can apply:

\vspace{-0.75em}
{\small
\begin{align}
  \tag{\textsc{Indep-Prod}}
  {
  \hugestar_i \own E_i ~~~\vdash~~~ \Pr\Big( \bigcap_i E_i \Big) = \prod_i \Pr(E_i),
  } \label{eq:indep-prod}
\end{align}}%
an immediate consequence of Lemma~\ref{lem:star-is-independence}.

Finally, to complete the proof we connect the conditional
$\Pr(K=k \mid S_k = s_k)$ to the unconditional $\Pr(K=k)$
using the following instantiation of \textsc{C-Total-Expectation}:

\vspace{-0.75em}
{\footnotesize
\begin{align*}
  \D_{s_k \gets S_k} \Big(\Ex[\underbrace{\ind[K=k]}_{E}] 
  = \underbrace{\Bigpow{s_k}{\frac{\sum_{j \ne k}{w_j}}{w_k}}}_{e} \Big)
  \land 
  \Big(
  \Ex\Big[\underbrace{\mathrm{pow}\Big(S_k, \frac{\sum_{j \ne k} w_j}{w_k} \Big)}_{e[S_k/s_k]}\Big]
  =
  \underbrace{\frac{w_k}{\sum_j w_j}}_{v}
  \Big)
  ~~\mathlarger{\vdash}~~
  \Ex[\underbrace{\ind[K=k]}_{E}] = \underbrace{\frac{w_k}{\sum_j w_j}}_{v}
\end{align*}
}%
In the left-hand side of this entailment,
the first conjunct follows from the above
and the second conjunct follows from a calculation.
For a detailed presentation of this proof, see \referto{sec:full-verification}.

To sum up, we have shown how Lilac can be used to verify a constant-space weighted sampling
algorithm whose correctness argument requires reasoning about conditional independence
of continuous random variables and imports several important results from probability theory,
including the law of total expectation and key properties of the uniform distribution.
Hopefully, the above example illustrates how Lilac's
substructural handling of independence, modal treatment of conditioning,
and semantic model grounded in familiar constructs from probability theory
allow for easy and natural formalizations of standard informal proofs.

\subsection{An Example of Conditional Independence via Control Flow} \label{sec:condsamples}

For this example, we borrow
the \ref{prog:condsamples} program from Figure~6(b) of
\citet{bao2021bunched} (translated into a functional style):
\begin{equation}
  {\small
  \tag{\textsc{CondSamples}}
\begin{aligned}
  &\appllet Z \appleq \flip~1/2 \applin \\
  &\applite{Z}
      {\left(\begin{aligned}
         &\appllet X_1 \appleq \flip~p \applin \\
         &\appllet Y_1 \appleq \flip~p \applin \\
         &\applret (Z,X_1,Y_1) \\
      \end{aligned}\right)}
      {\left(\begin{aligned}
        &\appllet X_2 \appleq \flip~q \applin \\
        &\appllet Y_2 \appleq \flip~q \applin \\
        &\applret (Z,X_2,Y_2)
      \end{aligned}\right)}
\end{aligned}
\label{prog:condsamples}
  }
\end{equation}
This program produces a tuple $(Z,X,Y)$ with $X$ and $Y$ conditionally independent given $Z$.
The random variables $X$ and $Y$ are sampled from different distributions
depending on the outcome $Z$ of a fair coin flip: if $Z=\vtrue$ then $X$ and $Y$
are Bernoulli random variables with parameter $p$, and if $Z=\vfalse$ then $X$
and $Y$ are Bernoulli random variables with parameter $q$.
The proof of conditional independence, as in the \ref{prog:commoncause} example,
goes by case analysis on $Z$.

Conditional independence of $X$ and $Y$ given $Z$ is expressed by the following triple:
\[
  \lrtriple\top{\ref{prog:condsamples}}{(Z,X,Y)}{\D_{z\gets Z} (\own X * \own Y)}
\]
As usual, the proof begins by mechanically applying proof rules. This yields:
\begin{align*}
  &Z \sim \ber 1/2 ~~*~~ \existsrv~X_1~Y_1~X_2~Y_2.~~
  \left(\begin{aligned}
    &X_1 \sim \ber p ~~*~~ Y_1 \sim\ber p ~~*~~ X_2 \sim \ber q ~~*~~ Y_2 \sim \ber q ~~*~~ \\
    &X \asequal (\ite Z{X_1}{X_2}) ~~*~~ Y \asequal (\ite Z{Y_1}{Y_2})
  \end{aligned}\right)
\end{align*}
This mechanically-derived postcondition makes use of existential quantification
over random variables, written $\existsrv$, in order to talk about the random
variables produced by the $\applthen$ and $\applelse$ branches.
The subformula $X_1\sim\ber p ~~*~~ Y_1\sim\ber p$ is the postcondition derived
for the $\applthen$ branch, and the subformula
$X_2\sim\ber p ~~*~~ Y_2\sim\ber p$ is the postcondition derived
for the $\applelse$ branch.
The almost-sure equalities $X\asequal (\ite Z{X_1}{X_2})$
and $Y\asequal (\ite Z{Y_1}{Y_2})$
combine the variables produced by the individual branches into the variables $X$ and $Y$
produced by the whole if-then-else.

We now proceed as in the \ref{prog:commoncause} example.
First, we condition on $Z$ and replace all occurrences of $Z$ with the newly introduced deterministic variable $z$,
giving
\begin{align*}
  &\D_{z\gets Z} \hspace{0.5em} \underbrace{\left(\existsrv~X_1~Y_1~X_2~Y_2.~~
  \begin{aligned}
    &X_1 \sim \ber p ~~*~~ Y_1 \sim\ber p ~~*~~ X_2 \sim \ber q ~~*~~ Y_2 \sim \ber q ~~*~~ \\
    &X \asequal (\ite z{X_1}{X_2}) ~~*~~ Y \asequal (\ite z{Y_1}{Y_2})
  \end{aligned}\right)}_{P(z)}.
\end{align*}
The goal is to show $\D_{z\gets Z} P(z) \vdash \D_{z\gets Z}(\own X*\own Y)$.
Because $\D$ respects entailment, it suffices to show 
$P(z)\vdash \own X * \own Y$.
This follows by a case analysis on $z$.
If $z=\vtrue$ then $P(z)$ can be simplified to $X\sim\ber p ~~*~~ Y\sim\ber p$,
and if $z=\vfalse$ then $P(z)$ can be simplified to $X\sim \ber q ~~*~~ Y\sim \ber q$.
In both cases the simplified form entails $\own X ~~*~~ \own Y$ as desired.
See \referto{sec:condsamples-annotated} for a fully annotated program.
For more examples of applying Lilac, see \referto{sec:barthe-examples}.

\section{Discussion and future work} \label{sec:discussion}
In this section we explore various possible extensions to Lilac and
expound on the more subtle consequences of some of the design decisions we made
while validating certain proof rules.

\goodparagraph{Properties of the conditioning modality}
Here we investigate further some formal properties of the conditioning
modality. Specifically, we compare $\D$ to modal necessity $\nec$~\citep{kripke1972naming}.
The standard properties of $\nec$ are:\\[0.5em]
\begin{tabular}{lcl}
  \hspace{1em}
  (a) If $\vdash P$ then $\vdash\nec P$ (necessitation).
  & &
  (d) $\nec P \lor \nec Q \vdash \nec(P\lor Q)$.
  \\
  \hspace{1em}
  (b) $\nec (P\to Q)\vdash \nec P\to\nec Q$ (distribution).
  & &
  (e) $\nec P\vdash P$ (axiom M).
  \\
  \hspace{1em}
  (c) $\nec (P\land Q)\vdashiff \nec P\land \nec Q$.
  & &
  (f) $\nec P\vdash \nec\nec P$ (axiom 4).
\end{tabular}\\[0.5em]
The modality $\D_{x\gets X}$ satisfies (a)-(d); for proofs see
\referto{app:C-modal-laws}.
The similarity between $\D$ and modal
necessity is somewhat expected, due to the similarity in the logical structure of their
interpretations: $\D_{x\gets X}P$ requires
$P$ to hold in almost-all conditional probability spaces $\calP|_{X=x}$,
similar to how the usual interpretation of $\nec P$ in modal logic requires
that $P$ hold in all reachable worlds.
We are not sure whether
Axiom 4 holds. Axiom M however has a counterexample -- this is to be expected, as Axiom M in
standard modal logic says that
what is necessary is the case, whereas we do not expect something that holds conditional on $X=x$ to
hold unconditionally, even if it holds conditional on $X=x$ for all $x$.

\goodparagraph{Embedding Lilac into Iris}
In the future 
we would like to embed Lilac in Iris in order to use Iris's
support for reasoning about feature-rich languages and its
interface for carrying out interactive separation logic
proofs~\citep{jung2018iris,krebbers2017interactive}.
This requires expressing Lilac's KRM as a \emph{camera}~\cite{jung2018iris} --- an
object similar to a KRM that additionally supports step-indexed reasoning.
One difference between cameras and KRMs is that,
for cameras, $(\plte)$ is implicitly defined to be the
relation $x\plte y \Leftrightarrow \exists z. x\pdot z = y$.
Lilac's KRM includes ordering relations $\calP\sqsubseteq \calR$
that are not of the form $\calP\pdot\calQ=\calR$ for any $\calQ$,
so embedding Lilac into Iris would require bridging this gap between
KRMs and cameras.
Morever, making use of Iris's support for step-indexed reasoning could require
developing a suitable step-indexed generalization of the KRM in
Theorem~\ref{thm:spaces-form-a-krm}
so that one can talk about probability spaces ``up to $k$ steps.''
We leave these problems for future work.

\goodparagraph{Formal structure of Lilac models}
We made many design decisions while
constructing a model validating Lilac's proof rules.
Following \citet{biering2007bi}, it would be 
interesting future work to pursue a principled characterization of
the space of valid probabilistic models of separation logic; this would
potentially facilitate future extensions
to more sophisticated features
such as higher-order functions, polymorphism, mutable state, and concurrency.

\section{Related work} \label{sec:related-work}
Probabilistic program verification has a long history going back to 
\citet{kozen1983probabilistic}. In this section we sketch the broad themes 
that are most related to program logics for probabilistic 
programs. First, we discuss approaches that make use of separation logic.
Then, we discuss alternative approaches
based on expectations, logical relations, and denotational semantics.

\goodparagraph{Program Logics for Probability} \label{sec:related-work:psl}
The most closely related work is the probabilistic separation logic (PSL) introduced
by \citet{barthe2019probabilistic}, which gives the
first separation logic where separating conjunction explicitly models
probabilistic independence. Follow-on work extends PSL to support
negative
dependence~\citep{bao2022separation} and to settings beyond probabilistic
computation~\citep{zhou2021quantum}. PSL interprets separating
conjunction as a combining operation on distributions over random stores with
disjoint domains, over-approximating the semantic notion of probabilistic
independence with a semi-syntactic criterion on stores.
As a consequence, PSL's notion of independence is linked 
to the occurrences of free variables in logical formulas; statements 
such as $\own (X + Y) * \own (X - Y)$ are inexpressible in PSL
due to the occurrence of the random variables $X$ and $Y$ on both sides 
of $*$.
PSL's frame rule imposes a number of extra side-conditions
capturing data-flow properties of the program.
This is in part due to PSL's notion of separation,
and in part because PSL programs are written using mutable variables
whereas we have preferred to work with a purely functional language.
These side-conditions are nontrivial to check and make applying the frame rule
cumbersome.
Lilac's frame rule is standard for separation logic,
Lilac's interpretation of separating conjunction coincides with probabilistic
independence (Lemma~\ref{lem:star-is-independence}),
and its semantic model is defined in terms of standard objects of probability theory (i.e.,
probability spaces and random variables).
Moreover, Lilac has support for continuous random variables and a modality for reasoning about
conditioning; all of these features in combination seem difficult to add to PSL
without significant changes to its semantic model.
For a concrete comparison,
we validated three of the five examples from \citet{barthe2019probabilistic}:
one-time pad, private information retrieval, and oblivious transfer;
we do not believe the remaining examples exercise Lilac in ways that go
beyond the ones we verified. Validating these examples required no changes to Lilac's semantic model;
it suffices to extend APPL with support for bitvectors
and to import facts about uniformity and independence via a handful of derived rules.
For details, see \referto{sec:barthe-examples}.

\citet{bao2021bunched} extends PSL to handle conditional independence by
extending the standard logic of bunched implications underlying separation logic
with a family of specially-designed connectives
in a new logic called \emph{doubly-bunched implications} (DIBI). The corresponding
model required for proving soundness of DIBI deviates significantly from the usual model
of separation logic. Lilac handles conditional
independence via the conditioning modality, and this extension does not require
any changes to the standard model beyond the restriction to well-behaved probability
spaces (Theorem~\ref{thm:final-krm}).
As
a consequence, Lilac behaves very similarly to existing separation logics while
still having facilities for handling conditional independence.
For a concrete comparison, the \ref{prog:commoncause}
example presented in Section~\ref{sec:tour-of-lilac}
gives a Lilac proof of conditional independence for one of the examples from \citet{bao2021bunched};
Section~\ref{sec:condsamples} gives a description of the other example.

A separate line of logics seeks to verify probabilistic 
programs without a substructural notion of independence.  An example of
this is Ellora~\cite{barthe2018assertion}, where
independence is encoded as an assertion about factorization of probabilities.
This is similar to how in program logics without separating conjunction,
aliasing can be ruled out by asserting pairwise-disjointness of
heap locations.
Ellora is equipped with the ability to abstract over these definitions
via special-purpose logics such as a \emph{law and independence logic}
for reasoning about mutual independence
relationships, but these embedded logics are rather limited:
\citet{barthe2019probabilistic} note that the resulting independence logic
cannot handle conditional control flow and that it is more ergonomic to handle
independence substructurally.  This limitation
was a primary motivation for developing PSL.

Another strategy for designing a separation logic for probabilistic programs is
embodied by Polaris~\citep{tassarotti2019separation}, an extension of Iris for
verifying concurrent randomized algorithms.  The goal of Polaris is very
different from Lilac's, and so it makes different design choices.  The notion of
separation in Polaris is the standard one, enforcing
ownership of disjoint heap fragments. To reason about probability, Polaris
enriches base Iris with the ability to make coupling-style arguments.
Polaris has no substructural treatment of independence or
method for stating facts involving conditioning,
and does not support continuous random variables.
Yet another way to generalize separation logic to the probabilistic setting 
is given by \citet{batz2019quantitative}, who introduced quantitative separation 
logic (QSL). QSL generalizes the meaning of assertions: rather than interpreting 
assertions as predicates on configurations, i.e. functions from configurations to 
Boolean values, QSL interprets predicates as functions from configurations to 
expectations. In QSL, separating conjunction does not model independence as
in Lilac or PSL.

\goodparagraph{Expectation-based approaches}
Classically the dominant approach to verifying randomized algorithms has been
expectation-based techniques such as PPDL~\citep{kozen1983probabilistic} and
pGCL~\citep{morgan1996probabilistic}. These approaches reason about 
expected quantities of probabilistic programs via a weakest-pre-expectation 
operator that propagates information about expected values backwards 
through the program. These methods have been widely-used in practice, verifying 
properties such as probabilistic bounds and running-times of 
randomized algorithms~\citep{gretz2014operational,olmedo2016reasoning,kaminski2016inferring}.
However, expectation-based approaches verify a single property about expectations 
at a time; verifying multiple interwoven properties of expectations can require 
multiple separate passes, leading to cumbersome and non-modular proofs.
These limitations in expectation-based approaches were an important motivation 
for the development of probabilistic program logics like Ellora~\citep{barthe2018assertion}.

\goodparagraph{Logical Relations for Probabilistic Programs}
A separate method for reasoning about probabilistic programs recasts 
reasoning problems as problems of program equivalence.
Logical relations are a proof-technique for characterizing program equivalence,
and recent work has generalized this strategy to the probabilistic
setting.
\citet{bizjak2015step} characterize equivalence for a
language with recursive types, polymorphism, and first-order mutable references;
\citet{culpepper2017contextual} and \citet{wand2018contextual}
treat continuous random variables and scoring;
\citet{zhang2022reasoning} study nested queries.
In each case, equivalence is characterized using a step-indexed biorthogonal logical relation
constructed over an operational semantics.
Though logical relations are well-suited for proving the validity of 
program rewrite rules, they are less
well-suited for proving intricate post-conditions that can be stated 
in a program logic.

\goodparagraph{Probabilistic Denotational Semantics}
An entirely separate method for verifying probabilistic programs
performs all reasoning in a suitably-well-behaved denotational model.
For example, \citet{staton2017commutative} validates
intuitive laws such as commutativity of let-bindings
by interpreting programs in an appropriate category.
Recently, there have been significant developments towards designing 
convenient general-purpose
models~\cite{heunen2017convenient,staton2016semantics,fritz2020synthetic,stein2021structural},
and a possible avenue for future work is to replace Lilac's Giry-monad-based
semantics with these richer domains in order to support more language
features (e.g., higher-order functions).

\section{Conclusion}
Lilac is a probabilistic separation logic with support for
continuous random variables and conditional reasoning
whose interpretation of separating conjunction coincides with the ordinary
notion of probabilistic independence.
The core contributions of Lilac are (1) a novel notion
of separation based on independent combination of probability spaces; and (2) a
modal treatment of conditional probability, which includes a set of proof rules for
reasoning about conditioning that would be intuitive to an
experienced probability theorist.
To demonstrate Lilac, we derived proof rules for reasoning about 
a simple probabilistic programming language
and showed how they can be used in combination with Lilac's other features
to prove a sophisticated weighted sampling algorithm correct.
Notably, the derived proof rules mirror those of ordinary separation logic:
rules for sampling resemble the usual rules for allocation,
and our frame rule is completely standard.
Ultimately, we envision Lilac becoming a standard tool 
in the toolkit for verifying probabilistic programs. For future work, we 
are curious if Lilac can be extended to the quantum programming setting 
in a style similar to \citet{zhou2021quantum}, or if it can 
handle the exotic forms of negative dependence studied in \citet{bao2022separation}.

\section*{Acknowledgments}
We thank our shepherd, Joseph Tassarotti, and
the anonymous reviewers for their careful feedback and suggestions. 
This work was supported by the National Science Foundation under Grant No.
\#CCF-2220408.

\bibliography{psl}

\ifdefined\cameraready\else

\pagebreak
\appendix

\section{Syntax and semantics of APPL}

\subsection{Syntax}

{\small{
\begin{center}
\begin{tabular}{rcl}
  $p$  & $\in$ & $[0,1]$ \\
  $r$  & $\in$ & $\R$ \\
  $n$  & $\in$ & $\N$ \\
  $\oplus$ & $\in$ & $\mathrm{Arith} := \{+,-,\times,/,\string^\}$ \\
  $\prec$ & $\in$ & $\mathrm{Cmp} := \{<,\le,=\}$ \\
  $A,B$ & $::=$ & $A\times B\mid \boolty \mid \realty \mid A^n \mid \indexty \mid \applgiry A$ \\
  $M,N,O$ & $::=$ & $X\mid \applret M\mid \letin XMN \mid$ \\
      &       & $\appltuple{M,N} \mid \applfst~M \mid \applsnd~M \mid$ \\
      &       & $\vtrue \mid \vfalse \mid \applite MNO \mid \flip~p \mid $ \\
      &       & $r \mid M \oplus N \mid M \prec N \mid \applunif \mid$ \\
      &       & $\applarray{M,\dots,M} \mid \applindex MN \mid \applfor n{M_\textrm{init}}iX{M_\textrm{step}}$ \\
\end{tabular}
\end{center}
}}

\subsection{Typing}
\label{app:appl-typing}

{\small{
\begin{center}
\begin{mathpar}
  \inferrule{}{\Gamma,X:A\vdashappl X:A}
  \and
  \inferrule{\Gamma\vdashappl M : A}{\Gamma\vdashappl \applret M:\applgiry A}
  \and
  \inferrule{\Gamma\vdashappl M:\applgiry A \\ \Gamma,X:A\vdashappl N:\applgiry B}{\Gamma\vdashappl \letin XMN:\applgiry B}
  \and
  \inferrule{\Gamma\vdashappl M:A\\\Gamma\vdashappl N:B}{\Gamma\vdashappl \appltuple{M,N}:A\times B}
  \and
  \inferrule{\Gamma\vdashappl M:A\times B}{\Gamma\vdashappl \applfst~M:A}
  \and
  \inferrule{\Gamma\vdashappl M:A\times B}{\Gamma\vdashappl \applsnd~M:B}
  \and
  \inferrule{}{\Gamma\vdashappl \vtrue:\boolty}
  \and
  \inferrule{}{\Gamma\vdashappl \vfalse:\boolty}
  \and
  \inferrule{\Gamma\vdashappl M:\boolty\\ \Gamma\vdashappl N:A \\ \Gamma\vdashappl O:A}
    {\Gamma\vdashappl \applite{M}{N}{O}:A}
  \and
  \inferrule{}{\Gamma\vdashappl \flip~p:\applgiry\boolty}
  \and
  \inferrule{}{\Gamma\vdashappl r:\realty}
  \and
  \inferrule{\Gamma\vdashappl M:\realty\\ \Gamma\vdashappl N:\realty}{\Gamma\vdashappl M\oplus N:\realty}
  \and
  \inferrule{\Gamma\vdashappl M:\realty\\ \Gamma\vdashappl N:\realty}{\Gamma\vdashappl M\prec N:\boolty}
  \and
  \inferrule{}{\Gamma\vdashappl \applunif:\applgiry\realty}
  \and
  \inferrule{\Gamma\vdashappl M_k:A\textrm{ for all }1\le k\le n}{\Gamma\vdashappl \applarray{M_1,\dots,M_n}:A^n}
  \and
  \inferrule{\Gamma\vdashappl M:A^n \\ \Gamma\vdashappl N:\indexty}{\Gamma\vdashappl \applindex MN:A}
  \and
  \inferrule{\Gamma\vdashappl M_\textrm{init}:A \\ \Gamma,i:\indexty,X:A\vdashappl M_\textrm{step}:\applgiry A}
    {\Gamma\vdashappl \applfor n{M_\textrm{init}}iX{M_\textrm{step}}:\applgiry A}
\end{mathpar}
\end{center}
}}

\clearpage

\subsection{Semantics}
\label{app:appl-semantics}

Types and typing contexts are interpreted as measurable spaces,
arithmetic operators as maps $\R\times\R\mto\R$,
and comparison operators as maps $\R\times\R\mto\sembr\boolty$.
{\small{
\begin{align*}
  \begin{aligned}
    \sembr{A \times B }&= \sembr{A }\otimes\sembr{B } \\
    \sembr\boolty &= (\{\vtrue,\vfalse\}, \calP(\{\vtrue,\vfalse\})) \\
    \sembr\realty &= (\R, \calB(\R)) \\
    \sembr{A ^n}&= \underbrace{\sembr{A }\otimes\cdots\otimes\sembr{A }}_{n\textrm{ times}} \\
    \sembr{\indexty}&= (\N, \calP(\N)) \\ 
    \sembr{\applgiry A} &= \giry \sembr A \\
    \\
    \sembr\cdot &= \textrm{the one-point space} \\
    \sembr{\Gamma,X\ofty A } &= \sembr\Gamma\otimes \sembr A 
  \end{aligned}
  &&
  \begin{aligned}
    x\sembr{+}y &= x+y \\
    x\sembr{-}y &= x-y \\
    x\sembr{\times}y &= xy \\
    x\sembr{/}y &= \begin{cases} x/y, & y\ne 0 \\ 0, & \textrm{otherwise} \end{cases} \\
    x\sembr{\string^}y &= \begin{cases} x^y, & x>0 \\ 0, & \textrm{otherwise} \end{cases} \\
  \end{aligned}
  &&
  \begin{aligned}
    x\sembr{<}y &= \begin{cases} \vtrue, & x < y \\ \vfalse, & \textrm{otherwise} \end{cases} \\
    x\sembr{\le}y &= \begin{cases} \vtrue, & x \le y \\ \vfalse, & \textrm{otherwise} \end{cases} \\
    x\sembr{=}y &= \begin{cases} \vtrue, & x = y \\ \vfalse, & \textrm{otherwise} \end{cases} 
  \end{aligned}
\end{align*}
}}

\noindent Terms $\Gamma\vdash M:A$ are interpreted as maps $\sembr M:\sembr\Gamma\mto\sembr A$.
{\small{
  \begin{align*}
  \sembr{X}\rho &~~=~~ \rho(X) \\
  \sembr{\applret M}\rho &~~=~~ \ret\sembr M\rho\\
  \sembr{\letin X{M}{N}}\rho &~~=~~ v\gets \sembr{M}\rho;~ \sembr{N}\rho[X\mapsto v] \\
  \sembr{\appltuple{M,N}}\rho &~~=~~ (\sembr M\rho,\sembr N\rho)\\
  \sembr{\applfst~M}\rho &~~=~~ \pi_1(\sembr M\rho) \\
  \sembr{\applsnd~M}\rho &~~=~~ \pi_2(\sembr M\rho) \\
  \sembr{\vtrue}\rho &~~=~~ \vtrue \\
  \sembr{\vfalse}\rho &~~=~~ \vfalse \\
  \sembr{\applite{M}{N}{O}}\rho &~~=~~ \begin{cases}
                                           \sembr{N}\rho,& \sembr M\rho=\vtrue \\
                                           \sembr{O}\rho,& \sembr M\rho=\vfalse
                                   \end{cases}\\
  \sembr{\flip~p}\rho &~~=~~ \ber p \\
  \sembr{r}\rho &~~=~~ r \\
  \sembr{M\oplus N}\rho &~~=~~ (\sembr{M}\rho)\sembr{\oplus}(\sembr{N}\rho) \\
  \sembr{\applunif}\rho &~~=~~ \unif~[0,1]\\
  \sembr{\applarray{M_1,\dots,M_n}}\rho &~~=~~ (\sembr{M_1}\rho,\dots,\sembr{M_n}\rho)\\
  \sembr{\applindex MN:A }\rho &~~=~~ \begin{cases}
                                  \pi_{\sembr{N}\rho}(\sembr{M}\rho), & 1\le \sembr{N}\rho\le n \\
                                  \arbitrary(A ), & \textrm{otherwise}
                                \end{cases} \\
  \sembr{\applfor n{M_\textrm{i}}iX{M_\textrm{s}}}\rho &~~=~~
    \mathrm{loop}(1, \sembr{M_\textrm{i}}\rho, \lambda~k~v.~\sembr{M_\textrm{s}}\rho[i\mapsto k,X\mapsto v])\\
  &\hspace{-9em}\textrm{where } \mathrm{loop}(k,v,f) = \begin{cases}
      \ret v, & k>n \\
      v'\gets f(k,v);~
      \mathrm{loop}(k+1,v',f), & \textrm{otherwise}
    \end{cases}
  \end{align*}
}}
To make array indexing total, $\arbitrary(A )$ produces an arbitrary inhabitant
of $\sembr A $.
{\small{
  \begin{align*}
    \arbitrary(A \times B ) &= (\arbitrary(A ),\arbitrary(B )) \\
    \arbitrary(\boolty) &= \vtrue \\
    \arbitrary(\realty) &= 0 \\
    \arbitrary(A ^n) &= \underbrace{(\arbitrary(A ),\dots,\arbitrary(A ))}_{n\textrm{ times}} \\
    \arbitrary(\indexty) &= 0
  \end{align*}
}}

\section{Syntax and semantics of Lilac}

\subsection{Syntax}
\label{app:lilac-syntax}

{\small{
\begin{align*}
  S,T \,\in~ &\mathbf{Set} \\
  A,B \,\in~ &\mathbf{Meas} \\
  P,Q ::=~ &\top \mid \bot \mid P \land Q \mid P \lor Q \mid P\to Q \mid \\
           &P * Q \mid P\wand Q \mid \nec P \mid \\
           &\forall x\ofty S.P \mid \exists x\ofty S. P \mid
            \forallrv X\ofty A.P \mid \existsrv X\ofty A.P \mid \\
           &E \sim \mu \mid \own E \mid E\asequal E \mid \Ex[E] = e \mid \weakpre(M, X\ofty A. Q)
\end{align*}
}}

\subsection{Typing}
\label{app:lilac-typing}

{\small{
\begin{align*}
  \Gamma &::= \cdot \mid \Gamma,x:S\\
  \Delta &::= \cdot \mid \Delta,X:A
\end{align*}
\begin{align*}
  \sembr{x_1:S_1,\dots,x_n:S_n} &= S_1\times\cdots\times S_n \\
  \sembr{X_1:A_1,\dots,X_n:A_n} &= A_1\otimes\cdots\otimes A_n 
\end{align*}
\begin{mathpar}
  \inferrule{E \in \sembr\Gamma\to\sembr\Delta\mto A}
            {\Gamma;\Delta\vdashrv E:A}
  \and
  \inferrule{e \in \sembr\Gamma\to A}
            {\Gamma\vdashdet e:A}
  \and
  \inferrule{\mu \in \sembr\Gamma\to \giry A}
            {\Gamma\vdashdet \mu:A}
  \and
  \inferrule{M\in \sembr\Gamma\to \sembr\Delta\mto \giry A}{\Gamma;\Delta\vdashprog M:A}
\end{mathpar}
\begin{mathpar}
  \inferrule{}{\Gamma;\Delta\vdash \top}
  \and
  \inferrule{}{\Gamma;\Delta\vdash \bot}
  \and
  \inferrule{\Gamma;\Delta\vdash P\\\Gamma;\Delta\vdash Q}{\Gamma;\Delta\vdash P\land Q}
  \and
  \inferrule{\Gamma;\Delta\vdash P\\\Gamma;\Delta\vdash Q}{\Gamma;\Delta\vdash P\lor Q}
  \and
  \inferrule{\Gamma;\Delta\vdash P\\\Gamma;\Delta\vdash Q}{\Gamma;\Delta\vdash P\to Q}
  \and
  \inferrule{\Gamma;\Delta\vdash P\\\Gamma;\Delta\vdash Q}{\Gamma;\Delta\vdash P*Q}
  \and
  \inferrule{\Gamma;\Delta\vdash P\\\Gamma;\Delta\vdash Q}{\Gamma;\Delta\vdash P\wand Q}
  \and
  \inferrule{\Gamma;\Delta\vdash P}{\Gamma;\Delta\vdash \nec P}
  \\
  \inferrule{\Gamma,x\ofty S;\Delta\vdash P}{\Gamma;\Delta\vdash \forall x\ofty S. P}
  \and
  \inferrule{\Gamma,x\ofty S;\Delta\vdash P}{\Gamma;\Delta\vdash \exists x\ofty S. P}
  \and
  \inferrule{\Gamma;\Delta,X\ofty A\vdash P}{\Gamma;\Delta\vdash \forallrv X\ofty A. P}
  \and
  \inferrule{\Gamma;\Delta,X\ofty A\vdash P}{\Gamma;\Delta\vdash \existsrv X\ofty A. P}
  \\
  \inferrule{\Gamma;\Delta\vdashrv E:A \\ \Gamma\vdashdet \mu:A}
            {\Gamma;\Delta\vdash E\sim\mu}
  \and
  \inferrule{\Gamma;\Delta\vdashrv E:A}
            {\Gamma;\Delta\vdash \own E}
  \and
  \inferrule{\Gamma;\Delta\vdashrv E_1:A \\ \Gamma;\Delta\vdashrv E_2:A}
            {\Gamma;\Delta\vdash E_1\asequal E_2}
  \and
  \inferrule{\Gamma;\Delta\vdashrv E:\R \\ \Gamma\vdashdet e:\R}
            {\Gamma;\Delta\vdash \Ex[E] = e}
  \and
  \inferrule{\Gamma;\Delta\vdashprog M:A \\
             \Gamma;\Delta,X\ofty A\vdash Q}
            {\Gamma;\Delta\vdash \weakpre(M,X\ofty A.Q)}
\end{mathpar}
}}

\subsection{Independent combination of probability spaces}
\label{app:indep-combination}
\label{app:independent-combination-app}
 \begin{lemma}[independent combinations are unique]
  Suppose $(\Omega, \calG,\rho)$ and $(\Omega, \calG',\rho')$ are independent combinations of
  $(\Omega, \calE,\mu)$ and $(\Omega, \calF,\nu)$.  Then $\calG = \calG'$ and $\rho=\rho'$.
\end{lemma}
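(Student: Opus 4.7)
The plan is to dispatch $\calG = \calG'$ trivially and then use the Dynkin $\pi$-$\lambda$ theorem to pin down $\rho = \rho'$. The equality of $\sigma$-algebras is immediate from Definition \ref{def:independent-combination}: both $\calG$ and $\calG'$ are characterized as \emph{the} smallest $\sigma$-algebra containing $\calE$ and $\calF$, and the smallest such $\sigma$-algebra is unique by a standard intersection argument. So the real content is showing that any two probability measures on $\calG$ that satisfy the product relation $\rho(E\cap F) = \mu(E)\nu(F)$ for all $E\in\calE, F\in\calF$ must coincide.

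For this, I would introduce the collection $\calC := \{E\cap F : E\in\calE, F\in\calF\}$ and verify it is a $\pi$-system: closure under finite intersections follows from $(E_1\cap F_1)\cap(E_2\cap F_2) = (E_1\cap E_2)\cap(F_1\cap F_2)$ together with the fact that $\calE$ and $\calF$ are themselves $\sigma$-algebras. Taking $F = \Omega$ shows $\calE\subseteq\calC$ and taking $E = \Omega$ shows $\calF\subseteq\calC$, so the $\sigma$-algebra generated by $\calC$ contains both $\calE$ and $\calF$ and hence equals $\calG$.

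Next I would consider $\calL := \{A\in\calG : \rho(A) = \rho'(A)\}$ and check that it is a $\lambda$-system (Dynkin system): it contains $\Omega$ since $\rho(\Omega) = 1 = \rho'(\Omega)$, it is closed under proper differences by additivity, and it is closed under countable increasing unions by continuity from below of probability measures. The hypothesis that both $\rho$ and $\rho'$ witness the independence of $\mu$ and $\nu$ gives $\rho(E\cap F) = \mu(E)\nu(F) = \rho'(E\cap F)$ for all $E\in\calE, F\in\calF$, so $\calC\subseteq\calL$. By the Dynkin $\pi$-$\lambda$ theorem, $\calL$ contains the $\sigma$-algebra generated by $\calC$, which we identified as $\calG$. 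Thus $\rho = \rho'$ on all of $\calG$, completing the proof.

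The only mildly subtle point — and the step I would watch most carefully — is verifying that $\calC$ really generates $\calG$ rather than some smaller $\sigma$-algebra, but this is immediate once one observes that $\calC$ contains $\calE$ and $\calF$ themselves. No serious obstacle is expected; the whole argument is a textbook application of $\pi$-$\lambda$, which is precisely what the excerpt foreshadows.
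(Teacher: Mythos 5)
Your proof is correct and follows essentially the same route as the paper's: the equality of $\sigma$-algebras is immediate, and the equality of measures is obtained by applying the Dynkin $\pi$-$\lambda$ theorem to the $\pi$-system of pairwise intersections $\{E\cap F\}$ (which generates $\calG$) and the $\lambda$-system of events on which $\rho$ and $\rho'$ agree. The extra details you supply (closure of the $\pi$-system under intersections, and that it contains $\calE$ and $\calF$) are exactly the content of the paper's auxiliary Lemma~\ref{pairwise-intersections-pi-system}.
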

\begin{proof}
  It is straightforward to establish that $\calG = \calG'$: they are both the 
  smallest $\sigma$-algebra containing $\calE$ and $\calF$.
  Showing $\rho = \rho'$ requires the use of more heavyweight machinery 
  from probability theory: we apply the well-known
  \emph{Dynkin $\pi$-$\lambda$ theorem}~\citep{kallenberg1997foundations}.
  The set of events on which $\rho$ and $\rho'$ agree forms a $\lambda$-system,
  and the set $\{E\cap F\mid E\in\calE,F\in\calF\}$ of intersections of events in $\calE$ and
  $\calF$ forms a $\pi$-system that generates $\angled{\calE,\calF} = \calG$. So the $\pi$-$\lambda$ theorem 
  states that it suffices to show
  $\rho(E\cap F)=\rho'(E\cap F)$ for all $E\in\calE$ and $F\in\calF$; this follows since by 
  assumption both sides of the equation factorize into $\mu(E)\nu(F)$.
\end{proof}

\begin{lemma}\label{pairwise-intersections-pi-system}
If $\calF$ and $\calG$ are $\sigma$-algebras on $\Omega$
then the set $\calE:=\{F\cap G\mid F\in\calF,G\in\calG\}$
of intersections of events in $\calF$ and $\calG$
is a $\pi$-system that generates $\angled{\calF,\calG}$.
\end{lemma}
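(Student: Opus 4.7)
The plan is to verify the two defining properties of the claim separately: that $\calE$ is closed under finite intersection, and that $\sigma(\calE) = \angled{\calF,\calG}$.

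For the $\pi$-system property, I would take two arbitrary elements $F_1 \cap G_1$ and $F_2 \cap G_2$ of $\calE$ and rearrange their intersection as $(F_1 \cap F_2) \cap (G_1 \cap G_2)$. Since $\calF$ and $\calG$ are $\sigma$-algebras they are each closed under finite intersection, so $F_1 \cap F_2 \in \calF$ and $G_1 \cap G_2 \in \calG$, which puts the whole thing back in $\calE$. Closure under arbitrary finite intersections then follows by a trivial induction.

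For the generation claim, I would prove two inclusions. First, $\sigma(\calE) \subseteq \angled{\calF,\calG}$: every $F \cap G \in \calE$ lies in $\angled{\calF,\calG}$ because both $F$ and $G$ do and $\angled{\calF,\calG}$ is closed under intersection, so $\angled{\calF,\calG}$ is a $\sigma$-algebra containing $\calE$ and must contain $\sigma(\calE)$. Second, $\angled{\calF,\calG} \subseteq \sigma(\calE)$: it suffices to embed both generating families into $\sigma(\calE)$. For any $F \in \calF$, use $\Omega \in \calG$ to write $F = F \cap \Omega \in \calE \subseteq \sigma(\calE)$, and symmetrically $G = \Omega \cap G \in \calE$ for any $G \in \calG$. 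Since $\sigma(\calE)$ is a $\sigma$-algebra containing both $\calF$ and $\calG$, it contains their join $\angled{\calF,\calG}$.

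There is no real obstacle here; the only minor point worth flagging is that the argument leans on $\Omega$ belonging to both $\calF$ and $\calG$, which is part of the definition of a $\sigma$-algebra and therefore free. The lemma is essentially a bookkeeping step invoked by the earlier uniqueness argument for independent combinations, so I would keep the write-up to a few lines.
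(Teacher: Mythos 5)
Your proposal is correct and follows essentially the same route as the paper's proof: the same rearrangement $(F_1\cap G_1)\cap(F_2\cap G_2)=(F_1\cap F_2)\cap(G_1\cap G_2)$ for the $\pi$-system property, and the same use of $F=F\cap\Omega$ and $G=\Omega\cap G$ to show $\calE$ contains both generating families for the two-inclusion argument. No gaps.
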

\begin{proof}
First let's show that $\calE$ is a $\pi$-system.
The set $\calE$ is nonempty because it at least has to contain $\emptyset$.
It's closed under finite intersections because if $(F_1\cap G_1)\in\calE$ and $(F_2\cap
G_2)\in\calE$
then $(F_1\cap G_1)\cap(F_2\cap G_2)=\underbrace{(F_1\cap F_2)}_{\in\calF}\cap\underbrace{(G_1\cap
G_2)}_{\in\calG}\in\calE$, where the last step follows from the fact that
$\mathcal F$ and $\mathcal G$ are both $\sigma$-algebras and hence closed under intersections.

Now we just have to show $\angled{\calE}=\angled{\calF,\calG}$.
As sets of generators, $\calE$ contains the union of $\calF$ and $\calG$: because $\calF$ and $\calG$
are $\sigma$-algebras $\calE$ includes intersections of the form $F\cap\Omega=F$ and $\Omega\cap G=G$ for all
$F\in\calF$ and $G\in\calG$. This implies $\angled{\calF,\calG}\subseteq\angled{\calE}$.
For the other direction, note that
every generator $(F\cap G)\in\calE$ is an
intersection of generators $F\in\calF,G\in\calG$.
\end{proof}

\begin{lemma}\label{independence-lambda-system}
If $(\calF,\mu)$ is a probability space then $\calF^\perp := \{E\mid E\perp\calF\}$\footnote{$E\perp\calF$ iff $\mu(E\cap F)=\mu(E)\mu(F)$ for all $F\in\calF$.} is a
$\lambda$-system. 
\end{lemma}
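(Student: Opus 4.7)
The plan is to verify directly the three defining properties of a $\lambda$-system for $\calF^\perp$: it contains $\Omega$, it is closed under relative complements, and it is closed under countable disjoint unions. Each property will follow by a short computation using the definition $E \perp \calF \iff \mu(E \cap F) = \mu(E)\mu(F)$ for all $F \in \calF$, together with basic measure-theoretic identities ($\mu(\Omega) = 1$, finite additivity for complements, and countable additivity for disjoint unions).

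First I would check $\Omega \in \calF^\perp$, which is immediate since $\mu(\Omega \cap F) = \mu(F) = 1 \cdot \mu(F) = \mu(\Omega)\mu(F)$ for every $F \in \calF$. Next, for closure under complements, suppose $E \in \calF^\perp$ and fix $F \in \calF$; then $\mu(E^c \cap F) = \mu(F) - \mu(E \cap F) = \mu(F) - \mu(E)\mu(F) = (1 - \mu(E))\mu(F) = \mu(E^c)\mu(F)$, so $E^c \in \calF^\perp$.

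Finally, for closure under countable disjoint unions, let $E_1, E_2, \dots$ be pairwise disjoint elements of $\calF^\perp$ and set $E = \bigcup_n E_n$. For each $F \in \calF$ the sets $\{E_n \cap F\}_n$ are also pairwise disjoint, so by countable additivity and the factorization hypothesis on each $E_n$,
\[
\mu(E \cap F) \;=\; \sum_n \mu(E_n \cap F) \;=\; \sum_n \mu(E_n)\mu(F) \;=\; \mu(F)\sum_n \mu(E_n) \;=\; \mu(E)\mu(F),
\]
so $E \in \calF^\perp$. There is no real obstacle here: the proof is a direct unfolding of definitions, and the only substantive ingredient is countable additivity of $\mu$, which is used exactly once to interchange the union with the sum in the last step.
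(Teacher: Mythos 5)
Your proof is correct and follows essentially the same route as the paper's: a direct verification of the defining properties of a $\lambda$-system, with countable additivity doing the work in the disjoint-union step. You in fact supply slightly more detail for the complement case, which the paper merely asserts.
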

\begin{proof}
Clearly $\emptyset\in\calF^\perp$ because $\emptyset\perp E$ for any $E$.
If $E\perp\calF$ then $E^c\perp\calF$, so $\calF$ is closed under complements.
Finally, if $\{A_n\}_{n\in\N}$ is a collection of disjoint sets in $\calF^\perp$
then $\mu(\bigcup_n A_n\cap E)=\sum_n\mu(A_n\cap E)=\sum_n\mu(A_n)\mu(E)=\mu(E)\mu(\bigcup_n A_n)$
for all $E$, so $\calF^\perp$ is closed under countable disjoint union.
\end{proof}

\begin{thm} \label{app:spaces-form-a-krm}
  Let $\calM$ be the set of probability spaces over a fixed sample space $\Omega$.
  Let $(\pdot)$ be the partial function mapping two probability spaces to their independent
  combination if it exists.
  Let $(\plte)$ be the ordering such that $(\calF,\mu)\plte(\calG,\nu)$ iff
  $\calF\subseteq\calG$ and $\mu=\nu|_\calF$.\footnotemark
  The tuple $(\calM,\plte,\pdot,\mathbf{1})$ is a Kripke resource monoid,
  where $\mathbf{1}$ is the trivial probability space $(\calF_\pmbone,\mu_\pmbone)$
  with $\calF_\pmbone=\{\emptyset,\Omega\}$ and $\mu_\pmbone(\Omega) = 1$.
\end{thm}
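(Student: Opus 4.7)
The plan is to verify the four conditions of Definition~\ref{def:krm} in turn. Most are routine; the main effort will go into associativity of $(\pdot)$.

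First, I check that $(\calM,\plte)$ is a poset: reflexivity is immediate; transitivity follows because the restriction of a restriction is a restriction; antisymmetry follows because two $\sigma$-algebras that mutually contain each other coincide, after which the restriction clause forces the measures to agree. That $(\pdot)$ is a well-defined partial function is exactly Lemma~\ref{lem:independent-combinations-are-unique}. For the partial commutative monoid structure, commutativity is manifest from the symmetric formulation of Definition~\ref{def:independent-combination}. The unit laws reduce to the observation that for any $(\calF,\mu)$ the trivial algebra $\{\emptyset,\Omega\}$ factorizes vacuously: $\mu_\pmbone(E)\mu(F)=\mu(E\cap F)$ whenever $E\in\{\emptyset,\Omega\}$ and $F\in\calF$, so $(\calF,\mu)$ itself is an independent combination of $\bfone$ and $(\calF,\mu)$; uniqueness then pins down $\bfone\pdot(\calF,\mu)=(\calF,\mu)$.

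The main obstacle is associativity: supposing $(\calP_1\pdot\calP_2)\pdot\calP_3$ is defined with value $(\calG,\rho)$, I must show that $\calP_1\pdot(\calP_2\pdot\calP_3)$ is also defined and equal to it. On $\sigma$-algebras both iterations yield the smallest $\sigma$-algebra containing $\calF_1,\calF_2,\calF_3$, so those agree automatically. For the measures, by Lemma~\ref{lem:independent-combinations-are-unique} it suffices to exhibit a measure on $\angled{\calF_2,\calF_3}$ that is an independent combination of $\mu_2$ and $\mu_3$, and then to check that $\rho$ witnesses independence of $\mu_1$ with this combination. I would take the restriction $\mu_{23}:=\rho|_{\angled{\calF_2,\calF_3}}$ as the candidate; the factorization $\mu_{23}(F_2\cap F_3)=\mu_2(F_2)\mu_3(F_3)$ follows by instantiating the given factorization of $\rho$ over $\angled{\calF_1,\calF_2}$ and $\calF_3$ with $E_1 = \Omega$. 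The key step is then to verify, for each fixed $E_1\in\calF_1$, the identity $\rho(E_1\cap H)=\mu_1(E_1)\mu_{23}(H)$ for all $H\in\angled{\calF_2,\calF_3}$. Here is where the $\pi$-$\lambda$ theorem enters: by Lemma~\ref{pairwise-intersections-pi-system} the set $\{F_2\cap F_3\mid F_2\in\calF_2,F_3\in\calF_3\}$ is a $\pi$-system generating $\angled{\calF_2,\calF_3}$; on it the identity follows directly from the hypothesised factorization of $\rho$ over $\angled{\calF_1,\calF_2}$ and $\calF_3$. The class of $H$ for which the identity holds is a $\lambda$-system by the same argument as in Lemma~\ref{independence-lambda-system}, so the $\pi$-$\lambda$ theorem promotes the identity to all of $\angled{\calF_2,\calF_3}$, as needed.

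Finally, for monotonicity of $(\pdot)$ with respect to $(\plte)$: assume $\calP_i=(\calF_i,\mu_i)\plte(\calF_i',\mu_i')=\calP_i'$ for $i=1,2$ and that $\calP_1'\pdot\calP_2'=(\calG',\rho')$ is defined. Then the restriction $\rho'|_{\angled{\calF_1,\calF_2}}$ witnesses an independent combination of $\calP_1$ and $\calP_2$: the factorization $\rho'(E_1\cap E_2)=\mu_1'(E_1)\mu_2'(E_2)=\mu_1(E_1)\mu_2(E_2)$ for $E_i\in\calF_i\subseteq\calF_i'$ transfers immediately from the larger spaces to the smaller ones using $\mu_i=\mu_i'|_{\calF_i}$, and this restriction is evidently $\plte \calP_1'\pdot\calP_2'$. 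The hard part is behind us; each piece above is straightforward once the associativity argument is in place.
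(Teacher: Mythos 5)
Your proposal is correct and follows essentially the same route as the paper's proof: the unit and commutativity checks are identical, associativity is handled by restricting the triple combination to $\angled{\calF_2,\calF_3}$ and then promoting the factorization from the $\pi$-system of pairwise intersections (Lemma~\ref{pairwise-intersections-pi-system}) to the whole generated $\sigma$-algebra via the $\pi$-$\lambda$ theorem and the $\lambda$-system $\calF_1^\perp$ (Lemma~\ref{independence-lambda-system}), and monotonicity is obtained by restricting $\rho'$. The only cosmetic difference is that you also spell out the poset axioms, which the paper leaves implicit.
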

\begin{proof}
$\bfone$ is indeed a unit: if $(\calF,\mu)$ is some other probability space on $\Omega$
then $\angled{\calF,\calF_\pmbone}=\calF$ and $\mu$ witnesses the independent
combination of itself with $\mu_\pmbone$.
And the relation ``$\calP$ is an independent combination of $\calQ$ and $\calR$''
is clearly symmetric in $\calQ$ and $\calR$, so $(\pdot)$ is commutative.
We just need to show $(\pdot)$ is associative and respects $(\plte)$.

For associativity, suppose
$(\calF_1,\mu_1)\pdot(\calF_2,\mu_2)=(\calF_{12},\mu_{12})$
and $(\calF_{12},\mu_{12})\pdot(\calF_3,\mu_3)=(\calF_{(12)3},\mu_{(12)3})$.
There are three things to check: \begin{itemize}
\item Some $\mu_{23}$ witnesses the combination of $(\calF_2,\mu_2)$ and $(\calF_3,\mu_3)$.
\item Some $\mu_{1(23)}$ witnesses the combination of $(\calF_1,\mu_1)$ and $(\calF_{23},\mu_{23})$.
\item $(\angled{\calF_1,\angled{\calF_2,\calF_3}},\mu_{1(23)})=
(\angled{\angled{\calF_1,\calF_2},\calF_3},\mu_{(12)3})$.
\end{itemize}
We'll show this as follows: \begin{enumerate}
\item $\angled{\calF_1,\angled{\calF_2,\calF_3}}= \angled{\angled{\calF_1,\calF_2},\calF_3}$.
\item Define $\mu_{23}:=\mu_{(12)3}|_{\calF_{23}}$.
      This is a witness for $(\calF_2,\mu_2)$ and $(\calF_3,\mu_3)$.
\item Define $\mu_{1(23)}:=\mu_{(12)3}$. This
      is a witness for $(\calF_1,\mu_1)$ and $(\calF_{23},\mu_{23})$.
\end{enumerate}
To show the left-to-right inclusion for (1): by the universal property of freely-generated
$\sigma$-algebras, we just need to show
$\angled{\angled{\calF_1,\calF_2},\calF_3}$ is a $\sigma$-algebra containing
$\calF_1$ and $\angled{\calF_2,\calF_3}$. It clearly contains $\calF_1$.
To show it contains $\angled{\calF_2,\calF_3}$, we just need to show it contains
$\calF_2$ and $\calF_3$ (by the universal property again), which it clearly does. The right-to-left
inclusion is similar.

For (2), if $E_2\in\mathcal F_2$ and $E_3\in\mathcal F_3$ then
$\mu_{23}(E_2\cap E_3)=\mu_{(12)3}(E_2\cap E_3)=\mu_{(12)3}((\Omega\cap E_2)\cap E_3)
=\mu_{12}(\Omega\cap
E_2)\mu_3(E_3)=\mu_1(\Omega)\mu_2(E_2)\mu_3(E_3)=\mu_2(E_2)\mu_3(E_3)$ as desired.

For (3), we need
$\mu_{(12)3}(E_1\cap E_{23})=\mu_1(E_1)\mu_{23}(E_{23})$
for all $E_1\in\calF_1$ and $E_{23}\in\angled{\calF_2,\calF_3}$.
For this we use the $\pi$-$\lambda$ theorem.
Let $\calE$ be the set $\{E_2\cap E_3\mid E_2\in\calF_2,E_3\in\calF_3\}$ of intersections of
events in $\calF_2$ and $\calF_3$. $\calE$ is a $\pi$-system that generates
$\angled{\calF_2,\calF_3}$ (lemma \ref{pairwise-intersections-pi-system}).
Let $\calG$ be the set of events $E_{23}$ such that $\mu_{(12)3}(E_1\cap
E_{23})=\mu_1(E_1)\mu_{23}(E_{23})$ for all $E_1\in\calF_1$.
We are done if $\angled{\calE}\subseteq\calG$.
By the $\pi$-$\lambda$ theorem, we just need to check that $\calE\subseteq\calG$ and
that $\calG$ is a $\lambda$-system.
We have $\calE\subseteq\calG$ because if $E_2\in\calF_2$ and $E_3\in\calF_3$
then $\mu_{(12)3}(E_1\cap (E_2\cap E_3))=\mu_1(E_1)\mu_2(E_2)\mu_3(E_3)=\mu_1(E_1)\mu_{23}(E_2\cap
E_3)$.
To see that $\calG$ is a $\lambda$-system, note that
$\mu_1(E_1)\mu_{23}(E_{23})=\mu_{(12)3}(E_1)\mu_{(12)3}(E_{23})$
and so $\calG$ is actually equal to $\calF_1^\perp$ (the set of events independent of $\calF_1$), a $\lambda$-system by Lemma
\ref{independence-lambda-system}.

To show $(\pdot)$ respects $(\plte)$,
  suppose $(\calF,\mu)\plte (\calF',\mu')$ and
  $(\calG,\nu)\plte (\calG',\nu')$
  and $(\calF',\mu')\pdot(\calG',\nu') =(\angled{\calF',\calG'},\rho')$.
  We need to show (1) $(\calF,\mu)\pdot(\calG,\nu)
  =(\angled{\calF,\calG},\rho)$ and
  (2) $(\angled{\calF,\calG},\rho)\plte(\angled{\calF',\calG'},\rho')$
  for some $\rho$. Define $\rho$ to be the restriction
  of $\rho'$ to $\angled{\calF,\calG}$.
  Now (1) holds because
  $\rho(F\cap G)
  =\rho'(F\cap G)
  =\rho'(F)\rho'(G)
  =\rho(F)\rho(G)
  $ for all $F\in\calF$ and $G\in\calG$ (the second step follows from $\calF\subseteq\calF'$ and
  $\calG\subseteq\calG'$). For (2),
  $\angled{\calF,\calG}\subseteq\angled{\calF',\calG'}$
  because $\calF\subseteq\calF'$ and $\calG\subseteq\calG'$, and
  $\rho=\rho'|_{\angled{\calF,\calG}}$ by
  construction.
\end{proof}

\subsection{Semantics}
\label{app:lilac-semantics}

Let $\Omega$ be the Hilbert cube $[0,1]^\N$,
and let $\Sigma_\Omega$ be the standard Borel $\sigma$-algebra on the Hilbert cube
generated by the product topology.

\begin{definition}
  A sub-$\sigma$-algebra $\calF$ of $\Sigma_\Omega$
  has \emph{finite footprint}
  if there is some $n$ such that every $F\in \calF$ is of the form
  $F'\times\hilbertcube$ for some $F'\subseteq[0,1]^n$.
\end{definition}

\begin{definition}
  A random variable $X:(\Omega,\Sigma_\Omega)\to (A,\Sigma_A)$ has \emph{finite footprint}
  if the pullback $\sigma$-algebra $\{X^{-1}(E)\mid E\in\Sigma_A\}$ has finite footprint.
\end{definition}

\begin{lemma} \label{app:lem:finite-footprint-remains-krm}
Let $\Mfin$ be the set of probability spaces $\calP$ with finite footprint whose
$\sigma$-algebras are sub-$\sigma$-algebras of the standard Borel $\sigma$-algebra on $[0,1]^\N$.
  The restriction of the KRM given by Theorem~\ref{app:spaces-form-a-krm} to $\Mfin$ is still a KRM.
\end{lemma}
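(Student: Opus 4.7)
The plan is to observe that Theorem~\ref{app:spaces-form-a-krm} already gives a KRM on all probability spaces over $\hilbertcube$, so what remains is only to check that $\Mfin$ is closed under the unit and the combining operation. Once closure is in hand, the other axioms (poset structure of $(\plte)$, associativity and commutativity of $(\pdot)$, and respect of $(\plte)$ by $(\pdot)$) transfer by pure restriction, since they are universally quantified statements whose witnesses, by closure, also lie in $\Mfin$.

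First I would verify that the unit $\bfone = (\{\emptyset,\Omega\}, \mu_\pmbone)$ lies in $\Mfin$. Both events have the form $E' \times \hilbertcube$ with $E' \subseteq [0,1]^0$, so the footprint is $n = 0$, and its two-element $\sigma$-algebra is trivially a sub-$\sigma$-algebra of the Borel $\sigma$-algebra on $\hilbertcube$.

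The main step is closure under $(\pdot)$: suppose $(\calF,\mu), (\calG,\nu) \in \Mfin$ have footprints bounded by $n$ and $m$ respectively and their independent combination $(\angled{\calF,\calG}, \rho)$ exists. Setting $k := \max(n,m)$, I would introduce the collection
\[
  \calH_k := \{H' \times \hilbertcube \mid H' \in \calB([0,1]^k)\}
\]
and show it is itself a $\sigma$-algebra on $\Omega$: complements and countable unions of sets of the form $H' \times \hilbertcube$ can be computed by performing the corresponding operation on the $[0,1]^k$-factor while leaving the tail unchanged. By hypothesis both $\calF$ and $\calG$ are contained in $\calH_k$, hence so is the smallest $\sigma$-algebra $\angled{\calF,\calG}$ containing them. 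This shows $\angled{\calF,\calG}$ has footprint at most $k$ and is contained in the Borel $\sigma$-algebra on the Hilbert cube, so the combined space lies in $\Mfin$.

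I do not expect any real obstacle here; the only substantive verification is that $\sigma$-algebra generation commutes with the ``ignore coordinates past $k$'' structure, which is precisely what the $\calH_k$ argument captures. The respect-of-ordering property then transfers without incident: whenever all four spaces involved lie in $\Mfin$, Theorem~\ref{app:spaces-form-a-krm} produces the required combination and inequality, and the closure just established guarantees the resulting spaces remain in $\Mfin$.
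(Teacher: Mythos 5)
Your proposal is correct and follows essentially the same route as the paper: the paper's proof likewise reduces everything to closure of $\Mfin$ under $(\pdot)$, noting that $\max(m,n)$ witnesses the footprint of the combination and that the generated $\sigma$-algebra $\angled{\calF,\calG}$ stays inside the Borel $\sigma$-algebra. Your $\calH_k$ construction just makes explicit the justification the paper leaves implicit.
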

\begin{proof}
  If $m$ and $n$ witness the finite footprints of independently-combinable
  probability spaces $\calP$ and $\calQ$ then $\max(m,n)$ witnesses the finite footprint of 
  their independent combination $\calP\pdot\calQ$,
  and if $\calF$ and $\calG$ are two sub-$\sigma$-algebras of the Borel $\sigma$-algebra on the
  Hilbert cube, then so is the $\sigma$-algebra $\angled{\calF,\calG}$.
  Thus $(\pdot)$ remains closed under $\Mfin$,
  which suffices to show that it remains a KRM.
\end{proof}

Let $\rv A$ be the set of measurable maps $[0,1]^\N\mto A$ with finite footprint.
Interpret propositions $\Gamma;\Delta\vdash P$ as sets of configurations
$(\gamma,D,\calP)$ where $\gamma\in\sembr\gamma$, $D\in\rv{\sembr\Delta}$,
and $\calP\in \Mfin$.

\begin{lemma}
  The following interpretations of basic connectives is well-formed:
{\small{
\begin{center}
\begin{tabular}{lcl}
  $\gamma,D,\calP\vDash \top$       &   always  \\
  $\gamma,D,\calP\vDash \bot$       &   never  \\
  $\gamma,D,\calP\vDash P\land Q$   &   iff  &   $\gamma,D,\calP\vDash P$ and $\gamma,D,\calP\vDash Q$\\
  $\gamma,D,\calP\vDash P\lor Q$    &   iff  &   $\gamma,D,\calP\vDash P$ or $\gamma,D,\calP\vDash Q$\\
  $\gamma,D,\calP\vDash P\to Q$     &   iff  &   $\gamma,D,\calP'\vDash P$ implies $\gamma,D,\calP'\vDash Q$ for all $\calP'\sqsupseteq \calP$\\
  $\gamma,D,\calP\vDash P*Q$        &   iff  &   $\gamma,D,\calP_P\vDash P$ and $\gamma,D,\calP_Q\vDash Q$ for some
  $\calP_P\pdot\calP_Q\plte \calP$\\
  $\gamma,D,\calP\vDash P\wand Q$   &   iff  &   $\gamma,D,\calP_P\vDash P$ implies
  $\gamma,D,\calP_P\pdot\calP\vDash Q$ for all $\calP_P$ with $\calP_P\pdot\calP$ defined\\
  $\gamma,D,\calP\vDash \nec P$     &   iff  &   $\gamma,D,1\vDash P$ \\
  $\gamma,D,\calP\vDash \forall x\ofty S.P$  &   iff  &   $(\gamma,x),D,\calP\vDash P$ for all $x\in S$ \\
  $\gamma,D,\calP\vDash \exists x\ofty S.P$  &   iff  &   $(\gamma,x),D,\calP\vDash P$ for some $x\in S$ \\
  $\gamma,D,\calP\vDash \forallrv X\ofty A.P$  &   iff  &   $\gamma,(D,X),\calP\vDash P$ for all
    $X:\rv A$ \\
  $\gamma,D,\calP\vDash \existsrv X\ofty A.P$  &   iff  &   $\gamma,(D,X),\calP\vDash P$ for some
    $X:\rv A$ \\
  $\gamma,D,(\calF,\mu)\vDash E\sim \mu'$        & iff &  $E(\gamma)\circ D$ is $\calF$-measurable and
                                                              $\mu'(\gamma)=\monadic{
                                                                 &\omega\gets \mu;\\
                                                                 &\ret (E(\gamma)(D(\omega)))
                                                              }$\\
  $\gamma,D,(\calF,\mu)\vDash \own E$             & iff &  $E(\gamma)\circ D$ is $\calF$-measurable\\
  $\gamma,D,(\calF,\mu)\vDash E_1\asequal E_2$    & iff &  $F\in\calF$ and $\mu(F)=1$ and $F\cup(X_1,X_2)^{-1}(A)\in\calF$ for all $A\in\mathrm{cod}(X_1)\otimes\mathrm{cod}(X_2)$ \\
                                                  &     &  where $F = \{\omega \mid X_1(\omega) = X_2(\omega)\}$
                                                           and $X_i = E_i(\gamma)\circ D$ for $i\in\{1,2\}$ \\
  $\gamma,D,(\calF,\mu)\vDash \Ex[E] = e$    & iff &  $E(\gamma)\circ D$ is $\calF$-measurable and
                                                      $\Ex_{\omega\sim\mu}[E(\gamma)(D(\omega))] = e(\gamma)$ \\
  $\gamma,D,\calP\vDash \weakpre(M,X\ofty A.Q)$ 
                                     & iff &  for all $\calP_\mathrm{frame}$ and $\mu$
                                              with $\calP_\mathrm{frame}\pdot\calP
                                              \plte(\Sigma_\Omega,\mu)$\\
                                     &     &   and all $D_\textrm{ext}:\rv{\sembr{\Delta_\textrm{ext}}}$\\
                                     &     &   there exists $X:\rv A$ and $\calP'$ and $\mu'$ with 
                                                 $\calP_\mathrm{frame}\pdot \calP'
                                                  \plte(\Sigma_\Omega,\mu')$\\
                                     &     &   such that
                                                  $\monadic{
                                                     &\omega\gets \mu;\\
                                                     &v\gets M(\gamma)(D(\omega));\\
                                                     &\ret (D_\textrm{ext}(\omega), D(\omega), v)
                                                   } = \monadic{
                                                     &\omega\gets \mu';\\
                                                     &\ret (D_\textrm{ext}(\omega), D(\omega), X(\omega))
                                                   }$\\
                                     &     &   and $\gamma,(D,X),\calP'\vDash Q$
  ~
\end{tabular}
\end{center}
}}
\end{lemma}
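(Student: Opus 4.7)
The plan is to proceed by induction on the typing derivation $\Gamma;\Delta\vdash P$ (as presented in Appendix~\ref{app:lilac-typing}), verifying for each syntactic form that the corresponding clause specifies a bona-fide predicate on configurations $(\gamma,D,\calP)\in\sembr\Gamma\times\rv{\sembr\Delta}\times\Mfin$. The proof obligations are of three flavors: (i) all existentially/universally quantified ranges must be genuine sets, (ii) all references to probability spaces must land in $\Mfin$, and (iii) all composed random expressions must denote honest-to-goodness random variables with finite footprint.

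First I would dispatch the purely intuitionistic cases ($\top,\bot,\land,\lor,\to$) and the quantifier cases ($\forall,\exists,\forallrv,\existsrv$), which are immediate from the induction hypothesis together with the facts that $\sembr{S}$ is a set for every $S\in\mathbf{Set}$ and that $\rv A$ is a set for every $A\in\mathbf{Meas}$. For the Kripke clause of $\to$, one uses that $(\plte)$ is a partial order on $\Mfin$, which is inherited from Theorem~\ref{app:spaces-form-a-krm}. Next, for the substructural connectives $*$, $\wand$, and $\nec$, well-formedness reduces to the closure properties established in Lemma~\ref{app:lem:finite-footprint-remains-krm}: independent combinations of finite-footprint spaces are again finite-footprint, and the unit $\bfone=(\calF_\pmbone,\mu_\pmbone)$ is trivially finite-footprint.

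For the probability-specific atomic propositions $\own E$, $E\sim\mu'$, $E_1\asequal E_2$, and $\Ex[E]=e$, the core obligation is that $E(\gamma)\circ D$ is a well-defined random variable $[0,1]^\N\mto A$. This follows from \textsc{T-RandE}, which guarantees $E(\gamma):\sembr\Delta\mto A$ is measurable, composed with the measurable $D:[0,1]^\N\mto \sembr\Delta$; finite footprint of the composition is inherited from finite footprint of $D$ since the footprint of $E(\gamma)\circ D$ is contained in that of $D$. Measurability of the indicator required by $E_1\asequal E_2$ (the set $\{E_1\circ D=E_2\circ D\}$) follows from measurability of $(E_1\circ D,E_2\circ D)$ and the diagonal in $A\times A$ when $A$ is standard Borel, which holds throughout.

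The main obstacle will be the $\weakpre$ clause, since it mixes quantification over frames $\calP_\textrm{frame}$, extended substitutions $D_\textrm{ext}$, resulting spaces $\calP'$, ambient measures $\mu'$, and output random variables $X$. The key points to check are: (a) $\calP_\textrm{frame}\pdot\calP\plte(\Sigma_\Omega,\mu)$ ranges only over finite-footprint $\calP_\textrm{frame}$ and honest Borel measures $\mu$ on the full Hilbert cube, so the quantification is over a set; (b) the composition $M(\gamma)\circ D$ is a genuine Markov kernel $[0,1]^\N\mto\giry A$ by \textsc{T-Prog}, so the left-hand monadic expression denotes a well-defined probability measure on $\sembr{\Delta_\textrm{ext}}\times\sembr\Delta\times A$; (c) the right-hand expression similarly denotes a well-defined measure, and equality of the two is a predicate, not a class relation. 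The subtle step is (a): we need $\mu$ to be a Borel probability measure on all of $[0,1]^\N$ that restricts to the measure of $\calP_\textrm{frame}\pdot\calP$, and such extensions exist because the footprint is finite and we can fill the remaining coordinates with, e.g., the Lebesgue measure. With these checks in hand the induction completes.
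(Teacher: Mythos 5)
Your proposal is correct and follows the same clause-by-clause well-formedness check as the paper, though far more exhaustively: the paper's entire proof consists of the single observation that the extended random substitutions $(D,X)$ appearing in the clauses for $\forallrv$, $\existsrv$, and $\weakpre$ have finite footprint because $D$ and $X$ each do. Make sure that this one obligation is stated explicitly in your quantifier cases --- it is only implicit in your appeal to $\rv A$ being a set --- since it is the only point the paper considers worth recording.
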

\begin{proof}
  We must verify that the extended random substitutions
  $(D,X)$ in the interpretations of $\forallrv$, $\existsrv$, and $\weakpre$
  have finite footprint; in all cases this follows from the fact that $D$ and $X$ have finite footprint.
\end{proof}

\begin{lemma}[separating conjunction is mutual independence]  \label{app:star-is-independence}
  Fix a configuration $(\gamma,D,\calP)$.
  Abbreviating $X_i(\gamma)\circ D$ as $X_i'$,
  random variables $X_1',\dots,X_n'$ are mutually independent with respect to $\calP$
  iff $\gamma,D,\calP\vDash \own X_1 * \dots * \own X_n$.
\end{lemma}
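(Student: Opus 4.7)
The plan is to handle both directions by reducing to the defining factorization property of independent combinations and invoking Lemma~\ref{lem:independent-combinations-are-unique} to pin down which probability space is the combination. By associativity of $(\pdot)$ (Theorem~\ref{thm:spaces-form-a-krm}), the $n$-ary combination $\calP_1\pdot\cdots\pdot\calP_n$ is well-defined whenever it exists, and by induction on $n$ its defining property is that for all $E_i\in\calF_i$ one has $\rho(E_1\cap\cdots\cap E_n)=\prod_i\mu_i(E_i)$, where $\calP_i=(\calF_i,\mu_i)$ and $\rho$ is the combined measure. This $n$-ary factorization is exactly what mutual independence of the $X_i'$ with respect to $\calP=(\calF,\mu)$ says about the generated sub-$\sigma$-algebras $\sigma(X_i')\subseteq\calF$.

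For the forward direction, assume $X_1',\dots,X_n'$ are mutually independent under $\mu$. Define $\calF_i:=\sigma(X_i')$ (the pullback $\sigma$-algebra, which has finite footprint since $X_i'$ does) and $\calP_i:=(\calF_i,\mu|_{\calF_i})$. Mutual independence gives $\mu(E_1\cap\cdots\cap E_n)=\prod_i\mu(E_i)$ for $E_i\in\calF_i$, which by the $n$-ary form of Definition~\ref{def:independent-combination} exhibits $\mu|_{\angled{\calF_1,\dots,\calF_n}}$ as \emph{an} independent combination of the $\calP_i$. By Lemma~\ref{lem:independent-combinations-are-unique} (applied inductively), this is \emph{the} combination, so $\calP_1\pdot\cdots\pdot\calP_n$ is defined and is $\plte\calP$. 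Since $X_i'$ is $\calF_i$-measurable by construction, $\gamma,D,\calP_i\vDash\own X_i$, and unfolding the semantics of $(*)$ yields $\gamma,D,\calP\vDash\hugestar_i\own X_i$.

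For the backward direction, unfold $\vDash$ to obtain probability spaces $\calP_i=(\calF_i,\mu_i)$ with $\calP_1\pdot\cdots\pdot\calP_n\plte\calP$ and each $X_i'$ being $\calF_i$-measurable. The combined measure $\rho$ satisfies $\rho(E_1\cap\cdots\cap E_n)=\prod_i\mu_i(E_i)$ for $E_i\in\calF_i$, and $\rho=\mu|_{\angled{\calF_1,\dots,\calF_n}}$ by the definition of $\plte$. Measurability of $X_i'$ with respect to $\calF_i$ means that every preimage $(X_i')^{-1}(B_i)$ of a measurable set in the codomain lies in $\calF_i$, so the factorization immediately yields $\mu\bigl(\bigcap_i(X_i')^{-1}(B_i)\bigr)=\prod_i\mu_i((X_i')^{-1}(B_i))=\prod_i\mu((X_i')^{-1}(B_i))$, which is mutual independence.

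The main obstacle is bookkeeping around the $n$-ary generalization: Definition~\ref{def:independent-combination} is stated binarily, so one needs a small inductive argument (using associativity and a single application of the $\pi$-$\lambda$ theorem, exactly as in the proof of Theorem~\ref{thm:spaces-form-a-krm}) to show that the $n$-fold iterated combination is characterized by the single factorization identity over all $\calF_i$ simultaneously. Once that equivalence is in hand, both directions follow by direct unfolding; the finite-footprint side conditions transfer automatically since each $\sigma(X_i')$ inherits finite footprint from $X_i'$.
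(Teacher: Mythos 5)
Your proposal is correct and follows essentially the same route as the paper's proof: both directions reduce to the $n$-ary factorization property of iterated independent combinations over the pullback $\sigma$-algebras $\sigma(X_i')$, with the harder direction (mutual independence implies the iterated combination is defined) handled by induction on $n$ using the $\pi$-$\lambda$ theorem and the uniqueness lemma to conclude $\calP_1\pdot\cdots\pdot\calP_n\plte\calP$. The only presentational difference is that you factor out the $n$-ary characterization as a preliminary step while the paper inlines the induction into the converse direction; the mathematical content is the same.
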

\begin{proof}
  First suppose $\gamma,D,\calP\vDash \own X_1 * \dots * \own X_n$, so
  each $X_i'$ is $\calP_i$-measurable for some $\calP_1\pdot\dots\pdot\calP_n\plte\calP$.
  Write $\calP=(\calF,\mu)$ and $\calP_i=(\calF_i,\mu_i)$ for all $1\le i\le n$.
  For any subset $J$ of $\{1,\dots,n\}$ and any collection of events
  $\{E_j \in \calF_j\}_{j\in J}$, 
  we have \[\Pr\left[\bigwedge_{j\in J} X_j'\in E_j\right] = \mu\left(\bigcap_{j\in J} X_j'^{-1}(E_j)\right)
    \stackrel{(a)}=\prod_{j\in J} \mu_j(X_j'^{-1}(E_j)) \stackrel{(b)}= \prod_{j\in J} \mu(X_j'^{-1}(E_j))
  = \prod_{j\in J} \Pr[X_j'\in E_j]\]
  where $(a)$ and $(b)$ hold because $\calP_1\pdot\dots\pdot\calP_n\plte\calP$
  and $X_j'^{-1}(E_j)\in\calF_j$ for all $j$. Hence $X_1',\dots,X_n'$ are mutually independent.

  For the converse, suppose $X_1',\dots,X_n'$ are mutually independent with respect to some probability
  space $\calP$.
  For each $1\le i\le n$, let $\calP_i$ be the probability space $(\calF_i,\mu_i)$
  where $\calF_i$ is the pullback $\sigma$-algebra along $X_i'$ and $\mu_i$ the restriction of $\mu$ to
  $\calF_i$.
  It's enough to show that the composition
  $\calP_1\pdot\dots\pdot\calP_n$ is defined, as 
  then $\calP_1\pdot\dots\pdot\calP_n\plte \calP$ by lemma
  \ref{lem:independent-combinations-are-unique}.
  This follows by induction on $n$.
  Cases $n=0$ and $n=1$ are immediate.
  Now suppose $\calP_1\pdot\dots\pdot \calP_k$ is defined.
  It's straightforward to show that $\calP_1\pdot\dots\pdot \calP_k$ is the pullback of the random variable
  $(X_1',\dots,X_k')$,
  and $\calP_{k+1}$ is the pullback of $X_{k+1}'$ by definition.
  Mutual independence of $X_1',\dots,X_k',X_{k+1}'$
  implies independence of $(X_1',\dots,X_k')$ and $X_{k+1}'$: intersections of
  events $X_1'^{-1}(E_1)\cap\dots\cap X_k'^{-1}(E_k)$ form a $\pi$-system that
  generates $\calF_1\pdot\dots\pdot\calF_k$,
  events independent of $\calF_{k+1}$ with respect to $\mu$ form a $\lambda$-system,
  and each intersection $X_1'^{-1}(E_1)\cap\dots\cap X_k'^{-1}(E_k)$ is independent of $\calF_{k+1}$
  because $X_1',\dots,X_k',X_{k+1}'$ are mutually independent.
  Thus the pullback of $(X_1',\dots,X_k',X_{k+1}')$ is an independent combination
  of $\calP_1\pdot\dots\pdot\calP_k$ and $\calP_{k+1}$, and $\calP_1\pdot\dots\pdot\calP_k\pdot\calP_{k+1}$
  is defined.
This closes the induction, so $\gamma,D,\calP\vDash \own X_1'\pdot\dots\pdot X_n'$ as desired.
\end{proof}

\subsubsection{Substitution} \label{app:substitution-lemma}

Substitutions take the form $(s,S)$ where $s$ is a substitution
of deterministic values and $S$ a substitution of random variables.

\begin{mathpar}
  \inferrule{s\in \sembr{\Gamma'}\to\sembr\Gamma \\ S\in\sembr{\Delta'}\mto\sembr\Delta}
    {\Gamma';\Delta'\vdash (s,S):\Gamma;\Delta}
\end{mathpar}
{\small{
\begin{tabular}{cc}
  $\fbox{\inferrule{\Gamma;\Delta\vdashrv E:A \\ \Gamma';\Delta'\vdash (s,S):\Gamma;\Delta}
    {\Gamma';\Delta'\vdash E[s,S] : A}}$
  &
  $E[s,S](\gamma) = E(s(\gamma)) \circ S$
\\\\
  $\fbox{\inferrule{\Gamma\vdashdet e:A \\ \Gamma'\vdash s:\Gamma}
    {\Gamma'\vdash e[s] : A}}$
  &
  $  e[s] = e \circ s $
\\\\
  $\fbox{\inferrule{\Gamma\vdashdet \mu:A \\ \Gamma'\vdash s:\Gamma}
    {\Gamma'\vdash \mu[s] : A}}$
  &
  $ \mu[s] = \mu \circ s $
\\\\
  $\fbox{\inferrule{\Gamma;\Delta\vdashprog M:A \\ \Gamma';\Delta'\vdash (s,S):\Gamma;\Delta}
    {\Gamma';\Delta'\vdash M[s,S] : A}}$
  &
  $ M[s,S](\gamma) = M(s(\gamma)) \circ S $
\\\\
  $\fbox{\inferrule{\Gamma;\Delta\vdash P \\ \Gamma';\Delta'\vdash (s,S):\Gamma;\Delta}
    {\Gamma';\Delta'\vdash P[s,S]}}$
  &
  $\begin{aligned}
    \top[s,S] &= \top \\ 
    \bot[s,S] &= \bot \\ 
    (P\land Q)[s,S] &= (P[s,S]\land Q[s,S]) \\ 
    (P\lor Q)[s,S] &= (P[s,S]\lor Q[s,S]) \\ 
    (P\to Q)[s,S] &= (P[s,S]\to Q[s,S]) \\ 
    (P* Q)[s,S] &= (P[s,S]* Q[s,S]) \\ 
    (P\wand Q)[s,S] &= (P[s,S]\wand Q[s,S]) \\ 
    (\nec P)[s,S] &= \nec P[s,S] \\ 
    (\forall x\ofty T.P)[s,S] &= \forall x\ofty T. P[s\times 1_T,S] \\ 
    (\exists x\ofty T.P)[s,S] &= \exists x\ofty T. P[s\times 1_T,S] \\ 
    (\forallrv X\ofty A.P)[s,S] &= \forallrv X\ofty A. P[s,S\times 1_A] \\ 
    (\existsrv X\ofty A.P)[s,S] &= \existsrv X\ofty A. P[s,S\times 1_A] \\ 
    (E\sim\mu)[s,S] &= E[s,S] \sim \mu[s] \\
    (\own E)[s,S] &= \own E[s,S] \\
    (E_1 \asequal E_2)[s,S] &= E_1[s,S] \asequal E_2[s,S]\\
    (\Ex[E] = e)[s,S] &= \Ex[E[s,S]] = e[s] \\
    \weakpre(M,X\ofty A.Q)[s,S] &= \weakpre(M[s,S],X\ofty A. Q[s,S\times 1_A])
  \end{aligned}$
\end{tabular}
}}

\begin{lemma}[syntactic and semantic substitution coincide] \label{subst-lemma}
  $\gamma,D,\calP\vDash P[s,S]$ iff $s(\gamma), S\circ D,\calP\vDash P$.
\end{lemma}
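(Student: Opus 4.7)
The proof proceeds by structural induction on the derivation of $\Gamma;\Delta\vdash P$. The cases for $\top,\bot,\land,\lor$ are immediate, and the cases for $\to,*,\wand,\nec$ follow routinely from the inductive hypothesis once one observes that substitution does not touch the heap-like component $\calP$, so the quantification over extensions $\calP'\sqsupseteq\calP$, splittings $\calP_P\pdot\calP_Q\plte\calP$, and wand-extensions $\calP_P$ carries across the equivalence verbatim. The quantifier cases (both flavors) reduce to showing that the extended substitutions $s\times 1_T$ and $S\times 1_A$ act correctly on extended environments $(\gamma,x)$ and $(D,X)$; this is essentially the definition of the product.

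For the probability-specific atoms $\own E$, $E\sim\mu$, $\Ex[E]=e$, and $E_1\asequal E_2$, the common point is to check that the ``syntactic'' random-variable produced after substitution matches the ``semantic'' one, i.e.
\[
E[s,S](\gamma)\circ D \;=\; \big(E(s(\gamma))\circ S\big)\circ D \;=\; E(s(\gamma))\circ (S\circ D),
\]
by associativity of function composition, together with the analogous identity $\mu[s](\gamma) = \mu(s(\gamma))$ for distribution expressions and $e[s](\gamma)=e(s(\gamma))$ for deterministic expressions. Once these identities are in hand, each atom's clause in Figure~\ref{fig:lilac-prob-semantics} transforms mechanically between the two sides of the biconditional.

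The main obstacle is the $\weakpre$ case, which is precisely the case for which the semantics was instrumented with the auxiliary quantifier over $D_\textrm{ext}$. Unfolding $\weakpre(M,X\ofty A.Q)[s,S] = \weakpre(M[s,S],X\ofty A.Q[s,S\times 1_A])$, I will match the two sides point-by-point. Given any choice of $\calP_\textrm{frame}$, $\mu$, and auxiliary extension $D_\textrm{ext}$ on one side, I will transport it to the other side unchanged. The kernel equation translates using $M[s,S](\gamma)(D(\omega)) = M(s(\gamma))((S\circ D)(\omega))$, so the two Giry-monad equations coincide. The witness $X\ofty\rv A$ is the same on both sides, and the resulting postcondition is related by the inductive hypothesis: $\gamma,(D,X),\calP'\vDash Q[s,S\times 1_A]$ iff $s(\gamma),(S\times 1_A)\circ(D,X),\calP'\vDash Q$, and by the componentwise action of the product substitution this is exactly $s(\gamma),(S\circ D, X),\calP'\vDash Q$, as required.

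The subtlety here is that without the $D_\textrm{ext}$ quantifier the substitution could fail: after substituting $S$, random variables that were distinct in $\Delta$ may become identified in $\Delta'$, so one cannot in general recover a witness under the original $\weakpre$ from a witness under the substituted one. Quantifying over arbitrary $D_\textrm{ext}$ circumvents this by exposing enough ``extra'' randomness on both sides of the equivalence so that the witness $X$ and the output probability space $\calP'$ can be chosen uniformly; the transport above works because $D_\textrm{ext}$ appears only as an uninterpreted tuple component on both sides of the kernel equation and does not interact with the substitution. Once this case is handled, the induction closes.
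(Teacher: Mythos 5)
Your treatment of the non-$\weakpre$ cases matches the paper's, and you correctly identify the $\weakpre$ case as the crux and the $D_\textrm{ext}$ quantifier as the mechanism that makes it go through. But your account of how that mechanism is used is wrong, and the gap sits exactly in the direction you flag as subtle. You claim you can transport $\calP_\textrm{frame}$, $\mu$, and $D_\textrm{ext}$ to the other side \emph{unchanged} and that ``the two Giry-monad equations coincide.'' They do not: the equation in the semantics of $\weakpre$ constrains the distribution of the tuple $(D_\textrm{ext}(\omega), D(\omega), v)$, so under the environment $D$ the middle component is $D(\omega)$ while under the environment $S\circ D$ it is $S(D(\omega))$. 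Equality of the pushforwards of $(D_\textrm{ext}(\omega), D(\omega), v)$ is strictly stronger than equality of the pushforwards of $(D_\textrm{ext}(\omega), S(D(\omega)), v)$ when $S$ is not injective, so in the right-to-left direction (from a witness for $s(\gamma), S\circ D,\calP\vDash\weakpre(M,\ldots)$ to one for $\gamma,D,\calP\vDash\weakpre(M[s,S],\ldots)$) you cannot recover the required equation by postcomposition, and transporting $D_\textrm{ext}$ unchanged does not help.

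The paper's proof resolves this by \emph{not} passing $D_\textrm{ext}$ through unchanged: in the right-to-left direction it instantiates the universally quantified extension of the hypothesis with $D_\textrm{ext} := (D_\textrm{ext}, D)$, so that the pre-substitution environment $D(\omega)$ is smuggled through the extension slot of the hypothesis's equation; one then postcomposes with the projection $((\delta_\textrm{ext},\delta),\_,v)\mapsto(\delta_\textrm{ext},\delta,v)$ to discard the lossy $S(D(\omega))$ component and recover the equation needed for the substituted $\weakpre$. (The left-to-right direction is the easy one: there one postcomposes with $(\delta_\textrm{ext},\delta,v)\mapsto(\delta_\textrm{ext},S(\delta),v)$, and $D_\textrm{ext}$ genuinely does pass through unchanged.) Your closing remark that $D_\textrm{ext}$ ``does not interact with the substitution'' is therefore precisely backwards: the whole point of quantifying over arbitrary extensions is that one direction of the argument must choose an extension that contains $D$ itself.
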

\begin{proof}
  By induction on the syntax of propositions.
  The interesting cases are:
  \begin{itemize}
  \item Case $E\sim \mu$:
    \begin{align*}
       &\gamma,D,\calP\vDash (E\sim \mu)[s,S]\\
       &\textrm{iff } \gamma, D,\calP\vDash ((\gamma\mapsto E(\gamma)\circ S)\sim (\mu\circ S)) \\
       &\textrm{iff } E(s(\gamma))\circ S\circ D \textrm{ is }\calP\textrm{-measurable and } 
           \mu(\gamma) = \monadic{&\omega\gets \calP\\ &\ret E(\gamma)(S(D(\omega)))} \\
       &\textrm{iff } s(\gamma), S\circ D,\calP\vDash E\sim \mu
    \end{align*}
  \item Case $\weakpre(M,X\ofty A.Q)$:
    For the left-to-right direction,
    suppose (1) $\gamma,D,\calP\vDash \weakpre(M[s,S], X\ofty A.  Q[s,S\times 1_A])$
    and (2) $\calP_\textrm{frame}\pdot \calP\plte(\Sigma_\Omega,\mu)$
    and (3) $D_\textrm{ext}:\rv{\sembr{\Delta_\textrm{ext}}}$.
    By (1) there exist $\calP'$ and $\mu'$ with
    $\calP_\textrm{frame}\pdot\calP'\plte(\Sigma_\Omega,\mu')$
    and $X:\rv A$ such that
    \[
      \monadic{
         &\omega\gets \mu;\\
         &v\gets M[s,S](\gamma)(D(\omega));\\
         &\ret (D_\textrm{ext}, D(\omega), v)
       } = \monadic{
         &\omega\gets \mu';\\
         &\ret (D_\textrm{ext}, D(\omega), X(\omega))
       }
    \]
    and $\gamma,(D,X),\calP\vDash Q[s,S\times 1_A]$.
    By IH this is equivalent to $s(\gamma),(S\circ D,X),\calP\vDash Q$
    and simplifying the above equation gives
    \[
      \monadic{
         &\omega\gets \mu;\\
         &v\gets M(s(\gamma))(S(D(\omega)));\\
         &\ret (D_\textrm{ext}, D(\omega), v)
       } = \monadic{
         &\omega\gets \mu';\\
         &\ret (D_\textrm{ext}, D(\omega), X(\omega))
       }
    \]
    Now postcomposing both sides with the map
      $(\delta_\textrm{ext}, \delta,v)\mapsto (\delta_\textrm{ext}, S(\delta),v)$
    gives
    \[
      \monadic{
         &\omega\gets \mu;\\
         &v\gets M(s(\gamma))(S(D(\omega)));\\
         &\ret (D_\textrm{ext}, S(D(\omega)), v)
       } = \monadic{
         &\omega\gets \mu';\\
         &\ret (D_\textrm{ext}, S(D(\omega)), X(\omega))
       }
    \]
    so that $(\calP',\mu',X)$ witnesses $s(\gamma),S\circ D,\calP\vDash \weakpre(M,X\ofty A. Q)$ as
    desired.

    For the right-to-left direction, suppose 
    (1) $s(\gamma),S\circ D,\calP\vDash \weakpre(M, X\ofty A.  Q)$
    and (2) $\calP_\textrm{frame}\pdot \calP\plte(\Sigma_\Omega,\mu)$
    and (3) $D_\textrm{ext}:\rv{\sembr{\Delta_\textrm{ext}}}$.
    Specialize (1) with $D_\textrm{ext}:= (D_\textrm{ext},D)$
    to get $\calP'$, $\mu'$ and $X:\rv A$ such that
    \[
      \monadic{
         &\omega\gets \mu;\\
         &v\gets M(s(\gamma))(S(D(\omega)));\\
         &\ret ((D_\textrm{ext}, D), S(D(\omega)), v)
       } = \monadic{
         &\omega\gets \mu';\\
         &\ret ((D_\textrm{ext}, D), S(D(\omega)), X(\omega))
       }
    \]
    and $s(\gamma),(S\circ D, X),\calP'\vDash Q$.
    By IH this is equivalent to $\gamma,(D,X),\calP'\vDash Q[s,S\times 1_A]$,
    and postcomposing both sides of the above equation with the map
    $((\delta_\textrm{ext},\delta),\_,v) \mapsto (\delta_\textrm{ext},\delta,v)$
    and rewriting $M(s(\gamma))(S(D(\omega)))$ in terms of $M[s,S]$ gives \[
      \monadic{
         &\omega\gets \mu;\\
         &v\gets M[s,S](\gamma)(D(\omega));\\
         &\ret (D_\textrm{ext}, D, v)
       } = \monadic{
         &\omega\gets \mu';\\
         &\ret (D_\textrm{ext}, D, X(\omega))
       }
    \]
    so that $(\calP',\mu',X)$ witnesses
    $\gamma,D,\calP\vDash \weakpre(M[s,S], X\ofty A.  Q[s,S\times 1_A])$
    as desired.
  \end{itemize}
\end{proof}

\subsection{Properties of almost-sure equality} \label{app:asequal-properties}

\begin{definition}[spaces that support equality]
  Say that a measurable space $(\mathbf{A},\calA)$ \emph{supports equality}
  if the diagonal $\Delta_{\mathbf{A}} := \{(a,a) \mid a\in \mathbf{A}\}$ is measurable in $\calA\otimes\calA$.
  This includes all Hausdorff spaces, in particular all of the examples
  considered in this paper.
\end{definition}

\begin{lemma} \label{lem:asequal-tfae}
  Let $(\Omega,\calF,\mu)$ be a probability space, $F$ an event in $\calF$,
  $(\mathbf{A},\calA)$ a measurable space that supports equality,
  and $X_1,X_2 : \Omega\to \mathbf A$ random variables.
  The following are equivalent:
  \begin{enumerate}
    \item For all $A\in\calA\otimes\calA$ it holds that $F \cup (X_1,X_2)^{-1}(A)\in\calF$.
    \item For all $A\in\calA\otimes\calA$ it holds that $F^c \cap (X_1,X_2)^{-1}(A)\in\calF$.
    \item $F \cup (X_1,X_2)^{-1}(\calA\otimes\calA)\subseteq\calF$,
      where $(X_1,X_2)^{-1}(\calA\otimes\calA)$ denotes the pullback $\sigma$-algebra
      and $F \cup (X_1,X_2)^{-1}(\calA\otimes\calA)$ is defined to be
      the set $\{F \cup (X_1,X_2)^{-1}(A) \mid A \in \calA\otimes\calA\}$.
    \item $F\cup X_1^{-1}(\calA)\subseteq\calF$ and $F\cup X_2^{-1}(\calA)\subseteq\calF$.
  \end{enumerate}
\end{lemma}
\begin{proof}
  (1) and (3) are equivalent by definition.
  (1) and (2) are equivalent because $\calF$ is closed under complements
  and preimages $(X_1,X_2)^{-1}(A)$ are in bijection with their complements
  ${(X_1,X_2)^{-1}(A)}^c$ via ${(X_1,X_2)^{-1}(A)}^c = (X_1,X_2)^{-1}(A^c)$.
  This establishes the equivalence of (1), (2), and (3).
  Finally, (3) implies (4)
  because $(X_1,X_2)^{-1}(\calA\otimes\calA)$ contains both $X_1^{-1}(\calA)$
  and $X_2^{-1}(\calA)$, so it only remains to show (4) implies (3).
  Suppose (4) with the goal of showing
  $F\cup (X_1,X_2)^{-1}(A)\in\calF$ for all $A\in\calA\otimes\calA$.
  First note that if $A$ is of the form $A_1\times A_2$ for some $A_1,A_2\in\calA$
  then
  \begin{align*}
    F\cup (X_1,X_2)^{-1}(A_1\times A_2) &= F\cup (X_1^{-1}(A_1) \cap X_2^{-1}(A_2)) 
    = \underbrace{(F\cup X_1^{-1}(A_1))}_{\in\calF\text{ by (4)}}
       \cap \underbrace{(F\cup X_2^{-1}(A_2))}_{\in\calF\text{ by (4)}}
  \end{align*}
  from which the result follows because $\calF$ is closed under finite intersections.
  Next note that the set of events $A$ for which $F\cup (X_1,X_2)\inv(A)\in\calF$ forms a $\sigma$-algebra:
  \begin{itemize}
  \item $F\cup \emptyset = F\in\calF$,
  \item If $F\cup (X_1,X_2)\inv(A)\in\calF$ then $F\cup ((X_1,X_2)\inv(A))^c\in\calF$ because
    \begin{align*}
      F\cup ((X_1,X_2)\inv(A))^c &= \Omega \cap (F\cup ((X_1,X_2)\inv(A))^c) \\
      &= (F\cup F^c) \cap (F\cup ((X_1,X_2)\inv(A))^c) \\
      &= F \cup (F^c \cap ((X_1,X_2)\inv(A))^c) \\
      &= F \cup {\underbrace{(F \cup (X_1,X_2)\inv(A))}_{\in\calF}}^c
    \end{align*}
    from which the result follows because $\calF$ is closed under complements and finite unions.
  \item If $\{A_i\}$ is a countable family with $F\cup (X_1,X_2)\inv(A_i)\in\calF$
    for all $i$, then
    $F\cup (X_1,X_2)\inv(\bigcup_i A_i) = \bigcup_i (F\cup (X_1,X_2)\inv(A_i))\in\calF$ because $\calF$
    is closed under countable unions.
  \end{itemize}
  Together these two points imply that the set of events $A$ for which $F\cup (X_1,X_2)\inv(A)\in\calF$
  is a $\sigma$-algebra containing all measurable boxes $A_1\times A_2$ for $A_1,A_2\in\calA$.
  Since $\calA\otimes \calA$ is the smallest $\sigma$-algebra containing all such boxes,
  we have $F\cup (X_1,X_2)\inv(A)\in\calF$ for all $A \in \calA\otimes\calA$ as desired.
\end{proof}

\begin{lemma}[almost-sure equality is an equivalence relation]
  Let $(\mathbf{A}, \calA)$ be a measurable space that supports equality.
  Let $E_1$, $E_2$, and $E_3$ be random expressions of type $\mathbf{A}$.
  The following entailments hold:
  \begin{mathpar}
    \inferrule*[lab=Refl]{}{\vdash E_1 \asequal E_1}
    \and
    \inferrule*[lab=Sym]{}{E_1\asequal E_2\vdash E_2\asequal E_1}
    \and
    \inferrule*[lab=Trans]{}{E_1\asequal E_2\land E_2\asequal E_3\vdash E_1\asequal E_3}
  \end{mathpar}
\end{lemma}
\begin{proof}
  Fix a configuration $(\gamma,D,(\calF,\mu))$. Define
  $X_i = E_i(\gamma)\circ D$ for $i\in\{1,2,3\}$
  and let $F_{ij}$ be the event $\{\omega\mid X_i(\omega) =X_j(\omega)\}$
  that $X_i$ and $X_j$ are equal.
  \begin{itemize}
    \item \textsc{Refl}: we need to show $F_{11} \in \calF$ and $\mu(F_{11}) = 1$,
      and that $F_{11} \cup (X_1,X_1)^{-1}(A) \in \calF$ for all $A\in\calA\otimes\calA$.
      By definition $F_{11} = \{\omega\mid X_1(\omega) = X_1(\omega)\} = \Omega$,
      and because $(\calF,\mu)$ is a probability space, we have that
      $\Omega\in\calF$ and $\mu(\Omega)=1$.
      Since $F_{11} = \Omega$, we have $F_{11} \cup F' = F_{11} \in \calF$ for all $F'$,
      so $F_{11} \cup (X_1,X_1)^{-1}(A)\in \calF$ for all $A$ as required.
    \item \textsc{Sym}: we have $F_{12} \in \calF$
      and $\mu(F_{12}) = 1$ and $F_{12} \cup (X_1,X_2)^{-1}(A)\in\calF$ for all $A\in\calA\otimes\calA$,
      and need $F_{21}\in\calF$ and $\mu(F_{21}) = 1$ and $F_{21} \cup (X_2,X_1)^{-1}(A)\in\calF$ for all $A\in\calA\otimes\calA$.
      This follows from $F_{12} = F_{21}$ and the fact that preimages $(X_2,X_1)^{-1}(A)$
      are in bijective correspondence with preimages $(X_1,X_2)^{-1}(A)$ by
      $(X_2,X_1)^{-1}(A) = (X_1,X_2)^{-1}(\mathrm{swap}(A))$ where $\mathrm{swap}(x,y) = (y,x)$.
    \item \textsc{Trans}: by Lemma~\ref{lem:asequal-tfae}, we have 
      \begin{enumerate}
        \item $F_{12}\in\calF$ and $\mu(F_{12}) = 1$ and $F_{12} \cup (X_1,X_2)^{-1}(\calA\otimes\calA)\subseteq\calF$
        \item $F_{23}\in\calF$ and $\mu(F_{23}) = 1$ and $F_{23} \cup (X_2,X_3)^{-1}(\calA\otimes\calA)\subseteq\calF$
      \end{enumerate}
      and need $F_{13}\in\calF$ and $\mu(F_{13}) = 1$ and $F_{13} \cup (X_1,X_3)^{-1}(\calA\otimes\calA)\subseteq\calF$.
      \begin{itemize}
        \item $F_{13}\in\calF$: 
           note that $F_{13} = (F_{12} \cap F_{23}) \uplus (F_{12}^c \cap F_{23}^c \cap F_{13})$;
            unwinding the notation, this states the following equivalence of events:
            \[
              X_1 = X_3
              \iff (X_1 = X_2 \land X_2 = X_3)
              \lor (X_1 \ne X_2 \land X_2 \ne X_3\land X_1 = X_3)
            \]
          Since $\calF$ is closed under finite unions and intersections, we are done if we can show that
          $F_{12}$, $F_{23}$, and $(F_{12}^c \cap F_{23}^c \cap F_{13})$ are in $\calF$.
          By (1) and (2) and Lemma~\ref{lem:asequal-tfae} we have that $\calF$ contains
          \[
            F_{12}^c \cap X_1^{-1}(\calA) \qquad
            F_{12}^c \cap X_2^{-1}(\calA) \qquad
            F_{23}^c \cap X_2^{-1}(\calA) \qquad
            F_{23}^c \cap X_3^{-1}(\calA)
          \]
          We have $\mathbf{A}\in\calA$ because $\calA$ is a $\sigma$-algebra,
          so $\calF$ also contains $F_{12}^c$ and $F_{23}^c$.
          By closure under intersections $\calF$ also contains
          \[
            F_{23}^c \cap (F_{12}^c \cap X_1^{-1}(\calA)) \qquad
            F_{12}^c \cap (F_{23}^c \cap X_3^{-1}(\calA))
          \]
          Now by Lemma~\ref{lem:asequal-tfae} again, we have that $\calF$ contains
            $F_{12}^c \cap F_{23}^c \cap (X_1,X_3)^{-1}(\calA\otimes\calA)$.
          In particular, since $\calA$ supports equality, we have $\Delta_{\mathbf A}\in \calA\otimes\calA$,
          so $F_{12}^c \cap F_{23}^c \cap (X_1,X_3)^{-1}(\Delta_{\mathbf A})\in \calF$.
          Now $(X_1,X_3)^{-1}(\Delta_{\mathbf A}) = F_{13}$ by definition,
          so $F^c_{12} \cap F^c_{23} \cap F_{13}\in\calF$.
          Along the way we have shown $F_{12}^c\in\calF$ and $F_{23}^c\in\calF$,
          which implies $F_{12}\in\calF$ and $F_{23}\in\calF$ by closure under complements,
          as required.
        \item $\mu(F_{13}) = 1$: we have $F_{12}\cap F_{23}\subseteq F_{13}$
          (by transitivity of equality on the functions $X_1,X_2,X_3$)
          and $\mu(F_{12}) = \mu(F_{23}) = 1$ by assumption. Thus
          \begin{align*}
            \mu(F_{13})
            \ge \mu(F_{12}\cap F_{23})
            = 1 - \mu(F_{12}^c\cup F_{23}^c)
            \ge 1 - (\mu(F_{12}^c) + \mu(F_{23}^c))
            = 1
          \end{align*}
          as required.
        \item $F_{13} \cup (X_1,X_3)^{-1}(\calA\otimes\calA)\subseteq\calF$:
          by Lemma~\ref{lem:asequal-tfae} it suffices to show
           $F_{13}^c \cap X_1^{-1}(\calA)\subseteq\calF$ and
           $F_{13}^c \cap X_3^{-1}(\calA)\subseteq\calF$.
          To this end fix arbitrary $A\in \calA$ with aim to show
          $F_{13}^c\cap X_1\inv(A)\in\calF$ and $F_{13}^c\cap X_3\inv(A)\in\calF$.
          \begin{itemize}
            \item $F_{13}^c \cap X_1\inv(A)\in\calF$:
              note that
              $F_{13}^c = (F_{12}^c \cap F_{23}) \uplus (F_{12}^c \cap F_{23}^c \cap F_{13}^c) \uplus (F_{12} \cap F_{23}^c)$;
              unwinding the notation, this states the following equivalence of events:
              \[ X_1 \ne X_3 \iff \begin{aligned}
                  &(X_1 \ne X_2 = X_3) \\
                  &\lor~(X_1\ne X_2 \land X_2 \ne X_3 \land X_1 \ne X_3) \\
                  &\lor~(X_1=X_2\ne X_3)
                \end{aligned} \]
              Thus the intersection $F_{13}^c \cap X_1\inv(A)$ can be rewritten as the following union:
              \[
                  (F_{12}^c \cap F_{23} \cap X_1\inv(A)) \uplus
                  (F_{12}^c \cap F_{23}^c \cap F_{13}^c \cap X_1\inv(A)) \uplus
                  (F_{12} \cap F_{23}^c \cap X_1\inv(A))
              \]
              It only remains to show that each component of this union is in $\calF$.
              As in the proof of $F_{13}\in\calF$ above, we have that $\calF$ contains
              each of the following:
              \[ F_{12} \qquad F_{23} \qquad F_{13} \qquad
                 (F_{12}^c \cap F_{23}^c) \cap (X_1,X_3)\inv(\calA\otimes\calA) \]
              The first component $F_{12}^c\cap F_{23}\cap X_1^{-1}(A)$ is
              an intersection of two events
              $F_{23}$ and $F_{12}^c \cap X_1^{-1}(A)$ that are in $\calF$ by assumption.
              Similarly the second component
              is equal to $F_{12}^c \cap F_{23}^c \cap (X_1,X_3)\inv(\Delta_{\mathbf A}^c\cap (A\times \mathbf{A}))$,
              an event in $\calF$ by assumption.
              Finally, to show the third component
                  $(F_{12} \cap F_{23}^c \cap X_1\inv(A))$
              is in $\calF$ note that we have the equality
                  $(F_{12} \cap F_{23}^c \cap X_1\inv(A))
                  =(F_{12} \cap F_{23}^c \cap X_2\inv(A))$
              corresponding to the following equivalence of events:
              \[ X_1 = X_2 \ne X_3 \land X_1 \in A
                \iff X_1 = X_2\ne X_3 \land X_2\in A \]
              Thus the third component is an intersection of events
              $F_{12}$ and $F_{23}^c \cap X_2\inv(A)$ that are in $\calF$
              by assumption.
            \item $F_{13}^c \cap X_3\inv(A)\in\calF$:
              this case is symmetrical to the one above, with $X_3$ replaced by $X_1$.
              The strategy is the same:
              note that
              $F_{13}^c = (F_{12}^c \cap F_{23}) \uplus (F_{12}^c \cap F_{23}^c \cap F_{13}^c) \uplus (F_{12} \cap F_{23}^c)$,
              so the intersection $F_{13}^c \cap X_3\inv(A)$ can be rewritten as:
              \[
                  (F_{12}^c \cap F_{23} \cap X_3\inv(A)) \uplus
                  (F_{12}^c \cap F_{23}^c \cap F_{13}^c \cap X_3\inv(A)) \uplus
                  (F_{12} \cap F_{23}^c \cap X_3\inv(A))
              \]
              The first component $F_{12}^c\cap F_{23}\cap X_3^{-1}(A)$ is
              equivalent to $F_{12}^c\cap F_{23}\cap X_2^{-1}(A)$,
              due to the following equivalence of events:
              \[ X_1 \ne X_2 = X_3 \land X_3 \in A
                \iff X_1 \ne X_2 = X_3 \land X_2\in A \]
              Thus the first component is an intersection of events
              $F_{23}$ and $F_{12}^c \cap X_2^{-1}(A)$ that are in $\calF$ by assumption.
              The second component
              is equal to
              $F_{12}^c \cap F_{23}^c \cap (X_1,X_3)\inv(\Delta_{\mathbf A}^c\cap (\mathbf{A}\times A))$,
              an event in $\calF$ by assumption.
              Finally, the third component $F_{12}\cap F_{23}^c\cap X_3^{-1}(A)$ is
              an intersection of two events
              $F_{12}$ and $F_{23}^c \cap X_3^{-1}(A)$ that are in $\calF$ by assumption.
          \end{itemize}
      \end{itemize}
  \end{itemize}
\end{proof}

\begin{lemma} \label{lem:full-generator-is-deterministic}
  Let $(\Omega,\calF,\mu)$ be a probability space and $G$ a collection of
  full sets. (A full set is an event with probability $1$.)
  For all $E\in\angled{G}$ it holds that $\mu(E) = 0$
  or $\mu(E) = 1$.
\end{lemma}
\begin{proof}
  Without loss of generality we may assume that $G$ is a $\pi$-system:
  if $G$ is empty then $\angled{G}$ is the trivial $\sigma$-algebra and we are done;
  if $G$ is nonempty, it generates the same $\sigma$-algebra as its
  closure under finite intersections, and finite intersections of full sets
  remain full.
  By the $\pi$-$\lambda$ theorem, we are done if we can show that the collection
  of events $E$ for which $\mu(E) = 0$ or $\mu(E) = 1$ is a $\lambda$-system
  that contains $G$.
  \begin{itemize}
    \item Contains $G$: if $E\in G$ then $\mu(E) = 1$ by assumption.
    \item Contains $\emptyset$: $\mu(\emptyset) = 0$ because $\mu$ is a measure.
    \item Closed under complements: if $\mu(E) \in \{0,1\}$ then $\mu(E^c) = 1 - \mu(E) \in \{0,1\}$.
    \item Closed under countable disjoint unions: let $\{E_i\}_i$ be a pairwise-disjoint
      countable family of sets for which $\mu(E_i) \in \{0,1\}$
      for all $i$. By countable additivity of measures,
      $\mu(\biguplus_i E_i) = \sum_i \mu(E_i) \in \mathbb N $.
      This combined with the fact that $\mu$ is a probability measure forces
      $\mu(\biguplus_i E_i) \in \{0,1\}$ as required.
  \end{itemize}
\end{proof}

\begin{lemma} \label{lem:asequal-good-conjunct}
  Let $(\mathbf{A}, \calA)$ be a measurable space that supports equality,
  and let $E_1$ and $E_2$ be random expressions of type $\mathbf A$.
  The following double-entailment holds:
  \[ P \land (E_1\asequal E_2) \dashv\vdash P * (E_1\asequal E_2) \]
\end{lemma}
\begin{proof}
  The right-to-left entailment follows from the fact that our separation logic is
  affine. For the left-to-right entailment, fix a configuration $\gamma,D,(\calF,\mu)$,
  let $X_i = E_i(\gamma)\circ D$ for $i\in\{1,2\}$, let $F_{12}$ be the event $X_1 = X_2$,
  and suppose that
  \begin{enumerate}
    \item $\gamma,D,(\calF,\mu)\vDash P$
    \item $F_{12}\in\calF$ and $\mu(F) = 1$
    \item $F_{12}\cup (X_1,X_2)^{-1}(\calA\otimes\calA)\subseteq\calF$
  \end{enumerate}
  with the aim of showing $\gamma,D,(\calF,\mu)\vDash P * (E_1\asequal E_2)$.
  Let $\calG$ be the sub-$\sigma$-algebra of $\calF$
  generated by $F_{12}\cup(X_1,X_2)^{-1}(\calA\otimes\calA)$,
  and let $\nu$ be the restriction of $\calF$ to $\calG$.
  The probability space $(\calG,\nu)$ witnesses $E_1\asequal E_2$ by (2) and (3).
  By (1), we are done if we can show $(\calF,\mu)\pdot(\calG,\nu) = (\calF,\mu)$.
  By definition of independent combination, it suffices to show
  $\mu(F\cap G) = \mu(F)\nu(G)$ for all $F\in\calF$ and $G\in\calG$.
  Since $\nu$ is defined as a restriction of $\mu$ to $\calG$, this reduces
  to showing $\mu(F\cap G) = \mu(F)\mu(G)$.
  The $\sigma$-algebra $\calG$ is generated by events $F_{12}\cup (X_1,X_2)^{-1}(\calA\otimes\calA)$
  that are all full, so by Lemma~\ref{lem:full-generator-is-deterministic}
  we have that $\mu(G)\in\{0,1\}$ for all $G\in\calG$. There are thus two cases:
  \begin{itemize}
    \item If $\mu(G) = 0$, then $\mu(F\cap G) = 0 = \mu(F)\mu(G)$ as required.
    \item If $\mu(G) = 1$, then $\mu(F\cap G) = \mu(F) - \mu(F\setminus G) = \mu(F) - 0 = \mu(F)\mu(G)$ as required.
  \end{itemize}
\end{proof}

\begin{corollary}[almost-sure equality is duplicable]
  Let $(\mathbf{A}, \calA)$ be a measurable space that supports equality,
  and let $E_1$ and $E_2$ be random expressions of type $\mathbf A$.
  The following entailment holds:
  \[ E_1\asequal E_2 \vdash (E_1\asequal E_2) * (E_1\asequal E_2) \]
\end{corollary}
\begin{proof}
  $E_1\asequal E_2 \vdash (E_1\asequal E_2)\land (E_1\asequal E_2) \stackrel{\ref{lem:asequal-good-conjunct}}\vdash (E_1\asequal E_2) * (E_1\asequal E_2)$.
\end{proof}

\begin{lemma}[transfer of ownership]
  Let $(\mathbf{A}, \calA)$ be a measurable space that supports equality.
  Let $E_1$ and $E_2$ be random expressions of type $\mathbf A$.
  The following entailments hold:
  \begin{mathpar}
    \inferrule*[lab=Transfer-Own]{}{\own E_1 \land (E_1\asequal E_2) \vdash \own E_2}
    \and
    \inferrule*[lab=Transfer-Dist]{}{(E_1 \sim\nu)\land (E_1\asequal E_2) \vdash E_2\sim\nu}
  \end{mathpar}
\end{lemma}
\begin{proof}
  We prove \textsc{Transfer-Dist}; the proof of \textsc{Transfer-Own} is identical.
  Fix a configuration $(\gamma,D,(\calF,\mu))$,
  let $X_i = E_i(\gamma)\circ D$ for $i\in\{1,2\}$,
  let $F_{12}$ be the event $X_1=X_2$,
  and suppose
  \begin{enumerate}
    \item $X_1$ is $\calF$-measurable with distribution $\nu(\gamma)$
    \item $F_{12}\in\calF$ and $\nu(F_{12}) = 1$
    \item $F_{12} \cup (X_1,X_2)\inv(\calA\otimes\calA)\subseteq\calF$
  \end{enumerate}
  with the aim of showing $X_2$ is $\calF$-measurable with distribution $\nu(\gamma)$.
  It suffices to show that $X_2$ is $\calF$-measurable, as then it follows that
  $X_1$ and $X_2$ are almost-surely equal random variables with respect to $\mu$
  and so have the same distribution. 
  Fix arbitrary $A\in\calA$ with the aim of showing $X_2^{-1}(A) \in \calF$.
  Write $X_2^{-1}(A)$ as the disjoint union
  $(X_2^{-1}(A) \cap F_{12}) \uplus (X_2^{-1}(A) \cap F_{12}^c)$.
  The first disjunct is equal to $X_1\inv(A)\cap F_{12}$
  because $F_{12}$ is the event $X_1=X_2$; this is in $\calF$ because
  $X_1\inv(A)\in\calF$ by (1) and $F_{12}\in\calF$ by (2).
  The second disjunct is in $\calF$ by (3) and Lemma~\ref{lem:asequal-tfae}.
  Thus $X_2^{-1}(A)$ is a union of events in $\calF$ as required.
\end{proof}

\begin{lemma}[congruence]
  Let $(\mathbf{A}, \calA)$ and $(\mathbf B,\calB)$ be measurable spaces that support equality.
  Let $E_1$ and $E_2$ be random expressions of type $\mathbf A$.
  Let $F[X]$ be a random expression of type $\mathbf B$ with a free variable $X$ of type $\mathbf A$.
  The following entailment holds:
  \begin{mathpar}
    \inferrule*[lab=Congruence]{}{\own (F[E_1],F[E_2]) \land(E_1\asequal E_2) \vdash F[E_1] \asequal F[E_2]}
  \end{mathpar}
\end{lemma}
\begin{proof}
  Fix a configuration $(\gamma,D,(\calF,\mu))$.
  Let $X_i = E_i(\gamma)\circ D$ for $i\in\{1,2,3\}$
  and let $E_{12}$ be the event $X_1=X_2$.
  Let $F_1 = F(\gamma)\circ (D,X_1)$
  and $F_2 = F(\gamma)\circ (D,X_2)$,
  and let $F_{12}$ be the event $F_1=F_2$.
  We have
  \begin{itemize}
    \item $F_1$ and $F_2$ are $\calF$-measurable
    \item $E_{12}\in\calF$ and $\mu(E_{12}) = 1$
    \item $E_{12} \cup (X_1,X_2)\inv(\calA\otimes\calA)\subseteq\calF$
  \end{itemize}
  The goal is to show $F_{12}\in\calF$ and $\mu(F_{12}) = 1$ and
  $F_{12}\cup (F_1,F_2)\inv(\calB\otimes\calB)\subseteq\calF$.
  \begin{itemize}
    \item 
       $F_{12} = (F_1,F_2)\inv(\Delta_{\mathbf B})\in\calF$ because $F_1$ and
       $F_2$ are $\calF$-measurable and $\mathbf B$ supports equality.
    \item $\mu(F_{12}) \ge \mu(E_{12}) = 1$ because $E_{12}\subseteq F_{12}$.
    \item
      $\calF$ contains $F_{12}$ and the entire pullback $\sigma$-algebra
      $(F_1,F_2)\inv(\calB\otimes\calB)$, and so contains
      $F_{12} \cup (F_1,F_2)\inv(\calB\otimes\calB)$ too.
  \end{itemize}
\end{proof}

\begin{lemma}[($\asequal$) as derived notion]
  Let $\mathbf A$ be a measurable space that supports equality and
  let $E_1$ and $E_2$ be expressions of type $A$.
  Let $\bf (A\otimes A)_\bot$ be the measurable space with underlying set
  $\bf (A\otimes A) \cup \{\bot\}$ and $\sigma$-algebra generated by
  measurable subsets of $\bf A\otimes A$ and the singleton set $\{\bot\}$.
  The following equivalence holds:
  \[ 
    E_1 \asequal E_2 \dashv\vdash \own(\ite{E_1=E_2}{\bot}{(E_1,E_2)})\land (\Ex[\ind[E_1=E_2]]=1)
    \]
\end{lemma}
\begin{proof}
  Fix configuration $\gamma,D,(\calF,\mu)$. Let $X_i = E_i(\gamma)\circ D$ for $i\in\{1,2\}$
  and let $F$ be the event $X_1=X_2$.
  Let $Y$ be the random variable
  $\ite{X_1=X_2}{\bot}{(E_1,E_2)}$.
  The left-hand side asserts $F$ has probability $1$ and $F\cup (X_1,X_2)\inv(A)\in\calF$
  for all $A\in \bf A\otimes A$.
  The right-hand side asserts $F$ has probability $1$ and
  that $Y$ is $\calF$-measurable.
  To show the equivalence of these two assertions, it suffices to show that
  measurability of $Y$ is equivalent to having $F\cup (X_1,X_2)\inv(A)\in\calF$
  for all $A\in \bf A\otimes A$.
  First suppose $Y$ is measurable and fix arbitrary $A\in\bf A\otimes A$.
  Then measurability of $Y$ says \begin{align*}
  Y\inv(A) &= (F\cap Y\inv(A))\uplus (F^c\cap Y\inv(A))
   = (F\cap \bot\inv(A)) \uplus (F^c \cap (X_1,X_2)\inv(A))\\
   &= (F\cap \emptyset) \uplus (F^c \cap (X_1,X_2)\inv(A))
   = F^c \cap (X_1,X_2)\inv(A)
   \in\calF\end{align*}
   as required.
   Conversely suppose $\calF$ contains $F^c\cap (X_1,X_2)\inv(A)\in\calF$ for all $A\in\bf A\otimes A$
   and fix arbitrary $A\in \bf (A\otimes A)_\bot$. Then
  \begin{align*}
  Y\inv(A) &= (F\cap Y\inv(A))\uplus (F^c\cap Y\inv(A))
   = (F\cap \bot\inv(A)) \uplus (F^c \cap (X_1,X_2)\inv(A))\\
   &= (F\cap [\bot\in A]) \uplus (F^c \cap (X_1,X_2)\inv(A\setminus\{\bot\}))
   \end{align*}
   where $[\bot\in A]$ is $\bf A$ if $\bot\in A$ and $\emptyset$ otherwise.
   There are two cases.
   If $\bot\notin A$ then $Y\inv(A) = F^c\cap (X_1,X_2)\inv(A)\in \calF$ by assumption.
   If $\bot\in A$ then $Y\inv(A) = F \cup (F^c\cap (X_1,X_2)\inv(A\setminus\{\bot\}))$,
   a finite union of elements in $\calF$ by assumption.
\end{proof}

\subsection{Derived rules} \label{app:wp-law-section}

\begin{lemma} \label{app:structural-rules}
  The following structural rules hold:
  \begin{mathpar}
    \inferrule*[lab=H-Consequence]{P\vdash Q}{\weakpre(M,X.P)\vdash \weakpre(M,X.Q)}
    \and
    \inferrule*[lab=H-Frame,right=$(X\notin F)$]{}{F*\weakpre(M,X.Q)\vdash \weakpre(M,X.F * Q)}
    \and
    \inferrule*[lab=H-Disjunction]{}{\weakpre(M,X.P)\lor \weakpre(M,X.Q)\vdash \weakpre(M,X.P\lor Q)}
  \end{mathpar}
\end{lemma}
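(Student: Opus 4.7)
The plan is to unfold the semantic interpretation of $\weakpre$ from Figure~\ref{fig:lilac-prob-semantics} and exhibit, for each rule, appropriate witnesses $X$, $\calP'$, $\mu'$ in the conclusion's $\weakpre$ in terms of the witnesses guaranteed by the premise's $\weakpre$. In all three cases the ambient data (the outer frame $\calP_\textrm{frame}$, ambient measure $\mu$, and extended substitution $D_\textrm{ext}$) is passed through essentially unchanged.

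For \textsc{H-Consequence}, given $(\gamma,D,\calP)\vDash \weakpre(M,X.P)$ and arbitrary $(\calP_\textrm{frame},\mu,D_\textrm{ext})$, I would apply the premise's $\weakpre$ to get witnesses $X,\calP',\mu'$ satisfying the Markov-kernel equation and $\gamma,(D,X),\calP'\vDash P$; pointwise instantiation of the entailment $P\vdash Q$ at the configuration $(\gamma,(D,X),\calP')$ then gives $Q$, so the same witnesses establish $\weakpre(M,X.Q)$. For \textsc{H-Disjunction}, a one-step case analysis on which disjunct holds in the premise suffices: in either case the produced witnesses satisfy $P$ or $Q$, hence $P\lor Q$ by the semantics of disjunction.

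The main work is in \textsc{H-Frame}. Assume $(\gamma,D,\calP)\vDash F*\weakpre(M,X.Q)$, and by the semantics of $*$ split $\calP_F\pdot\calP_W\plte \calP$ with $\gamma,D,\calP_F\vDash F$ and $\gamma,D,\calP_W\vDash \weakpre(M,X.Q)$. Given outer data $(\calP_\textrm{frame},\mu,D_\textrm{ext})$ for the conclusion, the key move is to apply the inner $\weakpre$ with the \emph{enlarged} frame $\calP_\textrm{frame}\pdot\calP_F$ (well-defined by associativity of $\pdot$, using Theorem~\ref{thm:spaces-form-a-krm}). This yields witnesses $X,\calP'_W,\mu'$ with $(\calP_\textrm{frame}\pdot\calP_F)\pdot\calP'_W\plte(\Sigma_\Omega,\mu')$, the Markov-kernel equation, and $\gamma,(D,X),\calP'_W\vDash Q$. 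I then take $\calP':=\calP_F\pdot\calP'_W$; associativity rearranges the frame equation to $\calP_\textrm{frame}\pdot\calP'\plte(\Sigma_\Omega,\mu')$, and the Markov-kernel equation is unchanged since only the internal split of the frame was modified.

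It remains to show $\gamma,(D,X),\calP'\vDash F*Q$: split $\calP'$ as $\calP_F\pdot\calP'_W$, use the inner $\weakpre$ for the $Q$-factor, and for the $F$-factor use the side condition $X\notin F$ together with a routine weakening lemma on $\vDash$ (by induction on $F$) to lift $\gamma,D,\calP_F\vDash F$ to $\gamma,(D,X),\calP_F\vDash F$. The subtle point I expect to be the main obstacle is verifying that the new ambient measure $\mu'$ restricts correctly to $\calP_F$, so that $F$'s distributional content is genuinely preserved; this is exactly what the ``frame is preserved unchanged'' clause in the semantics of $\weakpre$ delivers, since $\calP_F$ lives inside the enlarged frame $\calP_\textrm{frame}\pdot\calP_F$ and therefore inherits its original measure from $\mu'$ by the definition of $\plte$.
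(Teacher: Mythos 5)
Your proposal is correct and follows essentially the same route as the paper's proof: the frame rule is handled by specializing the inner $\weakpre$ with the enlarged frame $\calP_\textrm{frame}\pdot\calP_F$, rearranging by associativity of $(\pdot)$, and lifting $\gamma,D,\calP_F\vDash F$ to the extended substitution $(D,X)$ via the side condition $X\notin F$ (the paper does this through its substitution lemma with the weakening substitution). Your closing observation that $F$'s distributional content is preserved because $\calP_F$ sits inside the enlarged frame and hence inherits its measure from $\mu'$ under $(\plte)$ is exactly the point the paper leaves implicit.
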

\begin{proof}
  We show the proof of the frame rule; the others are standard.
  Suppose (1)
  $\gamma,D,\calP_F\pdot \calP_M\vDash F*\weakpre(M,X.Q)$
  for some $\gamma,D,\calP_F\vDash F$ and $\gamma,D,\calP_M\vDash \weakpre(M,X.Q)$.
  To show $\weakpre(M,X.F*Q)$, further suppose
  $\calP_\textrm{frame}\pdot(\calP_F\pdot\calP_M)\plte(\Sigma_\Omega,\mu)$
  and $D_\textrm{ext}:\rv{\sembr{\Delta_\textrm{ext}}}$.
  By associativity, $\calP_\textrm{frame}\pdot(\calP_F\pdot\calP_M)
  =(\calP_\textrm{frame}\pdot\calP_F)\pdot\calP_M$
  so specializing (1) with $\calP_\textrm{frame}:=\calP_\textrm{frame}\pdot\calP_F$
  gives $(\calP_\textrm{frame}\pdot\calP_F)\pdot\calP'\plte(\Sigma_\Omega,\mu')$
  and $X$ such that
  \[
    \monadic{
      &\omega\gets \mu;\\
      &v\gets M(\gamma)(D(\omega));\\
      &\ret (D_\textrm{ext}(\omega), D(\omega), v)
    } = \monadic{
      &\omega\gets \mu';\\
      &\ret (D_\textrm{ext}(\omega), D(\omega), X(\omega))
    }
  \]
  and $\gamma,(D,X),\calP'\vDash Q$.
  Since $X\notin F$, $F[\mathrm{weak}_X] = F$
  so $\gamma,(D,X),\calP_F\vDash F$
  by lemma \ref{subst-lemma}.
  And since the composition $(\calP_\textrm{frame}\pdot\calP_F)\pdot\calP'$ is defined,
  the composition $\calP_F\pdot\calP'$ must be as well, so
  $\gamma,(D,X),\calP_F\pdot\calP'\vDash F*Q$ as desired.
\end{proof}

\begin{lemma} \label{app:general-wp-laws}
  The following $\weakpre$ laws hold:
  \begin{mathpar}
    \inferrule*{}
    {Q[e/X]\vdash \weakpre(\ret e,X.Q)}
    \and
    \inferrule*{}
    {\weakpre(M,X.\weakpre(N,Y.Q)) \vdash
    \weakpre\left((X \gets M; N),Y.Q\right)}
    \\
    \inferrule*
    {}{(\forallrv X\ofty A.~ X\sim \operatorname{Unif}[0,1]\wand Q) \vdash \weakpre(\unif~[0,1],X\ofty A.Q)}
    \\
    \inferrule*
      {}{(\forallrv X\ofty A.~ X\sim \ber p\wand Q) \vdash \weakpre(\ber p,X\ofty A.Q)}
    \\
    \inferrule*
    {}{I(1,e) * (\forall i\ofty\N.~\forallrv X\ofty A.~\triple{I(i,X)}M{X'}{I(i+1,X')}) 
      \vdash \weakpre(\forloop neM,X\ofty A.~I(n+1,X))}
    \\
    \inferrule*
    {}{\weakpre(M,X\ofty A.~\weakpre(N,Y\ofty A.~ Q(\ite EXY)))
      \vdash \weakpre(\ite EMN,X\ofty A.~Q)}
  \end{mathpar}
  where $\forloop nef$ is defined by
  \begin{align*}
    &{\forloop nef} = \mathrm{loop}(1,e,f)
      \textrm{where } \mathrm{loop}(k,e,f) = \begin{cases}
      \ret e, & k>n \\
      v\gets f(k,e);~
      \mathrm{loop}(k+1,v,f), & \textrm{otherwise}
    \end{cases}
  \end{align*}
\end{lemma}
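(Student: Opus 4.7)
The plan is to unfold the semantic definition of $\weakpre$ from Figure~\ref{fig:lilac-prob-semantics} directly for each rule. In each case the proof proceeds by assuming the premise holds in an arbitrary configuration $(\gamma,D,\calP)$, fixing arbitrary frame $\calP_\textrm{frame}$, measure $\mu$ with $\calP_\textrm{frame}\pdot\calP\plte(\Sigma_\Omega,\mu)$, and extension $D_\textrm{ext}$, and then producing witnesses $\calP'$, $\mu'$, and the output random variable that satisfy the required distribution equation and postcondition. The substitution lemma (Lemma~\ref{subst-lemma}) is used implicitly to handle binders. Most rules reduce to a Giry-monad identity plus a routine bookkeeping argument once the correct witnesses are chosen.

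For $\ret$, take $\calP' := \calP$, $\mu' := \mu$, and the fresh random variable $X := \sembr{e}(\gamma)\circ D$; the distribution equation reduces to the left unit law of the Giry monad, and the postcondition $Q$ follows from the hypothesis $Q[e/X]$ via Lemma~\ref{subst-lemma}. For the bind rule, unfold the outer $\weakpre$ to obtain an intermediate $X$ and fragment $\calP_X$ satisfying $\weakpre(N,Y.Q)$, then unfold the inner $\weakpre$ with the extension $D_\textrm{ext}$ \emph{augmented by $X$} so that the resulting equation retains access to $X$. The composite distribution equation then follows from Giry-monad associativity. For the sampling rules \textsc{H-Uniform} and \textsc{H-Flip}, the key step is to use finite footprint (Lemma~\ref{app:lem:finite-footprint-remains-krm}): since $\calP$, $\calP_\textrm{frame}$, $D$, and $D_\textrm{ext}$ collectively touch only finitely many coordinates $[0,1]^n\subseteq[0,1]^\N$, allocate the fresh $X$ along a previously-unused coordinate, giving a probability space that is automatically independent of everything else. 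This yields a configuration in which $X\sim\unif[0,1]$ (resp.\ $X\sim\ber p$) holds in isolation, and the magic-wand hypothesis then discharges $Q$.

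For \textsc{H-For}, proceed by induction on the number of iterations $n$, unfolding $\forloop nef$ using the bind law established above; the base case $n=0$ reduces to the $\ret$ law. For \textsc{H-If}, the subtlety is that probabilistically both branches may be executed because $\sembr{E}\circ D$ is a random variable, not a deterministic branch selector. We apply the outer and inner $\weakpre$ hypotheses to obtain witnesses $X$ (from the $\applthen$-branch) and $Y$ (from the $\applelse$-branch), then combine them into the random variable $\ite{E}{X}{Y}$, which agrees almost surely with the result of executing $\applite{E}{M}{N}$.

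The main obstacle will be the bind rule, specifically the quantifier management around frames and extensions. One must thread the outer-allocated fragment $\calP_X$ into the frame for the inner $\weakpre$ while simultaneously extending $D_\textrm{ext}$ with $X$, then verify that the resulting distribution equations can be pasted together via Giry associativity to match the equation required for $X\gets M; N$. Getting this composition to typecheck against the exact shape of the $\weakpre$ semantics, including the preservation of $\calP_\textrm{frame}$ through both allocations, is the delicate point on which the soundness of the entire forward symbolic-execution discipline rests.
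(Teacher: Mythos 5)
Your proposal is correct and follows essentially the same route as the paper's proof: identity witnesses plus the monad unit law for ret, Giry associativity with careful frame/extension threading for bind, allocation of a fresh Hilbert-cube coordinate justified by finite footprint for the sampling rules, and combination of the two branch outcomes into the random variable $\ite EXY$ for if-then-else. The only point you gloss over is that the for-loop induction must be generalized to loops with an arbitrary starting index --- the paper proves $I(n+1-k,V)\wand \weakpre(\mathrm{loop}(n+1-k,V,M),X'.\,I(n+1,X'))$ by induction on $k$ --- since unfolding one iteration of $\forloop neM$ leaves $\mathrm{loop}(2,v,M)$, which is not an instance of $\forloop{n-1}eM$.
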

\begin{proof}~
\begin{itemize}
\item Ret:
  suppose $\gamma,D,\calP\vDash Q[e(\gamma)/X]$.
  By lemma \ref{subst-lemma} this is equivalent to $\gamma,(D,e(\gamma)),\calP\vDash Q$.
  To show $\weakpre(\ret e,X.Q)$ suppose
  $\calP_\textrm{frame}\pdot\calP\plte(\Sigma_\Omega,\mu)$
  and $D_\textrm{ext}:\rv\sembr{\Delta_\textrm{ext}}$.
  Choose $\calP':=\calP$ and $\mu':=\mu$ and $X(\omega):=e(\gamma)$.
  Then
  \[
    \monadic{
      &\omega\gets \mu;\\
      &v\gets (\ret e)(\gamma)(D(\omega));\\
      &\ret (D_\textrm{ext}(\omega), D(\omega), v)
    }
    = \monadic{
      &\omega\gets \mu;\\
      &\ret (D_\textrm{ext}(\omega), D(\omega), e(\gamma))
    }
    = \monadic{
      &\omega\gets \mu';\\
      &\ret (D_\textrm{ext}(\omega), D(\omega), X(\omega))
    }
  \]
  and $\gamma,(D,e(\Gamma)),\calP\vDash Q$ as desired.
\item Let:
  suppose $\gamma,D,\calP\vDash \weakpre(M,X.\weakpre(N,Y.Q))$.
  To show $\weakpre((X\gets M; N),Y.Q)$ suppose
  $\calP_\textrm{frame}\pdot\calP\plte(\Sigma_\Omega,\mu)$
  and $D_\textrm{ext}:\rv\sembr{\Delta_\textrm{ext}}$.
  By assumption, there exist $\calP_\textrm{frame}\pdot\calP_X\plte(\Sigma_\Omega,\mu_X)$
  and $X$ such that
  \begin{align} \label{let-binding-step1}
     \monadic{
       &\omega\gets \mu;\\
       &x\gets M(\gamma)(D(\omega));\\
       &\ret (D_\textrm{ext}(\omega), D(\omega), x)
     } = \monadic{
       &\omega\gets \mu_X;\\
       &\ret (D_\textrm{ext}(\omega), D(\omega), X(\omega))
     }
  \end{align}
  and $\gamma,(D,X),\calP_X\vDash \weakpre(N,Y.Q)$.
  Applying this assumption gives $\calP_\textrm{frame}\pdot\calP_Y\plte(\Sigma_\Omega,\mu_Y)$
  and $Y$ with
  \begin{align} \label{let-binding-step2}
     \monadic{
       &\omega\gets \mu_X;\\
       &y\gets N(\gamma)((D,X)(\omega));\\
       &\ret (D_\textrm{ext}(\omega), (D,X)(\omega), y)
     } = \monadic{
       &\omega\gets \mu_Y;\\
       &\ret (D_\textrm{ext}(\omega), (D,X)(\omega), Y(\omega))
     }
  \end{align}
  and $\gamma,(D,X,Y),\calP_Y\vDash Q$.
  Since $X\notin Q$, this implies $\gamma,(D,Y),\calP_Y\vDash Q$ by \ref{subst-lemma}, so
  it only remains to show 
  \[
     \monadic{
       &\omega\gets \mu;\\
       &y\gets (X\gets M; N)(\gamma)(D(\omega));\\
       &\ret (D_\textrm{ext}(\omega), D(\omega), y)
     } = \monadic{
       &\omega\gets \mu_Y;\\
       &\ret (D_\textrm{ext}(\omega), D(\omega), Y(\omega))
     }
  \]
  Calculate:
  \begin{align*}
     &\monadic{
       &\omega\gets \mu;\\
       &y\gets (X\gets M; N)(\gamma)(D(\omega));\\
       &\ret (D_\textrm{ext}(\omega), D(\omega), v)
     }
     = \monadic{
       &\omega\gets \mu;\\
       &x\gets M(\gamma)(D(\omega)); \\
       &y\gets N(\gamma)(D(\omega), x);\\
       &\ret (D_\textrm{ext}(\omega), D(\omega), y)
     }\\
     &= \monadic{
       &(\delta_\textrm{ext},\delta,x)\gets\\
       &\hspace{1em}\monadic{
           &\omega\gets \mu;\\
           &x\gets M(\gamma)(D(\omega)); \\
           &\ret (D_\textrm{ext}(\omega),D(\omega),x) }\\
       &y\gets N(\gamma)(\delta, x);\\
       &\ret (\delta_\textrm{ext}, \delta, y)
     }
    \stackrel{\ref{let-binding-step1}}= \monadic{
       &(\delta_\textrm{ext},\delta,x)\gets\\
       &\hspace{1em}\monadic{
           &\omega\gets \mu_X;\\
           &\ret (D_\textrm{ext}(\omega),D(\omega),X(\omega)) }\\
       &y\gets N(\gamma)(\delta, x);\\
       &\ret (\delta_\textrm{ext}, \delta, y)
     }\\
    &= \monadic{
        &\omega\gets \mu_X;\\
        &y\gets N(\gamma)(\delta, X(\omega));\\
        &\ret (D_\textrm{ext}(\omega),D(\omega),y) }
    \stackrel{\ref{let-binding-step2}}= \monadic{
        &\omega\gets \mu_Y;\\
        &\ret (D_\textrm{ext}(\omega),D(\omega),Y(\omega)) 
      }
  \end{align*}
\item Uniform: 
  suppose (1) $\gamma,D,\calP\vDash \forallrv X\ofty A.~ X\sim \operatorname{Unif}[0,1] \wand Q$.
  To show $\weakpre(\unif~[0,1],X\ofty A.Q)$, suppose
  $\calP_\textrm{frame}\pdot\calP\plte(\Sigma_\Omega,\mu)$
  and $D_\textrm{ext}:\rv\sembr{\Delta_\textrm{ext}}$.
  Let $n$ witness $(D_\textrm{ext},\calP_\textrm{frame}\pdot\calP)$'s finite footprint.
  Write the Hilbert cube as $[0,1]^\N\cong [0,1]^n \otimes [0,1]^\N$.
  Define $\mu'$ via this isomorphism as the product measure
  $\mu|_{[0,1]^n}\otimes\lambda$, where $\lambda$ assigns
  to each finite-dimensional box $\prod_{i=1}^n [a_i,b_i]\times[0,1]^\N$
  the measure $\prod_{i=1}^n|b_i-a_i|$ and extends to a measure on the whole Hilbert cube
  by the Carath{\'e}odory extension theorem.
  Let $\calP_n$ be the restriction of $\mu'$ to measurable sets of the form
  $[0,1]^n \times F\times [0,1]^\N$.
  Let $X$ be the projection $\pi_{n+1} = (\dots,\omega_{n+1},\dots)\mapsto\omega_{n+1}$.
  By construction, the composite $\calP_\textrm{frame}\pdot\calP\pdot\calP_n$ is defined
  and $\calP_\textrm{frame}\pdot\calP\pdot\calP_n\plte(\Sigma_\Omega,\mu')$
  and $X$ is $\calP_n$-measurable and uniformly distributed in $[0,1]$.
  Therefore $\gamma,(D,X),\calP\pdot\calP_n\vDash Q$ by (1), and it only remains to show
  \[
    \monadic{
      &\omega\gets \mu;\\
      &v\gets \unif~[0,1];\\
      &\ret (D_\textrm{ext}(\omega), D(\omega), v)
    } = \monadic{
      &\omega\gets \mu';\\
      &\ret (D_\textrm{ext}(\omega), D(\omega), X(\omega))
    }
  \]
  Calculate:
  \begin{align*}
    &\monadic{
      &\omega\gets \mu;\\
      &v\gets \unif~[0,1];\\
      &\ret (D_\textrm{ext}(\omega), D(\omega), v)
    }
    = \monadic{
      &\omega_{1\cdots n}\gets \mu|_{[0,1]^n};\\
      &v\gets \unif~[0,1];\\
      &\ret (D_\textrm{ext}(\omega_{1\cdots n}), D(\omega_{1\cdots n}), v)
    }\\
    &= \monadic{
      &(\omega_{1\cdots n},v)\gets \mu|_{[0,1]^n} \otimes \unif~[0,1];\\
      &\ret (D_\textrm{ext}(\omega_{1\cdots n}), D(\omega_{1\cdots n}), v)
    }
    = \monadic{
      &\omega_{1\cdots n+1}\gets \mu'|_{[0,1]^{n+1}};\\
      &\ret (D_\textrm{ext}(\omega_{1\cdots n}), D(\omega_{1\cdots n}), \omega_{n+1})
    }\\
    &= \monadic{
      &\omega\gets \mu';\\
      &\ret (D_\textrm{ext}(\omega), D(\omega), X(\omega))
    }
  \end{align*}
\item Flip: analogous to Uniform.
\item For:
  suppose (1) $\forall i\ofty\N.~\forallrv X\ofty\rv A.~\triple{I(i,X)}M{X'}{I(i+1,X')}$.
  We need to show \[I(1,e)\wand \weakpre(\forloop neM,X\ofty A.~I(n+1,X)).\]
  We generalize, and show \[I(n+1-k,V) \wand \weakpre(\mathrm{loop}(n+1-k,V,M),X'.~I(n+1,X'))\]
  for all $V$ and all $0\le k\le n$ by induction on $k$, from which this follows at $k=n$.
  \begin{itemize}
    \item Case $k=0$:
      \begin{align*}
        \top
        &\vdash
        I(n+1,V) \wand I(n+1,V)\\
        &\vdash
        I(n+1,V) \wand \weakpre(\ret V,X'.~I(n+1,X'))\\
        &\vdash
        I(n+1,V) \wand \weakpre(\mathrm{loop}(n+1,V,M),X'.~I(n+1,X'))\\
        &\vdash
        I(n+1-k,V) \wand \weakpre(\mathrm{loop}(n+1-k,V,M),X'.~I(n+1,X'))
      \end{align*}
    \item Case $k=j+1\le n$: 
      backwards reasoning from the goal gives
      \begin{align*}
        &I(n+1-(j+1),V) \wand \weakpre(\mathrm{loop}(n+1-(j+1),V,M),X'.~I(n+1,X'))\\
        &\!\!\dashv I(n-j,V) \wand \weakpre(\mathrm{loop}(n-j,V,M),X'.~I(n+1,X'))\\
        &\!\!\dashv I(n-j,V) \wand \weakpre((V'\gets M[n-j/i,V/X];~ \mathrm{loop}(n+1-j,V',M)),X'.~I(n+1,X'))\\
        &\!\!\dashv I(n-j,V) \wand \weakpre(M[n-j/i,V/X],V'.~\weakpre(\mathrm{loop}(n+1-j,V',M),X'.~I(n+1,X')))
      \end{align*}
      Now $I(n-j,V)\wand \weakpre(M[n-j/i,V/X],V'.~I(n-j+1,V'))$ by (1)
      so it suffices to show
      \begin{align*}
        &\weakpre(M[n-j/i,V/X],V'.~ I(n+1-j,V')) \\
        \wand &\weakpre(M[n-j/i,V/X],V'.~\weakpre(\mathrm{loop}(n+1-j,V',M),X'.~I(n+1,X'))).
      \end{align*}
      The outer $\weakpre$s are the same, so by the consequence rule it suffices to show
      \begin{align*}
        I(n+1-j,V')
        \wand \weakpre(\mathrm{loop}(n+1-j,V',M),X'.~I(n+1,X'))
      \end{align*}
      for all $V'$, which is exactly the induction hypothesis at $j$.
  \end{itemize}
\item If: applying properties of Markov kernels and rules for Let and Ret,
  \begin{align*}
    &\weakpre(\ite EMN, Z.~Q(Z))\\
    &\dashv \weakpre\left(\monadic{
      &X\gets M; \\
      &Y\gets N; \\
      &\ret (\ite EXY)
    }, Z.~Q(Z)\right) \\
    &\dashv \weakpre\left(M, X.~\weakpre\left(\monadic{
      &Y\gets N; \\
      &\ret (\ite EXY)
    }, Z.~Q(Z)\right)\right) \\
    &\dashv \weakpre(M, X.~\weakpre(N,Y.~
      \weakpre(\ret (\ite EXY), Z.~Q(Z)))) \\
    &\dashv \weakpre(M, X.~\weakpre(N,Y.~ Q(\ite EXY)))
  \end{align*}
  as desired.
\end{itemize}
\end{proof}

\begin{corollary} \label{app:wp-laws}
  The following $\weakpre$ laws hold:
  \begin{mathpar}
    \inferrule*[lab=W-Ret]{}
    {Q[\sembr M/X]\vdash \weakpre(\sembr{\applret M},X.Q)}
    \and
    \inferrule*[lab=W-Let]{}
    {\weakpre(\sembr M,X.\weakpre(\sembr N,Y.Q)) \vdash
    \weakpre\left(\sembr{\letin XMN},Y.Q\right)}
    \and
    \inferrule*[lab=W-Uniform]
      {}{(\forallrv X\ofty A.~ X\sim \operatorname{Unif}[0,1]\wand Q) \vdash \weakpre(\sembr\applunif,X\ofty A.Q)}
    \and
    \inferrule*[lab=W-Flip]
      {}{(\forallrv X\ofty A.~ X\sim \ber p\wand Q) \vdash \weakpre(\sembr{\flip~p},X\ofty A.Q)}
    \and
    \inferrule*[lab=W-For]
    {}{I(1,e) * (\forall i\ofty\N.~\forallrv X\ofty A.~\triple{I(i,X)}M{X'}{I(i+1,X')}) 
      \vdash \weakpre(\sembr{\applfor neiXM},X\ofty A.~I(n+1,X))}
    \\
    \inferrule*
    {}{\weakpre(\sembr M,X\ofty A.~\weakpre(\sembr N,Y\ofty A.~ Q(\ite EXY)))
      \vdash \weakpre(\sembr{\applif EMN},X\ofty A.~Q)}
  \end{mathpar}
\end{corollary}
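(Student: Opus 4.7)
The plan is to derive each rule in this corollary directly from the corresponding abstract rule in Lemma~\ref{app:general-wp-laws} by unfolding the denotational semantics of APPL given in Appendix~\ref{app:appl-semantics}. The semantics is arranged precisely so that every APPL construct appearing on the right-hand side of a $\weakpre$ denotes a Markov kernel built from the same primitives (return, bind, $\unif[0,1]$, $\ber p$, conditional, and the $\mathrm{loop}$ combinator) out of which the abstract $\weakpre$-laws in Lemma~\ref{app:general-wp-laws} are stated. Once the denotations are expanded, each corollary rule becomes literally an instance of its abstract counterpart, and since $\weakpre$ is defined purely on Markov kernels (not on syntax) the entailments transfer unchanged.

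Concretely, I would proceed case by case. For W-Ret, unfold $\sembr{\applret M}\rho = \ret\sembr M\rho$ to expose the abstract return kernel, then apply the Ret rule of Lemma~\ref{app:general-wp-laws}. For W-Let, unfold $\sembr{\letin XMN}\rho = v\gets\sembr M\rho;\,\sembr N\rho[X\mapsto v]$ to exhibit the bind $(X\gets \sembr M;\sembr N)$, and apply the Let rule. For W-Uniform and W-Flip, the equalities $\sembr{\applunif}\rho = \unif[0,1]$ and $\sembr{\flip~p}\rho = \ber p$ are immediate, so the Uniform and Flip rules of Lemma~\ref{app:general-wp-laws} apply on the nose. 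For the if-then-else case, the semantic clause for $\sembr{\applite EMN}\rho$ performs the same pointwise case split on $\sembr E\rho$ that the abstract If rule performs on $E$, so again the two rules coincide after unfolding.

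The only case that requires a touch of extra bookkeeping is W-For. Here I would match the APPL denotation
\[
\sembr{\applfor n e i X M}\rho \;=\; \mathrm{loop}\bigl(1,\,\sembr e\rho,\,\lambda k\,v.\,\sembr M\rho[i\mapsto k,X\mapsto v]\bigr)
\]
against the abstract $\forloop{n}{\sembr e\rho}{f_\rho}$ from Lemma~\ref{app:general-wp-laws}, taking $f_\rho(k)(v) := \sembr M\rho[i\mapsto k,X\mapsto v]$. To invoke the abstract For rule, I must show that the premise of W-For, namely $\forall i\ofty\N.\,\forallrv X\ofty A.\,\triple{I(i,X)}{M}{X'}{I(i{+}1,X')}$, implies the analogous semantic premise about the kernel $f_\rho$; this is precisely what the substitution lemma (Lemma~\ref{subst-lemma}) delivers, since substituting the deterministic $i$ and random $X$ into $\sembr M$ yields $f_\rho(i)(X)$. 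The main (minor) obstacle is therefore this substitution bookkeeping in the for-case; once it is carried out, the abstract For rule finishes the proof. All other rules are routine unfoldings.
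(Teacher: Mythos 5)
Your proposal is exactly the paper's argument: the paper proves this corollary with the single line ``Unfold $\sembr{-}$ and apply Lemma~\ref{app:general-wp-laws},'' and your case-by-case elaboration (including the substitution bookkeeping for \textsc{W-For}) just spells out what that unfolding entails. No gaps; same approach.
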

\begin{proof} Unfold $\sembr-$ and apply lemma \ref{app:general-wp-laws}. \end{proof}

\begin{lemma} \label{app:proof-rules}
  The following proof rules hold:
  \begin{mathpar}
    \inferrule*[right=H-Consequence]{P\vdash P' \\ Q'\vdash Q \\ \triple{P'}MX{Q'}}{\triple PMXQ}
    \and
    \inferrule*[right=H-Frame $(X\notin F)$]{\triple PMXQ}{\triple{F*P}MX{F*Q}}
    \and
    \inferrule*[right=H-Disjunction]{\triple PMXQ \\ \triple{P'}MX{Q'}}{\triple{P\lor P'}MX{Q\lor Q'}}
    \and
    \inferrule*[right=H-Ret]{~}{\triple{Q\big[\sembr M/X\big]}{\applret M}XQ}
    \and
    \inferrule*[right=H-Let]
      {\triple PMXQ \\ \forallrv X.~\triple QNYR}
      {\triple P{\letin XMN}YR}
    \and
    \inferrule*[right=H-Uniform]{~}{\triple\top{\applunif}X{X\sim\unif~[0,1]})}
    \and
    \inferrule*[right=H-Flip]{~}{\triple\top{\flip~p}X{X\sim\ber p}}
    \and
    \inferrule*[right=H-For]
      {\forall i\ofty\N.~\forallrv X\ofty A.~\triple{I(i,X)}M{X'}{I(i+1,X')}}
      {\triple{I(1,e)}{\applfor neiXM}{X\ofty A}{I(n+1,X)}}
    \and
    \inferrule*[right=H-If]
      {\triple P M X {Q(X)}
       \\\\
       \forallrv X.~\triple {Q(X)} N Y {R(\ite EXY)}
       }
      {\triple{P}{\applite EMN}{Z}{R(Z)}}
  \end{mathpar}
\end{lemma}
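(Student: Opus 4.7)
The plan is to reduce each Hoare-rule obligation to the corresponding $\weakpre$-law already established in Corollary~\ref{app:wp-laws} or Lemma~\ref{app:structural-rules}. Unfolding the abbreviation $\triple PMXQ := \nec\bigl(P \wand \weakpre(\sembr M,X.\,Q)\bigr)$ turns every proof obligation into a $\wand$-implication that is stable under $\nec$. From here, the usual separation-logic manipulations apply: $\nec$ distributes over implication and wand; $\wand$ is contravariant in its antecedent and covariant in its consequent; and the $\weakpre$-laws we want to invoke already hold at the level of validity (and so, being tautologies, are themselves $\nec$-stable). This lets us keep the proof at the level of plain entailments and then wrap the final result with $\nec$ at the end.

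Concretely, each case has the same shape. For H-Consequence, given $\nec(P' \wand \weakpre(\sembr M,X.\,Q'))$ together with $P\vdash P'$ and $Q'\vdash Q$, post-compose using the $\weakpre$-consequence rule of Lemma~\ref{app:structural-rules} to replace $Q'$ by $Q$, then pre-compose with the antecedent weakening. For H-Frame, apply the $\weakpre$-frame rule from Lemma~\ref{app:structural-rules}, whose side-condition $X\notin F$ matches the one imposed here. For H-Disjunction, use that $(P\lor P')\wand R \dashv\vdash (P\wand R)\land (P'\wand R)$ and then appeal to $\weakpre$-disjunction. H-Ret, H-Uniform, and H-Flip are direct: the matching $\weakpre$-axiom from Corollary~\ref{app:wp-laws} is already in wand-form, so we simply $\nec$-close. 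H-For and H-If also lift verbatim: for the former, the induction over loop iterations has already been carried out inside the proof of W-For in Lemma~\ref{app:general-wp-laws}; for the latter, the wp-level rule uses the fact that an if-then-else can be recast as sequencing both branches followed by a random-variable merge via $\ite E X Y$.

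The most delicate case is H-Let, where the premise $\forallrv X.\,\triple QNYR$ must be instantiated inside the inner $\weakpre(\sembr M, X.\,Q)$. The plan is to chain through W-Let: starting from $\weakpre(\sembr M, X.\,Q)$, we need to strengthen the postcondition $Q$ to $\weakpre(\sembr N, Y.\,R)$ so that $W$-Let then delivers $\weakpre(\sembr{\letin XMN}, Y.\,R)$. This strengthening is where the universally-quantified premise is used: we instantiate $\forallrv X$ at the very random variable bound by the inner $\weakpre$. The place to verify that this instantiation is legal is the substitution lemma (Lemma~\ref{subst-lemma}), which ensures that the semantic $X$ produced by $M$ can be identified with the syntactic $X$ bound by the premise.

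The main obstacle is thus not any probability-theoretic subtlety but rather this piece of logical bookkeeping around binders, $\wand$-introduction under $\nec$, and the interaction between the random-substitution $S$ of Section~\ref{app:substitution-lemma} and the $\weakpre$ semantics. Once Lemma~\ref{subst-lemma} is in hand, every case collapses to a short derivation in the BI-fragment plus a single appeal to the corresponding $\weakpre$-law, and no further appeal to the underlying probability-space model is needed beyond what Corollary~\ref{app:wp-laws} has already absorbed.
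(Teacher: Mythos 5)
Your proposal is correct and follows essentially the same route as the paper: the paper's proof likewise just unfolds the definition of the Hoare triple as $\nec(P \wand \weakpre(\sembr M, X.\,Q))$ and discharges each rule by appeal to the corresponding $\weakpre$-law in Lemma~\ref{app:structural-rules} or Corollary~\ref{app:wp-laws}. Your extra attention to the H-Let binder bookkeeping via Lemma~\ref{subst-lemma} is consistent with how those underlying lemmas are actually proved.
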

\begin{proof}
  Rules \textsc{H-Consequence}, \textsc{H-Frame}, and \textsc{H-Disjunction}
  follow from Lemma~\ref{app:structural-rules};
  the remaining rules follow from Corollary~\ref{app:wp-laws}.
  All proofs go by unfolding the definition of the Hoare triple
  and applying the relevant $\weakpre$ law.
\end{proof}

\subsection{Disintegration}
\label{app:disintegration-proofs}

\begin{lemma} \label{app:restricted-krm}
  Let $(M,\pdot,\plte,1)$ be a KRM with $1\plte x$ for all $x$.
  Let $A$ be a downward-closed subset of $M$ (i.e., $x\plte y\in A$ implies $x\in A$).
  Let $(\pdot')$ be the restriction of $(\pdot)$ to $A$; that is,
  \begin{align*}
    x \pdotp y := \begin{cases}
      x\pdot y, & x\pdot y\in A\\
      \mathrm{undefined}, & \textrm{otherwise}
    \end{cases}
  \end{align*}
  Then $(A,\pdot',\plte,1)$ is a KRM.
\end{lemma}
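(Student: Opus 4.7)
The plan is to verify each of the four clauses of Definition~\ref{def:krm} for the tuple $(A,\plte,\pdot',\bfone)$, inheriting axioms from $(M,\pdot,\plte,\bfone)$ wherever possible. First I would observe that $\bfone\in A$ (assuming $A$ is nonempty, which is clearly the intended hypothesis): pick any $y\in A$; by the assumption $\bfone\plte y$, downward-closure places $\bfone$ in $A$. Clause (1) is immediate since $\plte$ restricted to $A$ remains a partial order, and clause (2) holds by definition of $\pdot'$.

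Most of clause (3) is routine. The unit law $\bfone\pdot' x = \bfone\pdot x = x\in A$ holds because $\bfone\pdot x=x\in A$, and commutativity of $\pdot'$ transfers directly because both its value and the side condition ``$x\pdot y\in A$'' are symmetric in $x$ and $y$. The main obstacle is associativity under partiality: if $(x\pdot' y)\pdot' z$ is defined I must show $x\pdot'(y\pdot' z)$ is also defined with the same value, and symmetrically. Unfolding, definedness of $(x\pdot' y)\pdot' z$ yields $x\pdot y\in A$ and $(x\pdot y)\pdot z\in A$; associativity in $M$ then gives $x\pdot(y\pdot z)=(x\pdot y)\pdot z\in A$, so once $y\pdot' z$ is known defined the outer product lands in $A$ automatically. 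The subtle step is this last definedness: I would combine $\bfone\plte x$ with clause~(4) of the ambient KRM to deduce that $y\pdot z=\bfone\pdot(y\pdot z)$ is defined with $y\pdot z\plte x\pdot(y\pdot z)\in A$, and then invoke downward-closure to land $y\pdot z$ in $A$, so $y\pdot' z$ is defined. The reverse direction is symmetric, using $\bfone\plte z$ to justify $x\pdot y\plte (x\pdot y)\pdot z\in A$.

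Finally, for clause (4), suppose $x\plte x'$, $y\plte y'$, and $x'\pdot' y'$ is defined, so $x'\pdot y'\in A$. The respects property in $M$ then gives that $x\pdot y$ is defined and $x\pdot y\plte x'\pdot y'\in A$, and downward-closure places $x\pdot y$ in $A$; hence $x\pdot' y$ is defined and bounded by $x'\pdot' y'$ in $\plte$. The single interesting ingredient throughout the proof is the interplay between $\bfone\plte x$, the respects property in $M$, and downward-closure of $A$: together these are what let the partial operation $\pdot'$ inherit the structural axioms from $\pdot$; everything else is a direct transfer.
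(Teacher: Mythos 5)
Your proof is correct and follows essentially the same route as the paper's: both identify the key point as deriving $y\pdot z\plte x\pdot(y\pdot z)\in A$ from $\bfone\plte x$ via clause (4) of the ambient KRM and then invoking downward-closure to get definedness of $y\pdotp z$. The remaining clauses are handled by the same direct inheritance the paper describes as straightforward.
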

\begin{proof}
  Unit and commutativity are straightforward.
  For associativity, note that if $1\plte x$ for all $x$
  then by monotonicity of $(\pdot)$ we have $x\plte x\pdot y$ for all $x,y$.
  Now suppose $x\pdotp y$ defined and $(x\pdotp y)\pdotp z$ defined.
  Then $(x\pdot y)\pdot z \in A$ and by downward closure so is $y\pdot z$
  and by associativity so is $x\pdot (y\pdot z)$, so both $y\pdotp z$ and $x\pdotp (y\pdotp z)$
  are defined and associativity is inherited from associativity of $(\pdot)$.
\end{proof}

\begin{thm} \label{app:final-krm}
  Let $\Mdis$ be the set of countably-generated probability spaces $\calP$ that have finite footprint
  and can be extended to a Borel measure on the entire Hilbert cube.
  The restriction of the KRM given by Theorem~\ref{thm:spaces-form-a-krm} to $\Mdis$ is still a KRM.
\end{thm}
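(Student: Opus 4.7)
The plan is to verify each clause of Definition~\ref{def:krm} for the restricted structure $(\Mdis,\plte,\pdot,\mathbf{1})$, mirroring the strategy used for Theorem~\ref{lem:extends-to-borel-measure-disintegrable} and Lemma~\ref{app:lem:finite-footprint-remains-krm}. The minor twist relative to those results is that Lemma~\ref{app:restricted-krm} cannot be applied directly: while finite footprint and Borel-extendability are both downward-closed under $(\plte)$, countable generation of a $\sigma$-algebra does not in general transfer to arbitrary sub-$\sigma$-algebras, so $\Mdis$ is not obviously downward-closed. Instead I would verify the axioms by hand, which needs only closure of $(\pdot)$ on $\Mdis$ together with what we already know about the parent KRM.

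First I would confirm $\mathbf{1}\in\Mdis$: the trivial space $(\{\emptyset,\Omega\},\mu_\pmbone)$ has a two-element $\sigma$-algebra, hence has finite footprint $0$ and is trivially countably generated, and its measure extends to, say, the product Lebesgue measure on the Hilbert cube.

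The main step is showing that $\Mdis$ is closed under $(\pdot)$. Given $(\calE,\mu),(\calF,\nu)\in\Mdis$ whose combination $(\angled{\calE,\calF},\rho)$ is defined in the parent KRM, I would verify the three membership conditions in turn: finite footprint is preserved by taking the $\max$ of the witnessing dimensions, exactly as in Lemma~\ref{app:lem:finite-footprint-remains-krm}; countable generation is preserved because if $\calE=\sigma(\{E_k\}_k)$ and $\calF=\sigma(\{F_k\}_k)$, then $\angled{\calE,\calF}=\sigma(\{E_k\}_k\cup\{F_k\}_k)$ is generated by a countable family; and Borel-extendability of $\rho$ is exactly the content of Theorem~\ref{lem:extends-to-borel-measure-disintegrable}.

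The remaining obligations --- poset structure, commutativity, the unit law, and monotonicity of $(\pdot)$ with respect to $(\plte)$ --- all transfer directly from Theorem~\ref{thm:spaces-form-a-krm}. Associativity is the one that warrants a moment of care in the absence of downward closure: if $(\calP\pdot\calQ)\pdot\calR$ is defined with $\calP,\calQ,\calR\in\Mdis$, then parent-associativity yields $\calQ\pdot\calR$ and $\calP\pdot(\calQ\pdot\calR)$ in the parent KRM, and applying closure of $\Mdis$ to the pairs $(\calQ,\calR)$ and then $(\calP,\calQ\pdot\calR)$ keeps every intermediate space inside $\Mdis$. I expect the main obstacle to be the Borel-extendability step, since extending a measure on a sub-$\sigma$-algebra to a Borel measure is generally delicate; but since Theorem~\ref{lem:extends-to-borel-measure-disintegrable} packages exactly this content for our setting, it can be cited as a black box and the argument closes cleanly.
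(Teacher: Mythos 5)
Your overall architecture is sound up to one load-bearing claim, and that claim is where the proof breaks. You correctly observe that $\Mdis$ is not obviously downward-closed (countable generation does not pass to arbitrary sub-$\sigma$-algebras), so Lemma~\ref{app:restricted-krm} cannot be applied to $\Mdis$ wholesale, and you instead propose to show that $\Mdis$ is closed under $(\pdot)$. For finite footprint and countable generation this works, and your arguments there are exactly right. But for Borel-extendability you assert that closure ``is exactly the content of Theorem~\ref{lem:extends-to-borel-measure-disintegrable},'' and it is not: that theorem says the \emph{restriction} of the KRM to $\calM_\textrm{Borel}$ is a KRM, where, per Lemma~\ref{app:restricted-krm}, the restricted operation $\pdotp$ is declared \emph{undefined} whenever the parent combination falls outside the class. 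It does not assert that the independent combination of two Borel-extendable spaces is Borel-extendable, and nothing in the paper proves such a closure property --- indeed, measures on sub-$\sigma$-algebras of the Hilbert cube can fail to extend to Borel measures in general (this is the very obstruction that motivates introducing $\calM_\textrm{Borel}$), and it is not shown that the combined measure $\rho$ on $\angled{\calE,\calF}$ inherits extendability from $\mu$ and $\nu$. Without this closure your restricted operation is not even well-defined as a map into $\Mdis$, and your associativity argument, which threads closure through the pairs $(\calQ,\calR)$ and $(\calP,\calQ\pdot\calR)$, collapses with it.

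The paper's proof avoids this by layering the restrictions in a particular order: first restrict to finite-footprint spaces and then to countably-generated ones, both of which genuinely are closed under $(\pdot)$, obtaining an intermediate KRM; then restrict \emph{that} KRM to the Borel-extendable spaces via Lemma~\ref{app:restricted-krm}, which applies because Borel-extendability \emph{is} downward-closed (if $\mu=\nu|_\calF$ and $\nu$ extends to a Borel measure, that same Borel measure extends $\mu$). The price is that $(\pdot)$ becomes strictly more partial on $\Mdis$ than on the parent monoid; the benefit is that no closure claim about Borel extendability is ever needed. To repair your proof, adopt this decomposition rather than trying to establish closure of $\Mdis$ under $(\pdot)$ directly.
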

\begin{proof}
  By Lemma~\ref{app:lem:finite-footprint-remains-krm},
  the restriction to $\Mfin$ is a KRM, so it suffices to show that restricting to
  countably-generated spaces that can be extended to a Borel measure still yields a KRM.
  First, restricting to countably-generated spaces still yields a KRM because
  the independent combination of two countably-generated spaces remains couuntably-generated.
  Then, the restriction to spaces that can be extended to a Borel measure
  still yields a KRM by Lemma~\ref{app:restricted-krm},
  as the set of spaces that can be extended to a Borel measure is downward-closed.
\end{proof}

\begin{mathpar}
  \inferrule{\Gamma;\Delta\vdashrv E:A \\ \Gamma,x\ofty A;\Delta\vdash P}
    {\Gamma;\Delta\vdash \D_{x\ofty A\gets E} P}
\end{mathpar}
\begin{center}
\begin{tabular}{rcl}
  $\gamma,D,\calP\vDash \D_{x\ofty A\gets E} P$ & iff & for all $(\Sigma_\Omega,\mu)\sqsupseteq \calP$\\
                                                &     & and all $(\Sigma_\Omega,\mu)$-disintegrations
                                                        $\{\nu_x\}_{x\in A}$ with respect to $E(\gamma)\circ D$ \\
                                         &     & and $(E\circ D)_*\mu$-almost-all $x\in A$, \\
                                         &     & $(\gamma,x),D,\nu_x|_\calP\vDash P$
\end{tabular}
\end{center}

\begin{lemma} \label{distributions-agree-pi-generated}
  If $\mu$ and $\nu$ are probability measures on a space $(\Omega,\calF)$
  generated by a $\pi$-system $\calB$, then
  $\mu=\nu$ iff $\mu(B)=\nu(B)$ for all $B\in\calB$.
\end{lemma}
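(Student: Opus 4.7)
The plan is to prove this as a standard application of the Dynkin $\pi$-$\lambda$ theorem (which has already been invoked elsewhere in the paper, e.g.\ in the proofs of Lemma~\ref{lem:independent-combinations-are-unique} and Theorem~\ref{thm:spaces-form-a-krm}, so we can import it as a black box). The forward direction is immediate: if $\mu = \nu$ then trivially $\mu(B) = \nu(B)$ for every $B \in \calB$, so essentially all of the work lies in the reverse direction.

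For the reverse direction, I would introduce the collection
\[
  \calL := \{E \in \calF \mid \mu(E) = \nu(E)\}
\]
and show that $\calL$ is a $\lambda$-system. There are three conditions to check: (1) $\Omega \in \calL$, which holds because $\mu$ and $\nu$ are both probability measures and so assign $\Omega$ the value $1$; (2) closure under complements, since $\mu(E^c) = 1 - \mu(E) = 1 - \nu(E) = \nu(E^c)$ whenever $E \in \calL$; and (3) closure under countable disjoint unions, which follows from countable additivity of $\mu$ and $\nu$ applied termwise to a disjoint sequence $\{E_n\} \subseteq \calL$.

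By hypothesis $\calB \subseteq \calL$ and $\calB$ is a $\pi$-system generating $\calF$, so the $\pi$-$\lambda$ theorem gives $\calF = \sigma(\calB) \subseteq \calL$, which means $\mu(E) = \nu(E)$ for every $E \in \calF$, i.e.\ $\mu = \nu$. There is no real obstacle here; the only subtle point is that we are relying on both $\mu$ and $\nu$ being probability (hence finite) measures to make the complement step work, which is exactly the hypothesis of the lemma. This mirrors the standard textbook proof (see, e.g., \citet{kallenberg1997foundations}), and I would present it in essentially the two displayed steps above.
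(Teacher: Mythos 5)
Your proof is correct and takes essentially the same route as the paper: both directions are handled identically, with the reverse direction defining the collection of events on which $\mu$ and $\nu$ agree, checking it is a $\lambda$-system, and invoking the Dynkin $\pi$-$\lambda$ theorem against the generating $\pi$-system $\calB$. Your version simply spells out the three $\lambda$-system conditions that the paper leaves implicit.
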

\begin{proof}
  The left-to-right direction is straightforward.
  The right-to-left direction follows from the $\pi$-$\lambda$ theorem:
  the set $S:=\{B\in\calB\mid \mu(B)=\nu(B)\}$
  is a $\lambda$-system
  and $\mu$ and $\nu$ agree on a $\pi$-system that generates $\calF$ by assumption.
\end{proof}

\begin{lemma} \label{app:C-intro}
  $\displaystyle\own E * P \vdash \D_{x\ofty A\gets E} P$.
\end{lemma}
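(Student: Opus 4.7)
The plan is to unfold both sides semantically and exploit the independence built into $*$: since $E$ is measurable with respect to a $\sigma$-algebra independent of the one witnessing $P$, conditioning on $E$ leaves the ``other'' piece of the probability space untouched, so $P$ persists.

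Unfolding the hypothesis $\gamma,D,\calP\vDash \own E * P$ yields probability spaces $\calP_E=(\calF_E,\rho_E)$ and $\calP_P=(\calF_P,\rho_P)$ with $\calP_E\pdot\calP_P\plte\calP$, such that $E(\gamma)\circ D$ is $\calF_E$-measurable and $\gamma,D,\calP_P\vDash P$. (Note that $x$ cannot appear free in $P$, since $\own E * P$ must be well-formed on its own.)

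To establish $\gamma,D,\calP\vDash \D_{x\gets E}P$ I would fix an arbitrary extension $(\Sigma_\Omega,\mu)\sqsupseteq\calP$ and an arbitrary disintegration $\{\nu_x\}_{x\in A}$ of $\mu$ along $X := E(\gamma)\circ D$, and then prove the key intermediate claim: for $X_*\mu$-almost every $x$, the restriction $\nu_x|_{\calF_P}$ coincides with $\rho_P$. Because $\calP_E\pdot\calP_P\plte(\Sigma_\Omega,\mu)$, the sub-$\sigma$-algebras $\calF_E$ and $\calF_P$ are $\mu$-independent, so for each fixed $B\in\calF_P$ and each measurable $A'\subseteq A$ the defining integral identity of disintegration factorizes,
\[
\int_{A'}\nu_x(B)\,d(X_*\mu)(x)=\mu(B\cap X^{-1}(A'))=\mu(B)\,(X_*\mu)(A'),
\]
forcing $\nu_x(B)=\mu(B)=\rho_P(B)$ for $X_*\mu$-almost all $x$.

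Upgrading this pointwise statement to a single almost-sure agreement of the whole measures is where the countable generation of $\calF_P$ built into $\Mdis$ (Theorem~\ref{thm:final-krm}) comes in: I would pick a countable $\pi$-system generating $\calF_P$, intersect the countably many exceptional null sets, and apply Lemma~\ref{distributions-agree-pi-generated} to conclude that $\nu_x|_{\calF_P}=\rho_P$ for almost all $x$ simultaneously. For each such $x$ we then have $\calP_P\plte(\calF,\nu_x|_\calF)$, where $\calF$ is the $\sigma$-algebra of $\calP$, and by the standard Kripke monotonicity of Lilac propositions (verified by a routine induction on syntax) $\gamma,D,\calP_P\vDash P$ lifts to $(\gamma,x),D,(\calF,\nu_x|_\calF)\vDash P$, which is precisely what the semantics of $\D_{x\gets E}P$ demands.

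The main obstacle is precisely the simultaneous-almost-sure step: the factorization argument only gives $\nu_x(B)=\mu(B)$ separately for each $B\in\calF_P$, on a $B$-dependent null set. It is essential that $\calF_P$ be countably generated so this upgrades to agreement of $\nu_x$ and $\mu$ on all of $\calF_P$ outside a single null set; this is the precise point at which Lilac's restriction to $\Mdis$ pays off.
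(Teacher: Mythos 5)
Your proposal is correct and follows essentially the same route as the paper's proof: unfold $*$ to get independent pieces $\calP_E$ and $\calP_P$, reduce the goal via monotonicity and $x\notin P$ to showing $\nu_x|_{\calF_P}=\mu_P$ for almost all $x$, handle each generator $B$ of a countable $\pi$-system separately using the disintegration identity together with $\mu$-independence of $\calF_E$ and $\calF_P$, and intersect the countably many null sets before invoking Lemma~\ref{distributions-agree-pi-generated}. The only cosmetic difference is that you dispatch the per-generator step by citing the almost-everywhere uniqueness of a function with prescribed integrals over all measurable $A'\subseteq A$, whereas the paper proves that fact inline by splitting the exceptional set into $N_<$ and $N_>$ and deriving a contradiction --- the two arguments are the same in substance.
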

\begin{proof}
  Suppose $\gamma,D,\underbrace{\calP_E\pdot\calP_P}_\calP\vDash\own E*P$
  and $\gamma,D,\calP_E\vDash \own E$
  and (1) $\gamma,D,\calP_P\vDash P$.
  To show $\D_{x\gets E}P$, suppose
  $\calP_E\pdot\calP_P\plte(\Sigma_\Omega,\mu)$
  and let $\{\nu_x\}_{x\in A}$ be a $\mu$-disintegration with respect to $E(\gamma)\circ D$.
  We need to show $(\gamma,x),D,\nu_x|_\calP\vDash P$ for almost all $x\in A$.
  Since $x\notin P$, $(\gamma,x),D,\nu_x|_\calP\vDash P$ is equivalent to
  $\gamma,D,\nu_x|_\calP\vDash P$,
  so by
  assumption (1) and monotonicity it suffices to show $\nu_x|_{\calP_P} = \calP_P$ for almost all $x\in A$.

  Write $\calP_P=(\calF_P,\mu_P)$ and
  let $S:=\{x\in A\mid \nu_x|_{\calF_P} = \mu_P\}$.
  It's enough to show that $S$ is $\Sigma_A$-measurable and has probability $1$.
  Let $\calB=\{B_n\}_{n\in\N}$ be a countable basis of $\calF_P$; without loss of generality we may
  assume $\calB$ is a $\pi$-system because any countable collection of sets has countable closure
  under finite intersections.
  By lemma \ref{distributions-agree-pi-generated}, we can write $S$ as the countable intersection
  $S=\bigcap_{n\in\N}S_n$ where $S_n := \{x\in A\mid \nu_x(B_n) = \mu_P(B_n)\}$.
  Because $\sigma$-algebras are closed under countable intersections and measures are countably
  subadditive, $S$ is measurable with probability $1$ if each $S_n$ is.

  Each $S_n$ is $\Sigma_A$-measurable:
  $S_n$ is equal to the preimage of the singleton set $\{\mu_P(B_n)\}$
  under the map $\nu_{(-)}(B_n)$; since $\nu$ is a Markov kernel and singletons are Borel,
  this preimage must be
  $\Sigma_A$-measurable.
  It only remains to show each $S_n$ has probability $1$.
  Now, suppose for the sake of contradiction that there is some $k$ such that $S_k$ does not have probability $1$,
  so $\nu_x(B_n)\ne \mu_P(B_n)$ for all $x\in N:=A\setminus S_k$.
  We can write $N$ as a disjoint union of two subsets $N_<$ and $N_>$, defined as follows:
  \begin{align*}
    N_< &:= \{x\in N\mid \nu_x(B_n) < \mu_P(B_n)\} \\
    N_> &:= \{x\in N\mid \nu_x(B_n) > \mu_P(B_n)\} 
  \end{align*}
  These are both measurable, since they can be written as preimages of $[0,\mu_P(B_n))$
  and $(\mu_P(B_n),1]$ under $\nu_{(-)}(B_n)$.
  Because $N$ has nonzero probability, at least one of $N_<$ or $N_>$ must have nonzero probability too.
  Suppose it's $N_<$; the case where $N_>$ has nonzero probability is analogous.
  Because $\nu$ is a disintegration of $\mu$ with respect to $E(\gamma)\circ D$, we have
  \[
    \Ex_{\omega\sim\mu}f(\omega) = \Ex_{x\sim (E(\gamma)\circ D)_*\mu}\Ex_{\omega\sim\nu_x}f(\omega)
  \]
  for all $f:\Omega\mto\R_{\ge 0}$. Choose $f(\omega):=\ind[\omega\in B_n]\ind[E(\gamma)(D(\omega))\in N_<]$.
  Then simplifying LHS gives
  \begin{align*}
     \Ex_{\omega\sim\mu}f(\omega)
     &=
     \Ex_{\omega\sim\mu}\ind[\omega\in B_n]\ind[E(\gamma)(D(\omega))\in N_<]\\
     &\hspace{-0.3em}\stackrel{(a)}=
     \Ex_{\omega\sim\mu}\ind[\omega\in B_n]\Ex_{\omega\sim\mu}\ind[E(\gamma)(D(\omega))\in N_<]\\
     &=
     \mu(B_n)(E(\gamma)\circ D)_*\mu(N_<)
  \end{align*}
  Step (a) uses independence of $B_n$ and $(E(\gamma)\circ D)^{-1}(N_<)$:
  $\calP_E\pdot\calP_P$ defined and $B_n\in\calP_P$ and $(E(\gamma)\circ D)^{-1}(N_<)\in\calP_E$.
  Meanwhile, simplifying RHS gives
  \begin{align*}
      \Ex_{x\sim (E(\gamma)\circ D)_*\mu}\Ex_{\omega\sim\nu_x} f(\omega)
     &=
       \Ex_{x\sim (E(\gamma)\circ D)_*\mu}\Ex_{\omega\sim\nu_x}
       \ind[\omega\in B_n]\ind[E(\gamma)(D(\omega))\in N_<]\\
     &\hspace{-0.3em}\stackrel{(a)}=
     \Ex_{x\sim (E(\gamma)\circ D)_*\mu} \Ex_{\omega\sim\nu_x}\ind[\omega\in B_n] \ind[x\in N_<]\\
     &= \Ex_{x\sim (E(\gamma)\circ D)_*\mu} \ind[x\in N_<] \Ex_{\omega\sim\nu_x}\ind[\omega\in B_n] \\
     &= \Ex_{x\sim (E(\gamma)\circ D)_*\mu} \ind[x\in N_<] \nu_x(B_n) \\
     &\hspace{-0.3em}\stackrel{(b)}< \Ex_{x\sim (E(\gamma)\circ D)_*\mu} \ind[x\in N_<] \mu_P(B_n) \\
     &= \mu_P(B_n)\Ex_{x\sim (E(\gamma)\circ D)_*\mu}\ind[x\in N_<] \\
     &= \mu_P(B_n)(E(\gamma)\circ D)_*\mu(N_<)\\
     &= \mu(B_n)(E(\gamma)\circ D)_*\mu(N_<)
  \end{align*}
  Step (a) holds because $(E(\gamma)\circ D)_*\nu_x(\{x\}) = 1$ for almost all $x$.
  Step (b) holds because the expectation is taken over $x\in N_<$,
  where the inequality holds by assumption; the inequality remains strict because $N_<$ is
  nonnegligible.
  Putting these two together gives $\mathrm{LHS}=\mathrm{RHS}$ and $\mathrm{LHS}<\mathrm{RHS}$, a
  contradiction.
\end{proof}

\begin{lemma}[law of total expectation] \label{app:C-elim}
  The following entailment holds:
  \[
    \Ex[e[X/x]] = v ~~\land~~ \D_{x:A\gets X} \Ex[E] = e ~~~\vdash~~~ \Ex[E] = v
  \]
\end{lemma}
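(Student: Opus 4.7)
The plan is to unfold the semantic interpretations of both conjuncts and combine them via the defining identity of a disintegration. Fix a configuration $(\gamma, D, (\calF, \mu))$ satisfying the hypothesis. Since $(\calF, \mu) \in \Mdis$, I would first extend $\mu$ to a Borel measure $\mu'$ on $(\Omega, \Sigma_\Omega)$ and, using Lemma~\ref{lem:hilbert-cube-disintegrable}, obtain a disintegration $\{\nu_x\}_{x \in A}$ of $\mu'$ along the random variable $X(\gamma) \circ D$.

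From the $\D$-conjunct, applied to this particular choice of $\mu'$ and $\{\nu_x\}$, Lemma~\ref{lem:disintegration-def} yields that for $(X(\gamma) \circ D)_*\mu'$-almost all $x \in A$ we have $\Ex_{\omega \sim \nu_x|_\calP}[E(\gamma)(D(\omega))] = e(\gamma, x)$, together with $\calF$-measurability of $E \circ D$ (which in particular lets me replace $\nu_x|_\calP$ by $\nu_x$ inside the integral). The defining identity of a disintegration then gives
\[
  \Ex_{\omega \sim \mu'}[E(\gamma)(D(\omega))] = \Ex_{x \sim (X(\gamma) \circ D)_*\mu'}\bigl[\Ex_{\omega \sim \nu_x}[E(\gamma)(D(\omega))]\bigr] = \Ex_{x \sim (X(\gamma) \circ D)_*\mu'}[e(\gamma, x)].
\]
Because $E \circ D$ is $\calF$-measurable and $\mu'$ extends $\mu$ on $\calF$, the leftmost integral above equals $\Ex_{\omega \sim \mu}[E(\gamma)(D(\omega))]$, i.e.\ precisely the quantity we want to show is $v$.

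From the other conjunct, the semantics unpacks to $\Ex_{\omega \sim \mu}[e(\gamma, X(\gamma)(D(\omega)))] = v$, together with $\calF$-measurability of $\omega \mapsto e(\gamma, X(\gamma)(D(\omega)))$. Extending along $\mu'$ as before and applying the standard change-of-variables identity along $X(\gamma) \circ D$ gives $\Ex_{x \sim (X(\gamma) \circ D)_*\mu'}[e(\gamma, x)] = v$. Chaining this with the previous display yields $\Ex[E] = v$ as required.

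The main subtlety I foresee is the bookkeeping between $\mu$, its Borel extension $\mu'$, and the restrictions $\nu_x|_\calP$: I need to verify that each $\calF$-measurability condition built into the semantics of $\Ex[\cdot] = \cdot$ justifies the corresponding passage between these measures, and that the ``almost all $x$'' clause delivered by $\D$ is taken with respect to exactly the same pushforward against which I later integrate $e(\gamma, x)$. Nothing here is conceptually deep, but it is the place where the argument could quietly fail if the measure-theoretic dependencies are not tracked carefully.
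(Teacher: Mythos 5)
Your proposal is correct and follows essentially the same route as the paper's proof: extend $\mu$ to a Borel measure, invoke existence of a disintegration along $X(\gamma)\circ D$, extract the almost-everywhere conditional expectation from the $\D$-conjunct, and chain the disintegration identity with the change-of-variables step coming from the first conjunct. Your treatment is in fact slightly more careful than the paper's about the passage between $\mu$, its Borel extension, and the restrictions $\nu_x|_\calP$, which the paper elides.
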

\begin{proof}
  Fix $(\gamma,D,\calP)$.
  By the first conjunct $\Ex_{\omega\sim\calP} e(\gamma,X(\gamma)(D(\omega))) = v(\gamma)$.
  By assumption, $\calP$ extends to a Borel measure $\mu$ on the Hilbert cube.
  By the disintegration theorem, there exists at least one $\mu$-disintegration
  with respect to $X(\gamma)\circ D$; call it $\{\nu_x\}_{x\in A}$.
  By the second conjunct $\Ex_{\omega\sim\nu_x}[E(\gamma)(D(\omega))] = e(\gamma,x)$ for almost all $x\in A$.
  This along with the existence of the disintegration $\nu$ implies
  \begin{align*}
    \Ex_{\omega\sim\calP} E(\gamma)(D(\omega))
    = \Ex_{x\sim {(X(\gamma)\circ D)}_*\mu} \Ex_{\omega\sim\nu_x} E(\gamma)(D(\omega))
    = \Ex_{x\sim {(X(\gamma)\circ D)}_*\mu} e(\gamma,x)
    = \Ex_{\omega\in\mu} e(\gamma,X(\gamma)(D(\omega)))
    = v(\gamma)
  \end{align*}
  as desired.
\end{proof}

\begin{lemma} \label{app:C-laws}
  The following entailments hold:
  \begin{mathpar}
  \inferrule*[lab=C-Entail]{P~\vdash~ Q}{\D_{x\gets E} P~\vdash~ \D_{x\gets E} Q}
  \and
  \inferrule*[lab=\DIndepName]{}{\own E * P ~\vdash~ \D_{x\gets E} P}
  \and
  \inferrule*[lab=C-Subst]
    {}
    {\own X \vdash \D_{x\gets X} \big(X \asequal x \big)}
  \and
  \inferrule*[lab=C-Own]
    {}
    {\own E \vdash \D_{x\gets X} \own E}
  \and
  \inferrule*[lab=C-Total-Expectation]
    {}
    {\D_{x\gets X} \Ex[E] = e \hspace{.5em}\land\hspace{.5em} \Ex[e[X/x]] = v
      \hspace{.75em}\vdash\hspace{.75em} \Ex[E] = v}
  \end{mathpar}
\end{lemma}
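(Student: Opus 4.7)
The plan is to dispatch the four entailments one at a time against the semantics of $\D$ given in Lemma~\ref{lem:disintegration-def}. Two of them are already in hand: \DIndep{} is exactly Lemma~\ref{app:C-intro}, and \textsc{C-Total-Expectation} is exactly Lemma~\ref{app:C-elim}. So the only new work is \textsc{C-Entail} and \textsc{C-Subst}.

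For \textsc{C-Entail}, I would simply unfold the definition of $\D$. Assuming $P \vdash Q$ at context $(\Gamma, x\ofty A; \Delta)$ and $\gamma, D, \calP \vDash \D_{x\gets E} P$, fix any extension $(\Sigma_\Omega, \mu) \sqsupseteq \calP$ and any disintegration $\{\nu_x\}_{x\in A}$ of $\mu$ along $E(\gamma)\circ D$. Almost every $x$ then satisfies $(\gamma, x), D, \nu_x|_\calP \vDash P$, and the assumed entailment upgrades each of these to $\vDash Q$. Repackaging delivers $\gamma, D, \calP \vDash \D_{x\gets E} Q$. This is essentially a pointwise lift of the outer entailment through the disintegration.

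For \textsc{C-Subst}, I would exploit the defining concentration property of disintegrations: for any extension $\mu' \sqsupseteq \calP$ and any disintegration $\{\nu_x\}_{x\in A}$ of $\mu'$ along $X(\gamma) \circ D$, $(X(\gamma)\circ D)_* \mu'$-almost every $x$ has $\nu_x$ concentrated on the fiber $\{\omega \mid X(\gamma)(D(\omega)) = x\}$. On this fiber, $E(\gamma) \circ D$ and $E[x/X](\gamma, x) \circ D$ agree pointwise, by the semantics of substitution (Lemma~\ref{subst-lemma}). This yields $\nu_x$-almost-sure agreement, which is what the semantics of $(\asequal)$ records.

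The main obstacle I expect is producing the witness set $F$ for $\asequal$ as an element of the sub-$\sigma$-algebra $\calF$ rather than merely of $\Sigma_\Omega$: the natural candidate $\{X = x\}$ need only be $\Sigma_\Omega$-measurable. My plan is to leverage countable generation of $\calF$ (guaranteed by Theorem~\ref{thm:final-krm}) to reduce the question to agreement on a countable generating $\pi$-system, and then to mimic the $\pi$-$\lambda$ argument already used in the proof of Lemma~\ref{app:C-intro}, which handles the analogous descent from $\nu_x$-a.s.\ agreement at the $\Sigma_\Omega$-level down to a witness event in $\calF$.
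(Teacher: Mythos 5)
Your proposal matches the paper's proof essentially step for step: \DIndep{} and \textsc{C-Total-Expectation} are discharged by citing Lemmas~\ref{app:C-intro} and~\ref{app:C-elim}, \textsc{C-Entail} is the pointwise lift of $P\vdash Q$ through each disintegrated configuration, and \textsc{C-Subst} uses the concentration of $\nu_x$ on the fiber $\{\omega \mid X(\gamma)(D(\omega)) = x\}$ together with the inclusion of that fiber in the agreement set of $E(\gamma)\circ D$ and $E[x/X](\gamma,x)\circ D$.

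The one place you go beyond the paper is the ``main obstacle'' you flag for \textsc{C-Subst}. The paper's own proof does not address it: it simply observes that the fiber has $\mu_x$-probability one and concludes, never descending from a $\Sigma_\Omega$-event to a witness $F\in\calF$. Your instinct that something needs saying here is reasonable given the stated semantics of $(\asequal)$, but note that the $\pi$-$\lambda$ machinery in Lemma~\ref{app:C-intro} is solving a different problem --- measurability and conegligibility of a set of parameters $x\in A$ --- and will not manufacture a full-measure event inside a small sub-$\sigma$-algebra $\calF$ (consider $\calF$ trivial: the only candidate is $\Omega$ itself). So if you pursue that refinement you will need a different idea than the one you sketch; for the purpose of reproducing the paper's argument, you can stop where the paper stops.
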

\begin{proof}
  \DIndep~ and \textsc{C-Total-Expectation} follow from lemmas
  \ref{app:C-intro} and \ref{app:C-elim} respectively.
  \begin{itemize}
    \item \textsc{C-Entail}: suppose $P\vdash Q$ and $\gamma,D,\calP\vDash \D_{x\ofty A\gets E}P$.
      Let $\{\mu_x\}_{x\in A}$ be a disintegration of $\calP$ with respect to $E(\gamma)\circ D$;
      let $\{\calP_x\}_{x\in A}$ be the corresponding restrictions of $\{\mu_x\}_{x\in A}$ to $\calP$.
      By assumption, $\gamma,D,\calP_x\vDash P$ for almost-all $x\in A$.
      Since $P\vdash Q$, this implies $\gamma,D,\calP_x\vDash Q$ for almost-all $x\in A$ as desired.
    \item \textsc{C-Subst}:
      Fix $(\gamma,D,\calP)$.
      Let $\{\mu_x\}_{x\in A}$ be a disintegration of $\calP$ with respect to $X(\gamma)\circ D$.
      Let $E_x$ be the event that $X(\gamma)\circ D$ is equal to $x$.
      By assumption we have that $X$ is $\calP$-measurable, so it only remains to show
      that $E_x$ holds almost-surely with respect to $\mu_x$ for almost all $x\in A$.
      By the definition of disintegration,
      the event $\{\omega \mid X(\omega) = x\}$ has probability $1$ under $\mu_x$
      for almost all $x$ as required.
    \item \textsc{C-Own}:
      Fix $(\gamma,D,(\calF,\mu))$ and
      let $\{\mu_x\}_{x\in A}$ be a disintegration of $(\calF,\mu)$ with respect to
      $X(\gamma)\circ D$.  Disintegration only changes the probability measure;
      the underlying $\sigma$-algebra remains fixed. Thus if $E(\gamma)\circ D$ is
      $\calF$-measurable then it remains $\calF$-measurable under each
      conditional probability space.
  \end{itemize}
\end{proof}

\begin{lemma} \label{app:C-modal-laws}
  The following entailments hold:
  \begin{itemize}
  \item Necessitation: if $\vdash P$ then $\displaystyle\vdash \D_{x\gets X}P$.
  \item Distribution: $\displaystyle\D_{x\gets X}(P\to Q)\vdash \D_{x\gets X} P\to \D_{x\gets X}Q$.
  \item Distributes over $(\land)$: $\displaystyle\D_{x\gets X} (P\land Q) \vdashiff \D_{x\gets X} P \land
    \D_{x\gets X} Q$.
  \item Semidistributes over $(\lor)$: $\displaystyle\D_{x\gets X} P \lor \D_{x\gets X} Q \vdash \D_{x\gets
    X} (P\lor Q)$.
  \end{itemize}
\end{lemma}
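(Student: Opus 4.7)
The plan is to verify each of the four modal properties by unfolding the semantics of $\D$ from Lemma~\ref{lem:disintegration-def} and performing routine measure-theoretic and Kripke-style arguments. In every case the key observations are that (i) the interpretation of $\D_{x\gets X}P$ quantifies ``for almost-all $x$'' over disintegrated configurations $(\gamma,x),D,\nu_x|_\calP$, and (ii) finite unions of null sets are null, so ``almost-all'' is closed under finite conjunction.

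The first three clauses fall out from these observations with essentially no work. For necessitation, if $\vdash P$ then $P$ holds in every configuration, hence in each $(\gamma,x),D,\nu_x|_\calP$ for any extension $\mu\sqsupseteq\calP$ and disintegration $\{\nu_x\}_{x\in A}$. For distribution over $(\land)$, the forward direction is immediate by projecting each conjunct; for the backward direction, the hypotheses supply two almost-sure events whose intersection remains almost-sure, and on this intersection both $P$ and $Q$ hold. For semidistribution over $(\lor)$, if either disjunct holds almost-surely then so does the disjunction, by enlarging the witnessing almost-sure set.

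The main subtlety is in the K-axiom (distribution), because of the interaction between the Kripke-style interpretation of $(\to)$ and the extension quantifier in the semantics of $\D$. Assume $\gamma,D,\calP\vDash\D_{x\gets X}(P\to Q)$. To show $\D_{x\gets X}P\to\D_{x\gets X}Q$ holds at $\calP$, fix any $\calP'\sqsupseteq\calP$ with $\gamma,D,\calP'\vDash\D_{x\gets X}P$, and aim to establish $\gamma,D,\calP'\vDash\D_{x\gets X}Q$. Pick arbitrary witnesses for the latter: some $(\Sigma_\Omega,\mu)\sqsupseteq\calP'$ together with a $\mu$-disintegration $\{\nu_x\}_{x\in A}$ along $X(\gamma)\circ D$. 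Because $\mu\sqsupseteq\calP'\sqsupseteq\calP$, the hypothesis at $\calP$ instantiated at this very $\mu$ and $\nu$ yields a null set $N_1$ off which $(\gamma,x),D,\nu_x|_\calP\vDash P\to Q$, and the hypothesis at $\calP'$ instantiated at the same $\mu$ and $\nu$ yields a null set $N_2$ off which $(\gamma,x),D,\nu_x|_{\calP'}\vDash P$. For $x\notin N_1\cup N_2$, monotonicity of restriction gives $\nu_x|_{\calP'}\sqsupseteq\nu_x|_\calP$, so the Kripke semantics of $(\to)$ applied at the world $\nu_x|_{\calP'}$ delivers $(\gamma,x),D,\nu_x|_{\calP'}\vDash Q$, as required.

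The hard part is thus distribution, where one must carefully align the extension witnessing the Kripke semantics of implication with the extension used in the disintegration semantics of $\D$; the crucial move is to re-use the same ambient $\mu$ and disintegration $\nu$ when instantiating both hypotheses, so that their two almost-sure conclusions can be combined pointwise in $x$ and the monotonicity-of-restriction inclusion $\nu_x|_{\calP'}\sqsupseteq\nu_x|_\calP$ bridges the two conditioned worlds.
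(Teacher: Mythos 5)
Your proposal is correct, and for three of the four clauses (necessitation, distribution over $\land$, semidistribution over $\lor$) it matches the paper's argument: the easy directions are monotonicity of ``almost-all'' under entailment (the paper phrases these as instances of \textsc{C-Entail}), and the backward direction of the $\land$ clause is exactly the paper's intersection-of-two-measure-one-sets argument. Where you genuinely diverge is the K-axiom. The paper reduces $\D_{x\gets X}(P\to Q)\vdash \D_{x\gets X}P\to\D_{x\gets X}Q$ to the entailment $\D_{x\gets X}(P\to Q)\land\D_{x\gets X}P\vdash\D_{x\gets X}Q$, then collapses the premise to $\D_{x\gets X}((P\to Q)\land P)$ via the already-proved $\land$-distribution and finishes with \textsc{C-Entail} applied to modus ponens $(P\to Q)\land P\vdash Q$ --- a two-line syntactic derivation from the other laws. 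You instead unfold the Kripke semantics of $(\to)$ directly: fix $\calP'\sqsupseteq\calP$, instantiate both hypotheses at the \emph{same} ambient extension $\mu\sqsupseteq\calP'\sqsupseteq\calP$ and the same disintegration, take the union of the two null sets, and use $\nu_x|_{\calP}\plte\nu_x|_{\calP'}$ to discharge the implication at the world $\nu_x|_{\calP'}$. Your route costs more bookkeeping but is self-contained and makes explicit the alignment of the extension quantifier in $(\to)$ with the extension quantifier in $\D$; the paper's route is shorter but leans on the deduction-theorem step (passing from $R\land S\vdash T$ to $R\vdash S\to T$), which in a Kripke model implicitly appeals to persistence of the antecedent. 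Both arguments are sound.
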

\begin{proof}~
  \begin{itemize}
  \item Necessitation: if $P$ holds in all configurations
    then it holds for all disintegrated configurations as well.
  \item Distribution: it suffices to show $D_{x\gets X}(P\to Q)\land \D_{x\gets X}P\vdash \D_{x\gets X}Q$.
    By \textsc{C-And} the premise is equivalent to $\D_{x\gets X}((P\to Q)\land P)$;
    the result then follows from \textsc{C-Entail}
    via the entailment $(P\to Q)\land P\vdash Q$.
  \item Distributes over $(\land$):
      the left-to-right direction follows from \textsc{C-Entail}
      via the entailments $P\land Q\vdash P$ and $P\land Q\vdash Q$.
      For the right-to-left entailment, suppose $\gamma,D,\calP\vDash \D_{x\gets E} P$
      and $\gamma,D,\calP\vDash \D_{x\gets E} Q$
      and let $\{\mu_x\}_{x\in A}$ be a disintegration of $\calP$ with respect to $E(\gamma)\circ D$;
      let $\{\calP_x\}_{x\in A}$ be the corresponding restrictions of $\{\mu_x\}_{x\in A}$ to $\calP$.
      By assumption, there are two sets $F_1,F_2\subseteq A$ of measure $1$
      such that $\gamma,D,\calP_x\vDash P$ for all $x\in F_1$
      and $\gamma,D,\calP_x\vDash Q$ for all $x\in F_2$.
      Therefore, $\gamma,D,\calP_x\vDash P\land Q$ for all $x\in F_1\cap F_2$.
      Moreover, $F_1\cap F_2$ has measure $1$ by subadditivity,
      so $\gamma,D,\calP_x\vDash P\land Q$ for almost-all $x$ as desired.
  \item Semidistributes over $(\lor)$:
    it suffices to show $\D_{x\gets X} P\vdash \D_{x\gets X}(P\lor Q)$
    and $\D_{x\gets X}Q\vdash \D_{x\gets X}(P\lor Q)$.
    These follow from \textsc{C-Entail} via the entailments
    $P\vdash P\lor Q$ and $Q\vdash P\lor Q$ respectively.
  \end{itemize}
\end{proof}

\section{Annotated \ref{prog:commoncause} program} \label{sec:annotated-commoncause}

{\small
\begin{align*}
  &\assertc{ \top }\\
  &\hspace{1em}\appllet Z \appleq \flip~1/2 \applin \\
  &\assertc{ Z \sim \ber 1/2 }\\
  &\hspace{1em}\appllet X \appleq \flip~1/2 \applin \\
  &\assertc{ Z \sim \ber 1/2 ~~*~~  X\sim\ber 1/2 }\\
  &\hspace{1em}\appllet Y \appleq \flip~1/2 \applin \\
  &\assertc{ Z \sim \ber 1/2 ~~*~~  X\sim\ber 1/2 ~~*~~ Y\sim\ber 1/2 }\\
  &\hspace{1em}\appllet A \appleq X~\applkw{||}~Z \applin \\
  &\assertc{ Z \sim \ber 1/2 ~~*~~  X\sim\ber 1/2 ~~*~~ Y\sim\ber 1/2
              ~~*~~ A\asequal (X\lor Z) }\\
  &\hspace{1em}\appllet B \appleq Y~\applkw{||}~Z \applin \\
  &\assertc{ Z \sim \ber 1/2 ~~*~~  X\sim\ber 1/2 ~~*~~ Y\sim\ber 1/2
              ~~*~~ A\asequal (X\lor Z) ~~*~~ B\asequal (Y\lor Z)}\\
  &\hspace{1em}\applret (Z, X, Y, A, B) \\
  &\assertc{ \own Z \land \D_{z\gets Z}  \left(X\sim\ber 1/2 ~~*~~ Y\sim\ber 1/2
              ~~*~~ A\asequal (X\lor Z) ~~*~~ B\asequal (Y\lor Z)\right)}~(\DIndep)\\
  &\assertc{ \D_{z\gets Z}  \underbrace{\left(X\sim\ber 1/2 ~~*~~ Y\sim\ber 1/2
              ~~*~~ A\asequal (X\lor z) ~~*~~ B\asequal (Y\lor z)\right)}_P}~(\textsc{C-Subst})\\
  &\assertc{ \D_{z\gets Z} \left((z = \vtrue\to P[\vtrue/z]) \land (z=\vfalse\to P[\vfalse/z])\right)} \\
  &\assertc{ \D_{z\gets Z} \left(
    \begin{aligned}
      &(z = \vtrue~~\to~~ X\sim\ber 1/2 ~~*~~ Y\sim\ber 1/2
              ~~*~~ A\asequal (X\lor \vtrue) ~~*~~ B\asequal (Y\lor \vtrue)) ~~\land \\
      &(z=\vfalse ~~\to~~
              X\sim\ber 1/2 ~~*~~ Y\sim\ber 1/2
              ~~*~~ A\asequal (X\lor \vfalse) ~~*~~ B\asequal (Y\lor \vfalse))
    \end{aligned}\right)} \\
  &\assertc{ \D_{z\gets Z} \left(
    \begin{aligned}
      &(z = \vtrue~~\to~~ A\asequal \vtrue ~~*~~ B\asequal \vtrue) ~~\land\\
      &(z=\vfalse ~~\to~~
              X\sim\ber 1/2 ~~*~~ Y\sim\ber 1/2
              ~~*~~ A\asequal X ~~*~~ B\asequal Y)
    \end{aligned}\right)} \\
  &\assertc{ \D_{z\gets Z} \left(
    \begin{aligned}
      &(z = \vtrue~~\to~~ \own A ~~*~~ \own B) ~~\land \\
      &(z=\vfalse ~~\to~~ A\sim\ber 1/2 ~~*~~ B\sim\ber 1/2)
    \end{aligned}\right)} \\
  &\assertc{ \D_{z\gets Z} \left(
    \begin{aligned}
      &(z = \vtrue\to \own A * \own B) ~~\land\\
      &(z=\vfalse \to \own A * \own B)
    \end{aligned}\right)} \\
  &\assertc{ \D_{z\gets Z} \left(\own A * \own B\right) }
\end{align*}
}

\section{An Example of Conditional Independence via Control Flow} \label{sec:condsamples-annotated}

{\small
\begin{align*}
  &\assertc{ \top }\\
  &\hspace{1em}\appllet Z \appleq \flip~1/2 \applin \\
  &\assertc{ Z \sim \ber 1/2 }\\
  &\hspace{1em}\applif~Z~\applthen \\
  &\hspace{1em}\hspace{2em}\appllet X_1 \appleq \flip~p \applin \\
  &\hspace{2em}\assertc{ X_1 \sim \ber p }\\
  &\hspace{1em}\hspace{2em}\appllet Y_1 \appleq \flip~p \applin \\
  &\hspace{2em}\assertc{ X_1 \sim \ber p ~~*~~ Y_1 \sim\ber p }\\
  &\hspace{1em}\hspace{2em}\applret (Z,X_1,Y_1) \\
  &\hspace{1em}\applelse \\
  &\hspace{2em}\assertc{ \existsrv~X_1~Y_1.~~X_1 \sim \ber p ~~*~~ Y_1 \sim\ber p }\\
  &\hspace{1em}\hspace{2em}\appllet X_2 \appleq \flip~q \applin \\
  &\hspace{2em}\assertc{ \left(\existsrv~X_1~Y_1.~~X_1 \sim \ber p ~~*~~ Y_1 \sim\ber p\right)
                          ~~*~~ X_2 \sim \ber q }\\
  &\hspace{1em}\hspace{2em}\appllet Y_2 \appleq \flip~q \applin \\
  &\hspace{2em}\assertc{ \left(\existsrv~X_1~Y_1.~~X_1 \sim \ber p ~~*~~ Y_1 \sim\ber p\right)
                          ~~*~~ X_2 \sim \ber q ~~*~~ Y_2 \sim \ber q }\\
  &\hspace{1em}\hspace{2em}\applret (Z,X_2,Y_2) \\
  &\assertc{
    \begin{aligned}
      &Z \sim \ber 1/2 ~~*~~ \existsrv~X_1~Y_1~X_2~Y_2.~~ X_1 \sim \ber p ~~*~~ Y_1 \sim\ber p
      ~~*~~ X_2 \sim \ber q ~~*~~ Y_2 \sim \ber q ~~*~~ \\
      &X \asequal (\ite Z{X_1}{X_2}) ~~*~~ Y \asequal (\ite Z{Y_1}{Y_2})
    \end{aligned}}\\
  &\assertc{
    \own Z \land \D_{z\gets Z}\left( \existsrv~X_1~Y_1~X_2~Y_2.~~
    \begin{aligned}
      &X_1 \sim \ber p ~~*~~ Y_1 \sim\ber p
      ~~*~~ X_2 \sim \ber q ~~*~~ Y_2 \sim \ber q ~~*~~ \\
      &X \asequal (\ite Z{X_1}{X_2}) ~~*~~ Y \asequal (\ite Z{Y_1}{Y_2})
    \end{aligned}\right)}~(\DIndep)\\
  &\assertc{
    \D_{z\gets Z}\underbrace{\left( \existsrv~X_1~Y_1~X_2~Y_2.~~
    \begin{aligned}
      &X_1 \sim \ber p ~~*~~ Y_1 \sim\ber p
      ~~*~~ X_2 \sim \ber q ~~*~~ Y_2 \sim \ber q ~~*~~ \\
      &X \asequal (\ite z{X_1}{X_2}) ~~*~~ Y \asequal (\ite z{Y_1}{Y_2})
    \end{aligned}\right)}_P}~(\textsc{C-Subst})\\
  &\assertc{
    \D_{z\gets Z}\left(
    \begin{aligned}
      &(z=\vtrue ~~\to~~ P[\vtrue/z]) ~~\land \\
      &(z=\vfalse ~~\to~~ P[\vfalse/z])
    \end{aligned}\right)}\\
  &\assertc{
    \D_{z\gets Z}\left(
    \begin{aligned}
      &(z=\vtrue ~~\to~~ \existsrv~X_1~Y_1.~~X_1\sim\ber p~~*~~ Y_1\sim\ber p
           ~~*~~ X\asequal X_1
           ~~*~~ Y\asequal Y_1) ~~\land\\
      &(z=\vfalse ~~\to~~ \existsrv~X_2~Y_2.~~X_2\sim\ber q~~*~~ Y_2\sim\ber q
           ~~*~~ X\asequal X_2
           ~~*~~ Y\asequal Y_2)
    \end{aligned}\right)}\\
  &\assertc{
    \D_{z\gets Z}\left(
    \begin{aligned}
      &(z=\vtrue ~~\to~~ X\sim\ber p~~*~~ Y\sim\ber p) ~~\land\\
      &(z=\vfalse ~~\to~~ X\sim\ber q~~*~~ Y\sim\ber q)
    \end{aligned}\right)}\\
  &\assertc{
    \D_{z\gets Z}\left(
    \begin{aligned}
      &(z=\vtrue ~~\to~~ \own X ~~*~~ \own Y) ~~\land\\
      &(z=\vfalse ~~\to~~\own X ~~*~~ \own Y)
    \end{aligned}\right)}\\
  &\assertc{ \D_{z\gets Z}(\own X * \own Y) }\\
\end{align*}
}

\section{Proving a Weighted Sampling Algorithm Correct (Full)} \label{sec:full-verification}

\begin{figure*}
  \centering
\begin{subfigure}{0.25\linewidth}
{\fontsize{7.5pt}{9pt}\selectfont {\begin{align*}
  &1~\,\appllet W \appleq \applret(w_1+\dots+w_n) \applin \\
  &2~\,\appllet (v_1,\dots,v_n) \appleq \applret(w_1/W,\dots,w_n/W)\applin \\
  &3~\,\appllet U \appleq \applunif \applin \\
  &4~\,\applforopen n0iJ{\\
  &5~\,\hspace{1em}\applif~v_1+\dots+v_{i-1} \le U < v_1 + \dots + v_i \\
  &6~\,\hspace{1em}\applthen~\applret i \\
  &7~\,\hspace{1em}\applelse~\applret J}\applforclose
\end{align*}}}
\caption{Naive linear-space implementation of weighted sampling.}
\label{fig:motiv-naive-full}
\end{subfigure}
\hspace{1em}
\begin{subfigure}{0.25\linewidth}
{\fontsize{7.5pt}{9pt}\selectfont \begin{align*}
  &1~\,\appllet w \appleq \applret\applarray{w_1,\dots,w_n} \applin \\
  &2~\,\appllet M \appleq \applret (-\infty) \applin \appllet K \appleq \applret(0)\applin \\
  &3~\,\applforopen n{\appltuple{M,K}}i{\appltuple{M,K}}{\\
  &4~\,\hspace{1em}\appllet S \appleq\applunif \applin \\
  &5~\,\hspace{1em}\appllet U \appleq \applret(S\applexp\applpar{1\applkw{/}\applindex wi})\applin \\
  &6~\,\hspace{1em}\applif~U>M\\
  &7~\,\hspace{1em}\applthen~\applret \appltuple{U,i}\\
  &8~\,\hspace{1em}\applelse~\applret \appltuple{M,K}}\applforclose
\end{align*}}
\caption{Constant-space version.}
\label{fig:motiv-const-full}
\end{subfigure}
\hspace{1em}
\begin{subfigure}{0.3\linewidth}
  \begin{tikzpicture}
    \begin{axis}[
      width=\linewidth,
        xmin = 0, xmax = 1,
        ymin = 0, ymax = 1]
        \addplot[
            domain = 0:1,
            samples = 100,
            smooth,
            thick,
            blue,
        ] {x};
        \addplot[
            domain = 0:1,
            samples = 100,
            smooth,
            thick,
            gray,
        ] {x^(1/4)};
        \addplot[
            domain = 0:1,
            samples = 100,
            smooth,
            thick,
            red,
        ] {x^4};
    \end{axis}
    \node at (1,1) [rotate=40] {$w=1$};
    \node at (0.4,1.5) [rotate=40] {$w=4$};
    \node at (1.4,0.5) [rotate=40] {$w=\frac{1}{4}$};
    \end{tikzpicture}
    \caption{Visualization of the function $f(x)=x^{1/w}$.}
    \label{fig:funcvis-full}
\end{subfigure}
\caption{Weighted sampling example. The constants $w_i$ are inputs.}
\end{figure*}

To exercise Lilac's support for conditional reasoning, continuous random variables,
and substructural handling of independence, we now prove a sophisticated 
constant-space \emph{weighted sampling algorithm} correct using Lilac.  Suppose
you are given a collection of items $\{x_1, \dots, x_n\}$ each with associated
weight $w_i \in \mathbb{R}^+$.  The task is to draw a sample from the collection
$\{x_i\}$ in a manner where each item is drawn with probability proportional to
its weight. This problem is an instance of \emph{reservoir
sampling}~\citep{efraimidis2006weighted}, and is an important primitive in
distributed systems.

First, we consider a naive solution
that requires space linear in the number of weights; pseudocode
for this algorithm is presented in Figure~\ref{fig:motiv-naive-full}. The first 
pass over the weights occurs on Line~1, which computes the normalizing constant
$W=\sum_i w_i$. Line~2 then divides each weight by $W$ so that the result
$(v_1,\dots,v_n)$ forms a probability distribution.
This distribution can be thought of as a partitioning of the interval $[0,1]$
into $n$ subintervals 
with lengths $(v_1,\dots,v_n)$; to sample from it
we can choose a point $U$ uniformly at random from $[0,1]$ (Line~3)
and select the item corresponding to the subinterval that $U$ lands in (Lines~4--7). 

While simple to understand and implement, this naive approach has a critical
flaw that makes it inappropriate for application in large-scale systems: it requires storing
all previously encountered weights and scanning over them before a single sample can be
drawn, and so does not scale to a streaming setting where new weights are acquired
one at a time (for instance, as each user visits a website). To fix this
limitation, \citet{efraimidis2006weighted} proposed the very clever
\emph{constant-space solution} presented in Figure~\ref{fig:motiv-const-full}. The
core of this approach is to generate a value $S$ uniformly at random from $[0,1]$ on \emph{every}
iteration (Line~3), perturb $S$ according to the next weight $w_i$ in the stream
(Line~4), and track only the \emph{greatest} perturbed sample (Lines~5--8). Figure~\ref{fig:funcvis-full}
gives some intuition for the perturbed quantity $S^{1/w_i}$ on Line~4: if $w_i$
is large (i.e., item $i$ has high weight), then $S^{1/w_i}$ is likely to be large (visualized
by the curve $w=4$); if $w_i$ is small, then $S^{1/w_i}$ is likely to be small (visualized
by the curve $w=1/4$).
The fact that this program is equivalent to the naive one is quite surprising,
and proving it requires the simultaneous application of several important theorems from
probability theory. We show how this can be done formally in Lilac
in a manner similar to a typical informal proof.
Correctness is captured by the following Lilac postcondition:
\begin{align}
 {\forall k. \Pr(K=k) = \frac{w_k}{\sum_j w_j}}
 \label{eq:post-full}
\end{align}
To establish this postcondition, a typical informal proof
begins by declaring mutually independent, uniformly distributed random variables $\{S_i\}_{1\le i\le n}$,
where $S_i$ denotes the value sampled by Line~4 on the $i$th loop iteration,
and a random variable $K=\argmax_i S_i^{1/w_i}$ that denotes the final result.
Implicit in this setup are the assumptions that each $S_i$ produced by the program is actually
independent and uniformly distributed, and that the for-loop actually computes
the specified $\argmax$. We can formally establish this
by mechanically applying the proof rules described in Section~\ref{sec:derived}
to conclude the following at program termination:
\begin{align}
  {
  \existsrv S_1 \dots S_n.~\hugestar_i S_i \sim \mathrm{Unif}~[0,1] ~~*~~ 
  K \asequal \argmax_i S_i^{1/w_i}}
  \label{eq:post-full1}
\end{align}
The proof makes use of the following
invariant $I_j$ for the loop on
Line 2, which must hold immediately before the execution of the $j$th iteration for all $1\le j\le n+1$:
\begin{align}
  I_j \hspace{0.75em}:=\hspace{0.75em} {
   \existsrv S_1 \dots S_j.~~ \hugestar_{1\le i< j}~ S_i \sim \mathrm{Unif}~[0,1]
      \hspace{0.5em}*\hspace{0.5em} K \asequal \argmax_{1\le i<j} S_i^{1/w_i}
      \hspace{0.5em}*\hspace{0.5em} M \asequal \max_{1\le i<j} S_i^{1/w_i}  }
  \label{eq:loopinv1-full} 
\end{align}
The proof that our program maintains this invariant is completely standard for separation logics,
so we elide the details and focus on the challenge of
deriving the desired post-condition (\ref{eq:post-full}) given the
setup (\ref{eq:post-full1}).
To show (\ref{eq:post-full}) in the case $i=k$,
note that
\begin{align}
  \Pr(K=k) = \Pr\left(S_k^{1/w_k} > S_j^{1/w_j} \text{ for all } j \ne k\right),
\end{align}
since $K$ is defined to be the $\argmax$ of $j$ over all $S_j^{1/w_j}$.
This is an unwieldy probability to compute directly. The trick is to use conditioning:
in this case, fixing $S_k$ to a deterministic $s_k$ gives
  {\small 
\begin{align}
  \Pr(K=k \mid S_k = s_k) &= \Pr\left(s_k^{1/w_k} > S_j^{1/w_j} \text{ for all } j \ne k\right) \label{eq:eqset1:start-full} \\
    &= \Pr\left(s_k^{w_j/w_k} > S_j \text{ for all } j \ne k\right) & \text{Exponentiating} \\
    &= \prod_{j \ne k} \Pr\left(s_k^{w_j/w_k} > S_j\right) & \text{By conditional independence} \label{eq:indep-ex-full}\\ 
    &= \prod_{j \ne k} s_k^{w_j/w_k} & S_j\text{ uniform}\footnotemark \\
    &= \pow{s_k}{\frac{\sum_{j\ne k} w_j}{w_k}}.\label{eq:eqset1:end-full} 
\end{align}
  }%
\footnotetext{If $U$ uniform in $[0,1]$ then $\Pr(u > U) = u$.}%
Formally, this calculation occurs under the modality $\D_{s_k\gets S_k}$, which is introduced via \DIndep.
The expression $\Pr(E)$ abbreviates $\Ex[\ind[E]]$, the expectation of the indicator random variable $\ind[E]$.\footnote{%
If $E$ is an event then the random variable $\ind[E]$ is $1$ if $E$ holds and $0$ otherwise.}
A critical step occurs in Equation~\ref{eq:indep-ex-full}: each
$S_j$ is conditionally independent from all others given $S_k = s_k$. This
permits a critical simplification: the probability of the conjunction becomes a
product of simpler probabilities. This is an application of the derived rule
\begin{align}
  \tag{\textsc{Indep-Prod}}
  {
  \hugestar_i \own E_i ~~~\vdash~~~ \Pr\Big( \bigcap_i E_i \Big) = \prod_i \Pr(E_i),
  } \label{eq:indep-prod-full}
\end{align}
an immediate consequence of Lemma~\ref{lem:star-is-independence}.
Note that our modal treatment of conditioning leads to a nice separation of concerns here.
Because $\D$ respects entailment, facts like \ref{eq:indep-prod-full} that appear to be 
only about unconditional independence and unconditional
probability are automatically lifted to facts like Equation~(\ref{eq:indep-ex-full}),
with the expected conditional reading.

Finally, to complete the proof we connect the conditional
$\Pr(K=k \mid S_k = s_k)$ to the unconditional $\Pr(K=k)$
using the law of total expectation:
{\small 
\begin{align}
  \Pr(K=k)
  &= \Ex\left[\Pr(K=k\mid S_k)\right] & \text{Law of Total Expectation}\label{eq:eqset2:start-full}\\
  &= \Ex\left[\pow{S_k}{\frac{\sum_{j\ne k} w_j}{w_k}}\right] & \text{By }(\ref{eq:eqset1:end-full})\\
  &= \left(\frac{\sum_{j\ne k} w_j}{w_k} + 1\right)^{-1} & \text{$S_k$ uniform}\footnotemark\\
  &= \frac{w_k}{\sum_j w_j}.\label{eq:eqset2:end-full}
\end{align}}%
Unlike the calculation in Equations~\ref{eq:eqset1:start-full}--\ref{eq:eqset1:end-full},
which take place inside the modality $\D_{s_k\gets S_k}$,
this second calculation (Equations~\ref{eq:eqset2:start-full}--\ref{eq:eqset2:end-full}) takes place outside 
of it, as it computes the \emph{unconditional} probability $\Pr(K=k)$.
The gap between the two calculations is bridged by
the following instantiation of \textsc{C-Total-Expectation}:
{\footnotesize
\begin{align*}
  \D_{s_k \gets S_k} \Big(\Ex[\underbrace{\ind[K=k]}_{E}] 
  = \underbrace{\pow{s_k}{\frac{\sum_{j \ne k}{w_j}}{w_k}}}_{e} \Big)
  \land 
  \Big(
  \Ex\Big[\underbrace{\mathrm{pow}\Big(S_k, \frac{\sum_{j \ne k} w_j}{w_k} \Big)}_{e[S_k/s_k]}\Big]
  =
  \underbrace{\frac{w_k}{\sum_j w_j}}_{v}
  \Big)
  ~~\mathlarger{\vdash}~~
  \Ex[\underbrace{\ind[K=k]}_{E}] = \underbrace{\frac{w_k}{\sum_j w_j}}_{v}
\end{align*}
}%
\footnotetext{If $U$ uniform in $[0,1]$ then $\Ex[U^\alpha] = 1/({\alpha + 1})$.}%
Putting all this together yields a formal proof of correctness in Lilac.
The next page gives a fully annotated program.

\clearpage
{\small
\begin{align*}
  &\assert{\{ \top \}}\\
  &\hspace{1em}\appllet W \appleq \applret\applarray{w_1,\dots,w_n} \applin \\
  &\assert{\{ W \asequal (w_1,\dots,w_n) \}}\\
  &\hspace{1em}\appllet M \appleq \applret(-\infty) \applin \\
  &\assert{\{ W \asequal (w_1,\dots,w_n) ~~*~~ M\asequal -\infty  \}}\\
  &\hspace{1em}\appllet K \appleq \applret 0 \applin \\
  &\assert{\{ W\asequal (w_1,\dots,w_n) ~~*~~ M\asequal -\infty ~~*~~ K\asequal 0 \}}\\
  &\assert{\text{Let } \begin{aligned}
    I(i,M,K)
    \quad:=\quad\left(\begin{aligned}
       &W \asequal (w_1,\dots,w_n) ~~*~~ 1\le i\le n ~~*~~\\
       &\exists S_1 \dots S_i.~~ \hugestar_{1\le j< i}~ S_j \sim \mathrm{Unif}~[0,1]
          \hspace{0.3em}*\hspace{0.3em} K \asequal \argmax_{1\le j<i} S_j^{1/w_j}
          \hspace{0.3em}*\hspace{0.3em} M \asequal \max_{1\le j<i} S_j^{1/w_j}  
    \end{aligned}\right)
   \end{aligned}}\\
  &\assert{\{ I(1,M,K) \}}\\
  &\hspace{1em}\applforopen n{\appltuple{M,K}}i{\appltuple{M,K}}{\\
  &\hspace{1em}\assert{\{ I(i,M,K) \}}\\
  &\hspace{1em}\hspace{1em}\appllet S \appleq\applunif \applin \\
  &\hspace{1em}\assert{\{ I(i,M,K) ~~*~~ S\sim\unif~[0,1] \}}\\
  &\hspace{1em}\hspace{1em}\appllet U \appleq \applret(S\applexp\applpar{1\applkw{/}\applindex wi}) \applin \\
  &\hspace{1em}\assert{\{ I(i,M,K) ~~*~~ S\sim\unif~[0,1] ~~*~~ U\asequal S^{1/w_i} \}}\\
  &\hspace{1em}\hspace{1em}\applif~U>M\\
  &\hspace{1em}\hspace{1em}\applthen~\applret \appltuple{U,i}\\
  &\hspace{1em}\hspace{1em}\applelse~\applret \appltuple{M,K}}\applforclose\\
  &\hspace{1em}\assertc{ (M',K').\hspace{1em}
     \begin{aligned}
      &I(i,M,K) ~~*~~ S\sim\unif~[0,1] ~~*~~ U\asequal S^{1/w_i} ~~*~~ \\
      &M' \asequal \max(U,M) ~~*~~ U' \asequal \ite{U>M}iK
     \end{aligned}
    }\\
  &\hspace{1em}\assert{\{ (M',K').~~ I(i+1,M',K') \}}\\
  &\assert{\{ (M',K').~~I(n+1,M',K') \}}\\
  &\assertc{
    \existsrv S_1 \dots S_n.~\hugestar_i S_i \sim \mathrm{Unif}~[0,1] ~~*~~ 
    K \asequal \argmax_i S_i^{1/w_i}}\\
  &\assertc{\forall k. \Pr(K=k) = \frac{w_k}{\sum_j w_j}}
\end{align*}
}

To illustrate the proof of the final entailment,
we animate the proof state at each step in inference-rule notation,
in the style of interactive theorem provers such as Coq.
First we work backwards from the goal:

\begin{mathpar}
  \inferrule{\hugestar_i S_i \sim \mathrm{Unif}~[0,1] \\ K \asequal  \argmax_i S_i^{1/w_i}}
    {\Pr[K=k]=\frac{w_k}{\sum_j w_j}}
  \\
  \inferrule{\hugestar_i S_i \sim \mathrm{Unif}~[0,1] \\ K \asequal  \argmax_i S_i^{1/w_i}}
    {\Pr[\forall j\ne k.~S_k^{1/w_k}> S_j^{1/w_j}]=\frac{w_k}{\sum_j w_j}}
  \\
  \inferrule{\hugestar_i S_i \sim \mathrm{Unif}~[0,1] \\ K \asequal  \argmax_i S_i^{1/w_i}}
    {\Pr[\forall j\ne k.~S_k^{w_j/w_k}> S_j] =\frac{w_k}{\sum_j w_j}}
  \\
  \inferrule{\hugestar_i S_i \sim \mathrm{Unif}~[0,1] \\ K \asequal  \argmax_i S_i^{1/w_i}}
    {\Ex\left[\ind[\forall j\ne k.~S_k^{w_j/w_k}> S_j]\right] =\frac{w_k}{\sum_j w_j}}
  \\
  \inferrule{\hugestar_i S_i \sim \mathrm{Unif}~[0,1] \\ K \asequal  \argmax_i S_i^{1/w_i}}
    { \Ex\left[\prod_{j\ne k}\ind[S_k^{w_j/w_k}> S_j]\right] =\frac{w_k}{\sum_j w_j}}
\end{mathpar}
At this point we begin working forwards from the hypotheses,
using \DIndep~
to introduce the conditioning modality $\D_{s_k\gets S_k}$
with the aim of computing the conditional probability $\Pr(K=k\mid S_k=s_k)$.
\begin{mathpar}
  \inferrule{S_k\sim\mathrm{Unif}~[0,1]\land \D_{s_k\gets S_k} \hugestar_{j\ne k} S_j \sim \mathrm{Unif}~[0,1] 
        \\ K \asequal  \argmax_i S_i^{1/w_i}}
    { \Ex\left[\prod_{j\ne k}\ind[S_k^{w_j/w_k}> S_j]\right] =\frac{w_k}{\sum_j w_j}}
  \\
  \inferrule{S_k\sim\mathrm{Unif}~[0,1]\land \D_{s_k\gets S_k} \hugestar_{j\ne k} 
    S_j \sim \mathrm{Unif}~[0,1] \land \left(
    \Ex\left[\prod_{j\ne k}\ind[S_k^{w_j/w_k}> S_j]\right]
    =\Ex\left[\prod_{j\ne k}\ind[s_k^{w_j/w_k}> S_j]\right]\right)
      \\ K \asequal  \argmax_i S_i^{1/w_i}}
    { \Ex\left[\prod_{j\ne k}\ind[S_k^{w_j/w_k}> S_j]\right] =\frac{w_k}{\sum_j w_j}}
\end{mathpar}
Next, we use conditional independence of $\{S_j\}_{j\ne k}$ given $S_k$, encoded in the
iterated separating conjunction underneath $\D_{s_\gets S_k}$, to interchange product and expectation:
\begin{mathpar}
  \inferrule{S_k\sim\mathrm{Unif}~[0,1]\land \D_{s_k\gets S_k} \hugestar_{j\ne k} 
    S_j \sim \mathrm{Unif}~[0,1] \land    \left(
    \Ex\left[\prod_{j\ne k}\ind[S_k^{w_j/w_k}> S_j]\right]
     = \prod_{j\ne k}\Ex[\ind[s_k^{w_j/w_k}> S_j]]
    \right)
      \\ K \asequal  \argmax_i S_i^{1/w_i}}
    { \Ex\left[\prod_{j\ne k}\ind[S_k^{w_j/w_k}> S_j]\right] =\frac{w_k}{\sum_j w_j}}
\end{mathpar}
Now significant simplifications are possible, completing the first calculation
(Equations~\ref{eq:eqset1:start-full}--\ref{eq:eqset1:end-full}):
\begin{mathpar}
  \inferrule{S_i\sim\mathrm{Unif}~[0,1]\land \D_{s_k\gets S_k} \hugestar_{j\ne k} 
    S_j \sim \mathrm{Unif}~[0,1] \land \left(
    \Ex\left[\prod_{j\ne k}\ind[S_k^{w_j/w_k}> S_j]\right]
     = \prod_{j\ne k}s_k^{w_j/w_k}
    \right)
      \\ K \asequal  \argmax_i S_i^{1/w_i}}
    { \Ex\left[\prod_{j\ne k}\ind[S_k^{w_j/w_k}> S_j]\right] =\frac{w_k}{\sum_j w_j}}
  \\
  \inferrule{S_k\sim\mathrm{Unif}~[0,1]\land \D_{s_k\gets S_k} \hugestar_{j\ne k} 
    S_j \sim \mathrm{Unif}~[0,1] \land \left(
    \Ex\left[\prod_{j\ne k}\ind[S_k^{w_j/w_k}> S_j]\right]
    = \exp\left(s_k,\frac{\sum_{j\ne k} w_j}{w_k}\right)
    \right)
      \\ K \asequal  \argmax_i S_i^{1/w_i}}
    { \Ex\left[\prod_{j\ne k}\ind[S_k^{w_j/w_k}> S_j]\right] =\frac{w_k}{\sum_j w_j}}
  \\
  \inferrule{S_k\sim\mathrm{Unif}~[0,1]\land \D_{s_k\gets S_k}
    \Ex\left[\prod_{j\ne k}\ind[S_k^{w_j/w_k}> S_j]\right]
     = \exp\left(s_k,\frac{\sum_{j\ne k} w_j}{w_k}\right)}
    { \Ex\left[\prod_{j\ne k}\ind[S_k^{w_j/w_k}> S_j]\right] =\frac{w_k}{\sum_j w_j}}
\end{mathpar}
Having completed the computation of the conditional probability $\Pr(K=k\mid S_k=s_k)$
by working forwards from the hypotheses, we eliminate the conditioning modality by applying the
law of total expectation (\textsc{C-Total-Expectation}):
\begin{mathpar}
  \inferrule{S_k\sim\mathrm{Unif}~[0,1]\land
    \Ex\left[\prod_{j\ne k}\ind[S_k^{w_j/w_k}> S_j]\right]
     = \Ex\left[\exp\left(S_k,\frac{\sum_{j\ne k} w_j}{w_k}\right)\right]}
    { \Ex\left[\prod_{j\ne k}\ind[S_k^{w_j/w_k}> S_j]\right] =\frac{w_k}{\sum_j w_j}}
\end{mathpar}
The remainder of the calculation is straightforward, following
Equations~\ref{eq:eqset2:start-full}--\ref{eq:eqset2:end-full}:
\begin{mathpar}
  \inferrule{S_k\sim\mathrm{Unif}~[0,1]\land
    \Ex\left[\prod_{j\ne k}\ind[S_k^{w_j/w_k}> S_j]\right]
      = \frac1{\frac{\sum_{j\ne k} w_j}{w_k} + 1}}
    { \Ex\left[\prod_{j\ne k}\ind[S_k^{w_j/w_k}> S_j]\right] =\frac{w_k}{\sum_j w_j}}
  \\
  \inferrule{ \Ex\left[\prod_{j\ne k}\ind[S_k^{w_j/w_k}> S_j]\right]
  = \frac{w_k}{\sum_j w_j}}
    { \Ex\left[\prod_{j\ne k}\ind[S_k^{w_j/w_k}> S_j]\right] =\frac{w_k}{\sum_j w_j}}
  \\
  \inferrule{}{\mathbf{QED}}
\end{mathpar}

\section{Examples from Barthe et.\ al.} \label{sec:barthe-examples}

In this section we consider three of the five examples presented in \citet{barthe2019probabilistic}:
one-time pad, oblivious transfer, and private information retrieval.
In each example, the goal is to
verify the perfect secrecy of a cryptographic protocol. Perfect secrecy is established
via two methods: uniformity, which aims to show that each agent's view of others' data is
uniformly distributed at exit, and input independence, which aims to show that the
encrypted output of the protocol is independent of the input.

\citet{barthe2019probabilistic} use PSL to establish perfect secrecy of one-time pad and
private information retrieval via both uniformity and input independence,
and perfect secrecy of oblivious transfer via uniformity.
We will show how the same can be done in Lilac.
\citet{barthe2019probabilistic} also observe that
the input independence proof for oblivious transfer gets stuck, mentioning that 
even an informal proof sketch does not seem easy.
We will show that the
postcondition specifying input indendence for oblivious transfer is in fact unsatisfiable
by giving a countermodel.

To do this, we add some support for length-$n$ bitvectors and reasoning about uniformity.
For bitvectors,
\newcommand\appland{\applkw{\&\&}}
\begin{itemize}
\item Let $\ber^n 1/2$ be the uniform distribution on boolean-valued $n$-tuples.
\item Let $\flip^n~p$ be the $n$-ary generalization of $\flip$ that produces $n$-tuples of i.i.d. $\ber p$
       random variables, with the evident semantics.
\item If $X$ is a random variable valued in boolean $n$-tuples, let $\bigoplus X$ 
         be the random variable given by the $\oplus$ of all $n$ components.
\item Let $\appland^n$ and $\oplus^n$ be the lifting of boolean $\appland$ and $\oplus$ to $n$-tuples.
\item We will make use of algebraic properties of the bitvector xor operator $\oplus^n$ throughout;
  in particular the property that $x\oplus^n -$ is invertible.
\end{itemize}
Next, we import the requisite probability theory facts as derived rules.
For clarity of exposition, we suppress components of Lilac's semantic model 
(like underlying probability spaces, the random substitution, and the deterministic substitution)
in the proofs of these rules in favor of a presentation that more closely mirrors textbook probability.
The first few facts concern uniformity of random bitvectors:

\begin{lemma} \label{lem:unif-bij}
  If $X$ is a random $n$-bitvector and $f:\sembr\boolty^n\to\sembr\boolty^n$ a bijection then
  \[
    X \sim \ber^n 1/2 \vdash f(X) \sim \ber^n1/2.
  \]
\end{lemma}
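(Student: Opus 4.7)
The plan is to unfold the semantics of $E \sim \mu$ from Figure~\ref{fig:lilac-prob-semantics} and verify the two required clauses (measurability and distribution-match) directly, using the fact that on a finite set equipped with the powerset $\sigma$-algebra, the pushforward of the uniform distribution under a bijection is again uniform.

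Fix a configuration $(\gamma, D, (\calF, \mu))$ satisfying $X \sim \ber^n 1/2$. By the semantics of $(\sim)$, this gives us two facts: (i) $X(\gamma)\circ D$ is $\calF$-measurable, and (ii) the pushforward of $\mu$ along $X(\gamma)\circ D$ equals $\ber^n 1/2$. I need to establish the analogous two facts for $f(X)$. For measurability: since $\sembr\boolty^n$ is finite and carries the discrete $\sigma$-algebra (the powerset), every function $f : \sembr\boolty^n \to \sembr\boolty^n$ is automatically measurable, so $f \circ X(\gamma) \circ D$ is $\calF$-measurable as the composition of measurable maps.

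For the distribution: I would compute the pushforward directly. For each $y \in \sembr\boolty^n$, using that $f$ is a bijection and hence $\{f \circ X(\gamma) \circ D = y\} = \{X(\gamma) \circ D = f^{-1}(y)\}$,
\[
  \mu(\{\omega : f(X(\gamma)(D(\omega))) = y\}) = \mu(\{\omega : X(\gamma)(D(\omega)) = f^{-1}(y)\}) = \ber^n 1/2(\{f^{-1}(y)\}) = \tfrac{1}{2^n},
\]
where the middle equality uses hypothesis (ii) and the last equality uses that $\ber^n 1/2$ assigns mass $1/2^n$ to every singleton. Since this holds for all $y$, the pushforward of $\mu$ along $f \circ X(\gamma) \circ D$ is exactly $\ber^n 1/2$, which is what the semantics of $f(X) \sim \ber^n 1/2$ requires.

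There is no real obstacle here: the lemma is essentially the standard probability-theoretic fact ``uniform distributions on finite sets are invariant under bijections'' packaged inside Lilac's semantics. The only subtlety to watch is that the bijection $f$ must be treated as a measurable expression so that $f(X)$ is a well-typed random expression via \textsc{T-RandE}; this holds because all maps between finite discrete measurable spaces are measurable. Once that is noted, the proof is a short unfolding of definitions.
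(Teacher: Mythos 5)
Your proof is correct and follows essentially the same route as the paper, whose entire argument is the one-line computation $\Pr[f(X)=x]=\Pr[X=f^{-1}(x)]=1/2^n$; you merely make explicit the semantic unfolding and the measurability clause that the paper deliberately suppresses in this appendix for readability.
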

\begin{proof}
  We have $\Pr[f(X) = x] = \Pr[X = f^{-1}(x)] = 1/2^n$ for all $x$.
\end{proof}

\begin{lemma}\label{lem:unif-fuse}
  If $X$ is a random $n$-bitvector and $Y$ a random $m$-bitvector then
  \[
    X\sim \ber^n 1/2 ~~*~~ Y \sim \ber^m 1/2 \vdashiff (X,Y) \sim \ber^{m+n}1/2
  \]
\end{lemma}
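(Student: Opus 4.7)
The plan is to prove both directions of the biconditional by unfolding the semantics of $*$ and $\sim$ from Figure~\ref{fig:lilac-prob-semantics}, using the key combinatorial fact that the uniform distribution $\ber^{n+m} 1/2$ on $\{0,1\}^{n+m}$ coincides with the product measure $\ber^n 1/2 \otimes \ber^m 1/2$, since $2^{-n} \cdot 2^{-m} = 2^{-(n+m)}$ on every singleton event. Throughout, I suppress the ambient $(\gamma, D, \calP)$ as in the proof of Lemma~\ref{lem:unif-bij}.

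For the forward direction, assume a splitting $\calP_X \pdot \calP_Y \plte \calP$ such that $X$ is $\calP_X$-measurable and distributed as $\ber^n 1/2$, and symmetrically for $Y$. Because the independent combination $\calP_X \pdot \calP_Y = (\angled{\calF_X, \calF_Y}, \rho)$ contains both $\calF_X$ and $\calF_Y$, the pair $(X,Y)$ is measurable in the combined space. The defining property of independent combination (Definition~\ref{def:independent-combination}) then gives, for each singleton $(x,y)$,
\[
\rho\big(X^{-1}(x) \cap Y^{-1}(y)\big) = \mu_X(X^{-1}(x)) \cdot \mu_Y(Y^{-1}(y)) = 2^{-n} \cdot 2^{-m} = 2^{-(n+m)},
\]
which is exactly $\ber^{n+m} 1/2$.

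For the backward direction, assume $(X,Y)$ is $\calP$-measurable with distribution $\ber^{n+m} 1/2$. Let $\calP_X$ and $\calP_Y$ be the restrictions of $\calP$ to the pullback $\sigma$-algebras $\calF_X$ and $\calF_Y$ of $X$ and $Y$ respectively. The marginals of a uniform joint are themselves uniform (each coordinate marginal assigns $2^m \cdot 2^{-(n+m)} = 2^{-n}$ to each $x$, and dually for $Y$), so $X$ is $\calF_X$-measurable with distribution $\ber^n 1/2$ and likewise for $Y$. It remains to exhibit the splitting $\calP_X \pdot \calP_Y \plte \calP$. By Lemma~\ref{lem:star-is-independence}, this reduces to showing that $X$ and $Y$ are mutually independent under $\calP$, which holds on atoms because $\Pr[(X,Y) = (x,y)] = 2^{-(n+m)} = \Pr[X=x]\Pr[Y=y]$, and extends to all of $\calF_X \times \calF_Y$ by summing.

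I expect the only real technical subtlety to be the backward direction's construction of the splitting: one must ensure not merely that $X \indep Y$ in $\calP$, but that the pullback probability spaces actually compose via $(\pdot)$ in the Kripke resource monoid of Theorem~\ref{thm:spaces-form-a-krm}. This is precisely what Lemma~\ref{lem:star-is-independence} provides --- it is proved by building the combined space as the pullback of $(X,Y)$ and invoking the $\pi$-$\lambda$ theorem to lift atomic factorization to the generated $\sigma$-algebra --- so once atomic-level independence is verified, the rest of the argument is immediate.
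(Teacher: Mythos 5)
Your proof is correct and rests on the same core calculation as the paper's --- the factorization $2^{-n}\cdot 2^{-m} = 2^{-(n+m)}$ on singletons --- though the paper states it as a one-line textbook-probability computation that explicitly covers only the forward entailment and deliberately suppresses all model-level detail. Your additional work unfolding the semantics of $*$ and $\sim$, computing the marginals, and invoking Lemma~\ref{lem:star-is-independence} to construct the splitting for the backward direction fills in exactly what the paper leaves implicit.
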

\begin{proof}
  Calculation gives
  \[
    \Pr[(X,Y)=(x,y)]=\Pr[X=x,Y=y]\stackrel{(a)}=\Pr[X=x]\Pr[Y=y] = (1/2^m)(1/2^n) = 1/2^{m+n}\]
    for all $x,y$ as desired. Equation $(a)$ follows from independence of $X$ and $Y$.
\end{proof}
This next lemma encodes the key fact of probability theory underlying the perfect secrecy of the examples
we will consider in the next section. Intuitively, it states that any random variable which is ``conditionally 
uniformly distributed'' (that is, uniformly distributed conditional on some other random variable) is
uniformly distributed proper.
\begin{lemma} \label{lem:indep-unif}
  If $X$ is a random variable taking on finitely many values\footnote{%
  Though we expect this restriction can be lifted, $\operatorname{cod}(X)$ finite suffices for our examples.}
  and $Y$ a random $n$-bitvector then
  \[
    \own X \land \D_{x\gets X} (Y \sim \ber^n 1/2) \hspace{0.5em}\vdash\hspace{0.5em} \own X ~~*~~ Y \sim \ber^n1/2
  \]
\end{lemma}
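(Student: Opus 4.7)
The plan is to work in Lilac's semantic model directly and construct an explicit splitting of the underlying probability space that witnesses the separating conjunction. Fix a configuration $(\gamma, D, (\calF, \mu))$ satisfying the left-hand side. From $\own X$ we get that $X$ is $\calF$-measurable, and from $\D_{x\gets X}(Y \sim \ber^n 1/2)$ we get that $Y$ is $\calF$-measurable as well, since the atomic proposition $Y \sim \ber^n 1/2$ already encodes $\calF$-measurability of $Y$ in each conditioned space (which shares the same $\sigma$-algebra $\calF$ with the ambient one).

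Next I would let $\calF_X$ and $\calF_Y$ be the pullback sub-$\sigma$-algebras of $\calF$ along $X$ and $Y$, which are countably generated because the codomains of $X$ and $Y$ are finite. The heart of the argument is to show that $(\calF_X, \mu|_{\calF_X})$ and $(\calF_Y, \mu|_{\calF_Y})$ admit an independent combination $\plte (\calF, \mu)$. By Lemma \ref{pairwise-intersections-pi-system} and a $\pi$-$\lambda$ argument, it suffices to verify the product formula on rectangles $A \cap B$ with $A = \{X \in S\} \in \calF_X$ and $B = \{Y \in T\} \in \calF_Y$. Extending $\mu$ to a Borel measure $\mu'$ on the Hilbert cube (using $\calF\in\Mdis$) and taking any disintegration $\{\nu_x\}$ of $\mu'$ along $X$, the hypothesis on $\D$ forces $\nu_x|_\calF$ to make $Y$ uniform for almost all $x$, so the definition of disintegration and a total-expectation-style calculation give
\[
\mu(A \cap B) \;=\; \int \ind_A(\omega) \, \nu_{X(\omega)}(\{Y\in T\}) \, d\mu(\omega) \;=\; \mu(A) \cdot \tfrac{|T|}{2^n}.
\]
Specializing to $A = \Omega$ yields $\mu(B) = |T|/2^n$, so $Y$ is unconditionally uniform; substituting back gives $\mu(A \cap B) = \mu(A)\mu(B)$, establishing the product formula on rectangles and hence independence.

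With independence in hand, the independent combination $(\calF_X, \mu|_{\calF_X}) \pdot (\calF_Y, \mu|_{\calF_Y})$ exists by Lemma \ref{lem:independent-combinations-are-unique} and sits below $(\calF, \mu)$. On the $X$-side, $\own X$ holds trivially because $X$ is $\calF_X$-measurable by construction; on the $Y$-side, $Y \sim \ber^n 1/2$ holds because $Y$ is $\calF_Y$-measurable and its marginal under $\mu|_{\calF_Y}$ is uniform as computed above. This exhibits the splitting witnessing $\own X * (Y \sim \ber^n 1/2)$. The main obstacle will be marshalling the definition of $\D$'s semantics (which quantifies over all Borel extensions and all disintegrations) into the single concrete integrated equation above; the key observation, mirroring the proof of \DIndep\ in Appendix~\ref{app:C-laws}, is that the integrated value $\mu(A\cap B)$ is a functional of $\mu$ alone and so does not depend on which extension or disintegration is chosen.
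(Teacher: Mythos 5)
Your proposal is correct and rests on the same core calculation as the paper's proof: factoring $\Pr[X\in S,\,Y\in T]$ through the conditional distribution of $Y$ given $X$, using the assumed conditional uniformity to obtain $\Pr[X\in S]\cdot|T|/2^n$, and reading off both the unconditional uniformity of $Y$ and the independence $X\indep Y$ from that single identity. The only difference is that you carry out in full the model-level plumbing (pullback $\sigma$-algebras, existence of a Borel extension and disintegration, the explicit splitting witnessing $*$) that the paper deliberately suppresses in this family of derived rules in favor of a textbook-style presentation.
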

\begin{proof}
  It suffices to show $X$ and $Y$ are independent and $Y$ uniform.
  This amounts to showing the equality $\Pr[X=x,Y=y] = \Pr[X=x]/2^n$ for all $x,y$.
  By assumption $Y$ is uniformly distributed conditional on $X$, so
  \[ \Pr[X=x,Y=y] = \Pr[Y=y\mid X=x]\Pr[X=x] = \frac1{2^n}\Pr[X=x] \]
  as desired.
\end{proof}
To avoid verbosity, we use the abbreviation
$\displaystyle \Down_{x\gets X} P := \own X \land \D_{x\gets X} P$.\\
\noindent Thus the entailment given by Lemma~\ref{lem:indep-unif} can be written 
$\displaystyle\Down_{x\gets X} (Y\sim\ber^n 1/2) \hspace{0.5em}\vdash\hspace{0.5em} \own X ~~*~~ Y \sim \ber^n1/2$.\\
We will also frequently make use of the following variant of \DIndep:
\begin{lemma}\label{lem:dindep-variant}
  The following entailment holds:
    \[  \own X ~~*~~ P  \hspace{0.5em}\vdash\hspace{0.5em} \Down_{x\gets X} P. \]
\end{lemma}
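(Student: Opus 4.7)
The plan is to prove this by unfolding the definition $\Down_{x\gets X} P := \own X \land \D_{x\gets X} P$ and establishing each conjunct separately, then combining them with conjunction introduction. So I need to derive both $\own X * P \vdash \own X$ and $\own X * P \vdash \D_{x\gets X} P$ from the same premise.

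For the second conjunct, there is essentially nothing to do: this is precisely the \DIndep~ rule from Figure~\ref{fig:C-props}, specialised with $E := X$. So this half of the proof is immediate by appeal to a rule that has already been validated in Appendix~\ref{app:C-laws}.

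For the first conjunct, the claim is that $\own X$ is preserved when we drop a separated resource $P$ on the right. Semantically, if $\gamma,D,\calP \vDash \own X * P$, then there exist $\calP_X, \calP_P$ with $\calP_X \pdot \calP_P \plte \calP$ such that $\gamma,D,\calP_X \vDash \own X$. By the definition of the Kripke ordering from Theorem~\ref{thm:spaces-form-a-krm}, we have $\calP_X \plte \calP_X \pdot \calP_P \plte \calP$, so the underlying $\sigma$-algebra of $\calP_X$ is contained in that of $\calP$. Since $\own X$ asserts $\calF$-measurability of $X(\gamma) \circ D$ and measurability is preserved under $\sigma$-algebra enlargement, $\own X$ is upward-closed in $(\plte)$, giving $\gamma,D,\calP \vDash \own X$. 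This is the usual affineness property of $\own$ in this model and is routine.

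Combining the two by $\land$-introduction yields $\own X * P \vdash \own X \land \D_{x\gets X} P$, which is exactly $\Down_{x\gets X} P$ by definition. The only place where anything subtle happens is in the appeal to \DIndep; the rest is structural. Since \DIndep~ itself is already established (and in fact is the content of Lemma~\ref{app:C-intro} in the appendix), there is no genuine obstacle here --- this lemma is best viewed as a convenience packaging of \DIndep~ together with the affineness of $\own$, tailored for use in the cryptographic examples that follow.
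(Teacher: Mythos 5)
Your proposal is correct and matches the paper's own proof in all essentials: the paper likewise duplicates the premise with $\land$, weakens one copy to $\own X$ (the affineness of $*$ together with upward-closure of $\own$ under $\plte$, which you spell out semantically), applies \DIndep{} to the other copy, and folds in the definition of $\Down$. No gaps.
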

\begin{proof} By the following chain:
  \begin{align*}
    &\own X * P \\
    &\vdash (\own X * P)\land (\own X * P) & \text{idempotency of }\land \\
    &\vdash \own X\land (\own X * P) & \text{drop a conjunct} \\
    &\vdash \own X\land \D_{x\gets X} P & \DIndep  \\
    &\vdash \Down_{x\gets X} P & \text{definition of }\Down
  \end{align*}
\end{proof}
The final derived rule we will make use of encodes the probability-theoretic fact that,
when proving an assertion of the form $X\sim\mu ~~*~~ P$,
one can first establish that $X$ has distribution $\mu$,
and then separately establish independence of $X$ from other random variables.
\begin{lemma}\label{lem:separate-own-dist}
  Let $X$ be a random variable, $\mu$ a distribution, and $P$ a proposition.
  \[
    (X\sim \mu)\land (\own X ~~*~~ P)\vdash (X\sim \mu) ~~*~~ P
  \]
\end{lemma}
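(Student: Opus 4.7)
The plan is to unfold the premise semantically, extract a splitting witnessing $\own X * P$, and then \emph{replace} the $\own X$-side of that splitting with the minimal probability space that owns $X$, namely the pullback $\sigma$-algebra of $X$ equipped with the restriction of the ambient measure. This minimal space will already carry the full distributional information $X \sim \mu$.

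Concretely, fix a configuration $(\gamma, D, (\calF,\mu_\calP))$ satisfying the left-hand side. From the second conjunct, obtain a splitting $\calP_X \pdot \calP_P \plte (\calF,\mu_\calP)$ with $(\gamma,D,\calP_X)\vDash \own X$ and $(\gamma,D,\calP_P)\vDash P$. Define $\calP_X' := (\calF_X', \mu_\calP|_{\calF_X'})$, where $\calF_X'$ is the pullback $\sigma$-algebra $\{(X(\gamma)\circ D)^{-1}(E) \mid E\in \Sigma_A\}$. Since $X(\gamma)\circ D$ is $\calF_X$-measurable, $\calF_X' \subseteq \calF_X$, and since $\mu_{\calP_X}$ is itself the restriction of $\mu_\calP$ (by $\calP_X \plte \calP$), we get $\calP_X' \plte \calP_X$. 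By the monotonicity axiom of the KRM (part~(4) of Definition~\ref{def:krm}), applied to $\calP_X'\plte \calP_X$ and $\calP_P\plte\calP_P$, the combination $\calP_X' \pdot \calP_P$ is defined and $\calP_X'\pdot \calP_P \plte \calP_X\pdot \calP_P \plte \calP$.

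It then remains to check the two conjuncts of the post-state splitting. For $(\gamma,D,\calP_X')\vDash X\sim\mu$: measurability of $X(\gamma)\circ D$ with respect to $\calF_X'$ is immediate by construction, and the pushforward distribution depends only on the values of $\mu_\calP$ on events in $\calF_X'$, which are preserved under restriction; the first conjunct of the premise gives that this pushforward is $\mu(\gamma)$. The assertion $(\gamma,D,\calP_P)\vDash P$ is inherited unchanged. Together with $\calP_X'\pdot\calP_P \plte \calP$, this witnesses $(\gamma,D,\calP)\vDash (X\sim\mu)*P$.

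The main obstacle, and really the only nontrivial point, is the interaction with the restricted KRM $\Mdis$: one must check that $\calP_X'$ actually lies in $\Mdis$. Finite footprint and extendability to a Borel measure are inherited from $\calP$ via $\calP_X'\plte\calP$. Countable generation of $\calF_X'$ is the delicate bit, but it follows because $A$'s $\sigma$-algebra is countably generated in all Lilac types of interest, whence the pullback is generated by the countable family $\{(X(\gamma)\circ D)^{-1}(B_i)\}$ where $\{B_i\}$ generate $\Sigma_A$. The details appear in the appendix.
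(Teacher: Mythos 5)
Your proof is correct and follows essentially the same route as the paper's: both pass to the pullback $\sigma$-algebra of $X$ equipped with the restriction of the ambient measure, observe that it sits below the ownership piece of the splitting (so monotonicity of $(\pdot)$ keeps it independent of the $P$-piece), and note that it already determines the pushforward distribution asserted by the first conjunct. Your additional check that the pullback space stays inside the restricted KRM is a detail the paper's proof leaves implicit, but it is handled correctly.
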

\begin{proof}
  Suppose $X$ is distributed as $\mu$ with respect to probability space $\mathcal P_X$,
  that $P$ holds in space $\mathcal P$,
  and that there exists a space $\mathcal P_X'$ independent of $\mathcal P$ for which $X$ is $\mathcal P_X'$-measurable.
  Let $\mathcal Q$ be the pullback $\sigma$-algebra of $X$.
  We have that $\mathcal P_X\sqsupseteq \mathcal Q\sqsubseteq\mathcal P_X'$ and that
  $\mathcal P_X$ and $\mathcal P_X'$ agree on $\mathcal Q$. Thus, $\mathcal Q$ is independent of $\mathcal P$
  and the composite $\mathcal Q\pdot \mathcal P$ witnesses $X\sim\mu~~*~~ P$ as desired.
\end{proof}

\subsection{Verification of one-time pad example} \label{sec:one-time-pad}

The one-time pad protocol is modelled by the following probabilistic program,
parameterized by a constant $m$ representing the message being encrypted:

\begin{equation}
  {\small
  \tag{\textsc{OneTimePad}}
\begin{aligned}
  &\appllet K\appleq\flip~1/2 \applin\\
  &\appllet C\appleq\applret(K\oplus m) \applin\\
  &\applret C
\end{aligned}
\label{prog:onetimepad}
  }
\end{equation}

\subsubsection{Uniformity}

Uniformity is specified by the triple
$
  \triple\top{\ref{prog:onetimepad}(m)}C{C\sim\flip~1/2}
$.\\
This can be established by the following annotation:
\begin{align*}
  &\assert{\{\top\}} \\
  &\hspace{1em}\appllet K\appleq\flip~1/2 \applin\\
  &\assert{\{K\sim \ber1/2\}} \\
  &\hspace{1em}\appllet C\appleq\applret(K\oplus m) \applin\\
  &\assert{\{K\sim \ber1/2~~*~~C\asequal K\oplus m\}} \\
  &\hspace{1em}\applret C\\
  &\assert{\{C.~K\sim \ber1/2~~*~~C\asequal K\oplus m\}} & \\
  &\assert{\{C.~K\sim \ber1/2~~*~~C\oplus m\asequal K\}} & \text{rearranging the equality} \\
  &\assert{\{C.~(C\oplus m)\sim \ber1/2\}} & \text{substituting away }K \\
  &\assert{\{C.~C\sim \ber1/2\}} & \text{Lemma~\ref{lem:unif-bij}} \\
\end{align*}

\subsubsection{Input independence}

Input independence is specified by the triple
\[\triple{\own M}{\ref{prog:onetimepad}(M)}C{\own M ~~*~~ C\sim\ber~1/2}.\]
This can be established by the following annotation:

\begin{align*}
  &\assert{\{\own M\}} \\
  &\hspace{1em}\appllet K\appleq \flip~1/2 \applin\\
  &\assert{\{\own M ~~*~~ K\sim \ber1/2\}} \\
  &\hspace{1em}\appllet C\appleq \applret(K\oplus M) \applin\\
  &\assert{\{\own M ~~*~~ K\sim \ber1/2~~*~~C\asequal K\oplus M\}} \\
  &\hspace{1em}\applret C\\
  &\assert{\{C.~\own M ~~*~~ K\sim \ber1/2~~*~~C\asequal K\oplus M\}} \\
  &\assert{\{C.~\own M ~~*~~ K\sim \ber1/2~~*~~C\oplus M\asequal K\}} & \text{rearranging the equality} \\
  &\assert{\{C.~\own M ~~*~~ (C\oplus M)\sim \ber1/2\}} & \text{substituting away }K \\
  &\assertc{C.~\Down_{m\gets M} ((C\oplus M)\sim \ber1/2)} & \text{Lemma~\ref{lem:dindep-variant}} \\
  &\assertc{C.~\Down_{m\gets M} ((C\oplus m)\sim \ber1/2)} & \text{replacing }M\text{ with }m \\
  &\assertc{C.~\Down_{m\gets M} (C\sim \ber1/2)} & \text{Lemma~\ref{lem:unif-bij}} \\
  &\assertc{C.~\own M ~~*~~ C\sim \ber1/2} & \text{Lemma~\ref{lem:indep-unif}} \\
\end{align*}

\subsection{Verification of private information retrieval example} \label{sec:pir}

Let $i$ be an $n$-tuple of boolean values with only a single component set to $\vtrue$.
The private information retrieval protocol is modelled by the following probabilistic program,
parameterized by $i$:
\begin{equation}
  {\small
  \tag{\textsc{PrivateInformationRetrieval}}
\begin{aligned}
  &\appllet Q_0\appleq \flip^n~1/2 \applin\\
  &\appllet Q_1\appleq Q_0 \oplus^n i \applin\\
  &\appllet A_0\appleq Q_0 \appland^n d \applin\\
  &\appllet A_1\appleq Q_1 \appland^n d \applin\\
  &\appllet R_0\appleq \applret\left(\bigoplus A_0\right) \applin \\
  &\appllet R_1\appleq \applret\left(\bigoplus R_1\right) \applin \\
  &\appllet R\appleq \applret(R_0\oplus R_1) \applin \\
  &\applret (Q_0, Q_1,A_0, A_1, R_0, R_1, R)
\end{aligned}
\label{prog:pir}
  }
\end{equation}

\subsubsection{Uniformity}

Uniformity is specified by the triple
\[
  \triple\top{\ref{prog:pir}(i)}{(Q_0, Q_1, A_0, A_1, R_0, R_1, R)}
    {~ Q_0 \sim \ber^n1/2 ~~\land~~ Q_1 \sim \ber^n 1/2}.
\]
This can be established by the following annotation:
{\small
\begin{align*}
  &\assert{\{\top\}} \\
  &\hspace{1em}\appllet Q_0\appleq \flip^n~1/2 \applin\\
  &\assert{\{Q_0 \sim \ber^n1/2\}} \\
  &\hspace{1em}\appllet Q_1\appleq Q_0 \oplus^n i \applin\\
  &\assert{\{Q_0 \sim \ber^n1/2 ~~*~~ Q_1 \asequal Q_0 \oplus^n i\}} \\
  &\hspace{1em}\appllet A_0\appleq Q_0 \appland^n d \applin\\
  &\assert{\{Q_0 \sim \ber^n1/2 ~~*~~ Q_1 \asequal Q_0 \oplus^n i ~~*~~ A_0 \asequal Q_0 \appland^n d \}} \\
  &\hspace{1em}\appllet A_1\appleq Q_1 \appland^n d \applin\\
  &\assert{\{Q_0 \sim \ber^n1/2 ~~*~~ Q_1 \asequal Q_0 \oplus^n i ~~*~~ A_0 \asequal Q_0 \appland^n d ~~*~~
     A_1 \asequal Q_1 \appland^n d \}} \\
  &\hspace{1em}\appllet R_0\appleq \applret\left(\bigoplus A_0\right) \applin \\
  &\assertc{
     Q_0 \sim \ber^n1/2 ~~*~~ Q_1 \asequal Q_0 \oplus^n i ~~*~~ A_0 \asequal Q_0 \appland^n d ~~*~~ A_1\asequal Q_1\appland^n d~~*~~
     R_0=\bigoplus A_0
    } \\
  &\hspace{1em}\appllet R_1\appleq \applret\left(\bigoplus R_1\right) \applin \\
  &\assertc{
     Q_0 \sim \ber^n1/2 ~~*~~ Q_1 \asequal Q_0 \oplus^n i ~~*~~ A_0 \asequal Q_0 \appland^n d ~~*~~ A_1\asequal Q_1\appland^n d~~*~~
     R_0=\bigoplus A_0 ~~*~~ R_1=\bigoplus A_1
    } \\
  &\hspace{1em}\appllet R\appleq \applret(R_0\oplus R_1) \applin \\
  &\assertc{\begin{aligned}
     &Q_0 \sim \ber^n1/2 ~~*~~ Q_1 \asequal Q_0 \oplus^n i ~~*~~ A_0 \asequal Q_0 \appland^n d ~~*~~ A_1\asequal Q_1\appland^n d~~*~~\\
     &R_0=\bigoplus A_0 ~~*~~ R_1=\bigoplus A_1 ~~*~~ R\asequal R_0 \oplus R_1
    \end{aligned}} \\
  &\hspace{1em}\applret (Q_0, Q_1,A_0, A_1, R_0, R_1, R) \\
  &\assertc{(Q_0, Q_1,A_0, A_1, R_0, R_1, R).\hspace{0.5em}\begin{aligned}
     &Q_0 \sim \ber^n1/2 ~~*~~ Q_1 \asequal Q_0 \oplus^n i ~~*~~ A_0 \asequal Q_0 \appland^n d ~~*~~ A_1\asequal Q_1\appland^n d~~*~~\\
     &R_0=\bigoplus A_0 ~~*~~ R_1=\bigoplus A_1 ~~*~~ R\asequal R_0 \oplus R_1
    \end{aligned}} \\[-0.75em]
  &\hspace{13em}\assert{\vdots} \\[-0.25em]
  &\assertc{(Q_0, Q_1,A_0, A_1, R_0, R_1, R).~~
     Q_0 \sim \ber^n1/2 ~~\land~~ Q_1 \sim \ber^n1/2} \\
\end{align*}}
The final entailment (hidden by the vertical ellipses) can be established as follows:
\begin{align*}
  &\left(\begin{aligned}
     &Q_0 \sim \ber^n1/2 ~~*~~ Q_1 \asequal Q_0 \oplus^n i ~~*~~ A_0 \asequal Q_0 \appland^n d ~~*~~ \\
     &R_0=\bigoplus A_0 ~~*~~ R_1=\bigoplus A_1 ~~*~~ R\asequal R_0 \oplus R_1
    \end{aligned}\right)\\
  &\vdash Q_0 \sim \ber^n1/2 ~~*~~ Q_1 \asequal Q_0 \oplus^n i & \text{dropping some conjuncts}\\
  &\vdash (Q_0 \sim \ber^n1/2 ~~*~~ Q_1 \asequal Q_0 \oplus^n i)\land
     (Q_0 \sim \ber^n1/2 ~~*~~ Q_1 \asequal Q_0 \oplus^n i)
     & \text{by }P\vdash P\land P\\
  &\vdash (Q_0 \sim \ber^n1/2)\land
     (Q_0 \sim \ber^n1/2 ~~*~~ Q_1 \asequal Q_0 \oplus^n i)
     & \text{weakening first conjunct}\\
  &\vdash (Q_0 \sim \ber^n1/2) ~~\land~~ (Q_0 \sim \ber^n1/2 ~~*~~ Q_1 \oplus^n i \asequal Q_0)
    & \text{rearranging the equality}\\
  &\vdash (Q_0 \sim \ber^n1/2) ~~\land~~ ((Q_1\oplus^n i) \sim \ber^n1/2)
    & \text{substituting away }Q_0\\
  &\vdash (Q_0 \sim \ber^n1/2) ~~\land~~ (Q_1 \sim \ber^n1/2)
    & \text{Lemma~\ref{lem:unif-bij}}
\end{align*}

\subsubsection{Input independence}

Let $I$ be a random $n$-bitvector.\footnote{%
Following \citet{barthe2019probabilistic},  we don't even need require that $I$
always have only a single component set to $\vtrue$
to establish input independence.}
Input independence is specified by the triple
\[
  \triple{\own I}{\ref{prog:pir}(I)}C{(\own I * \own Q_0) \land (\own I * \own Q_1)}.
\]
This can be established by the following annotation:
{\small
\begin{align*}
  &\assert{\{\own I\}} \\
  &\hspace{1em}\appllet Q_0\appleq \flip^n~1/2 \applin\\
  &\assert{\{\own I ~~*~~ Q_0 \sim \ber^n1/2\}} \\
  &\hspace{1em}\appllet Q_1\appleq Q_0 \oplus^n I \applin\\
  &\assert{\{\own I ~~*~~ Q_0 \sim \ber^n1/2 ~~*~~ Q_1 \asequal Q_0 \oplus^n I\}} \\
  &\hspace{1em}\appllet A_0\appleq Q_0 \appland^n d \applin\\
  &\assert{\{\own I ~~*~~ Q_0 \sim \ber^n1/2 ~~*~~ Q_1 \asequal Q_0 \oplus^n I ~~*~~ A_0 \asequal Q_0 \appland^n d \}} \\
  &\hspace{1em}\appllet A_1\appleq Q_1 \appland^n d \applin\\
  &\assert{\{\own I ~~*~~ Q_0 \sim \ber^n1/2 ~~*~~ Q_1 \asequal Q_0 \oplus^n I ~~*~~ A_0 \asequal Q_0 \appland^n d ~~*~~
     A_1 \asequal Q_1 \appland^n d \}} \\
  &\hspace{1em}\appllet R_0\appleq \applret\left(\bigoplus A_0\right) \applin \\
  &\assertc{
     \own I ~~*~~ Q_0 \sim \ber^n1/2 ~~*~~ Q_1 \asequal Q_0 \oplus^n I ~~*~~ A_0 \asequal Q_0 \appland^n d ~~*~~ A_1\asequal Q_1\appland^n d ~~*~~
     R_0=\bigoplus A_0
    } \\
  &\hspace{1em}\appllet R_1\appleq \applret\left(\bigoplus R_1\right) \applin \\
  &\assertc{\begin{aligned}
     &\own I ~~*~~ Q_0 \sim \ber^n1/2 ~~*~~ Q_1 \asequal Q_0 \oplus^n I ~~*~~ A_0 \asequal Q_0 \appland^n d ~~*~~ A_1\asequal Q_1\appland^n d ~~*\\
     &R_0=\bigoplus A_0 ~~*~~ R_1=\bigoplus A_1 ~~*~~ R\asequal R_0 \oplus R_1
    \end{aligned}} \\
  &\hspace{1em}\appllet R\appleq \applret(R_0\oplus R_1) \applin \\
  &\assertc{\begin{aligned}
     &\own I ~~*~~ Q_0 \sim \ber^n1/2 ~~*~~ Q_1 \asequal Q_0 \oplus^n I ~~*~~ A_0 \asequal Q_0 \appland^n d ~~*~~ A_1\asequal Q_1\appland^n d ~~*\\
     &R_0=\bigoplus A_0 ~~*~~ R_1=\bigoplus A_1 ~~*~~ R\asequal R_0 \oplus R_1
    \end{aligned}} \\
  &\hspace{1em}\applret (Q_0, Q_1,A_0, A_1, R_0, R_1, R) \\
  &\assertc{(Q_0, Q_1,A_0, A_1, R_0, R_1, R).\hspace{0.5em}\begin{aligned}
     &\own I ~~*~~ Q_0 \sim \ber^n1/2 ~~*~~ Q_1 \asequal Q_0 \oplus^n I ~~*~~ A_0 \asequal Q_0 \appland^n d ~~* \\
     &A_1\asequal Q_1\appland^n d ~~*~~ R_0=\bigoplus A_0 ~~*~~ R_1=\bigoplus A_1 ~~*~~ R\asequal R_0 \oplus R_1
    \end{aligned}} \\
  &\assertc{(Q_0, Q_1,A_0, A_1, R_0, R_1, R).~~
     \own I ~~*~~ Q_0 \sim \ber^n1/2 ~~*~~ Q_1 \asequal Q_0 \oplus^n I
  } \\[-0.5em]
  &\hspace{18em}\assert{\vdots}\\
  &\assertc{(Q_0, Q_1,A_0, A_1, R_0, R_1, R).~~
     (\own I * \own Q_0) \land
     (\own I * \own Q_1)
  }
\end{align*}
}%
The final entailment (hidden by the vertical ellipses) can be established as follows.
The left conjunct of the postcondition follows from $Q_0\sim \ber^n 1/2\vdash \own Q_0$ and dropping the equality
$Q_1\asequal Q_0\oplus^n I$.
The right conjunct is established by the following chain of entailments:
\begin{align*}
  & \own I ~~*~~ Q_0 \sim \ber^n1/2~~*~~ Q_1\asequal Q_0\oplus^n I \\
  & \vdash\Down_{i\gets I} (Q_0 \sim \ber^n1/2~~*~~ Q_1\asequal Q_0\oplus^n I) & \text{Lemma~\ref{lem:dindep-variant}} \\
  & \vdash\Down_{i\gets I} (Q_0 \sim \ber^n1/2~~*~~ Q_1\asequal Q_0\oplus^n i) & \text{substituting }i\text{ for }I \\
  & \vdash\Down_{i\gets I} (Q_0 \sim \ber^n1/2~~*~~ (Q_1\oplus^n i)\asequal Q_0) & \text{rearranging the equality} \\
  & \vdash\Down_{i\gets I} ((Q_1\oplus^n i) \sim \ber^n1/2) & \text{substituting away }Q_0 \\
  & \vdash\Down_{i\gets I} (Q_1 \sim \ber^n1/2) & \text{Lemma~\ref{lem:unif-bij}} \\
  & \vdash\own I ~~*~~ Q_1 \sim \ber^n1/2 & \text{Lemma~\ref{lem:indep-unif}} \\
  & \vdash\own I ~~*~~ \own Q_1 & \text{by }X\sim\mu\vdash\own X
\end{align*}

\subsection{Verification of oblivious transfer example} \label{sec:oblivious-transfer}

The oblivious transfer protocol is modelled by the following probabilistic program:
\begin{equation}
  {\small
  \tag{\textsc{ObliviousTransfer}}
\begin{aligned}
  &\appllet R_0\appleq \flip^n~1/2 \applin\\
  &\appllet R_1\appleq \flip^n~1/2 \applin\\
  &\appllet D\appleq \flip~1/2 \applin\\
  &\appllet (R_D, R_{\neg D})\appleq \applite{D}{\applret(R_1,R_0)}{\applret(R_0,R_1)} \applin\\
  &\appllet E\appleq \applret(c\oplus D) \applin\\
  &\appllet (F_0,F_1)\appleq \applite{E}{\applret(m_0\oplus R_1,m_1\oplus R_0)}{\applret(m_0\oplus R_0,m_1\oplus R_1)} \applin\\
  &\appllet (m_c,F_{\neg c}) \appleq  \applite{c}{\applret (F_1\oplus R_D,m_0\oplus R_{\neg D})}{\applret(F_0\oplus R_D,m_1\oplus R_{\neg D})} \applin\\
  &\applret (R_0, R_1, D, R_D, R_{\neg D}, E, F_0, F_1, m_c, F_{\neg c})
\end{aligned}
\label{prog:ot}
  }
\end{equation}
This program is parameterized by the two messages $m_0$ and $m_1$ on offer and the bit $c$ encoding
the receiver's choice.

\subsubsection{Uniformity}
Uniformity is specified by the triple
\[
  \lrtriple\top{\ref{prog:ot}(c,m_0,m_1)}{
      \left(\begin{aligned}
        &R_0, R_1, D, R_D, R_{\neg D}, \\
        &E, F_0, F_1, m_c, F_{\neg c}
      \end{aligned}\right)}
    {~ \begin{aligned}
      &((R_0,R_1)\sim \ber^{2n}1/2 ~~*~~ E\sim\ber1/2) ~~\land\\
      &(D\sim\ber1/2 ~~*~~ (R_D,F_{\neg c})\sim \ber^{2n}1/2)
    \end{aligned}}.
\]
The next page gives an annotated program that establishes this specification.
\clearpage
{\small
\begin{align*}
  &\assertc{\top} \\
  &\hspace{1em}\appllet R_0\appleq \flip^n~1/2 \applin\\
  &\assertc{R_0 \sim \ber^n 1/2} \\
  &\hspace{1em}\appllet R_1\appleq \flip^n~1/2 \applin\\
  &\assertc{R_0 \sim \ber^n 1/2 ~~*~~ R_1\sim \ber^n 1/2} \\
  &\hspace{1em}\appllet D\appleq \flip~1/2 \applin\\
  &\assertc{R_0 \sim \ber^n 1/2 ~~*~~ R_1\sim \ber^n 1/2 ~~*~~ D\sim\ber1/2} \\
  &\hspace{1em}\appllet (R_D, R_{\neg D})\appleq \applite{D}{\applret(R_1,R_0)}{\applret(R_0,R_1)} \applin\\
  &\assertc{\begin{aligned}
    &R_0 \sim \ber^n 1/2 ~~*~~ R_1\sim \ber^n 1/2 ~~*~~ D\sim\ber1/2 ~~*~~ \\
    &R_D \asequal (\ite D{R_1}{R_0}) ~~*~~ R_{\neg D}\asequal (\ite D{R_0}{R_1})
    \end{aligned}} \\
  &\hspace{1em}\appllet E\appleq \applret(c\oplus D) \applin\\
  &\assertc{\begin{aligned}
    &R_0 \sim \ber^n 1/2 ~~*~~ R_1\sim \ber^n 1/2 ~~*~~ D\sim\ber1/2 ~~*~~ \\
    &R_D \asequal (\ite D{R_1}{R_0}) ~~*~~ R_{\neg D}\asequal (\ite D{R_0}{R_1}) ~~*~~ E\asequal c\oplus D \\
    \end{aligned}} \\
  &\hspace{1em}\appllet (F_0,F_1)\appleq \applite{E}{\applret(m_0\oplus R_1,m_1\oplus R_0)}{\applret(m_0\oplus R_0,m_1\oplus R_1)} \applin\\
  &\assertc{\begin{aligned}
    &R_0 \sim \ber^n 1/2 ~~*~~ R_1\sim \ber^n 1/2 ~~*~~ D\sim\ber1/2 ~~*~~ \\
    &R_D \asequal (\ite D{R_1}{R_0}) ~~*~~ R_{\neg D}\asequal (\ite D{R_0}{R_1}) ~~*~~ E\asequal c\oplus D ~~*~~ \\
    &F_0 \asequal (\ite E{m_0\oplus R_1}{m_1\oplus R_0}) ~~*~~ F_1\asequal (\ite D{m_0\oplus R_0}{m_1\oplus R_1}) \\
    \end{aligned}} \\
  &\hspace{1em}\appllet (m_c,F_{\neg c}) \appleq  \applite{c}{\applret (F_1\oplus R_D,m_0\oplus R_{\neg D})}{\applret(F_0\oplus R_D,m_1\oplus R_{\neg D})} \applin\\
  &\assert{\underbrace{\assertc{\begin{aligned}
    &R_0 \sim \ber^n 1/2 ~~*~~ R_1\sim \ber^n 1/2 ~~*~~ D\sim\ber1/2 ~~*~~ \\
    &R_D \asequal (\ite D{R_1}{R_0}) ~~*~~ R_{\neg D}\asequal (\ite D{R_0}{R_1}) ~~*~~ E\asequal c\oplus D ~~*~~ \\
    &F_0 \asequal (\ite E{m_0\oplus R_1}{m_1\oplus R_0}) ~~*~~ F_1\asequal (\ite D{m_0\oplus R_0}{m_1\oplus R_1}) ~~*~~ \\
    &m_c \asequal (\ite c{F_1\oplus R_D}{F_0\oplus R_D}) ~~*~~ F_{\neg c}\asequal (\ite c{m_0\oplus R_{\neg D}}{m_1\oplus R_{\neg D}}) \\
    \end{aligned}}}_P} \\
  &\hspace{1em}\applret (R_0, R_1, D, R_D, R_{\neg D}, E, F_0, F_1, m_c, F_{\neg c})\\
  &\assertc{(R_0, R_1, D, R_D, R_{\neg D}, E, F_0, F_1, m_c, F_{\neg c}).~P} \\[1em]
  &\hspace{9em}\assert{\vdots}\\
  &\assert{{\bigg\{}(R_0, R_1, D, R_D, R_{\neg D}, E, F_0, F_1, m_c, F_{\neg c}).~
    \begin{aligned}
      &\overbrace{((R_0,R_1)\sim \ber^{2n}1/2 ~~*~~ E\sim\ber1/2)}^Q ~~\land\\
      &\underbrace{(D\sim\ber1/2 ~~*~~ (R_D,F_{\neg c})\sim \ber^{2n}1/2)}_R
    \end{aligned} {\bigg\}}} \\
\end{align*}
}%
The final entailment (hidden by the vertical ellipses), abbreviated $P\vdash Q\land R$,
can be established as follows.
First, to show $P\vdash Q$,
{\small\begin{align*}
  &P \\
  &\vdash R_0 \sim \ber^n1/2 ~~*~~ R_1\sim \ber^n 1/2 ~~*~~ D\sim \ber1/2 ~~*~~ E\asequal c\oplus D
    & \text{dropping conjuncts} \\
  &\vdash R_0 \sim \ber^n1/2 ~~*~~ R_1\sim \ber^n 1/2 ~~*~~ D\sim \ber1/2 ~~*~~ E\oplus c\asequal D
    & \text{rearranging the equality} \\
  &\vdash R_0 \sim \ber^n1/2 ~~*~~ R_1\sim \ber^n 1/2 ~~*~~ (E\oplus c)\sim \ber1/2
    & \text{substituting away }D \\
  &\vdash R_0 \sim \ber^n1/2 ~~*~~ R_1\sim \ber^n 1/2 ~~*~~ E\sim \ber1/2
    & \text{Lemma~\ref{lem:unif-bij}} \\
  &\vdash (R_0,R_1) \sim \ber^{2n}1/2 ~~*~~ E\sim \ber1/2
    & \text{Lemma~\ref{lem:unif-fuse}} \\
  &= Q
\end{align*}}
Second, $P\vdash R$ can be established by the following chain:
{\small
\begin{align*}
  &P \\
  &\vdash {\left({\begin{aligned}
    &R_0 \sim \ber^n 1/2 ~~*~~ R_1\sim \ber^n 1/2 ~~*~~ D\sim\ber1/2 ~~*~~ \\
    &R_D \asequal (\ite D{R_1}{R_0}) ~~*~~ R_{\neg D}\asequal (\ite D{R_0}{R_1})~~*~~\\
    &F_{\neg c}\asequal (\ite c{m_0\oplus R_{\neg D}}{m_1\oplus R_{\neg D}}) 
    \end{aligned}}\right)} & \text{dropping some conjuncts} \\
  &\vdash \dots \\
  &\vdash {\left({\begin{aligned}
    &R_D \sim \ber^n 1/2 ~~*~~ R_{\neg D}\sim \ber^n 1/2 ~~*~~ D\sim\ber1/2 ~~*~~ \\
    &F_{\neg c}\asequal (\ite c{m_0\oplus R_{\neg D}}{m_1\oplus R_{\neg D}}) \\
    \end{aligned}}\right)}  \\
  &\vdash {\left({\begin{aligned}
    &R_D \sim \ber^n 1/2 ~~*~~ R_{\neg D}\sim \ber^n 1/2 ~~*~~ D\sim\ber1/2 ~~*~~ \\
    &F_{\neg c}\asequal (\ite c{m_0}{m_1})\oplus R_{\neg D} \\
    \end{aligned}}\right)} 
    & \text{commuting conversion}\\
  &\vdash {\left({\begin{aligned}
    &R_D \sim \ber^n 1/2 ~~*~~ R_{\neg D}\sim \ber^n 1/2 ~~*~~ D\sim\ber1/2~~*~~ \\
    &(\ite c{m_0}{m_1})\oplus F_{\neg c}\asequal R_{\neg D} \\
    \end{aligned}}\right)} 
    & \text{property of }\oplus\\
  &\vdash {\left({\begin{aligned}
    &R_D \sim \ber^n 1/2 ~~*~~ D\sim\ber1/2 ~~*~~ \\
    &((\ite c{m_0}{m_1})\oplus F_{\neg c})\sim \ber^n 1/2
    \end{aligned}}\right)} 
    & \text{substitute away }R_{\neg D}\\
  &\vdash {
    R_D \sim \ber^n 1/2 ~~*~~ D\sim\ber1/2  ~~*~~
    F_{\neg c}\sim \ber^n 1/2
    }
    & \text{Lemma~\ref{lem:unif-bij}} \\
  &\vdash D\sim\ber1/2 ~~*~~ (R_D,F_{\neg c})\sim \ber^{2n}1/2 & \text{Lemma~\ref{lem:unif-fuse}}\\
  &= R
\end{align*}}%
The entailment hidden by ellipses
establishes the mutual independence of $D$, $R_D$, and $R_{\neg D}$,
the key property that perfect secrecy hinges on. The proof goes by case analysis on $D$,
and is shown on the next page.
\clearpage
{\small\begin{align*}
  &{\left({\begin{aligned}
    &R_0 \sim \ber^n 1/2 ~~*~~ R_1\sim \ber^n 1/2 ~~*~~ D\sim\ber1/2 ~~*~~ \\
    &R_D \asequal (\ite D{R_1}{R_0}) ~~*~~ R_{\neg D}\asequal (\ite D{R_0}{R_1})
    \end{aligned}}\right)} \\
  &\vdash \underbrace{\left({\begin{aligned}
    &(R_0, R_1) \sim \ber^{2n} 1/2 ~~*~~ D\sim\ber1/2 ~~*~~ \\
    &R_D \asequal (\ite D{R_1}{R_0}) ~~*~~ R_{\neg D}\asequal (\ite D{R_0}{R_1})
    \end{aligned}}\right)}_{S[D\sim\ber 1/2]} & \text{Lemma~\ref{lem:unif-fuse}} \\
  &\vdash S[D\sim\ber1/2] \land S[D\sim\ber 1/2] & \text{property of }\land \\
  &\vdash S[D\sim\ber 1/2] \land (D\sim\ber 1/2) & \text{drop conjuncts}\\
  &\vdash S[\own D] \land (D\sim\ber 1/2) & X\sim\mu\vdash \own X\\
  &\vdash 
    {\Down_{d\gets D} \left(
    (R_0, R_1) \sim \ber^{2n} 1/2 ~~*~~ 
    R_D \asequal (\ite D{R_1}{R_0}) ~~*~~ \dots
    \right)}
     \land (D\sim\ber 1/2) & \text{Lemma~\ref{lem:dindep-variant}} \\
  &\vdash 
    {\Down_{d\gets D} \underbrace{\left({\begin{aligned}
    &(R_0,R_1) \sim \ber^{2n}1/2 ~~*~~\\
    &R_D \asequal (\ite d{R_1}{R_0}) ~~*~~ R_{\neg D}\asequal \dots
    \end{aligned}}\right)}_{Q(d)}} 
     \land (D\sim\ber 1/2) & \text{replace }D\textrm{ with }d \\
  &\vdash 
    \Down_{d\gets D} \left(Q(\vtrue) \land Q(\vfalse)\right) 
     \land (D\sim\ber 1/2) & \text{cases on }d \\
  &\vdash 
    \Down_{d\gets D} \bigg(\begin{aligned}
        &\overbrace{\left(
          (R_0,R_1) \sim \ber^{2n} ~~*~~
          R_D \asequal R_1 ~~*~~ R_{\neg D}\asequal R_0\right)
        }^{Q(\vtrue)} \land \\
        &\underbrace{\left(
          (R_0,R_1) \sim \ber^{2n} 1/2 ~~*~~
          R_D \asequal R_0 ~~*~~ R_{\neg D}\asequal R_1\right)
        }_{Q(\vfalse)}
    \end{aligned}\bigg) 
     \land(D\sim\ber 1/2) & \text{unfold }Q \\
  &\vdash 
    \Down_{d\gets D}\bigg(
        ((R_{\neg D},R_D) \sim \ber^{2n} 1/2) \land 
        ((R_D,R_{\neg D}) \sim \ber^{2n} 1/2)\bigg)
     \land (D\sim\ber 1/2) & \text{substitute} \\
  &\vdash 
    \Down_{d\gets D}\bigg(\begin{aligned}
        &(R_{\neg D}\sim\ber^n1/2 ~~*~~ R_D\sim\ber^n 1/2) \land  \\
        &(R_D \sim \ber^{n} 1/2 ~~*~~ R_{\neg D}\sim\ber^n 1/2)\end{aligned}\bigg)
     \land (D\sim\ber 1/2) & \text{Lemma~\ref{lem:unif-fuse}} \\
  &\vdash 
    \Down_{d\gets D} (R_{\neg D} \sim \ber^n 1/2 ~~*~~ R_D\sim \ber^n 1/2)
     \land (D\sim\ber 1/2) & \text{by }P\land P\vdash P \\
  &\vdash 
    \Down_{d\gets D} ((R_D,R_{\neg D}) \sim \ber^{2n} 1/2)
     \land (D\sim\ber 1/2) & \text{Lemma~\ref{lem:unif-fuse}} \\
  &\vdash 
    (\own D ~~*~~ R_{\neg D} \sim \ber^n 1/2 ~~*~~ R_D\sim \ber^n 1/2)
     \land (D\sim\ber 1/2) & \text{Lemma~\ref{lem:indep-unif}} \\
  &\vdash 
    D\sim\ber1/2 ~~*~~ R_{\neg D} \sim \ber^n 1/2 ~~*~~ R_D\sim \ber^n 1/2
     & \text{Lemma~\ref{lem:separate-own-dist}}
\end{align*}}

\subsubsection{Input independence}

Following \citet{barthe2019probabilistic}, the following Hoare triple specifies input
independence for \ref{prog:ot}:
\begin{align*}
  &\assertc{\own (C, M_0, M_1)} \\
  &\hspace{1em}{\ref{prog:ot}(M_0,M_1,C)} \\
  &\assertc{(R_0, R_1, D, R_D, R_{\neg D}, E, F_0, F_1, M_C, M_{1-C}, F_{\neg C}).~
    \begin{aligned}
      &(\own C ~~*~~ \own (R_0, R_1, E)) ~~\land \\
      &(\own M_{1-C} ~~*~~ \own (D, R_D, F_0, F_1)) \\
    \end{aligned}}
\end{align*}
In the postcondition, $M_{1-C}$ denotes the random variable $(\ite{C}{M_0}{M_1})$.

\citet{barthe2019probabilistic} observe that the proof gets stuck, mentioning
that even an informal proof sketch does not seem easy.
We show that this triple is in fact
impossible to establish by giving an explicit counterexample. This takes the form of 
three random variables $C$, $M_0$, and $M_1$ such that
the postcondition fails.
In particular, we choose $C = 0$ and $M_0 = M_1 = M$ for some uniformly-distributed $M$.
Now the triple reads
\begin{align*}
  &\assertc{\own (C, M, M)} \\
  &\hspace{1em}{\ref{prog:ot}(M, M, C)} \\
  &\assertc{(R_0, R_1, D, R_D, R_{\neg D}, E, F_0, F_1, M_C, M_{1-C}, F_{\neg C}).~
    \begin{aligned}
      &(\own C ~~*~~ \own (R_0, R_1, E)) ~~\land \\
      &(\own M ~~*~~ \own (D, R_D, F_0, F_1)) \\
    \end{aligned}}
\end{align*}
Suppose this triple holds.
Then so does the following triple, where we have dropped the first conjunct of the postcondition:
\begin{align*}
  &\assertc{\own (C, M, M)} \\
  &\hspace{1em}{\ref{prog:ot}(M, M, C)} \\
  &\assertc{(R_0, R_1, D, R_D, R_{\neg D}, E, F_0, F_1, M_C, M_{1-C}, F_{\neg C}).~
      (\own M ~~*~~ \own (D, R_D, F_0, F_1))} \\
\end{align*}
Lilac's semantic model validates the following derived rule:
\begin{lemma}\label{lem:own-compose}
  Let $X$ be a random variable and $f : \mathrm{cod}(X)\to B$ measurable.
  Then $\own X\vdash \own (f\circ X)$.
\end{lemma}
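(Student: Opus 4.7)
The plan is to prove this directly by unfolding the semantics of $\own$ from Figure~\ref{fig:lilac-prob-semantics} and appealing to the standard fact that the composition of measurable functions is measurable. Recall that $\gamma, D, (\calF, \mu) \vDash \own E$ iff the random variable $E(\gamma) \circ D$ is $\calF$-measurable. So the entailment $\own X \vdash \own(f \circ X)$ amounts to showing: whenever $X(\gamma) \circ D$ is $\calF$-measurable, so is $(f \circ X)(\gamma) \circ D = f \circ (X(\gamma) \circ D)$ (noting that $f$ has no free variables subject to the substitution $\gamma$ or $D$).

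The key step is the measure-theoretic fact: if $Y : (\Omega, \calF) \mto (A, \calA)$ is measurable and $f : (A, \calA) \mto (B, \calB)$ is measurable, then $f \circ Y : (\Omega, \calF) \mto (B, \calB)$ is measurable. This follows because for any $E \in \calB$, $(f \circ Y)^{-1}(E) = Y^{-1}(f^{-1}(E))$, which lies in $\calF$ since $f^{-1}(E) \in \calA$ by measurability of $f$ and $Y^{-1}$ carries $\calA$-sets to $\calF$-sets by measurability of $Y$. Applying this with $Y := X(\gamma) \circ D$ and the given measurable $f$ gives the desired conclusion.

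Since the entailment holds pointwise in every configuration $(\gamma, D, (\calF, \mu))$, and the reasoning is entirely at the level of the underlying measure-theoretic interpretation of $\own$, no manipulation of the Kripke resource monoid structure or the separating conjunction is needed. There is no substantive obstacle here --- the lemma is essentially a direct packaging of the closure of measurability under composition, in the semantic model of Core Lilac (Section~\ref{sec:core-lilac-semantics}).
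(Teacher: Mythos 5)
Your proof is correct and takes essentially the same approach as the paper, which proves the lemma in one line by noting that the composition of measurable maps remains measurable; you have simply spelled out the unfolding of the semantics of $\own$ and the preimage argument in more detail.
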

\begin{proof}
  The composition of measurable maps remains measurable.
\end{proof}
Thus we have that $\own (D,R_D, F_0,F_1) \vdash \own (R_D\oplus F_0)$,
so the following triple holds:
\begin{align*}
  &\assertc{\own (C, M, M)} \\
  &\hspace{1em}{\ref{prog:ot}(M, M, C)} \\
  &\assertc{(R_0, R_1, D, R_D, R_{\neg D}, E, F_0, F_1, M_C, M_{1-C}, F_{\neg C}).~
      (\own M ~~*~~ \own (R_D\oplus F_0))} \\
\end{align*}
At this point we transition from working in Lilac to reading off
the meanings of Lilac propositions in our semantic model.
We know that, by forward symbolic execution as in the previous section,
the OT protocol sets
\begin{align*}
  F_0 &\asequal (\ite E{M_0\oplus R_1}{M_1\oplus R_0}) \\
  E &\asequal C\oplus D \\
  R_D &\asequal (\ite D{R_1}{R_0})
\end{align*}
We have set $M_0 = M_1 = M$ and $C = 0$ for the sake of contradiction, so these equations become
\begin{align*}
  F_0 &\asequal M \oplus (\ite E{R_1}{R_0}) \\
  E &\asequal D \\
  R_D &\asequal (\ite D{R_1}{R_0})
\end{align*}
Further substitution gives
\[ F_0 \asequal M\oplus (\ite D{R_1}{R_0})\asequal M\oplus  R_D. \]
Reading the final Hoare triple obtained above in terms of the model, we have that
$\sembr{\ref{prog:ot}}$ is a Markov kernel
whose pushforward along the distribution on $(C,M,M)$
gives a distribution on \[(R_0, R_1, D, R_D, R_{\neg D}, E, F_0, F_1, M_C, M_{1-C}, F_{\neg C})\]
in which $M$ is independent of $(R_D\oplus F_0)$.
But we also have
\[ R_D \oplus F_0 = R_D\oplus (M\oplus R_D) = M \]
by the above deduction, so this Hoare triple asserts
the self-independence of $M$.
This is a contradiction: we have chosen $M$ to be a uniformly-distributed $n$-tuple
of boolean values, which cannot be self-independent.

Intuitively, the special case $C=0$ and $M_0=M_1=M$ that we have chosen
is the situation where both messages offered by the sender are exactly the same
and the receiver always opts to receive message $0$. The failure of input
independence corresponds to the fact that the receiver manages to learn what
message $1$ is. But it only learns message $1$ because in this particular situation
the two messages happen to be the same! Thus this counterexample appears
to be more an issue with this particular specification of perfect secrecy via
input independence than a vulnerability in the OT protocol.

\fi 

\end{document}